\def\showauthornotes{1}
\def\showkeys{0}
\def\confversion{0}
\def\widemargin{0}
\definecolor{darkred}{rgb}{0.5,0,0}
\definecolor{darkgreen}{rgb}{0,0.35,0}
\definecolor{darkblue}{rgb}{0,0,0.55}
\newcommand{\Authornote}[3]{{\sf\small\color{#3}{[#1: #2]}}}
\newcommand{\Authorcomment}[2]{{\sf \small\color{gray}{[#1: #2]}}}
\newcommand{\Authorfnote}[2]{\footnote{\color{red}{#1: #2}}}
\newcommand{\Authornote}[3]{}
\newcommand{\Authorcomment}[2]{}
\newcommand{\Authorfnote}[2]{}
\declaretheorem[numberwithin=section]{theorem}
\declaretheorem[sibling=theorem]{lemma}
\declaretheorem[sibling=theorem]{claim}
\declaretheorem[sibling=theorem]{proposition}
\declaretheorem[sibling=theorem]{corollary}
\theoremstyle{definition}
\declaretheorem[sibling=theorem]{definition}
\declaretheorem[sibling=theorem]{remark}
\newtheorem{algo}[theorem]{Algorithm}
\newenvironment{algorithm}[3]
        {\noindent\begin{boxedminipage}{\textwidth}\begin{algo}[#1]\ \par\medskip
        {\begin{tabular}{r l}
        \textsf{Input:} & #2\\[3 pt]
        \textsf{Output:} & #3 
        \end{tabular}\par\enskip}}
        {\end{algo}\end{boxedminipage}}
\newenvironment{new-algorithm}[1]
        {\begin{figure}[!ht]\noindent\begin{boxedminipage}{\textwidth}\begin{algo}[#1]\ \par\medskip
        {\enskip}}
        {\end{algo}\end{boxedminipage} \end{figure}}
\def\FullBox{\hbox{\vrule width 6pt height 6pt depth 0pt}}
\def\qed{\ifmmode\qquad\FullBox\else{\unskip\nobreak\hfil
\penalty50\hskip1em\null\nobreak\hfil\FullBox
\parfillskip=0pt\finalhyphendemerits=0\endgraf}\fi}
\def\qedsketch{\ifmmode\Box\else{\unskip\nobreak\hfil
\penalty50\hskip1em\null\nobreak\hfil$\Box$
\parfillskip=0pt\finalhyphendemerits=0\endgraf}\fi}
\def\to{\rightarrow}
\def\eps{\varepsilon}
\def\epsilon{\varepsilon}
\def\eps{\epsilon}
\def\phi{\varphi}
\def\cal{\mathcal}
\def\implies{\Rightarrow}
\newcommand{\sub}{\ensuremath{\subseteq}}
\newcommand{\defeq}{:=}
\renewcommand{\bar}{\overline} 
\newcommand{\given}{\;\ifnum\currentgrouptype=16 \middle\fi \vert\;}
\newcommand{\ie}{i.e.,\xspace}
\newcommand{\etal}{et al.\xspace}
\newcommand{\mper}{\,.}
\newcommand{\mcom}{\,,}
\let\nfrac=\nicefrac
\let\ffrac=\flatfrac
\DeclarePairedDelimiter\parens{\lparen}{\rparen}
\DeclarePairedDelimiter\abs{\lvert}{\rvert}
\DeclarePairedDelimiter\norm{\lVert}{\rVert}
\DeclarePairedDelimiter\braces{\lbrace}{\rbrace}
\DeclarePairedDelimiter\brackets{\lbrack}{\rbrack}
\DeclarePairedDelimiter\angles{\langle}{\rangle}
\DeclarePairedDelimiterXPP\lnorm[1]{}\lVert\rVert{_2}{#1}
\newcommand{\inparen}[1]{\parens*{#1}}             %\inparen{x+y}  is (x+y)
\newcommand{\inbraces}[1]{\braces*{#1}}           %\inbrace{x+y}  is {x+y}
\newcommand{\insquare}[1]{\brackets*{#1}}             %\insquare{x+y}  is [x+y]
\newcommand{\one}{{\mathbf{1}}}
\DeclareMathOperator{\sgn}{sgn}
\newcommand{\nnz}[1]{\mathbf{nnz}(#1)}
\newcommand{\Esymb}{\mathbb{E}}
\newcommand{\Psymb}{\mathbb{P}}
\DeclareMathOperator*{\ExpOp}{\Esymb}
\def\Pr#1{%
    \ProbabilityRender{\Psymb}{#1}%
}
\def\Ex#1{%
    \ProbabilityRender{\Esymb}{#1}%
}
\def\condPE#1#2{%
	\@ifnextchar\bgroup
	{\ConditionalProbabilityRender{\widetilde{\Esymb}}{#1}{#2}}
	{\ProbabilityRender{\widetilde{\Esymb}}{#1 \given #2}}
}
\def\ConditionalProbabilityRender#1#2#3#4{
	\renderwithdist{#1}{#2}{#3 \given #4}	
}
\def\ProbabilityRender#1#2{%fancy probability command
  \@ifnextchar\bgroup%
  {\renderwithdist{#1}{#2}}
   {\singlervrender{#1}{#2}}
}
\def\singlervrender#1#2{%
   \ensuremath{\mathchoice
       {{#1}\insquare{ #2 }}
       {{#1}\insquare{ #2 }}
       {{#1}\insquare{ #2 }}
       {{#1}\insquare{ #2 }}
   }
}
\def\renderwithdist#1#2#3{%
   \@ifnextchar\bgroup
   {\superfancyrender{#1}{#2}{#3}}
   {\ensuremath{\mathchoice
      {\underset{#2}{#1}\insquare{ #3 }}
      {{#1}_{#2}\insquare{ #3 }}
      {{#1}_{#2}\insquare{ #3 }}
      {{#1}_{#2}\insquare{ #3 }}
     }
   }
}
\def\superfancyrender#1#2#3#4#5{
   \ensuremath{\mathchoice
      {\underset{#1}{{#1}}\left#4 #3 \right#5}
      {{#1}_{#2}#4 #3 #5}
      {{#1}_{#2}#4 #3 #5}
      {{#1}_{#2}#4 #3 #5}
   }
}
\newcommand{\bigoh}{O}
\def\li{{ \ell }}
\def\ri{{ r }}
\newcommand{\ip}[2] {\ensuremath{\angles*{ #1 , #2 }}}
\newcommand{\trans}{\mathsf{T}}
\newcommand{\factor}{\calB}
\newcommand{\family}{\calF}
\def\Enc{\mathrm{Enc} }
\def\delL{{\delta_L}} %left-distance
\def\delR{{\delta_R}} %right-distance
\def\decL{{\delta^{\dec}_L}} %left decoding radius
\def\decR{{\delta^{\dec}_R}} %right decoding radius
\def\decT{{ \delta^{\dec}_{\mathrm{Tan}} }} % Tanner decoding radius
\newcommand{\treg}[1]{T_{\mathrm{reg}}(#1)}
\newcommand{\tlocal}[1]{T_{\mathrm{local}}(#1)}
\newcommand{\tlocalL}[1]{T_{\mathrm{local}, L}(#1)}
\newcommand{\tlocalR}[1]{T_{\mathrm{local}, R}(#1)}
\newcommand{\tdec}[1]{T_{\dec}(#1)}
\def\codeL{\calC_1}
\def\codeR{\calC_2}
\def\TanC{{\calC_{\mathrm{Tan}} }}
\def\lsL{{ K_1 }} %list size for left code
\def\lsR{{ K_2 }} %list size for right code
\def\lsLR{{ k }} %input list size for list recovery
\def\listl{{ \calL_{\li} }}
\def\listr{{ \calL_{\ri} }}
\newcommand{\Ot}[1]{\widetilde{\bigoh}\parens*{#1}}
\newcommand{\Oh}[1]{\bigoh\parens*{#1}}
\def\AELC{{ \calC_{\mathrm{AEL}} }}
\newcommand{\R}{{\mathbb R}}
\newcommand{\N}{{\mathbb{N}}}
\newcommand{\F}{{\mathbb F}}
\newcommand{\B}{\{0,1\}\xspace}
\newcommand{\indicator}[1]{\one_{#1}}
\newcommand{\inn}{\mathrm{in}}
\newcommand{\out}{\mathrm{out}}
\newcommand{\dec}{\mathrm{dec}}
\newcommand{\dis}{{\Delta}}
\newcommand{\deffont}{\sf}
\newcommand{\calB}{{\cal B}}
\newcommand{\calC}{{\cal C}}
\newcommand{\calF}{{\cal F}}
\newcommand{\calJ}{{\cal J}}
\newcommand{\calL}{{\cal L}}
\newcommand{\calN}{{\cal N}}
\newcommand{\calR}{{\cal R}}
\newcommand{\calX}{{\cal X}}
\newcommand{\cC}{{\cal C}}
\newcommand{\Zemor}{Z\'emor\xspace}
\title{List Decoding Expander-Based Codes up to Capacity\\ in Near-Linear Time}
 \author{
 Shashank Srivastava\thanks{{\tt DIMACS (Rutgers) \& Institute for Advanced Study}. {\tt shashanks@ias.edu}.}
 \and
 Madhur Tulsiani\thanks{{\tt Toyota Technological Institute at Chicago}. {\tt madhurt@ttic.edu}. Supported by the NSF grant CCF-2326685.} 
 }
\begin{document}

\date{}
\maketitle

\thispagestyle{empty}

\begin{abstract}
We give a new framework based on graph regularity lemmas, for list decoding and list recovery of
codes based on spectral expanders. 
Using existing algorithms for computing regularity decompositions of sparse
graphs in (randomized) near-linear time, and 
appropriate choices for the constant-sized inner/base codes, we prove the following:
\medskip
\begin{itemize}
\item Expander-based codes constructed using the distance amplification technique of Alon, Edmonds and Luby [FOCS 1995] with rate $\rho$, can be list decoded to a radius $1 - \rho - \epsilon$ in near-linear time. 
By known results, the output list has size $O(1/\epsilon)$.
\item The above codes of Alon, Edmonds and Luby, with rate $\rho$,  can also be list recovered to radius $1 - \rho - \epsilon$ in near-linear time, with constant-sized output lists.
\item The Tanner code construction of Sipser and Spielman [IEEE Trans. Inf. Theory 1996] with distance $\delta$, can be list decoded to radius $\delta - \epsilon$ in near-linear time, with constant-sized output lists.
\end{itemize}
\medskip
Our results imply novel combinatorial as well as algorithmic bounds for each of the above explicit constructions. 
All of these bounds are obtained via combinatorial rigidity phenomena, proved using (weak) graph regularity. 
The regularity framework allows us to lift the list decoding and list recovery properties for the local base codes, to the global codes obtained via the above constructions.

%
%\medskip
%
%These results give a flexible method for obtaining near-linear time list decodable or recoverable codes (for optimal noise rates), starting with unique decodable codes.
%
\end{abstract}

\newpage

\pagenumbering{roman}
\tableofcontents
\clearpage

\newpage
\pagenumbering{arabic}
\setcounter{page}{1}

\section{Introduction} \label{sec:intro}
Expander graphs have been a powerful and versatile tool for the construction of codes with several interesting
properties.
The simple combinatorial structure of codes based on expander graphs, often makes them adaptable  for a variety of applications, leading to several elegant constructions.
A (very) small sample of the list of applications already includes the seminal
constructions of expander codes~\cite{SS96}, widely used distance amplification
constructions~\cite{ABNNR92, AEL95}, as well as recent breakthrough constructions of
$\epsilon$-balanced codes~\cite{TS17}, locally testable codes~\cite{DELLM22}, and quantum LDPC
codes~\cite{PK22, LZ22}. 
A detailed account of the rich interactions between coding theory and expander graphs, and
pseudorandom objects in general, can be found in several excellent surveys and
textbooks on these areas~\cite{GSurvey04, Vadhan12, HLW06, GRS23}. 

One particular advantage of codes based on expanders, has been the amenability of their combinatorial and spectral structure, to techniques for highly efficient encoding and decoding algorithms.
However, while several expander-based codes admit linear time (and even parallelizable) algorithms for unique decoding, this has not necessarily been the case for \emph{list decoding} algorithms. 
Although some constructions of expander-based codes~\cite{GI03, TS17} do allow for fast list decoding, we know of few general algorithmic techniques to exploit expansion for list decoding. Moreover, even the known algorithms do not extend to the optimal tradeoffs between the the code parameters such as rate and distance, and the fraction of correctible errors.

For an error-correcting code $\calC \subseteq \Sigma^n$ over a finite alphabet $\Sigma$, the distance $\delta(\calC)$ and the rate $\rho(\calC)$ are fundamental parameters, which measure the error-correction capacity and redundancy of a code
\[
\delta(\calC) 
~\defeq~ \min_{\underset{g\neq h}{g,h \in \calC}} \frac{1}{n} \cdot \abs{\inbraces{i \in [n] ~|~ g_i \neq h_i}}
\qquad \text{and} \qquad
\rho(\calC) ~\defeq~ \frac{1}{n} \cdot \log_{\abs{\Sigma}}\abs{\calC} \mper
\]
The task of list decoding, defined by Elias~\cite{E57} and Wozencraft~\cite{W58}, requires finding the list of all codewords within a given distance (say) $\beta$ of a given $g \in \Sigma^n$, denoted as $\calL(g,\beta)$, where
\[
\calL(g,\beta) ~\defeq~ \inbraces{~h \in \calC ~\mid~ \Delta(g,h) \leq \beta~} \mper
\]
Understanding list decoding then presents two challenges: the \emph{combinatorial} problem of finding the largest $\beta$ (as a function of other parameters of the code) for which the list size remains bounded, and the \emph{algorithmic} problem of efficiently recovering the list even when the size is known to be bounded.
\vspace{-5 pt}
\paragraph{Combinatorial bounds for list decoding.} 
Any code $\calC$ with distance $\delta$ can be unique decoded from $\delta/2$ errors, and it is known that the list size remains bounded for a threshold $\calJ(\delta) \in (\delta/2, \delta)$ known as the Johnson bound. 
Decoding beyond the Johnson bound requires relying on the structure of the code to obtain bounds on the list size. 
This has often been achieved by incorporating additional algebraic structure in the code construction, to take advantage of the significant machinery for list decoding using polynomials~\cite{Guruswami:survey}.
For a code $\calC$ with rate $\rho$, the Singleton bound implies that $\delta \leq 1 - \rho$, and a generalization by Shangguan and Tamo~\cite{ST20} shows that the list size at radius $\beta = 1 - \rho - \eps$ must be $\Omega(1/\eps)$.
%
%Several random families of codes are known to be list decodable up to this optimal threshold with the optimal list size, such as random linear codes \cite{AGL24}, randomly punctured Reed-Solomon codes \cite{BGM23, GZ23, AGL24}, randomly punctured algebraic-geometric codes \cite{BDGZ24}.
%
Algebraic code families such as folded Reed-Solomon codes and multiplicity codes, are indeed known to be list decodable close to the optimal radius of $1 - \rho - \eps$ (known as the list decoding capacity)~\cite{GR08, Kop15}, with asymptotically optimal list size~\cite{CZ25}.
However, algebraic structure can be incompatible with other desirable properties such as the presence of low-density parity checks (LDPC), or may require an alphabet size growing with $n$. 
Also, this line of work does not yield list size bounds for several of the expander-based code constructions mentioned above, which may not posses such algebraic structure. 

Recently, list-size bounds for codes constructed via the expander-based distance amplification technique of Alon, Edmonds and Luby~\cite{AEL95} (which we will refer to as ``AEL codes'') were also obtained using ``local-to-global" phenomena for such codes~\cite{JMST25}. 
These results, which also yield list-size bounds at near optimal radii of $\beta \geq 1 - \rho - \eps$ (with an optimal list size of $O(1/\epsilon)$), also have algorithmic versions based on the Sum-of-Squares (SoS) hierarchy of semidefinite programs. 
However, while the algorithms based on the SoS hierarchy run in polynomial time, the exponents of these polynomials are quite large (and grow with $1/\eps$). 
Moreover, these techniques still do not yield list-size bounds (beyond the Johnson bound) for other expander-based codes of interest, nor for generalizations of list decoding like list recovery.
Indeed, it is not known if even the long-studied families of expander codes, such the Tanner code constructions of by Sipser and Spielman~\cite{SS96}, have polynomial (in $n$) list sizes beyond the Johnson bound.
\vspace{-5 pt}
\paragraph{Algorithms for list decoding.}
While the task of bounding the list size is easy below the Johnson bound, the algorithmic question still remains nontrivial even if the list is known to be of small size. 
In the case of algebraic codes, such algorithms were obtained in the foundational work of Guruswami and Sudan~\cite{Sudan97, GS98}. 
Since then, a number of beautiful algorithms have been developed for these and related algebraic codes, and we refer the reader to excellent textbooks and surveys on the subject~\cite{GSurvey04, GRS23} for a more detailed account.
Broadly speaking, these algorithms can be seen as characterizing the list via certain ``algebraic rigidity'' phenomena. They show that certain low-degree polynomials satisfying conditions also satisfied by the list codewords, must in fact \emph{contain} the list of codewords as roots.
However, as mentioned earlier, such algebraic structure is not always available for codes based on expanders.

The use of expansion for list decoding was explored in an early work of Guruswami and Indyk~\cite{GI03} who used expanders to construct linear time list decodable codes (for very small rates).
For the case AEL codes, the study of list decoding algorithms relying on (high-dimensional) expansion was initiated by Dinur \etal~\cite{DHKLNTS19}, who used semidefinite programs to list decode a special case of AEL codes (to a radius smaller than the Johnson bound).
Recent works have also led to list decoding algorithms up to the Johnson bound for AEL and Tanner codes~\cite{JST23}, and also for Ta-Shma's expander-based constructions of $\eps$-balanced codes~\cite{RR23}, using the Sum-of-Squares (SoS) hierarchy.
These algorithms can be seen as the application of a general "proofs to algorithms" framework for designing SoS algorithms~\cite{FKP19} which shows that the spectral proofs used for obtaining distance and list-size bounds for expander codes, can generically be translated to list decoding algorithms.
However, the runtime of algorithms based on the SoS hierarchy involve large polynomial exponents, growing with the proximity to the Johnson bound in the above cases.
As mentioned earlier, for the case of AEL codes, the SoS hierarchy also yields algorithms beyond the Johnson bound, and arbitrarily close to the list decoding capacity~\cite{JMST25}, but again with similarly large exponents.

For the $\eps$-balanced code construction of Ta-Shma~\cite{TS17}, a faster list decoding algorithm was obtained by~\cite{JST21} using structural characterizations for expanding structures, known as ``regularity lemmas''. 
This algorithm runs in near-linear time, but is only known to achieve list decoding up to a threshold significantly \emph{below} the Johnson bound for code.
Moreover, it also requires the development of new regularity lemmas specially adapted to the code construction. 
In this work, we consider if such techniques can be applied to obtain new combinatorial and algorithmic results for a broader class of codes.

\subsection{Our results}
We develop a new framework based on graph regularity lemmas, to obtain near-linear time algorithms for list decoding Tanner codes and AEL codes. 
We show that the regularity lemmas lead to ``combinatorial rigidity'' phenomena, which can be used to design fast algorithms for list decoding various codes up to optimal list decoding radii. 
Our results also imply new combinatorial bounds on list sizes, for codes which were not previously known to be list decodable beyond the Johnson bound. 

We describe the relevant code constructions and results below. 
For simplicity, we restrict the discussion in this paper to versions based on balanced bipartite expanding graphs, although the techniques can be generalized to other variants.
In the discussion below, we always take $G=(L,R,E)$ to be a $d$-regular bipartite graph with $\abs{L} = \abs{R} = n$. It is said to be an $(n,d,\lambda)$-expander if the second singular value of the biadjacency matrix is at most $\lambda \cdot d$.

\subsubsection{Alon-Edmonds-Luby (AEL) codes}
The distance amplification procedure of Alon, Edmonds, and Luby~\cite{AEL95} (AEL) based on
expander graphs, is a versatile method used for constructing various code families with near-optimal rate-distance tradeoff, and other desirable properties. 
The AEL construction works with an expanding graph $G$, an ``inner code'' $\calC_{\inn} \subseteq \Sigma_{\inn}^d$ and an ``outer code'' $\calC_{\out} \subseteq \Sigma_{\out}^n$, with $|\Sigma_{\out}| = |\calC_{\inn}|$.
We view a codeword of $\calC_{\out}$ as being written on the left vertices (formally, $\calC_{\out} \subseteq \Sigma_{\out}^L$. The symbol at each vertex $\ell \in L$ is then viewed as an element of $\Sigma_{\inn}^d$, which gives a labeling of the edges (with an ordering fixed in advance) and thus an element of $\Sigma_{\inn}^E$. 
Each right vertex $r \in R$ the
n collects the $d$ symbols from the incident edges, forming a symbol in $\Sigma_{\inn}^d$.
The procedure can be viewed as using $\calC_{\inn}$ to map $\calC_{\out}$ to a new code $\AELC \subseteq (\Sigma_{\inn}^d)^R \cong \Sigma_{\inn}^E$. 
As illustrated in \cref{fig:ael_intro}, we can think of symbols being on the edges of $G$, which form a codeword in $\AELC$ when grouped from the right. 
The distance of two codewords is the fraction of right vertices $r \in R$ on which they differ.
\begin{figure}[htb]
\begin{center}	
\begin{tikzpicture}[scale = 0.6]
\begin{scope}
\draw[] (0,0) ellipse (1.5cm and 3cm);
\node[below] at (0,-3.1) {$L$};
\draw[] (6,0) ellipse (1.5cm and 3cm);
\node[below] at (6,-3.1) {$R$};

%\rotatebox{90}{$\,=$}\cC_\inn \ni \varphi(f^*(\ell))=  

\node[fill,circle,red] (a1) at (0,2) {};
\node[fill,circle,red] (a2) at (0,1) {};
\node[left,darkred] at (-0.2,1) {$\cC_\inn \ni \parens[\big]{h(e_1),h(e_1), h(e_3)} = h_{\ell}  $};
%\node[left] at (-1.9,0.55) {$\Large \rotatebox{90}{\,=} $};
%\node[left] at (-1.2,0.2) {$\cC_\inn \ni \varphi(f^*(\ell))  $};
\node[fill,circle,red] (a3) at (0,0) {};
\node[fill,circle,red] (a4) at (0,-1) {};
\node[fill,circle,red] (a5) at (0,-2) {};
%label={$(f_{\ell}(e_2),f_{\ell'}(e_3), f_{\ell''}(e_2))$}
\node[fill,circle,blue] (b1) at (6,2) {};
\node[fill,circle,blue] (b2) at (6,1) {};
\node[fill,circle,blue] (b3) at (6,0) {};
\node[below,darkblue] at (10.5,0.45) {{$h_r = \parens[\big]{h(e_2),h(e_4), h(e_5)} \in \Sigma_\inn^d$}};
\node[fill,circle,blue] (b4) at (6,-1) {};
\node[fill,circle,blue] (b5) at (6,-2) {};

\draw[](a2)--node[pos=0.45,sloped, above] {\small{$h(e_1)$}}(b1);
\draw[](a2)--node[pos=0.45,sloped, above] {\small $h(e_2)$}(b3);
\draw[](a2)--node[pos=0.45,sloped, above] {\small $h(e_3)$}(b5);
\draw[] (a4)--node[pos=0.17,sloped, above] {\small{$h(e_4)$}}(b3);
\draw[] (a5)--node[pos=0.3,sloped, below] {\small{$h(e_5)$}}(b3);

%\node[below] at (2.8,-3.8) {Illustration of the AEL procedure};
\end{scope}
\end{tikzpicture}
\vspace{-10 pt}
\end{center}
\caption{Illustration of the AEL procedure}
\label{fig:ael_intro}
\end{figure}

% We will need to consider both the fraction of left and right vertices in which two codewords differ, and will denote these distances using $\Delta_L(\cdot, \cdot)$ and $\Delta_R(\cdot , \cdot)$ respectively.
% %
%
An easy computation shows that $\rho(\AELC) \geq \rho(\calC_{\out}) \cdot \rho(\calC_{\inn})$. Moreover, an application of the expander mixing lemma can be used to show that 
\[
\delta(\AELC) ~\geq~ \delta(\calC_{\inn}) - \frac{\lambda}{\delta(\calC_{\out})} \mper
\]
Thus, when $\rho(\calC_{\out}) \geq 1 - \eps$ and $\lambda \leq \eps \cdot \delta(\calC_{\out})$, this gives a ``local-to-global'' amplification of the rate-distance tradeoffs obtained by the code $\calC_{\inn}$:  we get $\rho(\AELC) \geq \rho(\calC_{\inn}) - \eps$ and $\delta(\AELC) \geq \delta(\calC_{\inn}) - \eps$. 
In particular, instantiating the construction with a ``constant-sized'' code $\calC_{\inn}$ achieving the Singleton bound, yields an explicit infinite family of codes (corresponding to an explicit family of expander graphs) which achieve a tradeoff $\eps$-close to the Singleton bound. 

The AEL procedure is highly versatile and has been adapted for a number of applications, particularly because it also preserves structural properties of the outer code $\calC_{\out}$. 
It is easy to see $\calC_{\out}$ and $\calC_{\inn}$ are both linear over $\F_q$, then so is the code $\AELC$. 
It also preserves the property of being LDPC, local testability or locally correctability~\cite{GKORZS17, KMRZS16}, linear time unique decodability \cite{GI05}, and also the duality properties needed for quantum codes~\cite{BGG24}. 
We refer the reader to the excellent discussion in \cite{KMRZS16} for more details.
\vspace{-5 pt}
\paragraph{List decoding.} 
It was recently proved by Jeronimo \etal~\cite{JMST25} that AEL codes also yield ``local-to-global" phenomena for list decodability. 
They proved that if the inner code $\cC_{\inn}$ satisfies a (slight strengthening of) the generalized Singleton bound, then $\AELC$ also satisfies such a bound (for an outer code $\cC_{\out}$ is chosen as above). 
In particular, their result gives codes of rate $\rho$ with asymptotically optimal list size of $O(1/\eps)$ at radius $\beta = 1 - \rho - \eps$, approaching the list decoding capacity.
We give a near-linear time
algorithm for recovering such a list, at any radius $\beta \leq 1 - \rho - \eps$ (assuming near-linear time unique decoding for $\cC_{\out}$).
\begin{theorem}[Informal version of \cref{thm:ael_decoding_instantiation}]\label{thm:ael-dec-intro}
Let $\AELC$ be the code obtained via the AEL construction, applied to an outer code
$\calC_{\out}$ and an inner code $\calC_{\inn}$ over alphabet $\Sigma_{\inn}$, using an
$(n,d,\lambda)$-expander $G=(L,R,E)$. 
Let $\cC_{\out}$ be unique-decodable from fraction $\delta_{\dec}$ of errors in time $T_{\dec}$.
For $\eps > 0$, let the list size $\abs{\calL(\cdot, \delta_{\inn}-\eps)}$ for $\calC_{\inn}$ be
bounded by $K=K(\eps)$, and let $\lambda \leq ((\eps\cdot \delta_{\dec})/(200 K))^2$. 

Then, for $B(\eps,K,\delta_{\dec}) = \exp(O(K^5/(\eps^3\cdot \delta_{\dec}^3)))$, the code $\AELC$ can be decoded up to radius $\delta_{\inn}-2\eps$ in randomized time
$\Ot{n\cdot |\Sigma_{\inn}|^d + nK}+T_{\dec}\cdot B(\eps,K,\delta_{\dec})$. 
Further, the produced list is of size at most $B(\eps,K,\delta_{\dec})$.
\end{theorem}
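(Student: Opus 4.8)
The proof lifts the inner code's list‑decodability to $\AELC$ in three stages. Write $\Delta(g,h)$ for the fraction of right vertices on which $g$ and $h$ differ (the distance in $\AELC$), where $g,h\in\Sigma_{\inn}^E\cong(\Sigma_{\inn}^d)^R$. \emph{Stage 1 (localization):} given a received word $g$, use the expander mixing lemma to reduce, at essentially every left vertex, to a list of at most $K$ candidate inner codewords; this converts the global list‑decoding problem into an approximate \emph{list‑recovery} problem for the outer code $\calC_{\out}$. \emph{Stage 2 (regularization):} since $\calC_{\out}$ is only assumed to be uniquely decodable, resolve this list‑recovery problem by applying a weak (Frieze--Kannan / cut‑norm) regularity lemma --- computable in randomized near‑linear time on sparse graphs by the algorithms cited in the introduction --- to an auxiliary graph built from $G$ and the candidate lists, and enumerate a bounded set of ``coarsened'' candidate solutions. \emph{Stage 3 (decoding):} complete each coarsened candidate to a genuine codeword using the near‑linear‑time unique decoder for $\calC_{\out}$, re‑encode through the AEL construction, and test proximity to $g$.

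\emph{Stage 1.} For each $\ell\in L$ let $g|_{\ell}\in\Sigma_{\inn}^d$ be the labels $g$ places on the $d$ edges at $\ell$, and compute $\calL_{\ell}\defeq\calL_{\calC_{\inn}}(g|_{\ell},\,\delta_{\inn}-\eps)$, which has size at most $K$ by hypothesis; brute force over $\Sigma_{\inn}^d$ costs $\Ot{n\cdot\abs{\Sigma_{\inn}}^d}$ in total. Fix a target $h\in\AELC$ with $\Delta(g,h)\le\delta_{\inn}-2\eps$ and let $B_R=\{r:g_r\ne h_r\}$, so $\abs{B_R}/n\le\delta_{\inn}-2\eps$. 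By the expander mixing lemma, all but a $(\lambda/\eps)^2$‑fraction of vertices $\ell\in L$ send at most a $(\delta_{\inn}-\eps)$‑fraction of their edges into $B_R$; since $g|_{\ell}$ and the inner codeword $h_{\ell}\in\calC_{\inn}$ agree on every edge not meeting $B_R$, it follows that $h_{\ell}\in\calL_{\ell}$ for all such $\ell$. Identifying $\Sigma_{\out}$ with $\calC_{\inn}$, this says that the outer codeword underlying $h$ agrees, on a $\bigl(1-(\lambda/\eps)^2\bigr)$‑fraction of coordinates, with a word drawn coordinatewise from the size‑$\le K$ lists $\{\calL_{\ell}\}_{\ell\in L}$. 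Since $(\lambda/\eps)^2\le\delta_{\dec}$ by the hypothesis on $\lambda$, every codeword of $\calL(g,\delta_{\inn}-2\eps)$ lies in the set of outer codewords that agree with a $(1-\delta_{\dec})$‑fraction of these lists, and it suffices to enumerate that set.

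\emph{Stages 2 and 3.} The difficulty is that $\calC_{\out}$ is only uniquely decodable, so the above agreement set need not be small a priori; we bound and enumerate it using regularity. Apply a weak regularity lemma to obtain a partition into $p=\exp(\mathrm{poly}(K/(\eps\,\delta_{\dec})))$ parts under which the relevant graph is pseudorandom between all but a negligible fraction of part‑pairs, and attach to each part (or pair of parts) a bounded amount of advice --- at most $K$ choices --- specifying which list element the solution ``predominantly uses'' there. The \emph{combinatorial rigidity} claim is that two outer codewords consistent with a $(1-\delta_{\dec})$‑fraction of the lists and carrying the same advice must agree on a $(1-\delta_{\dec})$‑fraction of $L$: roughly, a positive‑density set of coordinates on which they pick different list elements would, because distinct inner codewords disagree on a $\ge\delta_{\inn}$‑fraction of positions and because the partition is pseudorandom, spread enough disagreement onto the right vertices to contradict both codewords' closeness to $g$. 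Hence the agreement set --- and with it $\calL(g,\delta_{\inn}-2\eps)$ --- has size at most the number of advice strings, $K^{\mathrm{poly}(p)}=\exp(\exp(\mathrm{poly}(K/(\eps\,\delta_{\dec}))))=B(\eps,K,\delta_{\dec})$, which already yields the stated combinatorial bound. Algorithmically, enumerate all $B$ advice strings; each determines a word in $\Sigma_{\out}^L$ (reading off, coordinatewise, the predominant list element prescribed by the advice), which for the correct advice agrees with the true outer codeword on a $(1-\delta_{\dec})$‑fraction of $L$, so the unique decoder for $\calC_{\out}$ recovers it in time $T_{\dec}$; re‑encode through the AEL map and keep the result if it lies within radius $\delta_{\inn}-2\eps$ of $g$. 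Stage 1 together with the near‑linear‑time regularity algorithm run in randomized time $\Ot{n\cdot\abs{\Sigma_{\inn}}^d+nK}$, the enumeration adds $B(\eps,K,\delta_{\dec})\cdot T_{\dec}$ plus $B(\eps,K,\delta_{\dec})\cdot\Ot{n}=\Ot{n}$ for re‑encoding and checking, and the final list has size at most $B(\eps,K,\delta_{\dec})$.

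\textbf{Main obstacle.} The crux is the rigidity claim of Stage 2: identifying the right auxiliary graph and the right notion of ``advice over the partition'', and proving --- with the stated small value of $\lambda$ --- that two outer codewords consistent with the lists and carrying the same advice must agree almost everywhere. The subtlety is that the list elements live in a constant‑size but large inner code, so one cannot guess their values directly; the argument must convert a \emph{within‑part} disagreement in list‑element selection into an \emph{on‑$R$} disagreement by combining the pseudorandomness of the partition with the distance of $\calC_{\inn}$, and the crude pairwise‑distance estimate does not by itself suffice --- one must track how left‑side disagreement concentrates onto right vertices inside a regular part‑pair. Simultaneously controlling the three error budgets --- confused left vertices (from expander mixing), coordinates where the advice mispredicts inside a regular part‑pair (from the regularity granularity), and coordinates in exceptional parts or part‑pairs (from the weak‑regularity guarantee) --- so that their total is at most $\delta_{\dec}$, while keeping the number of parts only singly exponential (hence $B$ only doubly exponential), is the delicate quantitative core of the proof. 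One also has to check that the cited sparse‑graph regularity algorithm outputs a partition of exactly the form the rigidity argument needs, and that the $B$‑fold invocation of the outer decoder stays within the claimed near‑linear running time.
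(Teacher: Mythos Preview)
Your overall architecture matches the paper: local list decoding of $\calC_{\inn}$ at each $\ell\in L$, a weak regularity decomposition, enumeration over a bounded number of ``profiles'', and unique decoding of $\calC_{\out}$. You have also correctly identified where the difficulty lies. What is missing is the specific construction that makes the rigidity step go through, and your sketched rigidity mechanism does not work as stated.

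The paper does not apply regularity to a single auxiliary graph; it builds $K$ \emph{agreement graphs} $H_1,\ldots,H_K$ (one per list position), with $E(H_i)=\{e=(\ell,r)\in E(G):(\calL_\ell[i])_e=g_e\}$. These depend only on $g$ and the local lists, and can be computed in linear time. The rigidity is then proved for a \emph{single} fixed target $h$ in the list: set $A_i=\{\ell:\calL_\ell[i]=h_\ell\}$ and $B=\{r:g_r=h_r\}$. The key structural fact is $E_{H_i}(A_i,B)=E_G(A_i,B)$, since on such edges $(\calL_\ell[i])_e=h_e=g_e$. If disjoint candidate sets $S_i$ have the same densities as $A_i$ in the common refinement of the regularity factors of $H_1,\ldots,H_K$, then $\sum_i E_{H_i}(S_i,B)\approx\sum_i E_G(A_i,B)\approx d|B|$. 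On the other hand, any $\ell\in S_i\setminus A_i$ contributes at most $(1-\delta_{\inn})d$ edges to $E_{H_i}(\{\ell\},B)$, because $\calL_\ell[i]\neq h_\ell$ forces disagreement on $\geq\delta_{\inn}d$ incident edges, and any such edge landing in $B$ has $(\calL_\ell[i])_e\neq h_e=g_e$ and is therefore absent from $H_i$. Comparing the two estimates (and using $|B|\geq(1-\beta)n$ with $\beta\leq\delta_{\inn}-\eps$) forces $\sum_i|S_i\setminus A_i|\leq O(K\gamma/\eps)\cdot n$.

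Your version of rigidity --- ``two codewords with the same advice must agree, else their disagreement spreads onto $R$ and contradicts closeness to $g$'' --- does not go through. If $h_1,h_2$ disagree on a left set $W$, expander mixing does show they disagree on many right vertices, but both can still be within $\delta_{\inn}-2\eps$ of $g$ there (e.g.\ $g$ could agree with $h_1$ on half of those right vertices and with $h_2$ on the other half). The paper sidesteps this by never comparing two unknown codewords: the graphs $H_i$ encode agreement with the \emph{known} $g$, and rigidity is a statement about how edge counts in $H_i$ pin down the sets $A_i$ for one $h$ at a time. The enumeration is then over a covering net of density profiles (size $(1/\eta)^{K|\factor|}$) rather than ``$K$ choices per part'', though both give the stated doubly-exponential bound. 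A minor correction in Stage~1: the fraction of bad left vertices is $\lambda/\eps$, not $(\lambda/\eps)^2$, from a single application of expander mixing.
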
 
We note that in addition to a decoding algorithm, \cref{thm:ael-dec-intro} also gives new
\emph{combinatorial bounds} on the list size, using weaker properties of the inner code $\cC_{\inn}$
than the ones used by~\cite{JMST25}. 
Assuming only a bound $K$ on the list-sizes for the ``local" inner code, where $K$ is independent of inner code's blocklength $d$, we get a bound on the lists size for the global code $\AELC$. 
%
% Of course, when stronger properties of the inner code, and consequently stronger bounds for $\AELC$ are available, our algorithm also outputs this smaller list.
% %

We also emphasize that requiring the inner code $\cC_{\inn}$ to have bounded lists sizes (independent
of the block-length $d$) is easy to satisfy. 
The (constant-sized) random linear codes or folded Reed-Solomon used as inner codes by~\cite{JMST25}
already have list size $O(1/\eps)$ at distance $\delta_{\inn} - \eps$.
They use these to obtain \emph{global} list sizes $O(1/\eps)$, but need the alphabet to be doubly
exponential in $1/\eps$. 
On the other hand, we only need the alphabet to be $\exp((1/\eps)^{O(1)})$ at the cost of possibly
having a larger (constant) global list size. 
Of course when we do have a larger alphabet such that the result of~\cite{JMST25} applies, then our
algorithm also outputs a list of size $O(1/\eps)$.
\cref{thm:ael-dec-intro} can be instantiated to yield the following.
\begin{corollary}[Informal version of \cref{cor:ael_decoding instantiation_final}]\label{cor:ael-dec-intro}
For every $\rho, \eps \in (0,1)$, there exists an infinite family of explicit codes $\AELC \subseteq
(\F_q^d)^n$ obtained via the AEL construction, with rate $\rho$, distance $\delta \geq 1 -
\rho - \eps$, alphabet size $2^{1/\eps^{O(1)}}$, and characterized by parity checks of size
$1/\eps^{O(1)}$ such that
\begin{enumerate}
\item For any $g\in (\F_q^d)^R$, the list $\calL(g,1-\rho-\eps)$ is of size at most $\exp(O(1/\eps^{O(1)}))$.
\item There is an algorithm that given any $g\in (\F_q^d)^R$, runs in time $\widetilde{O}_{\eps}(n)$ and outputs the list $\calL(g,1-\rho-\eps)$.
\end{enumerate}
\end{corollary}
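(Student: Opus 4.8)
The plan is to deduce the corollary from \cref{thm:ael-dec-intro} by instantiating the three ingredients and then checking that the resulting global parameters meet the stated bounds. Fix $\rho,\epsilon\in(0,1)$, and introduce auxiliary slack parameters $\epsilon_0,\epsilon_1,\epsilon_2$, each a sufficiently small constant multiple of $\epsilon$ (say $\epsilon/16$), with $\epsilon_1$ playing the role of the ``$\epsilon$'' in \cref{thm:ael-dec-intro}. For the outer code I would take an explicit infinite family (indexed by the blocklength $n$) of $\F_q$-linear LDPC codes of rate $1-\epsilon_2$ with parity checks of size $O(1)$, unique-decodable from a fixed fraction $\delta_{\dec}=\Omega_\epsilon(1)$ of errors in near-linear time $T_{\dec}=\widetilde{O}_\epsilon(n)$ --- for instance a high-rate $\F_q$-linear Tanner code of \cite{SS96} with coordinates grouped into blocks to match the outer alphabet $\Sigma_{\out}$ of size $\abs{\calC_{\inn}}$ required by the AEL construction (grouping preserves $\F_q$-linearity, a constant decoding radius, near-linear decoding time, and constant check size). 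For the graph $G$ I would take an explicit bipartite near-Ramanujan $(n,d,\lambda)$-expander with $\lambda\le\bigl((\epsilon_1\delta_{\dec})/(200K)\bigr)^2$; since such graphs achieve $\lambda$ as small as $O(1/\sqrt d)$, this only forces $d=\mathrm{poly}(1/\epsilon)$. Finally, for the inner code I would take a constant-blocklength $\F_q$-linear code of blocklength $d$, rate $\rho_{\inn}:=\rho/(1-\epsilon_2)$, and distance $\delta_{\inn}\ge 1-\rho_{\inn}-\epsilon_0$, which is list-decodable at radius $\delta_{\inn}-\epsilon_1<1-\rho_{\inn}$ with list size $K:=K(\epsilon)=O(1/\epsilon)$; a folded Reed--Solomon code with folding parameter $\Theta(1/\epsilon)$ over a field $\F_q$ of size $2^{\mathrm{poly}(1/\epsilon)}$ works by \cite{GR08, Kop15}, and being of constant size it can in any case be found by exhaustive search.

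With these choices the verification is routine. The rate bound $\rho(\AELC)\ge\rho(\calC_{\out})\cdot\rho(\calC_{\inn})\ge(1-\epsilon_2)\cdot\rho_{\inn}=\rho$ is immediate. For the distance, $\delta(\AELC)\ge\delta(\calC_{\inn})-\lambda/\delta(\calC_{\out})\ge 1-\rho_{\inn}-\epsilon_0-\lambda/\delta_{\dec}$, and since $\rho_{\inn}-\rho=\rho\epsilon_2/(1-\epsilon_2)=O(\epsilon_2)$ and $\lambda$ is a fixed (small) polynomial in $\epsilon$, choosing the slacks small enough gives $\delta(\AELC)\ge 1-\rho-\epsilon$. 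Because $\calC_{\out}$ is $\F_q$-linear and LDPC and $\calC_{\inn}$ is $\F_q$-linear of blocklength $d$, the AEL construction produces an $\F_q$-linear code over alphabet $\F_q^d$ with $\abs{\F_q^d}=q^d=2^{\mathrm{poly}(1/\epsilon)}=2^{1/\epsilon^{O(1)}}$, and (as AEL preserves the LDPC property, \cite{KMRZS16}) it is cut out by parity checks of size $O(d)$ coming from the inner constraints and $O(d\cdot w)$ from the outer constraints, with $w=O(1)$ bounding the outer check size --- hence of size $\mathrm{poly}(1/\epsilon)=1/\epsilon^{O(1)}$. This establishes every claim in the corollary except the two decoding statements.

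For those I would apply \cref{thm:ael-dec-intro} with these parameters. Its two hypotheses --- $\lambda\le\bigl((\epsilon_1\delta_{\dec})/(200K)\bigr)^2$, and the list-size bound $\abs{\calL(\cdot,\delta_{\inn}-\epsilon_1)}\le K$ for $\calC_{\inn}$ --- hold by construction, so $\AELC$ can be decoded up to radius $\delta_{\inn}-2\epsilon_1$, which is at least $1-\rho_{\inn}-\epsilon_0-2\epsilon_1\ge 1-\rho-\epsilon$ for the slack choices above. Hence $\calL(g,1-\rho-\epsilon)\subseteq\calL(g,\delta_{\inn}-2\epsilon_1)$, and running the (randomized) algorithm of \cref{thm:ael-dec-intro} and then discarding any output codeword $h$ with $\Delta(g,h)>1-\rho-\epsilon$ returns exactly $\calL(g,1-\rho-\epsilon)$. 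The list-size bound from \cref{thm:ael-dec-intro} is $B(\epsilon,K,\delta_{\dec})=\exp(\exp(O(K^4/(\epsilon^3\delta_{\dec}^3))))=\exp(\exp(O(1/\epsilon^{O(1)})))$, and since $\calL(g,1-\rho-\epsilon)$ is a subset of that output it obeys the same bound, giving part (1). For part (2), the runtime is $\Ot{n\cdot\abs{\Sigma_{\inn}}^d+nK}+T_{\dec}\cdot B(\epsilon,K,\delta_{\dec})$; since $\abs{\Sigma_{\inn}}^d=q^d$, $K$, and $B(\epsilon,K,\delta_{\dec})$ are all $O_\epsilon(1)$ while $T_{\dec}=\widetilde{O}_\epsilon(n)$, the total is $\widetilde{O}_\epsilon(n)$.

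I expect essentially no conceptual obstacle: all the substantive work sits in \cref{thm:ael-dec-intro}, and the corollary is pure instantiation. The only point that needs care is confirming the parameter chain is simultaneously consistent --- in particular, that one can pick the inner code at rate $\rho_{\inn}$ slightly \emph{above} $\rho$ while still keeping its distance $\delta_{\inn}$, and therefore the decoding radius $\delta_{\inn}-2\epsilon_1$, above $1-\rho-\epsilon$. This is exactly what forces $\epsilon_0,\epsilon_1,\epsilon_2$ to be taken as small enough constant multiples of $\epsilon$; once they are, $\lambda=\mathrm{poly}(\epsilon)$ and $d=\mathrm{poly}(1/\epsilon)$, and the required explicit near-Ramanujan expander of constant degree, the explicit high-rate LDPC outer code, and the constant-size capacity-achieving inner code all exist off the shelf.
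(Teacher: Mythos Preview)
Your proposal is correct and follows essentially the same approach as the paper: instantiate the AEL construction with a high-rate Tanner outer code, a constant-size capacity-achieving inner code, and a sufficiently good explicit expander, then invoke \cref{thm:ael-dec-intro} and check the parameter chain. The only cosmetic difference is that the paper uses a random linear code for $\calC_{\inn}$ (found by brute force) rather than folded Reed--Solomon, but it explicitly notes folded RS would work equally well, and the paper writes out explicit constants where you use asymptotic slack parameters.
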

\vspace{-10 pt}
\paragraph{List recovery.} 
Our results also extend to the setting of list recovery, which generalizes the list decoding problem. Instead of a $g \in \Sigma^n$ which specifies a symbol in $\Sigma$ for every $i \in [n]$, one is now given lists $\calL_i$ of size (say) $k$, for each $i \in [n]$. 
The goal is now to recover the list of all codewords, which agree with at least one element of the list $\calL_i$, for $1 - \beta$ fraction of the positions $i \in [n]$. Thus, we aim to find
\[
\calL\inparen{\inbraces{\calL_i}_{i \in [n]}, \beta} ~\defeq~ \inbraces{h \in \calC ~\mid~ \Pr{i \in [n]}{h_i \notin \calL_i} \leq \beta} \mper
\]
Several known list decoding bounds for algebraic codes are also known to extend to the case of list recovery~\cite{GR08, GW13, KRSW23, Tamo24}.
On the other hand, list recovery has often been a \emph{requirement} on the outer code $\cC_{\out}$ to even obtain list decoding for AEL codes ~\cite{GR08}.
We show that AEL codes also support near-linear time list recovery up to the optimal radius of $1 - \rho - \eps$, assuming only the \emph{unique decodability} of the outer code $\cC_{\out}$ and list recovery bounds for the inner code $\cC_{\inn}$.
\begin{theorem}[Informal version of \cref{thm:ael_recovery_instantiation}] \label{thm:ael-rec-intro}
Let $\AELC$ be the code obtained via the AEL construction, using an applied to an outer code
$\calC_{\out}$and an inner code $\calC_{\inn}$ over alphabet $\Sigma_{\inn}$, using an
$(n,d,\lambda)$-expander $G=(L,R,E)$. 
Let $\cC_{\out}$ be unique-decodable from fraction $\delta_{\dec}$ of errors in time $T_{\dec}$.
For $\eps > 0$, let the list size $\abs{\calL(\cdot, \delta_{\inn}-\eps)}$ for $\calC_{\inn}$ be
bounded by $K=K(\eps)$ for list recovery with input lists of size $k$, 
and let $\lambda \leq ((\eps\cdot \delta_{\dec})/(250 k \cdot K))^2$. 

Then, for $B(\eps,\lsLR, K,\delta_{\dec}) = \exp(O(\lsLR^5K^4/(\eps^3\cdot \delta_{\dec}^3)))$, the code $\AELC$ can be list recovered to radius $\delta_{\inn}-2\eps$, from inputs lists of
size $\lsLR$, in randomized time
$\Ot{n\cdot |\Sigma_{\inn}|^d + \lsLR K \cdot n}+T_{\dec}\cdot B(\eps,\lsLR, K,\delta_{\dec})$, with a list size of
at most $B(\eps,\lsLR, K,\delta_{\dec})$.
\end{theorem}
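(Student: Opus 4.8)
The plan is to run a three–phase ``local-to-global'' argument: (i) push the right-side input lists across the expander $G$ to obtain, for almost every left vertex $\ell$, a constant-size list $\Lambda_\ell \subseteq \calC_{\inn}$ of candidate inner codewords that is guaranteed to contain the ``true'' value; (ii) use a weak regularity decomposition of $G$ to reduce the resulting \emph{constant-list} list-recovery problem for $\calC_{\out}$ to a bounded number of \emph{unique} decoding instances for $\calC_{\out}$; and (iii) read off the global list-size bound from the same ``combinatorial rigidity'' statement that drives the enumeration. The unique decodability of $\calC_{\out}$ and the local list-recovery bound $K$ for $\calC_{\inn}$ are the only code-specific inputs; everything else is graph structure.

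For phase (i), fix a putative codeword $h \in \AELC$ with at most a $(\delta_{\inn}-2\eps)$-fraction of right vertices $r$ satisfying $h_r \notin \calL_r$, and let $g = g(h) \in \calC_{\out}$ be the outer codeword read off $h$ on the left. For each edge $(\ell,r)$, the list $\calL_r \subseteq \Sigma_{\inn}^d$ projects to a list $\calL_{\ell,r} \subseteq \Sigma_{\inn}$ of size at most $k$ for that single edge, and whenever $h_r \in \calL_r$ the label $h(\ell,r)$ lies in $\calL_{\ell,r}$. An expander mixing lemma computation, using $\lambda$ as small as in the hypothesis, shows that for all but an $\eps' \ll \delta_{\dec}$ fraction of $\ell$, at least a $1 - (\delta_{\inn}-\eps)$ fraction of the neighbours $r \in N(\ell)$ have $h_r \in \calL_r$; for such $\ell$ the codeword $h_\ell = g_\ell$ solves the inner list-recovery instance $\big((\calL_{\ell,r})_{r\in N(\ell)},\, \delta_{\inn}-\eps\big)$ and hence lies in a set $\Lambda_\ell$ with $|\Lambda_\ell| \le K$ that depends only on the received lists. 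All the $\Lambda_\ell$ are computed by brute force over $\calC_{\inn}$ in time $\Ot{n\cdot |\Sigma_{\inn}|^d + kKn}$. After this phase the problem has become: enumerate all $g \in \calC_{\out}$ with $g_\ell \in \Lambda_\ell$ on a $(1-\eps')$-fraction of $\ell$.

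Phase (ii) is where the regularity framework enters. I would compute, in randomized near-linear time, a weak (Frieze--Kannan style) regularity decomposition of the sparse graph $G$, giving partitions $L = \bigsqcup_i V_i$ and $R = \bigsqcup_j W_j$ into $\exp(\mathrm{poly}(kK/(\eps\delta_{\dec})))$ parts that are pseudorandom between almost all pairs. The rigidity claim to be established is that, for any $g$ in the list, the way $g$ interacts with the received lists is ``frozen'' at the resolution of this partition: there is a bounded \emph{profile} of $g$ — a choice, for each part, among a $\mathrm{poly}(kK)$-sized (or $2^{O(kK)}$-sized) set of local types — such that, after conditioning the ensemble of list codewords on the profile, the pseudorandomness of $G$ makes the folded distribution product-like across parts, and then the local list size $K$ of $\calC_{\inn}$ pins down $g_\ell$ for all but a $\delta_{\dec}$ fraction of $\ell$. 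Concretely, each profile yields a single word $\hat g \in \Sigma_{\out}^L$ with $\Delta(\hat g, g) \le \delta_{\dec}$ whenever the profile matches $g$; running the unique decoder of $\calC_{\out}$ on $\hat g$ in time $T_{\dec}$, re-encoding through the AEL map, and testing membership in the list recovers $g$. Since there are $B = \exp\!\exp\!\big(O(k^4K^4/(\eps^3\delta_{\dec}^3))\big)$ profiles, trying all of them costs $T_{\dec}\cdot B$ plus near-linear overhead, and the output has size at most $B$; as every list codeword has \emph{some} profile, the same count gives the combinatorial list-size bound of phase (iii).

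\textbf{Main obstacle.} The crux is the rigidity lemma in phase (ii): choosing the right notion of ``profile'' over the regularity partition and showing that (a) it takes only $\exp(\mathrm{poly})$ values per part, so the total count is doubly exponential and no worse; (b) conditioning on it makes the ensemble of list codewords sufficiently product-like between parts that the \emph{local} list size $K$ of $\calC_{\inn}$ can be exploited part-by-part; and (c) the resulting candidate $\hat g$ really lies within the unique-decoding radius $\delta_{\dec}$ of $g$ once all error budgets — the $\eps'$ from phase (i), the regularity error, and the within-part slack — have been accumulated. Tracking these parameters, and verifying that the hypothesis $\lambda \le (\eps\delta_{\dec}/(250\,kK))^2$ is exactly what the expander-mixing and pseudorandomness steps consume, is the main technical work; the inner-code brute force, the reduction to unique decoding, re-encoding/verification, and the time accounting are then routine.
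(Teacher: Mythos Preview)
Your phase (i) is essentially the same as the paper's: project the right lists $\calL_r$ to edge lists $\calL_e \subseteq \Sigma_{\inn}$ of size $\le k$, brute-force list recover $\calC_{\inn}$ at each $\ell$ to get $\Lambda_\ell$ of size $\le K$, and use expander mixing to show $h_\ell \in \Lambda_\ell$ for all but a small fraction of $\ell$. Good.

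The gap is in phase (ii). You propose to compute a weak regularity decomposition of $G$ itself and then define some notion of ``profile'' over the resulting partition. But $G$ is an $(n,d,\lambda)$-expander, so its Frieze--Kannan decomposition is trivial: the single-part partition already approximates all cut densities to error $\lambda$. Regularity applied to $G$ gives you nothing beyond the expander mixing lemma you already used in phase (i), and in particular it carries no information about the input lists. Your ``profile'' and ``product-like conditioning'' language is too vague to see how you would extract, from a trivial partition, enough structure to pin down $g_\ell$ on a $(1-\delta_{\dec})$-fraction of $\ell$.

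What the paper does instead is apply regularity to \emph{input-dependent subgraphs} of $G$. For each $i \in [K]$ and $j \in [k]$ it forms the agreement graph $H_{ij}$ consisting of those edges $e=(\ell,r)$ on which $(\Lambda_\ell[i])_e = (\calL_r[j])_e$; these are the objects whose regularity decompositions are nontrivial and informative. The ``profile'' is then concrete: for the unknown sets $A_i = \{\ell : \Lambda_\ell[i] = h_\ell\}$, record the density of each $A_i$ in each atom of the common refinement $\factor$ of the $kK$ regularity factors. The rigidity lemma says that any disjoint $S_1,\dots,S_K$ matching these densities (to accuracy $\eta$) must satisfy $\sum_i |S_i \setminus A_i| \le (3kK\gamma/\eps)\,n$, because $E_{H_{ij}}(A_i,B_j) = E_G(A_i,B_j)$ forces the edge counts, and any $\ell \in S_i \setminus A_i$ loses $\ge \delta_{\inn}\,d$ edges in $H_{ij}$ by the distance of $\calC_{\inn}$. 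Enumerating the $(1/\eta)^{K|\factor|}$ density profiles, building $\hat g$ from the corresponding $S_i$, and unique decoding $\calC_{\out}$ then gives both the algorithm and the list-size bound. The missing idea in your outline is precisely this: the regularity lemma has to be run on the agreement graphs $H_{ij}$, not on $G$.
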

As in the case of list decoding, we note that \cref{thm:ael-rec-intro} yields both combinatorial and algorithmic results for list recovery. 
For the case of AEL codes, even combinatorial bounds for list recovery were not known previously,
and thus both aspects of the above result are new.
Another implication of \cref{thm:ael-rec-intro} is the first explicit family of Low-Density Parity Check (LDPC) codes, which achieve capacity for list recovery. 
%
% Using appropriate instantiations with an LDPC outer code (similar to \cref{cor:ael-dec-intro}), we
% also get a \emph{list recoverable LDPC code} (to capacity) since the AEL construction preserves the
% LDPC property. 
%
Using appropriate instantiations with an LDPC outer code, we
also get an LDPC code list recoverable to capacity. 
\begin{corollary}[Informal version of \cref{cor:ael_decoding_instantiation_final}]\label{cor:ael_decoding_instantiation_final}
For every $\rho, \eps \in (0,1)$ and $\lsLR \in \N$, there exists an infinite family of explicit
codes $\AELC \subseteq (\F_q^d)^n$ obtained via the AEL construction, with rate $\rho$, distance
$\delta \geq 1 - \rho - \eps$, alphabet size $\exp(\exp(O(\frac{k}{\eps} \log \frac{k}{\eps})))$,
and characterized by parity checks of size $\exp(\frac{k}{\eps}\log\frac{k}{\eps})$ such that
\begin{enumerate}
\item For any $\{\listr\}_{\ri\in R}$, with $|\listr|\leq \lsLR$, the list $\calL\parens*{\{\listr\}_{\ri\in R},1-\rho-\eps}$ is of size $\exp(\exp(O(\frac{\lsLR}{\eps}  \log \frac{k}{\eps} )))$.
\item There is an algorithm that given any $\{\listr\}_{\ri\in R}$, with each $\listr \sub \F_q^d$ and $|\listr|\leq \lsLR$, runs in time $\widetilde{O}_{\lsLR, \eps}(n)$ and outputs the list $\calL\parens*{\{\listr\}_{\ri\in R},1-\rho-\eps}$.
\end{enumerate}
\end{corollary}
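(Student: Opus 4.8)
We obtain the statement as a corollary of \cref{thm:ael-rec-intro} (in its formal form \cref{thm:ael_recovery_instantiation}), by plugging explicit choices for the three ingredients of the AEL construction --- the bipartite expander $G$, the inner code $\cC_{\inn}$, and the outer code $\cC_{\out}$ --- into that theorem and tracking the resulting parameters. The plan is: (i) fix $\cC_{\inn}$ and $G$ so that the hypotheses of \cref{thm:ael-rec-intro} hold with $d$, $K$, $\lambda$ and $\delta_{\dec}$ all functions of $k$ and $\eps$ only; (ii) pick an explicit \emph{LDPC} outer code that is near-linear-time unique-decodable from a fixed constant fraction of errors and has rate arbitrarily close to $1$; (iii) invoke \cref{thm:ael-rec-intro} and read off rate, distance, parity-check locality, list size and runtime, rescaling $\eps$ (and the various auxiliary slack parameters) by constant factors.

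\textbf{Choice of ingredients.} For $\cC_{\inn}$ I would take a constant block-length folded Reed--Solomon code (or any algebraic code with the same behaviour), with rate a small additive amount above $\rho$ and distance $\delta_{\inn} \ge 1 - \rho - O(\eps)$, list-recoverable from input lists of size $k$ to radius $\delta_{\inn} - \eps$ with output list size $K = K(k,\eps) = \exp(O((k/\eps)\log(k/\eps)))$ --- a bound independent of the eventual global block length. This is the regime in which such an inner code has block length $d = (k/\eps)^{O(1)}$ and alphabet $\abs{\Sigma_{\inn}} = \exp(\exp(O((k/\eps)\log(k/\eps))))$, so that the alphabet $\Sigma_{\inn}^d$ of $\AELC$ has the claimed size. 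For $G$ I would take an explicit family of $(n,d,\lambda)$-biregular expanders with $\lambda = O(1/\sqrt{d})$ (e.g.\ bipartite double covers of Ramanujan graphs), enlarging $d$ if needed so that $\lambda \le ((\eps\,\delta_{\dec})/(250\,k\,K))^2$; as $K$, $\delta_{\dec}$, $k$, $\eps$ are constants, so is the required $d$, and such families exist explicitly. For $\cC_{\out}$ I would take an explicit $\F_{q^m}$-linear LDPC code with $q^m = \abs{\cC_{\inn}}$, of rate $\ge 1 - \eps$, with constant-size parity checks, unique-decodable from a fixed constant fraction $\delta_{\dec} > 0$ of errors in time $T_{\dec} = \Ot{n}$ --- for instance a Sipser--Spielman expander (Tanner) code or Spielman's linear-time codes, whose rate can be driven toward $1$ by a suitable choice of the Tanner inner code. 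Since \cref{thm:ael-rec-intro} asks only for \emph{unique} decodability of $\cC_{\out}$, there is no circularity with the list-decoding results for Tanner codes proved elsewhere in the paper.

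\textbf{Assembling the bounds.} With these choices, $\rho(\AELC) \ge \rho(\cC_{\out})\cdot\rho(\cC_{\inn}) \ge (1-\eps)(\rho + O(\eps)) \ge \rho$, and the AEL distance bound gives $\delta(\AELC) \ge \delta_{\inn} - \lambda/\delta(\cC_{\out}) \ge 1 - \rho - \eps$, once lower-order terms are absorbed into the rescaling. Because the AEL construction preserves the LDPC property, inflating the parity-check locality only by the constant factor $d$ (as discussed in \cite{KMRZS16}), the code $\AELC$ has parity checks of size $d\cdot(k/\eps)^{O(1)} = \exp((k/\eps)\log(k/\eps))$. Now \cref{thm:ael-rec-intro} gives list recovery from input lists of size $k$ to radius $\delta_{\inn} - 2\eps \ge 1 - \rho - \eps$ with output list size $B(\eps,k,K,\delta_{\dec}) = \exp(\exp(O(k^4 K^4/(\eps^3\delta_{\dec}^3))))$; substituting $K = \exp(O((k/\eps)\log(k/\eps)))$ and $\delta_{\dec} = \Omega(1)$ yields the triple-exponential bound $\exp(\exp(\exp(O((k/\eps)\log(k/\eps)))))$ of item~1. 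The runtime $\Ot{n\cdot\abs{\Sigma_{\inn}}^d + kKn} + T_{\dec}\cdot B$ is $\widetilde{O}_{k,\eps}(n)$, since $\abs{\Sigma_{\inn}}^d$, $k$, $K$ and $B$ depend only on $(k,\eps)$ and $T_{\dec} = \Ot{n}$, giving item~2.

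\textbf{Main obstacle.} The real work lies not in \cref{thm:ael-rec-intro} but in exhibiting an \emph{explicit} outer code that simultaneously has (a) the LDPC property with constant locality, (b) rate arbitrarily close to $1$, and (c) near-linear-time unique decoding from a fixed constant fraction of errors, over an alphabet that is a power of $\F_q$ of the prescribed size $\abs{\cC_{\inn}}$; and in pinning down the exact quantitative form of AEL's preservation of LDPC-ness so that the resulting global locality is only $\exp((k/\eps)\log(k/\eps))$. A secondary point is to verify that folded Reed--Solomon (or the chosen alternative) indeed provides, \emph{at constant block length}, a list-recovery list size $K$ independent of $n$ together with the stated alphabet size, and that the identification $\abs{\Sigma_{\out}} = \abs{\cC_{\inn}}$ and the fixed edge-ordering can be made explicit. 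None of this is conceptually deep, but the bookkeeping required to land the clean statement --- rate exactly $\rho$, locality $\exp((k/\eps)\log(k/\eps))$, and that specific triple-exponential list size --- is where the care is needed.
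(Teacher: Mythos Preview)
Your high-level strategy --- instantiate \cref{thm:ael_recovery_instantiation} with an explicit list-recoverable inner code, a Tanner outer code, and a Ramanujan expander --- is exactly the paper's. The paper uses random linear codes (via \cite{LS25}) rather than folded Reed--Solomon for $\cC_{\inn}$, but explicitly notes that folded RS would work just as well; both give $K = \exp(O((k/\eps)\log(k/\eps)))$.

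However, your parameter accounting is internally inconsistent in a way that, if left uncorrected, invalidates the claimed bounds. You assert the inner block length is $d = (k/\eps)^{O(1)}$, but the expander constraint $\lambda \le ((\eps\,\delta_{\dec})/(250kK))^2$ together with $K = \exp(O((k/\eps)\log(k/\eps)))$ forces $\lambda \le \exp(-\Omega((k/\eps)\log(k/\eps)))$, hence $d = \Omega(1/\lambda^2) = \exp(\Omega((k/\eps)\log(k/\eps)))$. Your own later equation ``$d\cdot(k/\eps)^{O(1)} = \exp((k/\eps)\log(k/\eps))$'' for the parity-check size silently uses this exponential $d$, contradicting the polynomial $d$ you asserted. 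Relatedly, the inner alphabet $|\Sigma_{\inn}|$ for folded RS or random linear codes at these parameters is only $k^{O(1/\eps)}$, not doubly exponential; the AEL alphabet $|\Sigma_{\inn}|^d$ becomes $\exp(\exp(O((k/\eps)\log(k/\eps))))$ precisely because $d$ is exponential, not because $|\Sigma_{\inn}|$ is. Finally, $\delta_{\dec}$ cannot be taken as $\Omega(1)$: since $\cC_{\out}$ has rate $1 - O(\eps)$, its distance and unique-decoding radius are $\mathrm{poly}(\eps)$ (the paper uses $\delta_{\dec} = \Theta(\eps^3)$). This only contributes an extra $1/\eps^{O(1)}$ inside the double exponential and is absorbed by the dominant $K^4$ term, but you should track it rather than assume it away. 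Once you fix $d$ to be exponential and $|\Sigma_{\inn}|$ to be $k^{O(1/\eps)}$, your argument matches the paper's.
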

%
% However, note that the alphabet size of the new code is $\Sigma_{\inn}^d$, which can be a large constant (growing with $d$).
%
Note, however that the alphabet size of the LDPC codes obtained via the AEL construction is a large
constant (depending on $\eps$).
Our framework also gives new list decoding results for classical LDPC code constructions with a fixed constant alphabet (such as $\F_2$), which we discuss next.

\subsubsection{Tanner codes}
LDPC codes are codes whose codewords can be characterized by local checks. 
Constructions of LDPC codes based on sparse graphs were introduced by the seminal work of
Tanner~\cite{Tanner81}, who gave lower bounds on distance based on the girth of the graph.
Sipser and Spielman~\cite{SS96, Spi96} gave the first constructions of Tanner codes with distance bounds
based on the expansion of the graph, which also admitted elegant linear time encoding and (unique) decoding
algorithms. These was later improved by \Zemor~\cite{Zemor01} to (unique) decode from a much larger
fraction of errors. 
Variants of these constructions have led to several applications~\cite{RU:book} and also serve as the 
building blocks in recent constructions of locally testable codes by Dinur \etal~\cite{DELLM22}
and quantum LDPC codes by Panteleev and Kalachev~\cite{PK22} (see also \cite{LZ22, DHLV23}).

In the setting of \emph{erasures} where the location of the corruptions in the transmitted codeword is known,
recent work has also led to linear-time list decoding algorithms~\cite{RWZ21, HW15}\footnote{In the
erasure setting, where the error locations are known, polynomial time list decoding is often
easier. For example, linear codes can be list decoded from erasures in polynomial time via Gaussian elimination by simply ignoring the known error
locations. Thus, the focus there is to give a \emph{faster} algorithms. However, for the case for
unknown error locations, even polynomial time algorithms are not always known.}, which also
work for list recovery in the large alphabet (and high-rate) case~\cite{HW15}.
In the more challenging setting of decoding from errors, list decoding up to the Johnson bound was
only achieved recently~\cite{JST23} using the SoS semidefinite programming hierarchy. 
At radii beyond the Johnson bound, neither combinatorial nor algorithmic results for list decoding
were known in the case of expander-based Tanner codes.
Our methods yield both combinatorial and algorithmic list decoding for such codes, at radii
arbitrarily close to their distance. We state our results for version of the construction on
balanced bipartite expander graphs~\cite{SS96, Zemor01}. 

Given $G=(L,R,E)$ which is an $(n,d,\lambda)$-expander, the code $\TanC$ is a code with 
block-length $\abs{E} = nd$. 
Given an alphabet $\Sigma$, the code consists of all edge-labelings $h \in \Sigma^E$, such that the
labels in the neighborhood of every vertex
\footnote{Our results also apply to other variants, such as one with different base codes $\calC_1,
  \calC_2$ for the left and right vertices.}, 
belong to a ``base code'' $\calC_0 \subseteq \Sigma^d$.
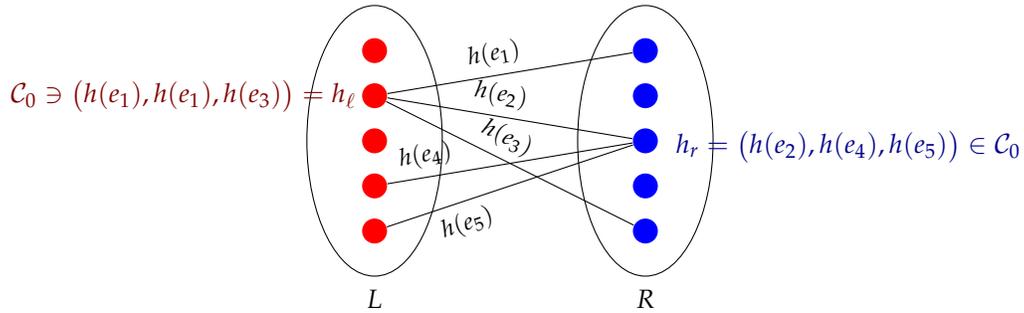
\begin{figure}[htb]
\begin{center}	
\begin{tikzpicture}[scale = 0.6]
\begin{scope}
\draw[] (0,0) ellipse (1.5cm and 3cm);
\node[below] at (0,-3.1) {$L$};
\draw[] (6,0) ellipse (1.5cm and 3cm);
\node[below] at (6,-3.1) {$R$};

%\rotatebox{90}{$\,=$}\cC_\inn \ni \varphi(f^*(\ell))=  

\node[fill,circle,red] (a1) at (0,2) {};
\node[fill,circle,red] (a2) at (0,1) {};
\node[left,darkred] at (-0.2,1) {$\cC_0 \ni \parens[\big]{h(e_1),h(e_1), h(e_3)} = h_{\ell}  $};
%\node[left] at (-1.9,0.55) {$\Large \rotatebox{90}{\,=} $};
%\node[left] at (-1.2,0.2) {$\cC_\inn \ni \varphi(f^*(\ell))  $};
\node[fill,circle,red] (a3) at (0,0) {};
\node[fill,circle,red] (a4) at (0,-1) {};
\node[fill,circle,red] (a5) at (0,-2) {};
%label={$(f_{\ell}(e_2),f_{\ell'}(e_3), f_{\ell''}(e_2))$}
\node[fill,circle,blue] (b1) at (6,2) {};
\node[fill,circle,blue] (b2) at (6,1) {};
\node[fill,circle,blue] (b3) at (6,0) {};
\node[below,darkblue] at (10.5,0.45) {{$h_r = \parens[\big]{h(e_2),h(e_4), h(e_5)} \in \cC_0$}};
\node[fill,circle,blue] (b4) at (6,-1) {};
\node[fill,circle,blue] (b5) at (6,-2) {};

\draw[](a2)--node[pos=0.45,sloped, above] {\small{$h(e_1)$}}(b1);
\draw[](a2)--node[pos=0.45,sloped, above] {\small $h(e_2)$}(b3);
\draw[](a2)--node[pos=0.45,sloped, above] {\small $h(e_3)$}(b5);
\draw[] (a4)--node[pos=0.17,sloped, above] {\small{$h(e_4)$}}(b3);
\draw[] (a5)--node[pos=0.3,sloped, below] {\small{$h(e_5)$}}(b3);

%\node[below] at (2.8,-3.8) {Illustration of the AEL procedure};
\end{scope}
\end{tikzpicture}
\vspace{-10 pt}
\end{center}
\caption{Bipartite construction for Tanner codes}
\label{fig:ael}
\end{figure}
Note that as opposed to AEL codes, the alphabet $\Sigma$ of $\TanC$ is same as that of base code, and the distance of
two codewords is measured as the fraction of \emph{edges} on which they differ.
When the base code has (fractional) distance $\delta_0$ and $G$ is an $(n,d,\lambda)$-expander, the distance of the code is known to be at least $\delta = \delta_0 \cdot
(\delta_0 - \lambda)$. 

We show that if the base code $\cC_0$ has bounded list sizes (independent of the block-length $d$)
at radius $\delta_0 - \eps$, then code $\TanC$ also has bounded list sizes (independent of $n$) at
radius $\delta_0^2 -2 \eps$, for sufficiently small $\lambda$. Moreover, the list can be recovered in
time $\Ot{\abs{E}}$.
\begin{theorem}[Informal version of \cref{cor:tanner-instantiation}]\label{thm:tanner-dec-intro}
	Let $\TanC$ be the Tanner code defined using an $(n,d,\lambda)$-expander and a base code $\calC_0 \subseteq \Sigma^d$. Suppose $\calC_0$ has distance at least $\delta$, and is list decodable up to radius $\delta^{\dec}$ with a list size of $K$ that is independent of $d$. Then given any $\eps>0$, if $\lambda \leq c\cdot \frac{\eps^4}{K^4}$ for some small constant $c$, then there is an algorithm that takes as input $g\in \Sigma^E$, runs in time $\Ot{n}$, and returns the list $\calL(g,\delta\cdot \delta^{\dec}-\eps)$, of size at most $\exp(O(K^9/\eps^3))$.
\end{theorem}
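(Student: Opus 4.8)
The plan is to follow a ``local-to-global via regularity'' template: extract a constant-size list of candidate local codewords at each vertex, use a weak (Frieze--Kannan-type) regularity decomposition of the expander $G$ to compress the possible global patterns of a list codeword into boundedly many \emph{coarse profiles}, and then, for each coarse profile, run a near-linear-time combinatorial correction step producing at most one codeword. Concretely, given $g \in \Sigma^E$, for every vertex $v$ (on both sides) I would compute the local list $\calL_v := \calL_{\calC_0}(g_v, \delta^{\dec})$ by brute force; since $\lambda \le c\,\eps^4/K^4$ forces the block length $d = d(\lambda)$ and the alphabet to be constants, this costs $O(1)$ per vertex, $O(n)$ in total, and $|\calL_v| \le K$. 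The reason the target radius is $\delta\cdot\delta^{\dec}-\eps$ is that Markov applied to $\tfrac1n\sum_{\ell}\Delta(g_\ell,h_\ell) \le \delta\delta^{\dec}-\eps$ shows that any codeword $h$ in the list has $h_\ell \in \calL_\ell$ for all but a $(\delta-\Omega(\eps))$-fraction of left vertices (and symmetrically on the right); the factor $\delta$ is just the per-vertex budget $\delta^{\dec}$ dividing the edge budget.

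Next I would run a weak regularity decomposition of the biadjacency matrix of $G$ with error parameter $\gamma$ polynomially small in $\eps/K$, using the known randomized near-linear-time algorithm for sparse graphs cited in the introduction; this yields partitions of $L$ and $R$ into $s,t \le \exp(\mathrm{poly}(K/\eps))$ parts with pseudorandom cross-edge behavior. Indexing the at most $K$ elements of each $\calL_v$ by $[K]$, a \emph{coarse profile} assigns one index to each part (more robustly, a small distribution over indices per part), so there are at most $K^{s+t} = \exp(\exp(O(K^8/\eps^3)))$ of them. The algorithm enumerates all profiles $\sigma$; for each it forms the induced partial edge-labeling (on $(\ell,r)$ use the $\sigma$-designated element of $\calL_\ell$ or of $\calL_r$), treats the remaining edges as erasures, runs a linear-time expander-based routine (erasure filling interleaved with iterative local correction, in the style of Sipser--Spielman/\Zemor) to obtain a candidate $h_\sigma \in \TanC$, keeps it iff $\Delta(g,h_\sigma) \le \delta\delta^{\dec}-\eps$, and finally deduplicates. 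The number of profiles is a constant in $n$ and each correction is $\Ot{nd}=\Ot{n}$, so the running time is $\Ot{n}$; the output list has size at most the number of profiles, and --- once correctness below is established --- equals $\calL(g,\delta\delta^{\dec}-\eps)$, which therefore also has size $\exp(\exp(O(K^8/\eps^3)))$.

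Correctness hinges on a \textbf{combinatorial rigidity lemma}, and proving it is where all the work lies. For $h$ in the target list, define its profile by recording in each part the plurality local-list index used by $h$; one must show (i) this profile is one of the enumerated ones, and (ii) the correction routine seeded by it outputs $h$. Step (ii) reduces to a structural claim: a codeword is pinned down by its coarse profile on a fraction of edges lying below the correction radius of $\TanC$. I would prove this by combining pseudorandomness of the cross-edges between parts (from weak regularity), rigidity of the base code (two distinct local codewords disagree on $\ge \delta d$ coordinates, so inside a pair of parts the edge-consistency constraints force the local behavior of $h$ to be near-uniform), and expansion of $G$ (small $\lambda$, so by the expander mixing lemma the ``doubly-unexplained'' edges --- those incident to a vertex outside the local lists on \emph{both} sides --- form only a $((\delta-\Omega(\eps))^2+\lambda)$-fraction, not a $\delta$-fraction). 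The main obstacle is making this tight: the unexplained edges a priori account for $\approx \delta^2 \approx \delta(\TanC)$, roughly twice the unique-decoding radius, so the argument must crucially use that the unexplained positions are \emph{erasures} rather than errors, and that the profile-explained positions together with the Tanner constraints over-determine $h$ --- all while absorbing the error incurred by the regularity approximation, which is what dictates taking $\gamma = \mathrm{poly}(\eps/K)$ and $\lambda \le c\,\eps^4/K^4$. The list-size bound is then just the profile count: $\exp(\mathrm{poly}(K/\eps))$ parts, $K$ choices each, i.e.\ $\exp(\exp(O(K^8/\eps^3)))$.
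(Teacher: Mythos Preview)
Your high-level template (local lists, regularity, enumeration, erasures-plus-errors correction) matches the paper, and your Markov bound showing $h_\ell \in \calL_\ell$ for all but a $(\delta-\eps)$-fraction of vertices is exactly \cref{prop:local_presence}. But there is a genuine gap in \emph{what} you apply regularity to.

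You propose to regularize the biadjacency matrix of $G$. This gives a partition that knows nothing about $g$ or the local lists --- indeed, since $G$ is already a $\lambda$-expander, the trivial partition is already regular for $G$ (this is \cref{remark:trivial_factor}). There is no reason the true index sets $A_i = \{\ell : h_\ell = \calL_\ell[i]\}$ should respect such a partition, and your rigidity sketch (``inside a pair of parts the edge-consistency constraints force the local behavior of $h$ to be near-uniform'') does not go through: regularity of $G$ only controls edge \emph{counts} in $G$ between parts, not the distribution of local-list indices, and the local-list ordering is arbitrary.

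What the paper does instead is define $K^2$ \emph{agreement subgraphs} $H_{ij} \subseteq G$, where edge $e=(\ell,r)$ survives in $H_{ij}$ iff $(\calL_\ell[i])_e = (\calL_r[j])_e$, and then applies regularity to each $H_{ij}$. The crucial structural fact is $E_{H_{ij}}(A_i,B_j) = E_G(A_i,B_j)$: no edges between the ``correct'' index sets are deleted. The factor $\factor$ is the common refinement of the $K^2$ regularity factors; one then enumerates not over a single index per cell but over tuples of \emph{densities} $(\bar f_1,\ldots,\bar f_K)$ on the cells of $\factor$ (your parenthetical ``distribution over indices per part'' is the right intuition, but it must be tied to these specific graphs). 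The rigidity lemma (\cref{lem:tanner-rigidity}) then says: if disjoint sets $S_i$ match $A_i$ in $\factor$-densities, regularity of the $H_{ij}$ forces $\sum_{i,j} E_{H_{ij}}(S_i,B_j) \approx \sum_{i,j} E_G(S_i,B_j)$, and since any $\ell \in S_i \setminus A_i$ loses at least $\delta d$ edges to $\cup_j B_j$ in $H_{ij}$ (distinct codewords of $\calC_0$), one gets $\sum_i |S_i\setminus A_i| \le O(K^2\gamma/\eps)\cdot n$. This is where the $K^4/\eps^4$ constraint on $\lambda$ and the $\exp(\exp(K^8/\eps^3))$ list bound actually come from: $\gamma \approx \eps^2/K^2$, so the factor has $\exp(O(K^2/\gamma^2)) = \exp(O(K^6/\eps^4))$ cells, and the net has $(1/\gamma)^{K\cdot|\factor|}$ points.

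Your correction step is essentially right: the paper uses the erasure-and-error decoder of \cref{lem:err_and_erasure} seeded from the left side only (mark $\ell \notin \cup_i S_i$ as erasures, use $\calL_\ell[i]$ on $S_i$), and the budget $2\cdot\text{errors} + \text{erasures} \le \delta - \eps/2$ is met precisely because rigidity bounds the errors and \cref{prop:local_presence} bounds the erasures.
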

Note that if we choose the (constant-sized) base code so that $\delta^{\dec} \geq \delta -\eps$, then above algorithm can decode $\TanC$ to radius $\delta^2 - 2\eps$.
To the best of our knowledge, both the combinatorial and algorithmic aspects of
\cref{thm:tanner-dec-intro} are novel for Tanner codes.
All the above results are obtained via a common framework for obtaining ``combinatorial rigidity''
results via graph regularity lemmas, which we describe below.

%
%%% Temporary, to compile
%\newcommand{\CUT}{\textup{CUT}}
%\newcommand{\SCUT}{\textup{CUT}_{\pm}}

\subsection{Technical overview}
The starting point for our argument is the simple observation that the proofs of distance and other
local-to-global phenomena for expander-based codes essentially rely on estimating the number of
edges between subsets of vertices in the graph, via the expander mixing lemma. 
For carefully chosen subsets $S, T$, the proofs rely on relating the number of edges with some
property (say where two codewords differ) to sizes $\abs{S}, \abs{T}$. 
To develop an algorithmic analogue of such proofs, we consider regularity
lemmas, which provide a framework to capture any such ``edge counting'' proofs for subgraphs $H$ of an expanding graph $G$.
\vspace{-5 pt}
\paragraph{Regularity lemmas.} 
The well-known weak regularity lemma of Frieze and Kannan~\cite{FK96:focs} gives an approximate
decomposition of the adjacency matrix of any (dense) graph as a linear combination of few cut matrices of the form $\indicator{S_t}\indicator{T_t}^{\trans}$. 
Formally, given any graph $H$ on $n$ vertices and $\gamma \in (0,1)$, there exist $\inbraces{(c_t,
  S_t, T_t)}_{t \in [p]}$ for $p \leq 1/\gamma^2$, where $c_t \in \R$ and $S_t, T_t \subseteq [n]$ and $\sum_t \abs{c_t} \leq 1/\gamma$,
such that for any $S, T \subseteq [n]$ 
\[
\abs*{\abs{E_H(S,T)} - \sum_{t \in [p]} c_t \cdot \abs{S_t \cap S}\abs{T_t \cap T}} ~\leq~ \gamma n^2
\quad \Leftrightarrow \quad
\abs*{\indicator{S}^{\trans}{\inparen{A_H - \sum_{t \in [p]} c_t
      \indicator{S_t}\indicator{T_t}^{\trans}}\indicator{T}}} ~\leq~ \gamma n^2 \mper
\]
Several known statements generalize the above to subgraphs of an expanding
graph~\cite{ReingoldTTV08, TrevisanTV09, CCFA09, OGT15, BV22}. A version for subgraph $H$ of an
$(n,d,\lambda)$-expander $G$, would give (see \cref{sec:reg_lemma})
\[
\abs*{\abs{E_H(S,T)} - \sum_{t \in [p]} c_t \cdot \frac{d}{n} \cdot\abs{S_t \cap S}\abs{T_t \cap T}}
~\leq~  \inparen{\gamma + \frac{\lambda}{\gamma}} \cdot nd 
~\leq~ 2\gamma \cdot nd \qquad (\text{for}~\lambda \leq \gamma^2)\mper
\]
Thus, any proof or statement regarding edge counts between any sets $S, T$ can equivalently be
phrased as a statement about their intersections with \emph{fixed} sets $S_1, \ldots, S_p$ and $T_1,
\ldots, T_p$.
Moreover there exist near-linear time algorithms to compute the decomposition $\inbraces{(c_t, S_t,
  T_t)}_{t \in [p]}$, which allow us to efficiently search for sets $S, T$ with given intersection
sizes (up to a small error) and thus, also a given number of edges between them.
We will next show how (for example) the list decoding problem for AEL codes can be characterized in
terms of this search.
\vspace{-5 pt}
\paragraph{Local lists for AEL codes.} 
Recall that given $g \in (\Sigma_{\inn}^d)^R$, the list decoding task for $\AELC$ is to find all $h$
such that $\dis_R(g,h) \leq \beta$, where we use $\dis_R(g,h)$ to denote the fraction of right vertices
$r \in R$ on which $g$ and $h$ differ.
For arbitrarily small $\eps$, we will take $\beta = \delta(\AELC) - 2\eps \leq \delta(\cC_{\inn}) - 2\eps$.
We can also view $g, h \in \Sigma_{\inn}^E$ as labeling the edges, which defines for each
$\ell \in L$ \emph{local projections} $g_{\ell}, h_{\ell} \in \Sigma_{\inn}^d$, defined to be the restrictions of $g$ and $h$ respectively to the edges incident on $\li$.  For $g_{\ell}$ and
$h_{\ell}$ we measure local distances $\Delta(g_{\ell}, h_{\ell})$ as a fraction of the $d$ edges
incident on $\ell$.

Given $g \in (\Sigma_{\inn}^d)^R$, we can compute the local projections $\inbraces{g_{\ell}}_{\ell
  \in L}$, and list decode each $g_{\ell}$ within the inner code $\cC_{\inn}$ to find the \emph{local
lists} $\calL_{\ell}(g_{\ell}, \beta + \eps) ~=~ \inbraces{f_{\ell} \in \cC_{\inn} ~|~
\dis(f_{\ell}, g_{\ell}) \leq \beta + \eps}$.
An application of the expander mixing lemma can be used to show that for any $h \in \AELC$ in the global
list, its local projection does occur in almost all local lists. In fact, for any $g, h \in
(\Sigma_{\inn}^d)^R$ with $\dis_R(g,h) \leq \beta$, we have (for, say $\lambda \leq \eps \cdot \gamma$)
\[
\Pr{\ell \in L}{\dis(g_{\ell}, h_{\ell}) \leq \beta+\eps} 
~~\geq~~ 1 - \nfrac{\lambda}{\eps} 
~~\geq~~ 1 - \gamma\mper
\]
If we have a bound of (say) $K$ on the list sizes for the inner code, we know that
$\abs{\calL_{\ell}(g_{\ell},\beta+\eps)} \leq K$ for each $\ell \in L$.
By padding and ordering arbitrarily, we can take each $\calL_{\ell}$ to be an ordered list of size
$K$, with the $i$-th element denoted as $\calL[i]$.
If we only knew \emph{which} of the $K$ elements in each $\calL_{\ell}$ equals $h_{\ell}$, we would
be able to recover the codeword $h$ in the global list, since we would know it on almost all $\ell
\in L$ (and then unique decode in the code $\cC_{\out}$).
The task is then find the right $i \in [K]$ for each $\ell \in L$, such that $\calL[i]=h_{\ell}$.
\begin{figure}[h]
\centering
\includegraphics[width=0.5\textwidth]{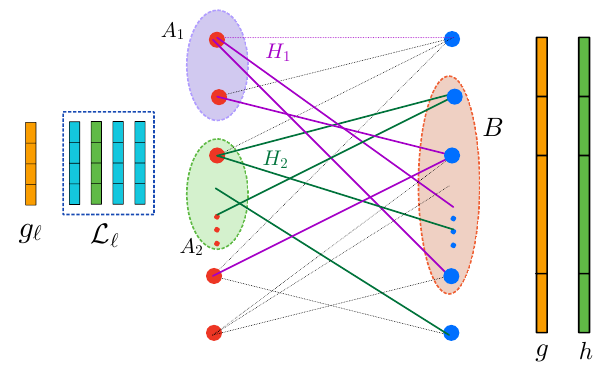}
\vspace{-10 pt}
\caption{Agreement graphs for list decoding AEL codes}
\vspace{-10 pt}
\label{fig:ael-graphs-intro}
\end{figure}
\vspace{-5 pt}
\paragraph{Agreement graphs.}
Fix a codeword $h \in \AELC$ in the global list, and for $i \in [K]$, let $A_i \subseteq L$ be the
set where $h_{\ell}$ appears in the $i$-th position in the local list. 
We will also ignore the $\gamma$ fraction of vertices where $h_{\ell} \notin \calL_{\ell}$ and assume
$A_1, \ldots, A_K$ form a partition of $L$.
Note that picking $h_{\ell}$ from the the local lists is equivalent to searching for the sets $A_1, \ldots,
A_K$. 
We will show that they can be found using the regularity lemma, by characterizing their
edge-densities in certain ``agreement graphs''.

For $i \in [K]$, define $H_i$ to be the graph with the set of edges where the $i$-th element of
$\calL_{\ell}$ agrees with $g$ 
\[
E(H_i) ~=~ \{e = (\ell, r) \in E(G) ~|~ (\calL_{\ell}[i])_e = g_e \} \mper
\]
Note that while we do not know the sets $A_i$, we can easily compute the graphs $H_i$ in linear
time, as we know $g$ and the local lists $\calL_{\ell}$.
Also, if $B = \inbraces{r \in R ~|~ g_r =h_r}$ denotes the set of right vertices where $g$ and $h$ agree, then we must have $E_{H_i}(A_i,B) = E_G(A_i,B)$ since for every $e = (\ell,r) \in E_G(A_i,B)$,
we have $(\calL[i])_e = h_e = g_e$.
Thus, while $H_i$ might have fewer edges than $G$, none of the edges in $E_G(A_i,B)$ are deleted, which allows the set $A_i$ to stand out in the regularity decomposition.

If $\inbraces{(c_{i,t}, S_{i,t}, T_{i,t})}_{t \in [p]}$ is a regularity decomposition of the graph
$H_i$, then $\abs{E_{H_i}(A_i,B)}$ can also be characterized using the intersection sizes
$\inbraces{\abs{A_i \cap S_{i,t}}, \abs{B \cap T_{i,t}}}_{t \in [p]}$. 
%
% Since the sets $S_{i,t}, T_{i,t}$ are not necessarily disjoint, it will be more convenient
% to work with the partition $\factor_i$ generated by these sets and has at most $2^{2p}$ cells.
%
In fact, if we consider the family $\family$ of  all the sets $\inbraces{S_{i,t}, T_{i,t}}_{i \in [K], t \in [p]}$ 
appearing the regularity decompositions of all the graphs $H_i$, then the intersection sizes of of any $S$ and $T$ with all sets in the family $\family$, simultaneously determine $\abs{E_{H_i}(S,T)}$ for all $i \in [K]$.
%
% then the partition $\factor$ generated by 
% these sets has at most $2^{2pK}$ cells, and the intersections of any $S,T$ with cells of this
% partition simultaneously determine $\abs{E_{H_i}(S,T)}$ for all $i \in [K]$.
%
\vspace{-5 pt}
\paragraph{Rigidity.}
Let $\family$ be the family of all the sets appearing the regularity decompositions of all
the agreement graphs $H_i$. 
For a set $S$, we refer to the intersection sizes of $S$ with all sets in $\family$ (normalized to be between 0 and 1, or correlations when viewed as functions) as the {\deffont signature} $\sigma(S)$ of $S$ with respect to the family $\family$. 
We will show the signatures essentially characterize the sets
$\inbraces{A_i}_{i \in K}$ \ie any collection of sets $\inbraces{A_i'}_{i \in [K]}$ which have (approximately) the same signatures, must actually be close to the sets $\inbraces{A_i}_{i \in K}$. 
%
% We will show that densities in the cells of $\factor$ essentially characterize the sets
% $\inbraces{A_i}_{i \in K}$ \ie any collection of sets $\inbraces{A_i'}_{i \in [K]}$ which have (approximately) the same density in
% each cell of $\factor$ as $\inbraces{A_i}_{i \in K}$, must actually be close to the sets $\inbraces{A_i}_{i \in K}$. 
%
This ``combinatorial rigidity'' will allow us to search for the sets $\inbraces{A_i}_{i \in K}$ by
simply searching for their \emph{signatures}. 
By choosing an appropriate level of discretization $\eta$ for the densities, it will be sufficient to
enumerate over a net of size at most $(1/\eta)^{\abs{\family}} \leq (1/\eta)^{{2pK}}$, where the
parameters $\eta, p, K$ are all constants independent of $n$.

The rigidity property follows from the fact that being close in their signatures
implies that $E_{H_i}(A_i', B) \approx E_{H_i}(A_i,B) = E_G(A_i,B)$ for all $i \in [K]$, using the regularity
lemma. Formally, we get
\[
\sum_{i \in [K]} \abs{E_{H_i}(A_i',B)} 
~~\geq~~ \abs{E_{G}(\cup_i A_i,B)} - K \cdot \gamma \cdot dn 
~~\geq~~ d \cdot \abs{B} - K\cdot\gamma \cdot dn \mcom
\]
for sufficiently close densities of $A_i'$ and $A_i$, and error parameter $\gamma$ from the
regularity lemma. Thus, a typical vertex in $L$, contributes $\approx d \cdot \abs{B}/n = d \cdot (1
- \beta)$ edges to the above sum.
On the other hand, any vertex in $A_i' \setminus A_i$ can contribute at most $d \cdot (1 - \delta(\cC_{\inn}))$
edges in $E_{H_i}(A_i',B)$. This is because $\calL_{\li}[i]$ and $h_{\ell}$ must differ on such a vertex on at least $\delta(\cC_{\inn}) \cdot d$ edges, and these cannot go to $B$ (where $g$ and $h$ agree) in the graph $H_i$ (where $\calL[i]$ and $g$ agree).
Edge counting using the expander mixing lemma yields
\[
\sum_{i \in [K]} \abs{E_{H_i}(A_i',B)} ~~\leq~~ \inparen{\sum_i \abs{A_i' \cap A_i}} \cdot \abs{B}
\cdot \frac{d}{n} + \lambda \cdot dn + \inparen{\sum_i \abs{A_i'\setminus A_i}} \cdot (1 - \delta(\cC_{\inn}))
\cdot d \mper 
\]
Comparing the two bounds and using $|B| \geq (1-\beta)n$ gives $\sum_i \abs{A_i' \setminus A_i} \leq
((K\gamma+\lambda)/(\delta(\cC_{\inn}) - \beta)) \cdot n$, which can be made arbitrarily small by
choosing the regularity error $\gamma$ and $\lambda$ sufficiently small compared to
$\delta(\cC_{\inn}) - \beta \geq 2\eps$.
Note that we only needed rigidity for the left sets $A_1, \ldots, A_K$, and it will actually suffice for our applications to only consider signatures with respect to families of subsets of the left set $L$.
\vspace{-5 pt}
\paragraph{List decoding via enumeration.}
Given the above rigidity property, we just need to (approximately) find the correct signatures for the sets $A_1, \ldots, A_K$ corresponding to each global codeword $h \in \calL$.
Enumerating with a discretization $\eta$, such that the error in the edge densities (relative to $\abs{E(G)} = nd$) is bounded by $\gamma$ (which is chosen to be small relative to $\eps$), we are guaranteed to find sets $A_1', \ldots, A_K'$ sufficiently close to $A_1, \ldots, A_K$. We can then recover $h$ by unique decoding in $\cC_{\out}$ to correct for the small differences.
Note that the enumeration argument also implies a combinatorial bound of $(1/\eta)^{\abs{\family}}$ on the size of the list, since we proved that every list codeword must be found by searching this discrete net.

One subtle issue here is that we do need to produce sets $A_1', \ldots, A_K'$ corresponding to each point (signature) in the discrete net we search over, for the decoding algorithm to proceed. 
However, not every signature $\sigma$ (collection of intersection sizes with sets in $\family$) might actually be \emph{realizable} as $\sigma(S)$ for some set $S$, due to possibly complex overlaps between the sets in the family $\family$ which constrain the intersection patterns. 
But we look at the partition $\factor(\family)$ generated by all the sets in $\family$, then the density of $S$ in any cell of this partition is unconstrained by densities in other cells. 
We test the realizability of the signatures in our enumeration algorithm by expressing this as a simple linear program in densities corresponding to the cells of the partition. 

It will be more convenient to work in the language of ``factors'' instead of partitions for the enumeration argument, since it facilitates a cleaner error analysis. This is described in \cref{sec:prelims}. The error estimates for set (function) families arising from regularity lemma are developed in \cref{sec:reg_lemma}.
\vspace{-5 pt}
\paragraph{List decoding for Tanner codes.} 
For the case of Tanner codes, we are given a $g \in \Sigma^E$ and the goal is now to find codewords $h \in \TanC$, which differ from $g$ on at most $\beta$ fraction of \emph{edges}.
Since the code constraints are now on both the left and right vertices, we list decode the local projections $g_{\ell}$ and $g_r$ to the left and right vertices, to form local lists $\calL_{\ell}$ and $\calL_r$ for each $\ell \in L$ and $r \in R$. Given a bound of (say) $K$ on the size of the local lists, for each global codeword $h$, we now define sets $A_1, \ldots, A_K$ on the left and $B_1, \ldots, B_K$ on the right, based on the position of $h_{\ell}$ and $h_r$ in the local lists.
To find these sets, we consider agreement graphs $H_{ij}$ for each $i,j \in [K]$, based on the agreement of the $i$-th and $j$-th elements in the left and right lists (see \cref{fig:tanner-graphs})
\[
E(H_{ij}) ~~=~~ \inbraces{e = (\ell,r) \in E(G) ~|~ (\calL_{\ell}[i])_e = (\calL_{r}[j])_e} \mper
\]
As before, we can argue that $E_{H_{ij}}(A_i, B_j) = E_G(A_i,B_j)$ and prove rigidity by considering a partition given by the sets appearing in the regularity decompositions for all the graphs $H_{ij}$.
\begin{figure}[h]
\centering
\includegraphics[width=0.55\textwidth]{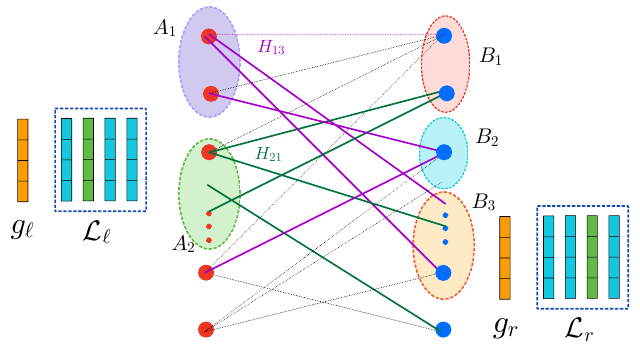}
\vspace{-10 pt}
\caption{Agreement graphs for list decoding Tanner codes}
\vspace{-10 pt}
\label{fig:tanner-graphs}
\end{figure}
The key difference is that now the sets $A_1, \ldots, A_K$ and $B_1, \ldots, B_K$ do not cover almost all the vertices, and we need to declare erasures in the locations not covered by these sets. 
The unique decoder for the outer code $\calC_{\out}$ in the case of AEL codes is then replaced by a unique decoder for $\TanC$, which can decode from erasures as well as (a small number of ) errors.
Such decoding is known for Tanner codes, and we include a proof in \cref{sec:appendix_tanner}.
The same argument also extends asymmetric variants of Tanner codes, when the left and right base
codes are different, say $\cC_1, \cC_2 \subseteq \Sigma^d$ with list size bounds $K_1, K_2$.
\vspace{-5 pt}
\paragraph{List recovery.}
An argument similar to the list decoding of Tanner codes, with two-sided local lists, can also be used for the list recovery of AEL codes. In this case, the lists for the right vertices $\inbraces{\calL_r}_{r \in R}$ (with $\abs{\calL_r} \leq k$) are actually the input for list recovery.
The local lists $\calL_{\li}$ are obtained via list recovery for the inner code $\calC_{\inn}$, with $\abs{\calL_{\ell}} \leq K$. We again set up agreement graphs $H_{ij}$ for each $i \in [K], j \in [k]$, and prove rigidity for the partitions given by regularity lemma. 
We believe a similar argument, with two-sided list recovery for base codes on left and right, may also give list recovery for Tanner codes. 
\vspace{-5 pt}
\paragraph{Remarks.}
We collect a few additional observations regarding our techniques and results below.
\begin{itemize}[leftmargin=0.5 cm]
\item The expander based constructions typically lift a property such as distance \cite{AEL95, GI05} or average-radius list decodability \cite{JMST25} from local code to the global code. Our results can be seen as lifting, both combinatorially and algorithmically, list size bounds from the local code to the global code. Given a collection of local lists, the regularity decomposition allows us to stitch together these lists in a globally consistent manner, as was done by \cite{HW15, RWZ21} in the erasures-only setting.
\item The only component of our algorithm that taken near-linear instead of linear time (or uses
  randomization), is the computation of the regularity decomposition. All other parts of our
  algorithm can be implemented in deterministic linear time. The regularity decomposition can also
  be computed in deterministic $O(n^{3.5})$ which also gives the same deterministic runtime for all
  our results. The randomized near-linear time computation can also be implemented in time $O(n \cdot (\log n)^2 \cdot (1/\eps)^{O(1)})$ using the approximate primal-dual SDP solvers of Arora and Kale~\cite{AK07}. Their solvers require $O(\log n)$ iterations of matrix multiplicative updates, with each step projecting SDP solutions to $O(\log n)$ random directions.
% \footnote{All our results can be obtained with randomized running time
%   $\Ot(n)$ and deterministic running time $O(n^{3.5})$ based on the running time for computing
%   decompositions. We will only discuss the randomized near-linear time in the summary of results stated in the introduction.}
%
\item The only step which depends on the alphabet size of the code, is the computation of the local lists for the code $\calC_{\inn}$ in AEL, and the base code $\calC_0$ in Tanner codes. Given the local lists, the rest of the algorithm only depends on the size of these lists and is independent of the alphabet size.
\item Using standard concatenation methods, our list recovery algorithms can also be used to obtain binary codes which are list decodable up to the Zyablov bound in near-linear time.
\item Given a list recovery instance for the Tanner code, one can perform list recovery for the local code to obtain local lists, and then extract the list codewords from them in a way similar to list decoding. We omit the details of this list recovery algorithm in the Tanner code's case since it follows by a relatively straightforward extension of the list decoder.
\end{itemize}
\vspace{-5 pt}
\paragraph{Algebraic vs. combinatorial rigidity.}
As discussed above, regularity lemmas lead to rigidity properties for the sets corresponding to list codewords. 
These can be compared to algebraic rigidity phenomena which underlie decoding algorithms for codes based on polynomials, such as Reed-Solomon codes. 
In the algebraic setting, one often sets up a multivariate polynomial interpolation problem, which can be solved by solving linear equations depending on the input. One then needs to argue that the list codewords must always be roots of such a multivariate polynomial, using the fact that low-degree polynomials have some algebraic rigidity (once they vanish at sufficiently many points, they must be zero everywhere).

The arguments using regularity can be considered analogues of the above interpolation-based list decoding arguments. Instead of linear equations, one considers edge density conditions that must be satisfied by the sets corresponding to the true codewords. We consider a strengthening of this condition by matching not only edge densities, but all densities in the regularity partition. At this point, the regularity conditions and the local distances of the codes are sufficient to imply rigidity, saying that sets matching in densities according to the partition, should in fact match almost everywhere. 
%
%\snote{In algebraic world also, combinatorial and algorithmic guarantees often come together.}

\subsection{Related work}
The rigidity argument used in our work is also similar in spirit to the rigidity argument used by
Bhowmick and Lovett~\cite{BL18} for obtaining bounds on the list size for Reed-Muller codes, at
radii arbitrarily close to distance. 
They develop (much stronger) regularity lemmas for polynomials using tools from higher order Fourier analysis, and show that any list codeword must be ``measurable'' in a factor (partition) determined by a constant number of polynomials (independent of the block length). Counting the number of functions that are measurable in the factor then yields the list size bound in their setting. 
However, the regularity lemmas in their setting require significantly stronger properties, and yield
tower-type bounds on the list size, while our argument only relies on the weak regularity lemma and
yields exponential bounds.

Our results can also be seen as sparse analogues of the results by Gopalan, Guruswami and Raghavendra~\cite{GGR09}, for list decoding tensor codes and interleaved codes. 
When $G$ is taken to be the complete bipartite graph, the resulting Tanner code is simply the tensor
code $\calC_0 \otimes \calC_0$, and more generally $\cC_1 \otimes \cC_2$ when the left and right
codes are different.
Their algorithm for tensor codes chooses from the local (row and column) lists by randomly sampling rows and columns, while our algorithm extends it to sparse expanders via regularity.
%
% In retrospect, the sampling rows and columns can also be seen as computing a regularity
% decomposition for dense graphs.
%
Thinking of consistently picking from local lists as a constraint satisfaction problem (CSP) this
can be seen as analogous to literature on approximation of CSPs, where sampling-based algorithms for dense
instances~\cite{AKK95, FK96:focs}, can be extended to sparse expanding ones via regularity~\cite{OGT15}. 

As mentioned earlier, Guruswami and Indyk~\cite{GI03} also gave the first explicit construction of
expander-based codes, which are list decodable from $1 - \eps$ fraction of errors in (randomized)
\emph{linear} time. 
Their approach also relies on consistently choosing from $n$ local lists corresponding to vertices
of an expander, in time $O(n \log n)$, independent of the alphabet for the local codes. 
They then achieve linear run-time by using an alphabet of size $n^{O(1)}$ and concatenating, so that
the block size becomes $N = O(n \log n)$. However, their constructions use a variant of AEL with the
repetition code as the inner code~\cite{ABNNR92}, which causes the rate to decay as
$\exp(-\exp(1/\eps^3))$.
A similar concatenation with our AEL construction may also lead to linear-time list decodable codes
with improved rates, using the fact that our algorithm's runtime does not directly on the alphabet size (see remark above).

Another closely related work is that of~\cite{JST21}, which uses analogues of the weak regularity lemma for expanding (ordered) hypergraphs, to obtain a list decoding algorithms for Ta-Shma $\eps$-balanced code. 
Similar ideas were also used by Blanc and Doron~\cite{BD22} for near-linear time list decoding of a different (probabilistic) construction of an $\eps$-balanced code.
The key challenge in these applications is actually developing algorithmic versions of the
hypergraph regularity lemmas to do any nontrivial list decoding (both above results are
significantly below the Johnson bound), while the focus in our work is to using known graph
regularity lemmas to list decode from an \emph{optimal} fraction of errors.

There is also a very rich literature of algebraic code constructions with efficient list decoding
near capacity, and other interesting properties, that we do not exhaustively list here. 
It was shown by Goyal \etal~\cite{GHKS24} that folded Reed-Solomon codes and univariate multiplicity
codes can be list decoded to capacity in randomized near-linear time. 
Guo and Ron-Zewi~\cite{GRZ21} also obtained a construction with constant alphabet size and constant
list-sizes, that be list decoded $\eps$-close to capacity in $\mathrm{poly} (n, 1/\eps)$ time.\footnote{They
 output a subspace of dimension $\mathrm{poly}(1/\eps)$ containing the list in $\mathrm{poly}(n, 1/\eps)$ time, but pruning takes
  $\exp(1/\eps)$ time.}
Hemenway, Ron-Zewi, and Wootters~\cite{HRW17} also gave constructions of codes which are list decodable
to capacity in near-linear time (with list sizes growing with $n$), and even locally list decodable
and locally list recoverable, using repeated tensoring of algebraic codes.
This was also extended by Kopparty \etal~\cite{KRRSS21} to achieve \emph{deterministic} near-linear
time decoding (with alphabet and list size growing as functions of $n$).
%
% To the best of our knowledge, the constructions discussed here give the first codes list decodable to
% capacity in (randomized) near-linear time, with constant alphabet and constant list size. 
%
%\mnote{Let's just verify again that no other constructions do this.}

There has also been work trying to optimize list size near list decoding capacity, and several random families of codes are known to achieve the optimal list size, such as random linear codes \cite{AGL24}, randomly punctured Reed-Solomon codes \cite{BGM23, GZ23, AGL24}, randomly punctured algebraic-geometric codes \cite{BDGZ24}. However, no polynomial-time list decoding algorithms are known for these random constructions.

Finally, our work can also be seen as a refinement of decoding algorithms using the SoS
hierarchy~\cite{JST23, JMST25} based on a ``proofs to algorithms'' paradigm. The algorithms using
SoS rely on the fact that a broad family of (sum-of-squares) proofs can be converted to algorithms
using programs in the SoS hierarchy. On the other hand, our framework captures the fact that a
subclass of these proofs based on edge-counts using expander mixing, can be turned into (faster)
algorithms using the regularity lemma.

% \subsection{Intro things to remember}
%  Compiling a list of points to emphasize in the introduction, so we don't forget anything later.

%  \begin{itemize}
%  	\item Easy to miss reference: \cite{BZ11} We may want to look at this algorithm after deadline, since it produces very small lists. It seems to do AEL-like small list argument for Tanner codes. Plus it may be quite fast.
%  	\item Alphabet-independent runtime. I think it may be more important than we give it credit for.
%  	\item I suppose the algorithm even makes sense in the dense case.
%  	\item All of our results have non-trivial purely combinatorial implications.
%  	\item Future Work: Truly Linear, Quantum, Ta-Shma, connections between combinatorial and algebraic rigidity/regularity, local list decoding via HDX?
%  	\item With list recovery, we can get best binary codes too (on Zyablov bound). Just need to mention that this follows from standard techniques.
%  	\item Should we cite \url{https://arxiv.org/abs/2011.05884}? They manage to do things efficiently even in $1/\eps$.
%  	\item Our alphabet size is singly exponential compared to doubly exponential in \cite{JMST25}.
% \end{itemize}

%!TEX root=main.tex

%%% Local Variables:
%%% mode: latex
%%% TeX-master: "main"
%%% End:

%
\section{Preliminaries}\label{sec:prelims}

%
%%%%%%%%%%%%%%%%%%%%%%%%%%%%%%%%%%%%%%%%%%%%%%%%%%%%%%%%%
For a bipartite graph $G=(L,R,E)$ where $L$ is the set of left vertices, and $R$ is the set of right
vertices, we index the left set by $\li$ and the right set by $\ri$. For a vertex $\li \in L$, 
we denote the set of edges incident to it by $N(\li)$ (left neighborhood), and the set of edges
incident to  $\ri\in R$ is denoted by $N(\ri)$ (right neighborhood). 
We use $\li \sim \ri$ to denote that the vertex $\li \in L$ is adjacent to the vertex $\ri \in R$, that is, $(\li,\ri)\in E$. For the graph $G$, we will also use $A_G$ to denote its $L\times R$ biadjacency matrix.

Fix an arbitrary ordering of the edges. Then there are bijections between the sets $E$, $L \times
[d]$, and $R \times [d]$, 
given by taking $(\li,i)$ to be the $i^{th}$ edge incident on $\li$, and similarly for $R \times [d]$.
%
% each edge $e\in E$ can be uniquely mapped to a unique pair
% $(\li,i)\in L \times [d]$, where $\li$ is the left endpoint of $e$ and $e$ is the $i^{th}$ edge
% incident on $\li$ under the aforementioned ordering. Likewise, each edge $e\in E$ can be uniquely
% mapped to a pair $(\ri,j) \in R\times [d]$ where $\ri$ is the right endpoint of $e$ and $e$ is the
% $j^{th}$ edge incident on $\ri$ under the same fixed ordering. 
Henceforth, we will implicitly assume such an ordering of the edges is fixed, and use the resulting bijections.
% , which allows us to talk about $i^{th}$ edge incident on $\li$ and $j^{th}$ edge incident on $\ri$.

%\snote{TODO: Verify the above paragraph in detail, and maybe write it better.}

\subsection{Codes and distances}
\begin{definition}
	Let $\Sigma$ be a finite alphabet and let $g,h\in \Sigma^n$. Then the (fractional) distance
        between $g,h$ is defined as \[ \dis(g,h) = \Pr{i\in [n]}{ g_i \neq h_i} \mper \]
\end{definition}

Let $G=(L,R,E)$ be a bipartite graph. For any $g\in \Sigma^E$ and $\li \in L$, we use $g_{\li}$ to denote the restriction of $g$ to the edges in $N(\li)$, that is, $g_{\li} \defeq g\big\vert_{N(\li)} \in \Sigma^{N(\li)}$. Similarly, for any $g\in \Sigma^E$ and $\ri\in R$, we use $g_{\ri} \defeq g\big\vert_{N(\ri)} \in \Sigma^{N(\ri)}$.

For any $g,h \in \Sigma^E$, note that $\Delta(g,h)$ is simply $\Pr{e\in E}{g_e\neq h_e}$. For such $g$ and $h$, we will also use the following two distances.
\begin{definition}[Left and right distances]
	Let $G=(L,R,E)$ be a bipartite graph. Let $\Sigma$ be a finite alphabet and let $g,h\in \Sigma^E$. Then the left distance and the right distance
        between $g,h$ are denoted as $\Delta_L(g,h)$ and $\Delta_R(g,h)$, and defined respectivel y as  
        \[ \Delta_L(g,h) = \Pr{ \li \in L}{ g_{\li} \neq h_{\li}} \qquad \text{and} \qquad \Delta_R(g,h) = \Pr{\ri \in R}{g_{\ri} \neq h_{\ri}} \mper \]
\end{definition}

\begin{definition}[Code, distance and rate]
	A code $\calC$ of block length $n$, distance $\delta$ and rate $\rho$ over an alphabet $\Sigma$ is a set $\calC \subseteq \Sigma^n$ with the following properties
\[
\rho = \frac{\log_{|\Sigma|} |\calC|}{n} \qquad \text{and} \qquad \delta = \min_{h_1,h_2\in \calC} \dis(h_1,h_2) \mper
\]
%	Such codes are succinctly represented as $[n,\delta,\rho]_q$.
We say $\calC$ is a linear code if $\Sigma$ can be identified with a finite field $\F_q$ and $\calC$ is a linear subspace of $\F_q^n$.
\end{definition}

% Sometimes, we will view $\calC$ as a bijection $\calC: [|\calC|] \rightarrow [q]^n$, or $\calC: [q^{\rho \cdot n}] \rightarrow [q]^n$.

\begin{definition}[List of codewords]
	For any $g\in \Sigma^n$, the list of codewords in $\calC$ that are at a distance at most
        $\beta$ from $g$ is denoted by $\calL(g,\beta)$. That is $\calL(g,\beta) = \inbraces{h\in \calC ~\vert~ \dis(g,h) \leq  \beta }$.
\end{definition}
%
%\snote{Add LDPC property here?}

\subsection{Expander graphs}
\begin{definition}[$(n,d,\lambda)$-expander]
A $d$-regular bipartite graph $G(L,R,E)$ with $|L|=|R|=n$ is said to be an $(n,d,\lambda)$-expander if
\[
	\sigma_2(A_G) \leq \lambda \cdot d
\]
where $A_G$ is the $L\times R$ biadjacency matrix, and $\sigma_2(A_G)$ is its second largest singular value.
\end{definition}

Infinite explicit families of $(n,d,\lambda)$--expanders, with growing $n$ as $d$ and $\lambda$ are constant, can be derived based on double covers of Ramanujan graphs of \cite{LPS88} as long as $\lambda \geq \frac{2\sqrt{d-1}}{d}$.

\begin{lemma}[Expander Mixing Lemma]
	Given an $(n,d,\lambda)$-expander $G=(L,R,E)$, the following well known property is a simple consequence of definition of $(n,d,\lambda)$-expanders:
	\[
		\forall S\sub L, T\sub R,\qquad \abs*{|E_G(S,T)| - \frac{d}{n}\cdot |S|\cdot |T|} ~\leq~ \lambda \cdot d \sqrt{|S|\cdot |T|} ~\leq~ \lambda \cdot dn \mper
	\]
\end{lemma}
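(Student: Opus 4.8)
The plan is to realize the edge count as a bilinear form in the biadjacency matrix and then isolate the contribution of the top singular vector. First I would write $\abs{E_G(S,T)} = \indicator{S}^{\trans} A_G\, \indicator{T}$, where $\indicator{S} \in \R^L$ and $\indicator{T} \in \R^R$ are the $0/1$ indicator vectors of $S$ and $T$; this holds because $(A_G)_{\li,\ri} = 1$ exactly when $\li \sim \ri$, so the bilinear form counts precisely the edges of $G$ with left endpoint in $S$ and right endpoint in $T$.

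Next I would use $d$-regularity to pin down the leading singular pair of $A_G$. Since every left vertex and every right vertex has degree $d$, we have $A_G \one_R = d\,\one_L$ and $A_G^{\trans} \one_L = d\,\one_R$, so $\tfrac{1}{\sqrt n}\one_L$ and $\tfrac{1}{\sqrt n}\one_R$ form a singular pair of $A_G$ with singular value $d$; by the $(n,d,\lambda)$-expander hypothesis every singular value of $A_G$ orthogonal to this pair is at most $\lambda d$. I would then split $\indicator{S} = \tfrac{\abs{S}}{n}\one_L + \indicator{S}^{\perp}$ and $\indicator{T} = \tfrac{\abs{T}}{n}\one_R + \indicator{T}^{\perp}$ with $\indicator{S}^{\perp} \perp \one_L$ in $\R^L$ and $\indicator{T}^{\perp} \perp \one_R$ in $\R^R$.

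Expanding $\indicator{S}^{\trans} A_G\, \indicator{T}$ along this splitting, the $(\one_L,\one_R)$ term contributes $\tfrac{\abs{S}\abs{T}}{n^2}\cdot \one_L^{\trans} A_G \one_R = \tfrac{d}{n}\,\abs{S}\,\abs{T}$; both cross terms vanish because $A_G\one_R = d\,\one_L$ is orthogonal to $\indicator{S}^{\perp}$ and symmetrically; and the remaining term satisfies $\abs{\inparen{\indicator{S}^{\perp}}^{\trans} A_G\, \indicator{T}^{\perp}} \le \lambda d \cdot \norm{\indicator{S}^{\perp}} \cdot \norm{\indicator{T}^{\perp}}$, using the estimate $\abs{u^{\trans} A_G v} \le \sigma_2(A_G)\norm{u}\norm{v}$ for $u \perp \one_L$, $v \perp \one_R$ (Cauchy--Schwarz applied to the SVD of $A_G$ with its top component removed). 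Finally I would bound $\norm{\indicator{S}^{\perp}}^2 = \abs{S} - \abs{S}^2/n \le \abs{S}$ and likewise $\norm{\indicator{T}^{\perp}}^2 \le \abs{T}$, giving $\abs{\abs{E_G(S,T)} - \tfrac{d}{n}\abs{S}\abs{T}} \le \lambda d\sqrt{\abs{S}\abs{T}}$, and then $\sqrt{\abs{S}\abs{T}} \le n$ since $\abs{S},\abs{T} \le n$.

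This is the standard proof and presents no genuine obstacle; the only points demanding care are the bookkeeping: verifying that $\one_L,\one_R$ really are singular vectors (this is exactly where balanced $d$-regularity is used, and it would fail for an unbalanced or irregular bipartite graph), taking the orthogonal projections in the correct ambient spaces $\R^L$ and $\R^R$, and checking that the Cauchy--Schwarz estimate for the orthogonal part invokes $\sigma_2(A_G) \le \lambda d$ rather than $\sigma_1(A_G) = d$.
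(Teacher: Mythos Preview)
Your argument is correct and is the standard proof of the expander mixing lemma. The paper itself does not give a proof of this statement; it simply records the lemma as ``a simple consequence of definition of $(n,d,\lambda)$-expanders'' and moves on, so there is nothing to compare against beyond noting that your writeup supplies exactly the routine spectral argument the paper omits.
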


\subsection{Factors, signatures, and enumeration}
%
%\subsubsection{Factors}
%
It will be convenient to use the language of factors, to work with the decompositions identified by regularity lemmas.
This concept (from ergodic theory) takes a rather simple form in our finite settings: a factor over a base set $\calX$ is just a partition of $\calX$, but with an associated operation of averaging functions $f:\calX \to \R$ over each cell of the partition, which makes it quite convenient to work with.
\begin{definition}[Factors and measurable functions]
Let $\calX$ be a finite set. A {\deffont factor} $\factor$ is a partition of the set $\calX$, and
the cells of the partition are referred to as \deffont{atoms} of the factor. For $x \in \calX$, we
also use $\factor(x)$ to denote the atom $A \in \factor$ containing $x$. 
A function $f: \calX \to \calR$ is said to {\deffont measurable} with respect to $\factor$ ($\factor$-measurable) if $f$ is constant on each atom of $\factor$.
\end{definition}

\begin{definition}[Conditional averages]
If $f: \calX \to \R$ is a function and $\factor$ is a factor of $\calX$, then we define the {\deffont conditional average function} $\Ex{f|\factor}:\calX \to \R$ as
\[
\Ex{f|\factor}(x) ~\defeq~ \Ex{y \in \factor(x)}{f(y)} \mcom
\]
where $\factor(x)$ denotes the atom containing $x$.
\end{definition}

\begin{proposition}
	For any $f: \calX \to \R$, the conditional average function $\Ex{f|\factor}$ is measurable with respect to $\factor$. Further, $f$ itself is $\factor$-measurable if and only if $f = \Ex{f|\factor}$.
\end{proposition}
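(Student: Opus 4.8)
The plan is to prove both assertions directly from the definitions. Write $\factor = \{A_1, \dots, A_m\}$ for the atoms of the factor, and recall that $\Ex{f|\factor}(x) = \Ex{y \in \factor(x)}{f(y)}$, where $\factor(x)$ is the unique atom containing $x$.

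For the first claim (measurability of $\Ex{f|\factor}$), I would observe that the value $\Ex{f|\factor}(x)$ depends on $x$ \emph{only through the atom} $\factor(x)$: if $x, x'$ lie in the same atom $A_j$, then $\factor(x) = \factor(x') = A_j$, so $\Ex{f|\factor}(x) = \Ex{y \in A_j}{f(y)} = \Ex{f|\factor}(x')$. Hence $\Ex{f|\factor}$ is constant on each atom $A_j$ (taking the value $\Ex{y \in A_j}{f(y)}$), which is exactly the definition of being $\factor$-measurable.

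For the second claim, the "if" direction is immediate: if $f = \Ex{f|\factor}$, then $f$ equals a $\factor$-measurable function by the first part, so $f$ is $\factor$-measurable. For the "only if" direction, suppose $f$ is $\factor$-measurable, so $f$ is constant on each atom, say $f \equiv c_j$ on $A_j$. Then for $x \in A_j$ we compute $\Ex{f|\factor}(x) = \Ex{y \in A_j}{f(y)} = \Ex{y \in A_j}{c_j} = c_j = f(x)$; since this holds for every atom $A_j$, we get $\Ex{f|\factor} = f$ on all of $\calX$.

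There is essentially no obstacle here — the proof is a direct unwinding of the definitions of factor, atom, conditional average, and measurability. The only point requiring the tiniest bit of care is that the atoms are assumed nonempty (a partition has nonempty cells), so the average $\Ex{y \in A_j}{f(y)}$ is well-defined for each $j$; this is implicit in the definition of a factor as a partition.
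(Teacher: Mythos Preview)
Your proof is correct and is exactly the direct unwinding of definitions one would expect; the paper in fact states this proposition without proof, treating it as immediate from the definitions, so there is nothing further to compare.
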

Note a function measurable in $\factor$ and can equivalently be thought of as a
function over the domain $|\factor|$. 
However, it will be convenient to treat these as functions over $\calX$ (which are constant in every atom)
when writing inner products and norms.
All norms and inner products below are using the expectation measure on $\calX$.
\begin{proposition}\label{prop:measurable-inner-product}
  Let $h: \calX \to \R$ be a $\factor$-measurable function, and let $f: \calX \to \R$ be any function. Then,
  \[
    \ip{h}{f} ~=~ \ip{h}{\Ex{f|\factor}}\mper
  \]
\end{proposition}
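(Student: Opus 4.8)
The plan is to unpack the definitions of the inner product and the conditional average, and then regroup the sum over $\calX$ according to the atoms of the factor $\factor$. Since $h$ is $\factor$-measurable, it is constant on each atom, so inside a given atom $A$ we may pull $h$ out of the (local) average; the remaining local average of $f$ over $A$ is exactly $\Ex{f|\factor}$ restricted to $A$. Summing back over atoms recovers $\ip{h}{\Ex{f|\factor}}$.

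In more detail, recall that all inner products use the uniform (expectation) measure on $\calX$, so $\ip{h}{f} = \Ex{x \in \calX}{h(x) f(x)}$. First I would partition $\calX$ into the atoms $A \in \factor$ and write
\[
\ip{h}{f} ~=~ \sum_{A \in \factor} \frac{\abs{A}}{\abs{\calX}} \cdot \Ex{x \in A}{h(x) f(x)} \mper
\]
Next, fix an atom $A$ and let $h_A$ denote the constant value of $h$ on $A$ (using $\factor$-measurability of $h$). Then $\Ex{x \in A}{h(x) f(x)} = h_A \cdot \Ex{x \in A}{f(x)}$. By the definition of the conditional average, $\Ex{x \in A}{f(x)} = \Ex{f|\factor}(x)$ for every $x \in A$, and of course $h_A = h(x)$ for every $x \in A$, so $h_A \cdot \Ex{x \in A}{f(x)} = \Ex{x \in A}{h(x) \cdot \Ex{f|\factor}(x)}$. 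Substituting back and reassembling the sum over atoms into a single expectation over $\calX$ gives $\ip{h}{f} = \Ex{x \in \calX}{h(x) \cdot \Ex{f|\factor}(x)} = \ip{h}{\Ex{f|\factor}}$, as claimed.

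There is really no main obstacle here — this is a routine bookkeeping computation, the finite-measure analogue of the tower property / self-adjointness of conditional expectation. The only point requiring a (trivial) justification is that $\Ex{f|\factor}$ is itself $\factor$-measurable and well-defined, which is precisely the content of the preceding proposition, and that $h$ being $\factor$-measurable lets us treat it as constant on each atom. One could also phrase the argument slightly more slickly by noting that $h - $ anything orthogonal to the space of $\factor$-measurable functions has the same inner product with $f$ and with $\Ex{f|\factor}$, since $f - \Ex{f|\factor}$ has zero average on every atom and hence is orthogonal to every $\factor$-measurable function; but the direct atom-by-atom computation above is cleanest and I would present that.
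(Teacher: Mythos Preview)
Your proof is correct and follows essentially the same approach as the paper: both decompose the expectation over $\calX$ atom by atom, use the constancy of $h$ on each atom to pull it out of the local average, and identify the remaining local average of $f$ with $\Ex{f|\factor}$. The only cosmetic difference is that the paper phrases the atom decomposition as a double expectation $\ExpOp_{x \in \calX}\Ex{y \in \factor(x)}{\cdot}$ rather than as an explicit weighted sum over atoms.
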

\begin{proof}
The proof simply follows by noticing that we can sample a random element in $\calX$ by first sampling a random atom of $\factor$ (with probability of an atom proportional to its size) and then sampling a random element of the atom. This gives
\[
\ip{h}{f} 
~=~ \Ex{y \in \calX}{h(y) \cdot f(y)} 
~=~ \ExpOp_{x \in \calX} \Ex{y \in \factor(x)}{h(y) \cdot f(y)} 
~=~  \Ex{x \in \calX}{h(x) \cdot \Ex{y \in \factor(x)}{f(y)}}
~=~ \ip{h}{\Ex{f|\factor}} \mcom
\]
where we used that $h(x) = h(y)$ for all $y \in \factor(x)$, since $h$ is $\factor$-measurable.
\end{proof}
The factors we will consider will be defined by a finite family of functions appearing in a regularity decomposition.
\begin{definition}[Function factors]
Let $\calX$ be a finite set and $\family = \inbraces{f_1, \ldots, f_p: \calX \to \B}$ be a family of boolean functions on $\calX$. We consider the factor $\factor(\family)$ defined by the functions $f_1, \ldots, f_p$ in $\family$, as the factor with atoms $\inbraces{x ~|~ f_1(x) = b_1, \ldots, f_p(x) = b_p}$ for all $(b_1,\ldots, b_p) \in \B^p$.
When the family $\family$ is clear from context, we will refer to the function factor simply as $\factor$, with the dependence on the family $\family$ being implicit.
\end{definition}

\begin{remark}\label{remark:factor_sigma_algebra_comp}
The above function can also be thought of as indicators for sets $S_1, \ldots, S_p$. Then the function factor $\factor$ is the same as the $\sigma$-algebra generated by these sets, and has at most $2^p$ many atoms.
Also, given the functions $f_1,\ldots,f_p$ as above, the function factor $\factor(\family)$ can be computed in time $O(\abs{\calX} \cdot 2^p)$.
\end{remark}

For our algorithms, we will need to enumerate over $\factor$-measurable functions, which contain a function close to any function of interest. We define the associated notion of distance and coverings below.
\begin{definition}[Distances] 
For a family of functions $\family$  on $\calX$, we define $\family$-distance between two functions $f_1, f_2:
\calX \to \R$ as the maximum difference in their correlations with functions from $\family$
\[
\norm{f_1 - f_2}_{\family} 
~\defeq~ \sup_{f \in \family}~\abs{\ip{f}{f_1} - \ip{f}{f_2}} ~=~ \sup_{f \in \family}~\abs{\ip{f}{f_1 - f_2}} \mper
\] 
For a factor $\factor$, we also define a notion of $\factor$-distance between $f_1$ and $f_2$ as the distance between their conditional expectations
\[
\norm{f_1 - f_2}_{\factor}
~\defeq~ \norm*{\Ex{f_1|\factor} - \Ex{f_2|\factor}}_1 \mper
\]
\end{definition}
The following claim relates the two notions of distance defined above.
\begin{claim}
For a family $\family \subseteq \B^{\calX}$ and function factor $\factor(\family)$, we have $\norm{f_1 - f_2}_{\family} \leq \norm{f_1 - f_2}_{\factor(\family)}$.
\end{claim}
\begin{proof}
For any function $f \in \family$ we have,
\[
\abs{\ip{f}{f_1 - f_2}} ~=~ \abs*{\ip{f}{\Ex{f_1|\factor}} - \ip{f}{\Ex{f_2|\factor}}} ~\leq~ \norm*{f}_{\infty} \cdot \norm*{\Ex{f_1|\factor} - \Ex{f_2|\factor}}_1 ~\leq~ \norm{f_1 - f_2}_{\factor(\family)} \mper \qedhere
\]
\end{proof}
We also need a small covering net which contains a point close in the above distance, to any
function with $f$ with bounded $\ell_{\infty}$ norm. Since we will only need to use these for
indicator functions of subsets, we define the following covering next for functions $f: \calX \to [0,1]$.
While one can define a net over the set of $\factor$-measurable functions in the distance $\norm*{\cdot}_{\factor}$, the size of the net is exponential in $\abs{\factor}$ and thus doubly exponential in $\abs{\family}$. 
We will instead search over a smaller net in the distance $\norm*{\cdot}_{\family}$ with size ``only'' exponential in $\abs{\family}$, which will suffice for our purposes. We next define the objects we search over.
\begin{definition}[Signatures]
Let $\factor{(\family)}$  be a function factor defined by a finite function family $\family = \inbraces{f_1, \ldots, f_p: \calX \to \B}$. For a $f: \calX \to [0,1]$, we call the vector of inner products $\sigma(f) = \inbraces{\ip{f}{f_j}}_{j \in [p]}$ the {\deffont signature} of $f$. 
We call a vector $\sigma \in [0,1]^p$ an $\eta$-{\deffont realizable signature} if there exists a $\factor(\family)$-measurable function $f: \calX \to [0,1]$ such that $\max_{j \in [p]} \abs{\sigma_j - \ip{f}{f_j}} = \norm{\sigma - \sigma(f)}_{\infty} \leq \eta$. 
\end{definition}
Note that since each $f_j$ is $\factor(\family)$-measurable, the condition that $f$ is also $\factor(\family)$ measurable is without loss of generality, by \cref{prop:measurable-inner-product}.
%
% We will search over the set of $K$-tuples of realizable signatures in $D_{\eta}^{\abs{\family}} = D_{\eta}^p$, corresponding to disjoint subsets, as before.
%
\begin{definition}[Covering Nets for Signatures]
% For $D_{\eta} = \{0, \eta, 2\eta, \ldots, 1\}$ we define the covering net $\calN_{\eta}(\factor)$ to
% be the set of all $\factor$-measurable functions taking values in $D_{\eta}$.
% %
% Then, for any $f: \calX \to [0,1]$, there exists $\bar{f} \in \calN_{\eta}(\factor)$ such that $\norm{f - \bar{f}}_{\factor} \leq \eta$.
% %
% We will also need coverings for $K$-tuple of functions $f_1, \ldots, f_K: \calX \to [0,1]$ satisfying $f_1 + \cdots + f_K \leq 1$ (corresponding to disjoint subsets of $\calX$). We define
%  
Let $D_{\eta} = \{0, \eta, 2\eta, \ldots, 1\}$ and $\family = \inbraces{f_1, \ldots, f_p}$. 
We define the covering net $\calN_{\eta}(\family)$ to be the set of all signatures $\sigma \in D_{\eta}$, which are $\eta$-realizable using $\factor$-measurable functions.
Then, for any $f: \calX \to [0,1]$, there exists $\sigma \in \calN_{\eta}(\family)$ such that $\norm{\sigma - \sigma(f)}_{\infty} \leq \eta$.
We will also need coverings for $K$-tuple of functions $f_1, \ldots, f_K: \calX \to [0,1]$ satisfying $f_1 + \cdots + f_K \leq 1$ (corresponding to disjoint subsets of $\calX$). 
We define the covering net $\calN_{\eta}(\family,K)$ to be the set of all $K$-tuples of signatures $\sigma = (\sigma_1, \ldots, \sigma_K) \in D_{\eta}^{\abs{\family} \cdot K}$, which are $\eta$-realizable using functions measurable in $\factor = \factor(\family)$.
%
% \[
% \calN_{\eta}(\family,K) ~\defeq~ \inbraces{ \sigma \in D_{\eta}^{\abs{\family} \cdot K} ~\mid~ \exists~ (\bar{f}_1, \ldots, \bar{f}_K) \in \calN_{\eta/2}(\factor(\family),K)) ~~\text{s.t.}~~ \forall i \in [K], j \in [p]~ \abs*{\sigma_{ij} - \ip{\bar{f}_i}{f_j}} \leq \eta} \mper
% \
\[
\calN_{\eta}(\family,K) ~\defeq~ \inbraces{ \sigma = (\sigma_1, \ldots, \sigma_K) \in D_{\eta}^{\abs{\family} \cdot K} \quad \middle| \quad 
\begin{split}
&\exists~ \bar{f}_1, \ldots, \bar{f}_K: \factor \to [0,1] ~~\text{s.t.}~~ \\
&\sum_i \bar{f}_i \leq 1  \quad \text{and} \quad \forall i \in [K], ~\norm{\sigma_{i} - \sigma(\bar{f}_i)}_{\infty} \leq \eta
\end{split}
} \mper
\]
Note that we have $\abs{\calN_{\eta}(\family,K)} \leq (1/\eta + 1)^{K \cdot \abs{\family}} = (1/\eta + 1)^{K \cdot p}$.
\end{definition}
The following claim captures the equivalence of points in the covering net $\calN_{\eta}(\family, K)$ and collections of disjoint subsets of $\calX$. 
% %
 Note that the bound $\ffrac{\abs{\factor}}{\abs{\calX}}$ in the second part should be thought of as negligible, since we will deal with factors where the number of atoms $\abs{\factor}$ is constant, while $\abs{\calX} = n$ is arbitrarily large. 
\begin{claim}\label{clm:enumeration}
Let $A_1, \ldots, A_K$ be $K$ disjoint subsets of $\calX$. Then there exists $\sigma \in \calN_{\eta}(\family, K)$ such that for all $i \in [K]$, we have $\norm{\sigma_i - \sigma(\indicator{A_i})}_{\infty} \leq \eta$. 
Also, for any $\sigma \in D_{\eta}^{\abs{\family} \cdot K}$, it is possible to test in time $O(\parens{K \cdot \abs{\factor(\family)}}^{O(1)})$ if $\sigma \in \calN_{\eta}(\family,K)$ and find a set of functions $\bar{f}_1, \ldots, \bar{f}_K$ realizing the $\sigma$ as above, if they exist.
Moreover, for any such functions $(\bar{f}_1, \ldots, \bar{f}_K)$, there exist disjoint sets $S_1, \ldots, S_K \subseteq \calX$ constructible in time $O(\abs{\calX})$, such that $\max_{i \in [K]} \norm{\sigma_i - \sigma(\indicator{S_i})}_{\infty} \leq \eta + \ffrac{\abs{\factor}}{\abs{\calX}}$.
\end{claim}
\begin{proof}
Given disjoint sets $A_1, \ldots, A_K$, the functions $\bar{f}_i = \Ex{\indicator{A_i}|\factor}$  are measurable in $\factor$ for each $i \in [K]$, and satisfy $\sum_{i} \bar{f}_i \leq 1$. Moreover, for each $i \in [K]$ and $f_j \in \calF$, defining $\sigma_{ij}$ as
\[
\sigma_{ij} ~=~ \left\lfloor \frac{1}{\eta} \cdot \ip{\Ex{\indicator{A_i}|\factor}}{f_j} \right\rfloor \cdot \eta \mcom 
\]
yields $\sigma \in \calN_{\eta}(\family,K)$ with the required properties.

Conversely, given $\sigma \in D_{\eta}^{\abs{\family} \cdot K}$, we consider the linear program in the $K \cdot \abs{\factor}$ variables corresponding to the values of the $\factor$-measurable functions $\bar{f}_1, \ldots, \bar{f}_K$ in each of the $\abs{\factor}$ atoms, and with the constraints
\[
\bar{f}_i \in [0,1], ~~\norm{\sigma_i - \sigma(\bar{f}_i)}_{\infty} \leq \eta \quad\forall i \in [K]
\qquad \text{and} \qquad \sum_i \bar{f}_i \leq 1  \mper
\]
The linear program can be solved\footnote{The linear program can be solved to give solutions up to a small error, which can be absorbed in the final error bound of $O(\eta)$. We suppress this here for clarity.}
in time $(K \cdot \abs{\factor})^{O(1)}$ to find functions $\bar{f}_1, \ldots, \bar{f}_K$ certifying $\sigma \in \calN_{\eta}(\family,K)$ if they exist.
Given $\factor$-measurable functions $(\bar{f}_1, \ldots, \bar{f}_K) \in \calN_{\eta}(\factor,K)$, in each atom $P \in \factor$, we can assign $\lfloor \abs{P} \cdot \bar{f}_i \rfloor$ distinct elements to each $S_i$ (say according to some fixed ordering over all elements) since $\sum_i \bar{f}_i \leq 1$. We then have for each atom $P$, $\abs*{\Ex{x \in P}{\bar{f}_i-\indicator{S_i}(x)}} \leq \frac{1}{\abs{P}}$, which gives 
\[
\norm{\bar{f}_i - \indicator{S_i}}_{\factor} 
~=~ \ExpOp_{x \in \calX} \abs*{\Ex{x \in \factor(x)}{\bar{f}_i-\indicator{S_i}(x)}} 
~\leq~ \Ex{x \in \calX}{\frac{1}{\abs{\factor(x)}}} 
~=~ \frac{\abs{\factor}}{\abs{\calX}}  \mper 
\]
Finally, we note that for each $i \in [K]$,
\[
\norm{\sigma_i - \sigma(\indicator{S_i})}_{\infty} 
~\leq~ \norm{\sigma_i - \sigma(\bar{f}_i)}_{\infty} + \norm{\sigma(\bar{f}_i) - \sigma(\indicator{S_i})}_{\infty} 
~\leq~ \eta + \norm{\bar{f}_i - \indicator{S_i}}_{\family} 
~\leq~ \eta + \norm{\bar{f}_i - \indicator{S_i}}_{\factor}  \mper
\]
Combining with the previous bound of $\norm{\bar{f}_i - \indicator{S_i}}_{\factor} ~\leq~ \frac{\abs{\factor}}{\abs{\calX}}$ then proves the claim.
\end{proof}

%!TEX root=main.tex

%%% Local Variables:
%%% mode: latex
%%% TeX-master: "main"
%%% End:

\section{Regularity Lemmas for Subgraphs of Expanders}\label{sec:reg_lemma}
We collect the statements we need about the regularity lemma in this section. The proofs below are adaptations of well-known arguments in the literature, and readers familiar with the literature can simply refer to the statements in \cref{sec:regularity-factor} which capture the properties we need. 
\newcommand{\decom}{\mathcal{D}}
\newcommand{\mat}{{M}}
\newcommand{\mask}[1]{\insquare{#1}_{{E(G)}}}
\newcommand{\muG}{\mu_G}
% %

We first prove the following existential statement about the existence of a regularity decomposition $\decom$ for any subgraph $H$ of $G = (L,R,E)$, which is simply a concrete version of an abstract regularity lemma from~\cite{TrevisanTV09}.
For any matrix $\mat \in \R^{L \times R}$, we use $\mask{\mat} = \mat \circ A_G$ to denote the restriction of its entries to the edges in $G$ (such restrictions will be useful to ensure sparsity for our algorithms). Also, we use $\muG$ to denote the uniform measure on the edges of $G$, and will need to take inner products and norms with respect to this measure.
\begin{theorem}[Existential Regularity Lemma]\label{thm:existential_reg}
Let $G = (L,R,E)$ be a bipartite graph and let $H$ be a subgraph of $G$. Then for any $\gamma > 0$, there exists a collection of $p\leq 1/\gamma^2$ triples $(c_i,S_i,T_i) \in \R \times 2^L \times 2^R$ such that for any $S\sub L$ and $T\sub R$,
	\[
		\abs*{ |E_H(S,T)| - \sum_{i=1}^p c_i \cdot |E_G(S_i\cap S, T_i\cap T)| } \leq \gamma \cdot |E_G(L,R)|
	\]
	Further, the collections of reals $\{c_i\}_{i\in [p]}$ satisfy $\sum_i |c_i| \leq 1/\gamma$.
\end{theorem}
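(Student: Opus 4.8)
The plan is to recast the claimed inequality as an \emph{approximation statement in a cut-type seminorm} and then run the standard energy-increment argument of Frieze--Kannan, in the abstract boosting form of Trevisan--Tulsiani--Vadhan, but taken with respect to the edge measure $\muG$ rather than the uniform measure on $L \times R$. Concretely, identify the biadjacency matrix of $H$, restricted to $E(G)$ (which changes nothing since $H$ is a subgraph of $G$), with the indicator function $g \defeq \indicator{E(H)} : E(G) \to \B$, and for $S \sub L$, $T \sub R$ let $u_{S,T} : E(G) \to \B$ denote the function $(\li,\ri) \mapsto \indicator{S}(\li)\indicator{T}(\ri)$, i.e.\ $\mask{\indicator{S}\indicator{T}^{\trans}}$.

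Taking all inner products and norms with respect to $\muG$, one checks the dictionary $\ip{u_{S,T}}{g}_{\muG} \cdot |E_G(L,R)| = |E_H(S,T)|$, $\ip{u_{S,T}}{u_{S',T'}}_{\muG}\cdot |E_G(L,R)| = |E_G(S\cap S', T\cap T')|$, and the bounds $\norm{g}_{\muG}^2 \leq 1$, $\norm{u_{S,T}}_{\muG}^2 = \flatfrac{|E_G(S,T)|}{|E_G(L,R)|} \leq 1$ (all of these are $0/1$-valued). So it suffices to produce $h = \sum_{i=1}^{p} c_i\, u_{S_i,T_i}$ with $p \leq 1/\gamma^2$ and $\sum_i |c_i| \leq 1/\gamma$ such that $\abs{\ip{u_{S,T}}{g-h}_{\muG}} \leq \gamma$ for every $S \sub L$, $T \sub R$; multiplying through by $|E_G(L,R)|$ then yields the theorem.

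I would build $h$ greedily. Set $h_0 = 0$. Given $h_j$, if there exist $S,T$ with $\abs{\ip{u_{S,T}}{g-h_j}_{\muG}} > \gamma$, pick such a pair $(S_{j+1},T_{j+1})$, let $\sigma_{j+1} = \sgn\!\inparen{\ip{u_{S_{j+1},T_{j+1}}}{g-h_j}_{\muG}}$, put $c_{j+1} = \gamma\,\sigma_{j+1}$ and $h_{j+1} = h_j + c_{j+1}\, u_{S_{j+1},T_{j+1}}$; otherwise stop and output $h = h_j$. The one real computation is the energy drop: expanding the square and using $\sigma_{j+1}\ip{u_{S_{j+1},T_{j+1}}}{g-h_j}_{\muG} = \abs{\ip{u_{S_{j+1},T_{j+1}}}{g-h_j}_{\muG}} > \gamma$ together with $\norm{u_{S_{j+1},T_{j+1}}}_{\muG}^2 \leq 1$ gives $\norm{g-h_{j+1}}_{\muG}^2 < \norm{g-h_j}_{\muG}^2 - 2\gamma^2 + \gamma^2 = \norm{g-h_j}_{\muG}^2 - \gamma^2$. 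Since $\norm{g-h_0}_{\muG}^2 = \norm{g}_{\muG}^2 \leq 1$ and these energies are nonnegative, the process halts after $p \leq 1/\gamma^2$ steps, whence $\sum_{i=1}^{p}|c_i| = p\gamma \leq 1/\gamma$; and at termination, by the stopping rule, $\abs{\ip{u_{S,T}}{g-h}_{\muG}} \leq \gamma$ for all $S,T$, as required.

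The argument is essentially routine, so there is no serious obstacle; the points demanding care are the normalization bookkeeping between the edge measure $\muG$ and raw edge counts (in particular that the restriction to $E(G)$ is already baked into $u_{S,T}$, so the products $u_{S,T}\cdot u_{S_i,T_i}$ count edges of $G$ in $(S\cap S_i)\times(T\cap T_i)$, not all pairs), and the twin bounds $\norm{g}_{\muG}^2 \leq 1$ and $\norm{u_{S,T}}_{\muG}^2 \leq 1$, which are exactly what make \emph{both} $p \leq 1/\gamma^2$ and $\sum_i|c_i| \leq 1/\gamma$ come out with the claimed constants. Note the statement is purely existential, so selecting the violating pair $(S,T)$ at each step needs no algorithm; efficiency considerations are deferred to the algorithmic version later in \cref{sec:reg_lemma}.
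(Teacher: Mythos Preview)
Your proposal is correct and is essentially the same proof as the paper's: both run the Frieze--Kannan energy-increment argument with respect to the edge measure $\muG$, taking step size $\gamma$ in the direction of a violating cut $u_{S,T}$ (equivalently, $\mask{\indicator{S}\indicator{T}^{\trans}}$), and both extract $p \leq 1/\gamma^2$ and $\sum_i |c_i| \leq 1/\gamma$ from the same $\gamma^2$ energy drop per step starting from $\norm{g}_{\muG}^2 \leq 1$. The only difference is cosmetic---you phrase things in terms of functions $g, u_{S,T}$ on $E(G)$ and the residual $g - h_j$, while the paper tracks the matrix $\mat_j = A_H - \sum_{i \leq j} c_i \mask{\indicator{S_i}\indicator{T_i}^{\trans}}$; these are the same object.
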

\begin{proof}
	The following (possibly inefficient) procedure finds such triples $(c_i,S_i,T_i)$.
	
	\begin{algorithm}{Existential Regularity Decomposition}{$\gamma > 0$, subgraph $H$ of $G = (L,R,E)$}{$\decom = \{(c_i, S_i, T_i)\}_{i\in [p]}$, with $ c_i \in\R, S_i \sub L, T_i \sub R$ for every $i\in [p]$} \label{algo:existential_regularity}
     \begin{itemize}
      \item Let $\mat_0 = A_H$, $j=1$ and $\decom = \emptyset$
      \item While $\exists ~ S_j \sub L,T_j \sub R$ such that $\abs*{\one_{S_j}^{\trans} \mat_{j-1}\one_{T_j}} > \gamma \cdot |E_G(L,R)|$:
     %  E_H(S_j,T_j) - \sum_{i=1}^{j-1} c_i \cdot E_G(S_i\cap S_j, T_i \cap T_j)} > \gamma \cdot |E_G(L,R)|$:
    
  \begin{itemize}
%            \item Let $f_\ell \in \calF$ be such that $\ip{g-h^{(\ell-1)}}{f_\ell}_\mu \geq$
%            \item Let $c_j = \gamma \cdot \sgn\parens*{E_H(S_j,T_j) - \sum_{i=1}^{j-1} c_i \cdot E_G(S_i\cap S_j, T_i \cap T_j)}$
			\item Let $c_j = \gamma \cdot \sgn\parens*{\one_{S_j}^{\trans} \mat_{j-1}\one_{T_j}}$
			\item $\mat_j \gets \mat_{j-1} - c_j \cdot \mask{\one_{S_j} \one_{T_j}^{\trans}}$
            \item $\decom \gets \decom \cup \{(c_j,S_j,T_j)\}$
            \item $j = j + 1$
          \end{itemize}
      \item Let $p = j-1$
      \item Return $\decom$
     \end{itemize}
\end{algorithm}

We relate the term $\one_S^{\trans} \mat_p \one_T$ to the edge counts as
\[
\one_S^{\trans} \mat_p \one_T ~=~ \one_S^{\trans} \inparen{A_H - \sum_{i-1}^p c_i \mask{\one_{S_i}\one_{T_i}^{\trans}}} \one_T ~=~ E_H(S,T) - \sum_{i=1}^p c_i E_G(S_i \cap S, T_i\cap T) \mper
\]
Also note that when the algorithm terminates, it holds that
\[
	\forall S \sub L, T\sub R, \qquad \abs*{\one_S^{\trans} \mat_{p} \one_T} \leq \gamma \cdot |E_G(L,R)| \mper
\]
Thus it suffices to show that the algorithm must terminate in at most $1/\gamma^2$ steps. We note that
\begin{align*}
\norm{\mat_{j-1}}_{\muG}^2 - \norm{\mat_{j}}_{\muG}^2 
&~=~ 2c_j \cdot \ip{\mat_{j-1}}{\mask{\one_{S_j} \one_{T_j}^{\trans}}}_{\muG} -~ c_j^2 \cdot \norm*{\mask{\one_{S_j} \one_{T_j}^{\trans}}}_{\muG}^2 \\
&~\geq~ 2c_j \cdot \frac{\one_{S_j}^{\trans} \mat_{j-1} \one_{T_j}}{\abs{E(G)}} -~ c_j^2 ~\geq~ 2\gamma^2 - \gamma^2  ~=~ \gamma^2 \mper
\end{align*}
Since the norm decreases by $\gamma^2$ in each step, and we have $\norm{\mat_0}_{\muG}^2 = \Ex{e \in G}{\mathds{1}_{\{e\in H\}}^2} \leq 1$, the algorithm must terminate in at most $1/\gamma^2$ steps. Finally, we also have $\sum_i \abs{c_i} \leq \gamma \cdot (1/\gamma^2) \leq 1/\gamma$.
\end{proof}
A reader may notice that the above proof is simply the analysis of gradient descent for minimizing the
convex function 
\[
F(M) ~=~ \sup_{S,T} ~\abs*{~\ip{M}{\one_S \one_T^{\trans}}_{\muG}} \mcom
\]
starting from the point $\mat_0 = A_H$. The minimum value is of course zero, but the proof follows
from the fact that gradient descent gets $\gamma$-close to the optimum in $1/\gamma^2$ steps with updates of the form $\one_{S_i} \one_{T_i}^{\trans}$.

\begin{corollary}
	Let $G$ be a bipartite $(n,d,\lambda)$-expander, and let $H$ be a subgraph of $G$. Then for any $\gamma >0$, if it holds that $\lambda\leq \gamma^2/4$, then there exists a collection of $p\leq 4/\gamma^2$ triples $(c_i,S_i,T_i) \in \R \times 2^L \times 2^R$ such that for any $S\sub L$ and $T\sub R$,
	\[
		\abs*{ |E_H(S,T)| - \frac{d}{n} \sum_{i=1}^p c_i \cdot |S_i\cap S|\cdot |T_i\cap T| } \leq \gamma \cdot nd
	\]
	The collection of reals $\{c_i\}_{i\in [p]}$ satisfy $\sum_i |c_i| \leq 2/\gamma$.
\end{corollary}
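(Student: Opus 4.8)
The plan is to derive this as a short corollary of \cref{thm:existential_reg} combined with the Expander Mixing Lemma, after splitting the error budget $\gamma$ into two equal halves. First I would apply \cref{thm:existential_reg} to the subgraph $H$ of $G$, but with the error parameter $\gamma/2$ in place of $\gamma$. This produces a collection of $p \le 1/(\gamma/2)^2 = 4/\gamma^2$ triples $(c_i, S_i, T_i) \in \R \times 2^L \times 2^R$ with $\sum_i \abs{c_i} \le 1/(\gamma/2) = 2/\gamma$, satisfying, for every $S \subseteq L$ and $T \subseteq R$,
\[
\abs*{ |E_H(S,T)| - \sum_{i=1}^p c_i \cdot |E_G(S_i \cap S, T_i \cap T)| } ~\leq~ \frac{\gamma}{2} \cdot |E_G(L,R)| ~=~ \frac{\gamma}{2} \cdot nd \mper
\]

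Next I would replace each term $|E_G(S_i \cap S, T_i \cap T)|$ by its ``expected'' value $\frac{d}{n}\,|S_i \cap S|\cdot|T_i \cap T|$ using the Expander Mixing Lemma, which bounds the discrepancy of each such term by $\lambda d \sqrt{|S_i \cap S|\cdot |T_i \cap T|} \le \lambda d n$. Multiplying by $\abs{c_i}$ and summing over $i$, the total error introduced by these $p$ replacements is at most $\lambda d n \cdot \sum_i \abs{c_i} \le \lambda d n \cdot \frac{2}{\gamma}$. Invoking the hypothesis $\lambda \le \gamma^2/4$, this is at most $\frac{\gamma^2}{4} \cdot d n \cdot \frac{2}{\gamma} = \frac{\gamma}{2} \cdot dn$.

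Finally I would combine the two estimates by the triangle inequality: the deviation of $|E_H(S,T)|$ from $\frac{d}{n}\sum_{i} c_i\, |S_i \cap S|\cdot|T_i \cap T|$ is at most $\frac{\gamma}{2} nd + \frac{\gamma}{2} nd = \gamma n d$, which is the claimed bound, while the bound $\sum_i \abs{c_i} \le 2/\gamma$ is delivered verbatim by the first step and $p \le 4/\gamma^2$ as well. There is no real obstacle here; the only point requiring a little care is the allocation of the $\gamma$-budget into two halves (one half absorbed by the regularity approximation, the other by the expander-mixing approximation) together with the observation that the constraint $\lambda \le \gamma^2/4$ is tuned precisely so that the second half fits inside its allotted $\frac{\gamma}{2} nd$ — both of which are routine once the structure of the argument is fixed.
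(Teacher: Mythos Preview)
Your proposal is correct and follows essentially the same approach as the paper: apply \cref{thm:existential_reg} with parameter $\gamma/2$, then replace each $|E_G(S_i\cap S,T_i\cap T)|$ by $\frac{d}{n}|S_i\cap S||T_i\cap T|$ via the Expander Mixing Lemma, and combine the two $\frac{\gamma}{2}nd$ errors by the triangle inequality using $\lambda\le\gamma^2/4$ and $\sum_i|c_i|\le 2/\gamma$.
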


\begin{proof}
	We apply \cref{thm:existential_reg} to $H$ as a subgraph of $G$, with accuracy $\gamma/2$, to get a set of $p \leq 4/\gamma^2$ triples $(c_i,S_i,T_i)$ such that
	\[
		\abs*{ |E_H(S,T)| - \sum_{i=1}^p c_i \cdot |E_G(S_i\cap S, T_i\cap T)| } ~\leq~ \frac{\gamma}{2} \cdot |E_G(L,R)| ~=~ \frac{\gamma}{2} \cdot nd
	\]
	For each $i \in [p]$, we know by the expander mixing lemma that
	\[
		\abs*{|E_G(S_i\cap S, T_i\cap T)| - \frac{d}{n}\cdot |S_i\cap S| \cdot |T_i\cap T|} ~\leq~ \lambda nd
	\]
	Substituting this approximation back into the regularity lemma, we get
	\begin{align*}
		\abs*{ |E_H(S,T)| - \sum_{i=1}^p c_i \cdot |E_G(S_i\cap S, T_i\cap T)| } & ~\leq~ \frac{\gamma}{2} \cdot |E_G(L,R)| ~=~ \frac{\gamma}{2} \cdot nd \\
		\implies ~~\abs*{ |E_H(S,T)| - \frac{d}{n} \sum_{i=1}^p c_i \cdot |S_i\cap S| \cdot |T_i\cap T)| } & ~\leq~ \parens*{\frac{\gamma}{2}+ \lambda \cdot \sum_{i=1}^p |c_i|} \cdot nd \\
		& \leq \parens*{\frac{\gamma}{2}+ \lambda \cdot \frac{2}{\gamma}} nd ~\leq~ \gamma \cdot nd \qquad \qquad  \insquare{\text{for}~\lambda \leq \gamma^2/4} \qedhere
	\end{align*}
\end{proof}

\subsection{Efficient regularity lemma}
To make use of the regularity lemma algorithmically, we use a simple modification of \cref{algo:existential_regularity} to make it efficient. Note that the only inefficient step there is to check whether there exist $S_j \sub L, T_j \sub R$ such that $\abs{\one_{S_j}^{\trans} \mat_{j-1} \one_{T_j}} > \gamma \cdot |E(G)|$, and if so, to find these sets. This can be done if we can find sets $S_j, T_j$ maximizing $\abs{\one_{S_j}^{\trans} \mat_{j-1} \one_{T_j}}$, which is precisely the cut norm problem for the matrix $\mat_{j-1}$. Unfortunately, the cut-norm problem is NP-Hard to solve in general.

However, we can use the following result of Alon and Naor \cite{AN04} for a constant factor approximation of the cut norm. To be precise, we use the algorithm of Alon and Naor to find sets $S_j, T_j$ that satisfy $\abs{\one_{S_j}^{\trans} \mat_{j-1} \one_{T_j}} \geq \alpha_{AN}\cdot \max_{S\sub L, T\sub R} \abs{\one_S^{\trans} \mat_{j-1}\one_T}$. The Alon-Naor algorithm is based on a semidefinite programming (SDP) relaxation of the cut-norm problem, which they can then round to obtain a constant factor approximation. This algorithm is deterministic, and the runtime is dominated by the time needed to solve the SDP.
Further, it was shown by Jeronimo \etal \cite{JST21}, that using the ideas of Arora and Kale \cite{AK07} and Lee and Padmanabhan \cite{LeeP20}, the Alon-Naor algorithm can be made to run in randomized near-linear time.

\begin{theorem}[\cite{AN04, JST21}]\label{thm:cut-norm}
	Given any matrix $H \in \R^{L\times R}$, there is an algorithm that finds sets $S^*\sub L$ and $T^*\sub R$ such that
	\[
		\abs*{\one_{S^*}^{\trans} \mat \one_{T^*}} = \abs*{\sum_{\li \in S^*, \ri\in T^*} \mat_{\li\ri}} ~\geq~ \alpha_{AN}\cdot \max_{S\sub L, T\sub R} \abs*{\one_S^{\trans} \mat \one_T}
	\]
	Here $\alpha_{AN} \geq 0.03$ is an absolute constant. The algorithm can be made to run in deterministic $O(\nnz{H}^{3.5})$ time and randomized $\Ot{\nnz{H}}$ time, where $\nnz{H}$ denotes the number of non-zero entries of $H$.
\end{theorem}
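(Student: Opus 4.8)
The plan is to follow the semidefinite programming approach of Alon and Naor~\cite{AN04}, and then to substitute a near-linear time SDP solver for its bottleneck step. First I would pass from the set formulation to a $\pm1$ bilinear formulation. Writing $\one_S^{\trans}\mat\one_T$ as a signed combination of the four quantities $\one_{S'}^{\trans}\mat\one_{T'}$ over the blocks cut out by the signs of a $\pm1$ vector, one checks the standard two-sided comparison
\[
\max_{S \sub L,\, T \sub R} \abs*{\one_S^{\trans} \mat \one_T} ~\leq~ \max_{x \in \{-1,1\}^L,\, y \in \{-1,1\}^R} x^{\trans} \mat y ~\leq~ 4\cdot\max_{S \sub L,\, T \sub R} \abs*{\one_S^{\trans} \mat \one_T}\,,
\]
and, more importantly, that any $\pm1$ pair $(x,y)$ of value $V$ can be decoded in linear time into sets $S,T$ with $\abs{\one_S^{\trans}\mat\one_T}\geq V/4$. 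So it suffices to approximate the bilinear $\pm1$ maximum.

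Next I would relax each coordinate to a unit vector, obtaining the SDP $\mathrm{SDP}(\mat)=\max\sum_{\li,\ri}\mat_{\li\ri}\langle u_{\li},v_{\ri}\rangle$ over unit vectors $u_{\li},v_{\ri}$ in a Euclidean space. This is a relaxation, and by Grothendieck's inequality $\mathrm{SDP}(\mat)\leq K_G\cdot\max_{x,y}x^{\trans}\mat y$ with $K_G<1.783$; Alon and Naor give an explicit randomized rounding, a Gaussian projection followed by a deterministic correction, that turns a (near-)optimal SDP solution into a $\pm1$ pair of value at least a constant fraction of $\mathrm{SDP}(\mat)$. Combining the relaxation bound, the rounding guarantee, and the factor-$4$ loss in decoding to sets, and tracking the constants carefully as in~\cite{AN04}, yields the absolute constant $\alpha_{AN}\geq 0.03$. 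Solving the SDP by an interior-point method and applying the rounding gives a deterministic algorithm; on a program with $O(\nnz{H})$ relevant entries its running time is $O(\nnz{H}^{3.5})$, which is the claimed deterministic bound.

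For the randomized $\Ot{\nnz{H}}$ bound I would invoke the matrix multiplicative-weights (MMW) framework of Arora and Kale~\cite{AK07}, together with the refinements of Lee and Padmanabhan~\cite{LeeP20}, exactly as assembled by Jeronimo \etal~\cite{JST21}. The SDP above is of a structured (``positive'') type whose width can be bounded by a polynomial factor, so that a number of MMW iterations \emph{independent of $n$} (depending only on the fixed target accuracy) suffices; each iteration needs only an approximate matrix-exponential--vector product and a top-eigenvalue estimate, both computable in $\Ot{\nnz{H}}$ time via Johnson--Lindenstrauss sketching and the Lanczos method, without ever materializing a dense matrix. The Gaussian rounding is a single random projection, also $\Ot{\nnz{H}}$, and the final decoding to sets is linear time.

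I expect the genuinely delicate part to be this last step: casting the SDP in the packing/covering form demanded by the Arora--Kale analysis, bounding its width, and verifying that the per-iteration linear-algebra primitives can be realized in near-linear time with adequate error control. Grothendieck's inequality and the Alon--Naor rounding I would cite as black boxes, and the $\pm1$-to-sets reduction is routine; so the entire substance of the \emph{near-linear} running time lies in adapting~\cite{AK07, LeeP20} to this particular SDP, which is precisely the step carried out in~\cite{JST21} and which I would either reproduce or simply cite.
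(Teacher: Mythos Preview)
Your proposal is a reasonable sketch of how the result is actually obtained in \cite{AN04} and \cite{JST21}, but note that the paper does not prove this theorem at all: it is stated as a black-box citation, with the surrounding text merely remarking that the Alon--Naor SDP rounding gives the constant-factor approximation and that \cite{JST21} (via \cite{AK07, LeeP20}) makes it near-linear time. So there is nothing to compare against; your write-up already goes well beyond what the paper provides, and the correct move here is simply to cite the result rather than reprove it.
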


\begin{theorem}[Efficient Regularity Lemma]\label{thm:generic_regularity}
	Let $G = (L,R,E)$ be a bipartite graph and let $H$ be a subgraph of $G$. Then for any $\gamma > 0$, there exists an algorithm that outputs a collection of $p\leq 1/(\alpha_{AN} \cdot \gamma )^2$ triples $(c_i,S_i,T_i) \in \R \times 2^L \times 2^R$ such that for any $S\sub L$ and $T\sub R$,
	\[
		\abs*{ |E_H(S,T)| - \sum_{i=1}^p c_i \cdot |E_G(S_i\cap S, T_i\cap T)| } \leq \gamma \cdot |E_G(L,R)|
	\]
	As before, the collections of reals $\{c_i\}_{i\in [p]}$ satisfy $\sum_i |c_i| \leq 1/(\alpha_{AN} \cdot \gamma )$. The algorithm can be made to run in randomized $\Ot{|E(G)|}$ time, or deterministic $\Oh{|E(G)|^{3.5}}$ time.
\end{theorem}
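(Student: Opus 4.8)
The plan is to run exactly the iterative scheme of \cref{algo:existential_regularity}, replacing its single inefficient step — deciding whether there is a pair $(S_j,T_j)$ with $\abs{\one_{S_j}^{\trans}\mat_{j-1}\one_{T_j}} > \gamma\cdot\abs{E(G)}$ and exhibiting it — by the Alon--Naor cut-norm approximation of \cref{thm:cut-norm}. Concretely, start with $\mat_0 = A_H$; at step $j$, run the algorithm of \cref{thm:cut-norm} on $\mat_{j-1}$ to get $S_j\sub L$, $T_j\sub R$ with $\abs{\one_{S_j}^{\trans}\mat_{j-1}\one_{T_j}} \geq \alpha_{AN}\cdot\max_{S,T}\abs{\one_S^{\trans}\mat_{j-1}\one_T}$; halt if this value is at most $\alpha_{AN}\gamma\cdot\abs{E(G)}$, and otherwise set $c_j = \alpha_{AN}\gamma\cdot\sgn\parens*{\one_{S_j}^{\trans}\mat_{j-1}\one_{T_j}}$, update $\mat_j \gets \mat_{j-1} - c_j\cdot\mask{\one_{S_j}\one_{T_j}^{\trans}}$, and continue. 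Return $\decom = \braces*{(c_i,S_i,T_i)}$.

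For correctness of the output, exactly as in the proof of \cref{thm:existential_reg} one has $\one_S^{\trans}\mat_p\one_T = \abs{E_H(S,T)} - \sum_{i=1}^p c_i\,\abs{E_G(S_i\cap S, T_i\cap T)}$ for all $S,T$. The calibration of the stopping threshold is the key: when the loop halts the \emph{approximate} maximum is at most $\alpha_{AN}\gamma\abs{E(G)}$, so the \emph{true} maximum $\max_{S,T}\abs{\one_S^{\trans}\mat_p\one_T}$ is at most $\gamma\abs{E(G)}$, which is precisely the claimed bound. For the iteration count, use the same potential $\norm{\mat_j}_{\muG}^2$: while the loop runs, $\abs{\one_{S_j}^{\trans}\mat_{j-1}\one_{T_j}} > \alpha_{AN}\gamma\abs{E(G)}$, so with $c_j = \alpha_{AN}\gamma\cdot\sgn(\cdot)$ the computation from \cref{thm:existential_reg} gives
\[
\norm{\mat_{j-1}}_{\muG}^2 - \norm{\mat_j}_{\muG}^2 ~\geq~ 2c_j\cdot\frac{\one_{S_j}^{\trans}\mat_{j-1}\one_{T_j}}{\abs{E(G)}} - c_j^2 ~\geq~ 2(\alpha_{AN}\gamma)^2 - (\alpha_{AN}\gamma)^2 ~=~ (\alpha_{AN}\gamma)^2 \mper
\]
Since $\norm{\mat_0}_{\muG}^2 \leq 1$, the loop runs at most $p \leq 1/(\alpha_{AN}\gamma)^2$ times, and $\sum_i\abs{c_i} \leq p\cdot\alpha_{AN}\gamma \leq 1/(\alpha_{AN}\gamma)$.

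For the runtime, the point is that every $\mat_j$ stays supported on $E(G)$ because of the mask $\mask{\cdot}$, so $\nnz{\mat_j} \leq \abs{E(G)}$ throughout; moreover $\mat_j$ is obtained from $\mat_{j-1}$ in $O(\abs{E(G)})$ time, since subtracting $c_j\mask{\one_{S_j}\one_{T_j}^{\trans}}$ only changes entries at edges between $S_j$ and $T_j$. Hence each iteration costs one invocation of \cref{thm:cut-norm} on a matrix with at most $\abs{E(G)}$ nonzero entries, i.e.\ $\Ot{\abs{E(G)}}$ randomized or $\Oh{\abs{E(G)}^{3.5}}$ deterministic time, plus $O(\abs{E(G)})$ bookkeeping; multiplying by the constant $p = O_\gamma(1)$ iterations gives the stated bounds.

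The part needing the most care is the bookkeeping around the $\alpha_{AN}$ loss: it must be threaded through two places at once — the stopping threshold $\alpha_{AN}\gamma\abs{E(G)}$ (so that the final accuracy is measured against the \emph{exact} cut norm, not the approximate one) and the step size $c_j = \alpha_{AN}\gamma$ (so that the per-step potential drop is $(\alpha_{AN}\gamma)^2$ and the counts $p$ and $\sum\abs{c_i}$ come out as claimed). The only other subtlety is ensuring the intermediate matrices remain sparse via the mask, so that the near-linear runtime guarantee of \cref{thm:cut-norm} genuinely applies at every step; both are routine once the scheme is set up as above.
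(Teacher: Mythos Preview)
Your proposal is correct and follows essentially the same approach as the paper: both replace the inefficient witness-finding step in \cref{algo:existential_regularity} by the Alon--Naor approximation, set the stopping threshold to $\alpha_{AN}\gamma\abs{E(G)}$ and the step size to $c_j=\alpha_{AN}\gamma\cdot\sgn(\cdot)$, and then rerun the potential argument with drop $(\alpha_{AN}\gamma)^2$ while noting that the mask keeps $\nnz{\mat_j}\le\abs{E(G)}$ so that \cref{thm:cut-norm} applies at each of the $O_\gamma(1)$ iterations.
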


\begin{proof}
The following algorithm is a modification of \cref{algo:existential_regularity}.

\begin{algorithm}{Efficient Regularity Decomposition}{$\gamma > 0$, subgraph $H$ of $G = (L,R,E)$}{$M = \{(c_i, S_i, T_i)\}_{i\in [p]}$, with $ c_i \in\R, S_i \sub L, T_i \sub R$ for every $i\in [p]$} \label{algo:efficient_regularity}
     \begin{itemize}
      \item Let $\mat_0 = A_H$, $j=1$ and $\decom = \emptyset$
      \item Repeat:
     %such that $\abs*{\one_{S_j}^T\mat_{j-1}\one_{T_j}} > \gamma \cdot |E_G(L,R)|$:
     %  E_H(S_j,T_j) - \sum_{i=1}^{j-1} c_i \cdot E_G(S_i\cap S_j, T_i \cap T_j)} > \gamma \cdot |E_G(L,R)|$:
      \begin{itemize}
      		\item Find $S_j \sub L,T_j \sub R$ such that $\abs*{\one_{S_j}^{\trans} \mat_{j-1} \one_{T_j}} \geq \alpha_{AN}\cdot \max_{S\sub L, T\sub R} \abs*{\one_S^{\trans} \mat_{j-1} \one_T}$
%            \item Let $f_\ell \in \calF$ be such that $\ip{g-h^{(\ell-1)}}{f_\ell}_\mu \geq$
%            \item Let $c_j = \gamma \cdot \sgn\parens*{E_H(S_j,T_j) - \sum_{i=1}^{j-1} c_i \cdot E_G(S_i\cap S_j, T_i \cap T_j)}$
			\item If $\abs*{\one_{S_j}^{\trans} \mat_{j-1} \one_{T_j}} \leq \alpha_{AN}\cdot \gamma \cdot |E(G)|$, then exit loop.
			\item Let $c_j = \alpha_{AN} \cdot \gamma \cdot \sgn\parens*{\one_{S_j}^{\trans}\mat_{j-1}\one_{T_j}}$
			\item $\mat_j \gets \mat_{j-1} - c_j \cdot \mask{\one_{S_j} \one_{T_j}^{\trans}}$
            \item $\decom \gets \decom \cup \{(c_j,S_j,T_j)\}$
            \item $j = j + 1$
          \end{itemize}
      \item Let $p = j-1$
      \item Return $M$
     \end{itemize}
\end{algorithm}

As in the proof of \cref{thm:existential_reg}, we first need to show that when the algorithm stops, the set of triples $M = \{(c_i,S_i,T_i)\}_{i\in [p]}$ indeed satisfies the condition states in the theorem statement. The second claim we would need to argue is that the algorithm stops in at most $1/(\alpha_{AN}\cdot \gamma)^2$ many steps.

For the first claim, note that when the algorithm stops, we have
\[
\abs*{\one_{S_{p+1}}^{\trans} \mat_{j-1} \one_{T_{p+1}}} ~\leq~ \alpha_{AN}\cdot \gamma \cdot |E(G)|
\quad \implies \quad
\max_{S,T} \abs{\one_S^{\trans} \mat_p \one_{T}} \leq \gamma \cdot |E(G)| \mcom
\]
using the approximation guarantee for the Alon-Naor algorithm.
As in proof of \cref{thm:existential_reg}, we also have
\[
	\one_S^{\trans} \mat_p \one_T ~=~ E_H(S,T) - \sum_{i=1}^p c_i \cdot E_G(S_i \cap S, T_i\cap T) \mcom
\]
which shows that when the algorithm terminates, it satisfies the regularity condition. To bound the number of steps, we again compute
\begin{align*}
\norm{\mat_{j-1}}_{\muG}^2 - \norm{\mat_{j}}_{\muG}^2 
&~=~ 2c_j \cdot \ip{\mat_{j-1}}{\mask{\one_{S_j} \one_{T_j}^{\trans}}}_{\muG} -~ c_j^2 \cdot \norm*{\mask{\one_{S_j} \one_{T_j}^{\trans}}}_{\muG}^2 \\
&~\geq~ 2c_j \cdot \frac{\one_{S_j}^{\trans} \mat_{j-1} \one_{T_j}}{\abs{E(G)}} -~ c_j^2 ~\geq~ 2(\alpha_{AN} \cdot \gamma)^2 - (\alpha_{AN} \cdot \gamma)^2  
~=~ (\alpha_{AN} \cdot \gamma)^2 \mper
\end{align*}
For the runtime claim, note that it suffices to use the algorithm of \cref{thm:cut-norm} at most $p$ times, and throughout the algorithm, we maintain that the matrix $\mat_j$ is only supported on entries of $E(G)$, so that $\nnz{\mat_j} \leq nd$.
\end{proof}

Finally, we use the expander mixing lemma to get that when $G=(L,R,E)$ is a $\lambda$--spectral expander, the number of edges in $G$ between any two sets in $L$ and $R$ respectively can be approximated well just using the set sizes.

\begin{corollary}\label{cor:expander_regularity}
	Let $G$ be a bipartite $(n,d,\lambda)$--expander, and let $H$ be a subgraph of $G$. Then for any $\gamma > 0$, if it holds that $\lambda \leq \alpha_{AN}\cdot \gamma^2/4$, then there exists an algorithm that takes as input the graph $H$, and outputs a collection of $p\leq 4/(\alpha_{AN}^2\gamma^2)$ triples $(c_i,S_i,T_i) \in \R \times 2^L \times 2^R$ such that for any $S \sub L$ and $T\sub R$,
	\[
		\abs*{\abs{E_H(S,T)} - \frac{d}{n} \sum_{i-1}^p c_i \cdot \abs{S_i\cap S} \cdot \abs{T_i\cap T}} \leq \gamma \cdot nd
	\]
	The algorithm can be made to run in randomized $\Ot{|E(G)|}$ time, or deterministic $\Oh{|E(G)|^{3.5}}$ time. The collection of reals $\{c_i\}_{i\in [p]}$ satisfy $\sum_i |c_i| \leq 2/(\alpha_{AN}\cdot \gamma)$.
\end{corollary}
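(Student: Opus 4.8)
The plan is to combine the efficient regularity lemma (\cref{thm:generic_regularity}) with the expander mixing lemma, in exactly the same way that the existential corollary following \cref{thm:existential_reg} was derived from \cref{thm:existential_reg}; the only change is that the cut-norm oracle is now approximate, which has already been absorbed into the statement of \cref{thm:generic_regularity} via the factor $\alpha_{AN}$. First I would invoke \cref{thm:generic_regularity} on $H$ viewed as a subgraph of $G$, but run with accuracy parameter $\gamma/2$ rather than $\gamma$. This produces a collection of $p \leq 1/(\alpha_{AN}\cdot\gamma/2)^2 = 4/(\alpha_{AN}^2\gamma^2)$ triples $(c_i,S_i,T_i)$ with $\sum_i |c_i| \leq 1/(\alpha_{AN}\cdot\gamma/2) = 2/(\alpha_{AN}\gamma)$, computed in randomized $\Ot{|E(G)|}$ time (or deterministic $\Oh{|E(G)|^{3.5}}$ time), satisfying
\[
\abs*{|E_H(S,T)| - \sum_{i=1}^p c_i \cdot |E_G(S_i\cap S, T_i\cap T)|} ~\leq~ \frac{\gamma}{2}\cdot |E_G(L,R)| ~=~ \frac{\gamma}{2}\cdot nd
\]
for every $S\sub L$ and $T\sub R$.

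Next I would replace each edge count $|E_G(S_i\cap S, T_i\cap T)|$ appearing above by the mixing-lemma estimate $\frac{d}{n}\cdot|S_i\cap S|\cdot|T_i\cap T|$. The expander mixing lemma guarantees that each such substitution introduces an error of at most $\lambda\cdot dn$ (using the crude bound $\sqrt{|S_i\cap S|\cdot|T_i\cap T|}\leq n$). Plugging these $p$ approximations into the displayed inequality and applying the triangle inequality, the total error is at most
\[
\frac{\gamma}{2}\cdot nd ~+~ \lambda\cdot\parens*{\sum_{i=1}^p |c_i|}\cdot nd ~\leq~ \parens*{\frac{\gamma}{2} ~+~ \frac{2\lambda}{\alpha_{AN}\gamma}}\cdot nd \mper
\]
Under the hypothesis $\lambda \leq \alpha_{AN}\gamma^2/4$, the second term is at most $\gamma/2$, so the whole bound is at most $\gamma\cdot nd$, which is the claimed guarantee. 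The bounds on $p$ and on $\sum_i |c_i|$ are precisely those produced in the first step, and the runtime claim is inherited verbatim from \cref{thm:generic_regularity}, since the reduction to the mixing-lemma form is a purely post-hoc rewriting of the output triples and costs no additional computation.

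There is essentially no genuine obstacle here; this is bookkeeping rather than a new idea. The one point requiring care is choosing the accuracy with which \cref{thm:generic_regularity} is invoked so that, after halving $\gamma$, there is exactly enough slack to absorb the accumulated mixing-lemma error $\lambda\sum_i|c_i|\cdot nd$ under the stated threshold $\lambda \leq \alpha_{AN}\gamma^2/4$, and then tracking the resulting constants ($p \leq 4/(\alpha_{AN}^2\gamma^2)$, $\sum_i|c_i|\leq 2/(\alpha_{AN}\gamma)$) so that they match the corollary statement.
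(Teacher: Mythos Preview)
Your proposal is correct and follows essentially the same approach as the paper's proof: invoke \cref{thm:generic_regularity} with accuracy $\gamma/2$, then replace each $|E_G(S_i\cap S, T_i\cap T)|$ by $\frac{d}{n}|S_i\cap S||T_i\cap T|$ via the expander mixing lemma, and check that the accumulated error $\lambda\sum_i|c_i|\cdot nd$ fits into the remaining $\gamma/2$ budget under the hypothesis $\lambda \leq \alpha_{AN}\gamma^2/4$. The constants you track match the paper's exactly.
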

\begin{proof}
	We use the algorithm of \cref{thm:generic_regularity} on $H$ as a subgraph of $G$ with accuracy $\gamma/2$ to algorithmically obtain a set of triples $\decom = \{(c_i,S_i,T_i)\}_{i\in [p]}$ such that
	\begin{align}\label{eq:subgraph_regularity}
		\abs*{ |E_H(S,T)| - \sum_{i=1}^p c_i \cdot |E_G(S_i\cap S, T_i\cap T)| } ~\leq~ \frac{\gamma}{2} \cdot |E_G(L,R)| ~=~ \frac{\gamma}{2} \cdot nd
	\end{align}
	As guaranteed by \cref{thm:generic_regularity}, we can also assume $\sum_i |c_i| \leq 2/(\alpha_{AN}\cdot \gamma)$. For each $i\in [p]$, by the expander mixing lemma, we have
	\[
		\abs*{|E_G(S_i\cap S,T_i\cap T)| - \frac{d}{n} \cdot |S_i\cap S| \cdot |T_i\cap T|} ~\leq~ \lambda \cdot nd
	\]
	We substitute this approximation back into \cref{eq:subgraph_regularity}, and keep track of the error as
	\begin{align*}
		\abs*{ |E_H(S,T)| - \frac{d}{n} \sum_{i=1}^p c_i \cdot |S_i\cap S| \cdot |T_i\cap T)| } & ~\leq~ \inparen{\frac{\gamma}{2} + \sum_{i=1}^p \abs{c_i} \cdot \lambda} \cdot nd  \\
		&~\leq~ \parens*{\frac{\gamma}{2}+\frac{2\lambda}{\alpha_{AN}\cdot \gamma}}nd ~\leq~ \gamma \cdot nd \qquad \qquad \insquare{\text{using}~\lambda \leq \frac{\alpha_{AN}\cdot \gamma^2}{4}} \\
	\end{align*}
	The algorithm runtime simply reflects the analogous runtime in \cref{thm:generic_regularity}.
\end{proof}
In conclusion, given any subgraph $H$ of a $\lambda$--expander $G$, we can find a regularity decomposition for it with error $\gamma$ as long as $\gamma \geq C\sqrt{\lambda}$ for some absolute constant $C$ ($C=12$ suffices).

%%%%
\subsection{Function families given by regularity lemma}\label{sec:regularity-factor}
The regularity lemma of \cref{cor:expander_regularity} can also be viewed as yielding an efficiently
constructible family of functions, given by the indicator function of the sets appearing in the lemma. 
Below we formally define the properties need from the family, and analyze the parameters given by
\cref{cor:expander_regularity}. 
Note that while the regularity lemma applied to a bipartite graph on vertex sets $(L,R)$ yields
subsets of both $L$ and $R$, it will suffice for our applications to only consider subsets of
$L$, and thus we define the family below accordingly.

\begin{definition}\label{regular-family}
Let $G = (L,R,E)$ be a $d$-regular bipartite graph with $\abs{L} = \abs{R} = n$, and let $H$ be a subgraph of $G$.
We say that a function family $\family$ defined on $L$ is $(\eta,\gamma)$-regular with
respect to $H$ if for any $S, S' \subseteq L$ and $T \subseteq R$, it holds that
\[
\norm{\indicator{S} - \indicator{S'}}_{\family} ~\leq~ \eta
\qquad \implies \qquad
\abs{E_H(S,T) - E_H(S',T)} ~\leq~ \gamma \cdot n d \mper
\]
\end{definition}
We now argue that the regularity lemma yields efficiently constructible $(\eta,\gamma)$-regular
families.
\begin{lemma}\label{lem:regularity-family}
Let $G = (L,R,E)$ be an $(n,d,\lambda)$-expander, and let $H$ be a subgraph of $G$. Then, for every
$\gamma \geq 50\sqrt{\lambda}$, there exists $\eta \geq C_0 \cdot \gamma^2$, and an
$(\eta,\gamma)$-regular family constructible in randomized $\tilde{O}(\abs{E(G)})$ or deterministic
$O(\abs{E(G)}^{3.5})$ time.
\end{lemma}
\begin{proof}
Given $\gamma \geq 50\sqrt{\lambda}$, we apply \cref{cor:expander_regularity} with parameter $\gamma/4$ to obtain a collection of triples $\inbraces{(c_i, S_i, T_i)}_{i \in [p]}$, for $p = O(1/\gamma^2)$, satisfying for all $S \subseteq L$ and $T \subseteq R$
\[
\abs*{ |E_H(S,T)| - \sum_{i=1}^p c_i \cdot \frac{d}{n} \cdot \abs{S \cap S_i}\abs{T \cap T_i} } 
~=~ \abs*{ |E_H(S,T)| - \sum_{i=1}^p c_i \cdot nd \cdot \angles{\indicator{S}, \indicator{S_i}} \cdot \angles{\indicator{T}, \indicator{T_i}} }
~\leq~ \frac{\gamma}{4} \cdot nd \mper
\]
By triangle inequality, we get that for any $S, S' \subseteq L$ and $T \subseteq R$
\[
\frac{1}{nd} \cdot \abs*{\abs{E_H(S,T)} - \abs{E_H(S',T)}} ~\leq~ \sum_{i \in [p]} \abs{c_i} \cdot \abs{\angles{\indicator{S}-\indicator{S'},\indicator{S_i}}} \cdot \angles{\indicator{T}, \indicator{T_i}} + \frac{\gamma}{2} \mper
\]
Let $\family = \braces{\indicator{S_i}}_{i \in [p]}$ be the function family corresponding to subsets of $L$ given by the regularity lemma. By definition of the distance $\norm{\cdot}_{\family}$, we get
\[
\angles{\indicator{S} - \indicator{S'}, \indicator{S_i}} ~\leq~ \norm{\indicator{S} - \indicator{S'}}_{\family} ~\leq~ \eta \mper
\]
This can be used to bound the difference of number of edges as
\[
\frac{1}{nd} \cdot \abs{\abs{E_H(S,T)} - \abs{E_H(S',T)}} ~\leq~ \sum_{i \in [p]}\abs{c_i} \cdot \eta ~+~ \frac{\gamma}{2} \mper
\]
From \cref{cor:expander_regularity}, we have that $p \leq (8/\alpha_{AN} \cdot \gamma)^2$ and $\sum_i \abs{c_i} \leq 8/(\alpha_{AN} \cdot \gamma)$. Thus, for $\eta = (\alpha_{AN} \cdot \gamma^2)/16$, we have that $\sum_i \abs{c_i}\cdot \eta \leq \gamma/2$, which completes the proof.
\end{proof}
We will also need to construct families which are simultaneously regular for multiple subgraphs of $G$. The following claim shows that this can be done by simply obtaining a regular family for each subgraph, and then taking a union.
\begin{claim}[Refinement]\label{clm:refinement}
% Let $G = (L,R,E)$ be an $(n,d,\lambda)$-expander, and let $H$ be a subgraph of $G$. 
Let $\family$ be an $(\eta,\gamma)$-regular family for $H$, and let $\family' \supseteq \family$ be a larger family. 
Then $\family'$ is also $(\eta,\gamma)$-regular for $H$.
\end{claim}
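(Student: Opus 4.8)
The plan is to reduce the claim to a single monotonicity fact: the $\factor$-distance only \emph{shrinks} when the factor is coarsened, i.e. $\norm{f_1 - f_2}_{\factor} \leq \norm{f_1 - f_2}_{\factor'}$ whenever $\factor'$ refines $\factor$. Granting this, the claim is immediate: if $S, S' \subseteq L$ and $T \subseteq R$ satisfy $\norm{\indicator{S} - \indicator{S'}}_{\factor'} \leq \eta$, then also $\norm{\indicator{S} - \indicator{S'}}_{\factor} \leq \eta$, and the $(\eta,\gamma)$-regularity of $\factor$ for $H$ yields $\abs{E_H(S,T) - E_H(S',T)} \leq \gamma n d$, which is exactly the conclusion needed for $\factor'$.

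To prove the monotonicity fact, first I would record the tower property for conditional averages: if $\factor'$ refines $\factor$, then $\Ex{\Ex{f \mid \factor'} \mid \factor} = \Ex{f \mid \factor}$ for every $f : L \to \R$. This is just the statement that averaging over the atoms of $\factor$ can be carried out by first averaging over the (finer) atoms of $\factor'$ and then averaging those values, with the appropriate size-proportional weights — the same two-stage sampling used in the proof of \cref{prop:measurable-inner-product}. Second, I would use that the conditional-average operator is an $\ell_1$-contraction: for any $g : L \to \R$,
\[
\norm{\Ex{g \mid \factor}}_1
~=~ \Ex{x \in L}{\abs*{\Ex{y \in \factor(x)}{g(y)}}}
~\leq~ \Ex{x \in L}{\Ex{y \in \factor(x)}{\abs{g(y)}}}
~=~ \norm{g}_1 \mcom
\]
by the triangle inequality applied inside each atom.

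Combining these, with $g = \Ex{f_1 - f_2 \mid \factor'}$, gives
\[
\norm{f_1 - f_2}_{\factor}
~=~ \norm*{\Ex{f_1 - f_2 \mid \factor}}_1
~=~ \norm*{\Ex{\Ex{f_1 - f_2 \mid \factor'} \mid \factor}}_1
~\leq~ \norm*{\Ex{f_1 - f_2 \mid \factor'}}_1
~=~ \norm{f_1 - f_2}_{\factor'} \mcom
\]
which is the desired inequality, and the claim follows. I do not expect any real obstacle here; the only thing to be careful about is getting the weighting right in the tower property (atoms are weighted by size under the expectation measure on $L$), but this is the same bookkeeping already used elsewhere in \cref{sec:prelims}.
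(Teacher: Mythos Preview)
Your proposal is correct and follows essentially the same approach as the paper: reduce to the monotonicity inequality $\norm{\cdot}_{\factor} \leq \norm{\cdot}_{\factor'}$, then establish it via the tower property for conditional averages combined with the triangle inequality (i.e., $\ell_1$-contraction of conditional averaging). The paper compresses these two ingredients into a single displayed chain, but the argument is the same.
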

\begin{proof}
It suffices to show that for any two sets $S,S' \subseteq L$, we have $\norm{\indicator{S}-\indicator{S'}}_{\family} \leq \norm{\indicator{S}-\indicator{S'}}_{\family'}$. 
Since $\family' \supseteq \family$, we have that
\[
\norm{\indicator{S} - \indicator{S'}}_{\family} ~=~ \sup_{f \in \family} \abs{\angles{\indicator{S} - \indicator{S'}, f}} ~\leq~ \sup_{f \in \family'} \abs{\angles{\indicator{S} - \indicator{S'}, f}} ~=~ \norm{\indicator{S} - \indicator{S'}}_{\family'} \qedhere
\]
% Since every atom of $\factor$ is a disjoint union of atoms of $\factor'$, we have that for any function $f$
% \[
% \norm{f}_{\factor} ~=~ \Ex{x \in \calX}{\abs*{\Ex{y \in \factor(x)}{f(y)}}} ~=~ \Ex{x \in \calX}{\abs*{\ExpOp_{y \in \factor(x)}\Ex{z \in \factor'(y)}{f(z)}}} ~\leq~ \Ex{x \in \calX}{\ExpOp_{y \in \factor(x)} \abs*{\Ex{z \in \factor'(y)}{f(z)}}} ~=~ \norm{f}_{\factor'} \qedhere
% \]
% 
\end{proof}
\begin{remark}\label{remark:trivial_factor}
By the expander mixing lemma, we have that 
\[
\abs*{\abs{E_G(S,T)} - \frac{d}{n} \cdot \abs{S}\abs{T}} ~=~ \abs*{\abs{E_G(S,T)} - nd \cdot \angles{\indicator{S}, \indicator{L}} \cdot \angles{\indicator{T}, \indicator{R}}} ~\leq~ \lambda \cdot nd \mcom
\]
which shows that the trivial family $\family_0 = \{\indicator{L}\}$ is $(\eta,\gamma)$-regular for $G$, for any $\gamma \geq 4\lambda$ and $\eta = \gamma/2$. We always assume any family $\family$ we construct also contains the function $\indicator{L}$, so that by \cref{clm:refinement}, $\family$ will also be regular for $G$.
\end{remark}

%!TEX root=main.tex

%%% Local Variables:
%%% mode: latex
%%% TeX-master: "main"
%%% End:

\section{Algorithms for AEL Codes}\label{sec:ael}
%
%\subsection{The Alon-Edmonds-Luby (AEL) construction}
%
Alon, Bruck, Naor, Naor and Roth \cite{ABNNR92} introduced a graph-based distance amplification scheme which was generalized by Alon, Edmonds and Luby \cite{AEL95}, and used by Guruswami and Indyk \cite{GI05} to design linear-time unique decodable near-MDS codes.

The scheme is a three-step process involving an outer code  ($\calC_\out$), an inner code ($\calC_\inn$), and a bipartite expander $G$: (i) concatenate the outer code $\cC_\out$ with inner code $\cC_\inn$ (ii) shuffle the symbols of concatenated code via edges on a $d$-regular bipartite expander graph $G$, and (iii) collect $d$-symbols on the right vertices and fold them back to produce the final code, $\AELC$. We now formally define this procedure.

Fix an $(n,d,\lambda)$-expander $G = (L,R,E)$. Let $\calC_{\out}$ be an $[n,r_{\out},\delta_{\out}]_{\Sigma_\out}$ code and let $\calC_{\inn}$ be a $[d,r_{\inn},\delta_{\inn}]_{\Sigma_\inn}$ code with $\abs{\Sigma_\out} = |\calC_\inn|$, and let $\Enc_{\inn}: \Sigma_\out\rightarrow \calC_\inn\subseteq \Sigma_\inn^d$ be the encoding map for $\calC_{\inn}$, along with its inverse $\Enc_{\inn}^{-1} : \calC_{\inn} \to \Sigma_{\out}$.
%\snote{Should I change $\phi$ to $Enc_{\inn}$ or something like that?}
%\fnote{The convention $[d,r_{\inn}, \delta_{\inn}]_{\Sigma_\inn}$ with dimension/rate before distance seems more common. Not sure if it is worth changing.}\tnote{agreed and done.}

The code $\AELC$ is defined over the alphabet $\Sigma \defeq \Sigma_{\inn}^d$ and coordinate set $R$, so that $\AELC \sub \Sigma^R$. A codeword $f$ of $\AELC$ technically belongs to the space $\Sigma^R = (\Sigma_\inn^d)^R$ but by our choice of parameters, $(\Sigma_\inn^d)^R$ is in bijection with the set $\Sigma_\inn^E$ and one can just \emph{fold} or \emph{unfold} the symbols to move from one space to the other. 
%Choosing $f$ to be in $\Sigma_\inn^E$ allows us to talk about $f$ viewed from left vertex set $L$ or right vertex set $R$. Formally, if $f_\ell \defeq f\restrict{N(\ell)} \in \cC_\inn$, we have $ \inv{\varphi}(f_\ell) \in \Sigma_\out$. 

%We refer to  which is the view of $f$ from left after converting the symbols.

%For $f: L \to \Sigma_\out$, define $f^*: E\to \Sigma_\inn$ such that $f^*\restrict{N(\ell)} := \phi(f(\ell))$
%
% 
%
%%we have ${\Sigma_\out}^L \cong \cC_\inn^L \cong {\Sigma_\inn}^E \cong ({\Sigma_\inn^d})^R$, 
%
%
%$f \in \calC_{\out} \subseteq {\Sigma_\out}^L$
%define $f_\ell:= f\restrict{N(\ell)}$.  

\begin{definition}[AEL Codes]

Given inner code $\cC_\inn$ an outer code $\cC_\out$, a map $\varphi$, and a graph $G$ as above, the AEL code is defined as,
\[
\AELC ~=~ \braces[\big]{f \in \parens{\Sigma_\inn^d}^R \cong \Sigma_{\inn}^E ~\mid~ \forall \ell \in L, f_\ell \in \cC_\inn,\;\text{and } \braces[\big]{\Enc_{\inn}^{-1}(f_\ell)}_{\ell \in L} \in \cC_\out }.
\]
%	\[
%		f^{\AEL} (\ri) ~=~ \left( f^{*}_{\calC_{0}}(e_1),f^{*}_{\calC_{0}}(e_2),\cdots,f^{*}_{\calC_{0}}(e_d) \right)
%	\]
%	where $e_1,e_2,\cdots,e_d$ are the $d$ edges incident on $\ri$.
%	The AEL code $\calC^{\AEL}_{\calC_{0}}(\calC_{1}) \subseteq [q_0^d]^{n}$ is defined as 
%	\[
%		\calC^{\AEL}_{\calC_{0}}(\calC_{1}) ~=~ \{ f^{\AEL}_{\calC_{0}} \mid f \in \calC_{1}\}
%	\]
%	When clear from context, we will omit $\calC_{0},\calC_1$ in the above notation to call the AEL code $\AELC$.
\end{definition}

One can also think of the code procedurally as, starting with $f_0 \in \cC_\out$, and defining $f \in \Sigma_\inn^E$ such that $f_\ell := \Enc_{\inn}((f_0)_\ell)$. 
%
%Folding $f$ on the right vertices then yields an AEL codeword.
%
This $f$ can then be folded on the right vertices to obtain an AEL codeword. Note that each codeword of $\cC_{\out}$ gives rise to a unique codeword of $\AELC$.

%A pictorial description of this process is as follows:

\begin{figure}[htb]
\begin{center}	
\begin{tikzpicture}[scale = 0.7]
\begin{scope}
\draw[] (0,0) ellipse (1.5cm and 3cm);
\node[below] at (0,-3.1) {$L$};
\draw[] (6,0) ellipse (1.5cm and 3cm);
\node[below] at (6,-3.1) {$R$};

%\rotatebox{90}{$\,=$}\cC_\inn \ni \varphi(f^*(\ell))=  

\node[fill,circle,red] (a1) at (0,2) {};
\node[fill,circle,red] (a2) at (0,1) {};
\node[left,darkred] at (-0.2,1) {$\cC_\inn \ni \parens[\big]{f(e_1),f(e_1), f(e_3)} = f_{\ell}  $};
%\node[left] at (-1.9,0.55) {$\Large \rotatebox{90}{\,=} $};
%\node[left] at (-1.2,0.2) {$\cC_\inn \ni \varphi(f^*(\ell))  $};
\node[fill,circle,red] (a3) at (0,0) {};
\node[fill,circle,red] (a4) at (0,-1) {};
\node[fill,circle,red] (a5) at (0,-2) {};
%label={$(f_{\ell}(e_2),f_{\ell'}(e_3), f_{\ell''}(e_2))$}
\node[fill,circle,blue] (b1) at (6,2) {};
\node[fill,circle,blue] (b2) at (6,1) {};
\node[fill,circle,blue] (b3) at (6,0) {};
\node[below,darkblue] at (10,0.45) {{$f_r = \parens[\Big]{f(e_2),f(e_4), f(e_5)} \in \Sigma_\inn^d$}};
\node[fill,circle,blue] (b4) at (6,-1) {};
\node[fill,circle,blue] (b5) at (6,-2) {};

\draw[](a2)--node[pos=0.45,sloped, above] {\small{$f(e_1)$}}(b1);
\draw[](a2)--node[pos=0.45,sloped, above] {\small $f(e_2)$}(b3);
\draw[](a2)--node[pos=0.45,sloped, above] {\small $f(e_3)$}(b5);
\draw[] (a4)--node[pos=0.17,sloped, above] {\small{$f(e_4)$}}(b3);
\draw[] (a5)--node[pos=0.17,sloped, above] {\small{$f(e_5)$}}(b3);

%\node[below] at (2.8,-3.8) {Illustration of the AEL procedure};
\end{scope}
\end{tikzpicture}
\vspace{-10 pt}
\end{center}
\caption{Illustration of the AEL procedure}
\label{fig:ael-main}
\end{figure}

Let us briefly recall the details of the AEL amplification from \cref{sec:intro}. It is easy to see that if the rate of the outer code $\cC_{\out}$ is $\rho_{\out}$ and the rate of the inner code $\cC_{\inn}$ is $\rho_{\inn}$, then the rate of $\AELC$ is $\rho_{\out}\cdot \rho_{\inn}$. The following theorem gives a bound on the distance of $\AELC$ in terms of the distances $\delta_{\inn}$ and $\delta_{\out}$ of $\cC_{\inn}$ and $\cC_{\out}$ respectively.
%
%\vspace{-5 pt}
%\paragraph{Distance metrics for AEL Codes.}
%Let $f,g \in \AELC$. As we have seen, one can view these as functions on $L$, or on $R$. Associated with each of these viewpoints, we can define the following two distance functions:
%\[
%	\dis_L(f,g) ~\defeq~ \Ex{\li\in L}{ \indi{ f_{\ell} \neq g_{\ri}}} 
%\qquad \text{and} \qquad
%%	\dis(f,g) &\defeq \Ex{e\in E}{\indi{ f_e \neq g_e}} = \Ex{\li \in L}{\dis(f_{N(\li)},g_{N(\li)})} = \Ex{\ri \in R}{\dis(f_{N(\ri)},g_{N(\ri)})}\\
%	\dis_R(f,g) ~\defeq~ \Ex{\ri\in R}{ \indi{ f_{\ri} \neq g_{\ri}}} .
%\]
%
%Note that for the AEL code, the distance metric is the right distance $\Delta_R(\cdot, \cdot)$. Alon, Edmonds, and Luby, proved the following result, which shows that the construction can be used to amplify the distance to $\delta_0$ by choosing $\lambda$ sufficiently small.
%
\begin{theorem}[\cite{AEL95, GI05}]\label{thm:ael_distance}
	The distance of AEL code $\AELC$ is at least $\delta_{\inn} - \frac{\lambda}{\delta_{\out}}$.
\end{theorem}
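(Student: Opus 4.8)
The plan is to lower bound $\Delta_R(f,g)$ for an arbitrary pair of distinct codewords $f,g \in \AELC$. First I would pass to the corresponding outer codewords: by construction each codeword of $\AELC$ arises from a unique $f_0 \in \cC_{\out}$ via $f_\ell = \Enc_{\inn}((f_0)_\ell)$, so distinct $f, g$ correspond to distinct $f_0, g_0 \in \cC_{\out}$. Let $S \subseteq L$ be the set of left vertices where $f_0$ and $g_0$ differ; by the distance of $\cC_{\out}$ we get $\abs{S} \geq \delta_{\out} \cdot n$. For each $\ell \in S$, injectivity of $\Enc_{\inn}$ gives $f_\ell \neq g_\ell$, and since $f_\ell, g_\ell \in \cC_{\inn}$ they disagree on at least $\delta_{\inn} \cdot d$ of the $d$ edges incident to $\ell$; equivalently, they agree on at most $(1-\delta_{\inn}) d$ of these edges.

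Next I would introduce $T = \{\ri \in R \mid f_{\ri} = g_{\ri}\}$, the set of right vertices on which $f$ and $g$ agree, so that $\Delta_R(f,g) = 1 - \abs{T}/n$ and it suffices to show $\abs{T} \leq (1 - \delta_{\inn} + \lambda/\delta_{\out}) \cdot n$. The key observation is that every edge $e = (\ell,\ri)$ with $\ri \in T$ satisfies $f_e = g_e$, since $f_{\ri} = g_{\ri}$ forces agreement on the whole neighborhood $N(\ri)$. Hence every edge of $E_G(S,T)$ is an ``agreement edge'', and by the previous paragraph each $\ell \in S$ is incident to at most $(1-\delta_{\inn})d$ such edges, which gives $\abs{E_G(S,T)} \leq (1-\delta_{\inn}) \cdot d \cdot \abs{S}$.

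Finally I would play this upper bound against the lower bound from the expander mixing lemma, $\abs{E_G(S,T)} \geq \frac{d}{n}\abs{S}\abs{T} - \lambda d \sqrt{\abs{S}\abs{T}}$. Dividing the resulting inequality by $d\abs{S}$ yields $\frac{\abs{T}}{n} \leq 1 - \delta_{\inn} + \lambda \sqrt{\abs{T}/\abs{S}}$, and since $\abs{T} \leq n$ while $\abs{S} \geq \delta_{\out} n$ we have $\sqrt{\abs{T}/\abs{S}} \leq 1/\sqrt{\delta_{\out}} \leq 1/\delta_{\out}$ (using $\delta_{\out} \leq 1$). This gives $\abs{T}/n \leq 1 - \delta_{\inn} + \lambda/\delta_{\out}$, hence $\Delta_R(f,g) \geq \delta_{\inn} - \lambda/\delta_{\out}$; as $f, g$ were arbitrary distinct codewords, this bounds $\delta(\AELC)$ as claimed.

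I do not expect a genuine obstacle here: this is the standard AEL distance-amplification computation, and the only points needing care are the bijection between $\cC_{\out}$ and $\AELC$ (which legitimizes the lower bound on $\abs{S}$), the fact that disagreement on the \emph{right} vertex $\ri$ is what is being counted while the code constraints live on the \emph{left} vertices, and the slightly lossy final estimate $1/\sqrt{\delta_{\out}} \leq 1/\delta_{\out}$ used to match the stated form — a marginally stronger bound with $\sqrt{\delta_{\out}}$ in the denominator is in fact available from the same argument.
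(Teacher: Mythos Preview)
Your argument is correct and essentially mirrors the paper's: both pick the left disagreement set $S=L'$ and invoke the expander mixing lemma against a right set, the only cosmetic difference being that the paper counts \emph{disagreement} edges into the right disagreement set $R'$ (getting $\delta_{\inn} d\,|L'| \le |E_G(L',R')| \le \tfrac{d}{n}|L'||R'| + \lambda dn$), whereas you count \emph{agreement} edges into the right agreement set $T$. The two computations are complementary and yield the same bound; your use of the sharper $\lambda d\sqrt{|S||T|}$ error term (then loosened to $\lambda/\delta_{\out}$) is fine and, as you note, even gives the slightly better $\lambda/\sqrt{\delta_{\out}}$ if one keeps it.
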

\begin{proof}
Let $h_1, h_2 \in AELC$ be any two distinct codewords, and let $L' = \inbraces{\ell \in L ~|~ h_{1\li} \neq h_{2\li}}$. By the distance of $\cC_{\out}$, we must have $\abs{L'} \geq \Delta_L(h_1,h_2) \geq \delta_{\out} \cdot n$. Also, each $\ell \in L'$ has at least $\delta_{\inn} \cdot d$ incident edges on which $h_{1e} \neq h_{2e}$, since $h_{1\ell}, h_{2\ell}$ are codewords in $\cC_{\inn}$. Let $R' \subseteq R$ be the set of the right endpoints of these edges. We also have $\abs{R'} \leq \dis_R(h_1,h_2) \cdot n$ since $h_{1r} \neq h_{2r}$  on each $r \in R'$. Expander mixing lemma then gives
\[
\delta_{\inn} \cdot d \cdot \abs{L'} ~\leq~ E_G(L',R') 
~\leq~ \frac{d}{n} \cdot \abs{L'} \cdot \abs{R'}  + \lambda \cdot dn 
~\leq~  d \cdot \abs{L'} \cdot \dis_R(h_1,h_2)  + \lambda \cdot dn \mper
\]
Rearranging and using $\abs{L'} \geq \delta_{\out} \cdot n$ then proves the claim.
\end{proof}

Typically, the AEL construction is used with a high rate code, so that $\rho_{\out}$ is close to 1, and $\delta_{\out}$ is positive but arbitrarily small. This means that the rate of $\AELC$ is close to $\rho_{\inn}$, and by choosing $\lambda$ to be small enough, the distance of $\AELC$ can be made arbitrarily close to $\delta_{\inn}$. Thus, we get an infinite explicit family of codes with the parameters inherited from a small constant sized code. See \cref{sec:main_ael_decoding} for an example of a formal instantiation to obtain LDPC codes.

For the rest of this section, the alphabet of the AEL code, $\Sigma_\inn^d$ will be denoted simply as $\Sigma$.
\subsection{List decoding AEL codes up to capacity}\label{sec:ael_decoding}
Given $g\in \Sigma^R$ and an error radius $\beta \in [0,1]$, the list decoding problem is to output the list 
\[
\calL(g,\beta) ~=~ \inbraces{ h\in \AELC \sub \Sigma^R ~\vert~ \Delta(g,h) \leq \beta } \mper
\]
%Recall that the distance of $\AELC$ can be made arbitrarily close to $\delta_{\inn}$ by choosing $\lambda$ to be small enough. We will show that $\AELC$ can be list decoded from a radius $\beta$ that is arbitrarily close to $\delta_{\inn}$, again assuming $\lambda$ is small enough. For arbitrary $\eps > 0$, let $\beta = \delta_{\inn} - 2\eps$.
We will show that for any $\eps>0$, assuming the inner constant-sized code is list decodable up to radius $\beta+\eps$ with a list size independent of its blocklength $d$, the global code $\AELC$ is list decodable up to radius $\beta$, assuming $\lambda$ is small enough in terms of $\eps$. We will also need to assume $\beta + \eps \leq \delta_{\inn}$, so that we are only decoding under the distance $\delta_{\inn}$ of the inner code. We can then use either random linear codes or folded Reed-Solomon codes as inner codes, both of which are decodable up to list decoding capacity with constant (independent of $d$) sized lists, to get that $\AELC$ can also be decoded up to the list decoding capacity. We defer the formal details of these parameters to \cref{sec:main_ael_decoding}.

As remarked before, it is possible to think of $g$ as an element of $\Sigma_{\inn}^E$, simply by unfolding the symbols in $g_{\ri}$ for each $\ri \in R$. For example, this allows us to write $g_{\li}$ to denote the restriction of $g$ to the edges incident on some $\li\in L$. We will assume this freedom for $g$ (and other elements of $\Sigma^R$, such as codewords of $\AELC$) implicitly without always pointing out the transition from $\Sigma^R$ to $\Sigma_{\inn}^E$.

For a given $g$, we will obtain the list of codewords within a given error radius, by first obtaining local lists $\calL_{\ell}$ for each $\ell \in L$, and then using regularity to consistently select codewords from the local lists. 
For the argument below, we consider a \emph{fixed} global codeword $h$ such that $\Delta_R(g,h) \leq \beta$ and show that there exists some element in the set we enumerate, which will allow us to recover $h$. Since $h$ is arbitrary, this will prove that an enumeration based algorithm will recover the entire list.

\paragraph{Local lists.} We define the local lists at vertices $\ell \in L$ as those obtained by decoding $g_{\li}$ up to radius $\beta+\eps$,
\[
\calL_{\li} \defeq \calL(g_{\li},\beta+\eps) ~=~ \inbraces{f_{\ell} \in \cC_{\inn} ~|~ \Delta(f_{\ell}, g_{\ell}) \leq \beta +\eps} \mper
\]
The expander mixing lemma implies that for each global codeword within a list of radius $\beta$, its local projection must be in most local lists of radius $\beta$.
\begin{claim}\label{clm:local-membership}
Let $g \in \Sigma^R$ and $h \in \AELC$ with $\dis_R(g,h) \leq \beta$ and let $\lambda \leq \gamma \cdot \eps$. Then, $\Pr{\ell \in L}{h_{\li} \not\in \listl} \leq \gamma$. 
\end{claim}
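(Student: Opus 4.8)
The plan is to run the standard expander mixing lemma argument, but now applied to the ``bad'' set of left vertices. Define $L' \defeq \inbraces{\ell \in L ~|~ h_{\li} \notin \listl}$. Since $h \in \AELC$ already forces $h_{\li} \in \cC_{\inn}$, membership in $\listl = \calL(g_{\li}, \beta + \eps)$ can fail only because of the distance condition, i.e.\ $\ell \in L'$ exactly when $\Delta(g_{\li}, h_{\li}) > \beta + \eps$, meaning that $g$ and $h$ disagree on more than $(\beta+\eps)d$ of the $d$ edges incident to $\ell$. The goal is to show $\abs{L'} \leq \gamma n$, which is precisely the asserted bound on $\Pr{\ell \in L}{h_{\li} \notin \listl}$.

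Next I would introduce $B \defeq \inbraces{r \in R ~|~ g_r = h_r}$, the set of right vertices on which the folded symbols of $g$ and $h$ coincide. Since $\Delta_R(g,h) \leq \beta$, we have $\abs{B} \geq (1-\beta)n$. The key observation is that every edge from $L'$ to $B$ is an \emph{agreement} edge for $g$ and $h$: if $e = (\ell,r) \in E_G(L', B)$, then $g_r = h_r$ (equality in $\Sigma_{\inn}^{N(r)}$) forces $g_e = h_e$ via the fixed bijection between $E$ and $R \times [d]$. On the other hand, for each $\ell \in L'$, the number of edges incident to $\ell$ on which $g$ and $h$ \emph{agree} is strictly less than $(1 - \beta - \eps)d$, and since $E_G(\inbraces{\ell}, B)$ consists entirely of such agreement edges, $\abs{E_G(\inbraces{\ell}, B)} < (1-\beta-\eps)d$. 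Summing over $\ell \in L'$ gives $\abs{E_G(L', B)} < (1 - \beta - \eps)\cdot d \cdot \abs{L'}$.

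Finally I would lower-bound the same quantity using the expander mixing lemma together with $\abs{B} \geq (1-\beta)n$:
\[
\abs{E_G(L', B)} ~\geq~ \frac{d}{n}\cdot\abs{L'}\cdot\abs{B} - \lambda d n ~\geq~ (1-\beta)\cdot d\cdot\abs{L'} - \lambda d n \mper
\]
Comparing with the upper bound, the dominant terms $(1-\beta)d\abs{L'}$ cancel, leaving $\eps \cdot \abs{L'} < \lambda n$, hence $\abs{L'} < (\lambda/\eps)\cdot n \leq \gamma n$ by the hypothesis $\lambda \leq \gamma \eps$. This yields $\Pr{\ell \in L}{h_{\li} \notin \listl} = \abs{L'}/n \leq \gamma$.

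There is essentially no serious obstacle here; the two points requiring a little care are (i) correctly translating ``$g$ and $h$ agree at the right vertex $r$'' into ``$g_e = h_e$ for every edge $e$ incident to $r$'' via the fixed edge ordering, so that $E_G(L',B)$ really is an agreement set, and (ii) tracking the inequality directions so that the large $(1-\beta)d\abs{L'}$ terms cancel cleanly, leaving a bound governed only by $\lambda/\eps$.
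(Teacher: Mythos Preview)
Your proof is correct and takes essentially the same approach as the paper: both define the same bad set $L'$ and apply the expander mixing lemma to bound its size. The only cosmetic difference is that the paper counts \emph{disagreement} edges from $L'$ into the right disagreement set $R' = R \setminus B$, whereas you count \emph{agreement} edges from $L'$ into $B$; these are complementary formulations of the same inequality and lead to the identical conclusion $|L'| \leq (\lambda/\eps)n \leq \gamma n$.
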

\begin{proof}
Note that $h_{\li} \not\in \listl$ only if $\Delta(g_{\li},h_{\li}) > \beta+\eps$. Let $L' = \inbraces{\ell \in L ~|~ \Delta(g_{\ell}, h_{\ell})  > \beta + \eps}$. Then each $\ell \in L'$ has at least $(\beta + \eps) \cdot d$ incident edges on which $g_e \neq h_e$. Let $R' \subseteq R$ be the set of the right endpoints of these edges. Then, we have 
\[
(\beta + \eps) \cdot d \cdot \abs{L'} ~\leq~ E_G(L',R') 
~\leq~ \frac{d}{n} \cdot \abs{L'} \cdot \abs{R'}  + \lambda \cdot dn \mper
\]
Since $g$ and $h$ must differ on $R'$, we have $\abs{R'} \leq \dis_R(g,h) \cdot n \leq \beta \cdot n$, which proves the claim.
\end{proof} 
While we now know that $h_{\li}$ is present in most local lists $\listl$, we do not know \emph{which} element of $\listl$ corresponds to $h_{\li}$. We will next show how to stitch together an assignment almost consistent with $h$ from these local lists $\{\listl\}_{\li\in L}$.
\subsubsection{Picking from local lists via regularity}
Given an upper bound $K$ on the size of the local lists at radius $\beta+\eps$ (for the inner code $\cC_{\inn}$),  we will think of the local lists as sets of size \emph{exactly} $K$, by including arbitrary codewords (distinct from list elements) if necessary. We will also impose an arbitrary ordering for each $\calL_{\ell}$ and use $\calL_{\ell}[i]$ to refer to the $i$-th element. 

\paragraph{Edge-densities in agreement graphs.}  Recall that we have fixed a global codeword $h$ such that $\Delta(g,h) \leq \beta$. 
We consider the (unknown) sets $A_1, \ldots, A_K$ defined by the the position of $h_{\ell}$ in the local lists
\[
A_i ~=~ \inbraces{\ell \in L ~|~ \calL_{\ell}[i] = h_{\ell}} \mper
\]
Note that the sets $\{A_i\}_{i \in [K]}$ are disjoint, and \cref{clm:local-membership} implies that $\sum_{i}\abs{A_i} \geq (1-\gamma) \cdot \abs{L}$.
Also note that if we can recover the sets $\{A_i\}_{i \in [K]}$ up to a small fraction of errors, then we correctly know $h_{\ell}$ for most $\ell \in L$, and can find $h$ by unique decoding the outer code $\cC_0$. 
We will recover the sets $A_i$ by understanding their interaction with the (also unknown) set $B ~=~ \inbraces{r \in R ~|~ g_r = h_r}$. Note that $\abs{B} \geq (1-\beta) \cdot n$.

Consider the graph $H_i = (L,R, E(H_i))$  defined by the agreement of the given $g \in \Sigma^R$ and the $i$-th elements in the local lists
\[
E(H_i) ~=~ \{e = (\ell, r) \in E(G) ~|~ (\calL_{\ell}[i])_e = g_e \} \mper
\]
Note that the graphs $H_i$ can be easily computed using the given $g$.
Also, we must have $E_{H_i}(A_i, B) = E_G(A_i,B)$ since for an edge $e = (\ell, r) \in E_G(A_i,B)$, we have
$(\calL_{\ell}[i])_e = h_e = g_e$.
That is, while some edges of $G$ are deleted to obtain $H_i$, none of the edges between $A_i$ and $B$ were deleted. Thus the induced subgraph of $H_i$ on $A_i \cup B$ looks like a very dense subgraph relative to $G$.

We will recover the sets $A_1, \ldots, A_K$ by using the regularity lemma to characterize dense subgraphs (relative to $G$) of the agreement graphs $H_i$. 
\begin{figure}[h]
\centering
\includegraphics[width=0.6\textwidth]{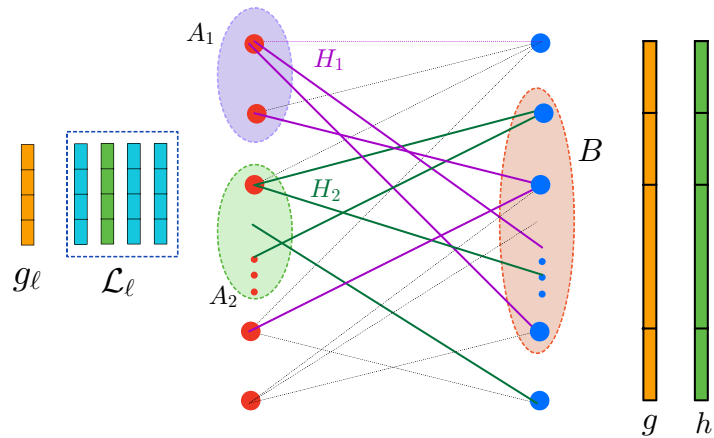}
\vspace{-10 pt}
\caption{Agreement graphs for list decoding AEL codes}
\vspace{-10 pt}
\label{fig:ael-graphs-non-intro}
\end{figure}
\vspace{5pt}
The following key technical lemma shows that a sufficiently regular family $\factor$, which is \emph{simultaneously} $(\eta,\gamma)$-regular with respect to $H_1, \ldots, H_K$, essentially identifies the sets $A_1, \ldots, A_K$.
We can obtain such a family by applying the regularity lemma for each of the graphs $H_i$, and then taking a union.
\begin{lemma}[Rigidity]\label{lem:ael-rigidity}
Let the decoding radius $\beta$ satisfy $\beta \leq  \delta_{\inn} - \eps$ for $\delta_{\inn}=\delta(\cC_{\inn})$, and let the family $\family$ be simultaneously $(\eta,\gamma)$-regular for the graphs $H_1, \ldots, H_K$. Let $S_1, \ldots, S_K \subseteq L$ be any collection of disjoint sets satisfying $\max_{i \in [K]} ~\norm{\indicator{A_i} - \indicator{S_i}}_{\family} ~\leq~ \eta$. Further, assume\footnote{Note that $\lambda$ being at most $\gamma$, and in fact much smaller, is anyway needed to obtain $(\eta,\gamma)$-regular families.} $\lambda \leq \gamma$.
Then, $\sum_{i} \abs{S_i \setminus A_i} \leq (2K\gamma/\eps) \cdot n$.
\end{lemma}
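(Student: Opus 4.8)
The plan is to sandwich the quantity $\sum_{i \in [K]} \abs{E_{H_i}(S_i, B)}$ between two estimates and compare them, where $B = \inbraces{r \in R \mid g_r = h_r}$ is the right-agreement set of the fixed codeword $h$, so that $\abs{B} \ge (1-\beta) n$. Throughout write $s \defeq \sum_{i \in [K]} \abs{S_i \setminus A_i}$ for the quantity to be bounded.

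\emph{Lower bound.} Here I would use the identity $E_{H_i}(A_i, B) = E_G(A_i, B)$ (no edge of $G$ from $A_i$ to $B$ is deleted in forming $H_i$), together with the hypothesis that $\factor$ is $(\eta, \gamma)$-regular for each $H_i$ and that $\norm{\indicator{A_i} - \indicator{S_i}}_{\factor} \le \eta$, to conclude $\abs{E_{H_i}(S_i, B)} \ge \abs{E_{H_i}(A_i, B)} - \gamma n d = \abs{E_G(A_i, B)} - \gamma n d$ for every $i$. Summing over $i$, using that the $A_i$ are disjoint and cover all but a $\gamma$-fraction of $L$ (by \cref{clm:local-membership}), and that every left vertex has degree exactly $d$, I obtain
\[
\sum_{i \in [K]} \abs{E_{H_i}(S_i, B)} ~\ge~ \abs{E_G(\cup_i A_i, B)} - K\gamma nd ~\ge~ d\abs{B} - (K+1)\gamma n d \mper
\]

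\emph{Upper bound and combination.} For the upper bound I would split each $S_i$ as $(S_i \cap A_i) \cup (S_i \setminus A_i)$ (a disjoint union). For a vertex $\ell \in S_i \setminus A_i$ the local list entry $\calL_\ell[i]$ is a codeword of $\cC_{\inn}$ distinct from $h_\ell$ (since $h_\ell$ sits at some position $j \ne i$, or outside $\calL_\ell$ entirely), so they disagree on at least $\delta_{\inn} d$ of the $d$ edges at $\ell$; on such an edge $e = (\ell, r)$ present in $H_i$ we have $(\calL_\ell[i])_e = g_e \ne h_e$, forcing $r \notin B$. Hence each such $\ell$ sends at most $(1-\delta_{\inn}) d$ edges into $B$ within $H_i$, giving $\sum_i \abs{E_{H_i}(S_i \setminus A_i, B)} \le (1 - \delta_{\inn}) d \cdot s$. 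For the remaining part, $\abs{E_{H_i}(S_i \cap A_i, B)} \le \abs{E_G(S_i \cap A_i, B)}$ since $H_i \subseteq G$; summing over $i$ and writing $a \defeq \sum_i \abs{S_i \cap A_i} = \abs{\cup_i (S_i \cap A_i)}$, the expander mixing lemma applied to the complement $L \setminus \cup_i(S_i \cap A_i)$ gives $\sum_i \abs{E_G(S_i \cap A_i, B)} \le d\abs{B} - \frac{d}{n}(n - a)\abs{B} + \lambda n d$. Equating with the lower bound and cancelling $d\abs{B}$ yields $\frac{d}{n}(n-a)\abs{B} \le (1-\delta_{\inn}) d s + \lambda n d + (K+1)\gamma n d$. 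Since the $S_i$ are disjoint subsets of $L$ we have $\sum_i \abs{S_i} \le n$, hence $a = \sum_i\abs{S_i} - s \le n - s$ and so $n - a \ge s$; combined with $\abs{B} \ge (1-\beta) n$ this gives $(1-\beta) s \le (1-\delta_{\inn}) s + \lambda n + (K+1)\gamma n$, i.e. $(\delta_{\inn} - \beta) s \le (K+1)\gamma n + \lambda n$. Using $\delta_{\inn} - \beta \ge \eps$ and $\lambda \le \gamma$ this is $\eps \cdot s \le (K+2)\gamma n$, which gives $s \le 2K\gamma n / \eps$ (the precise constant falls out after tracking the $\gamma$ and $\lambda$ error terms with a little more care).

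\emph{Main obstacle.} The delicate point is the upper bound on the $S_i \cap A_i$ part: the naive estimate $\abs{E_G(S_i \cap A_i, B)} \le d\abs{S_i \cap A_i}$ is too lossy — it would leave a spurious $\beta n$ term in the final inequality — so one must retain the $\abs{B}$ factor by invoking the expander mixing lemma, and then recover the comparison against $s$ using nothing more than the cheap bound $\sum_i \abs{S_i} \le n$ coming from disjointness (not the $\factor$-distance bound, which is already spent on the regularity step). Everything else — the regularity comparison, the codeword-distance count for vertices in $S_i \setminus A_i$, and the final arithmetic — is then routine, modulo careful bookkeeping of the $O(\gamma)$ and $O(\lambda)$ error terms to land on the stated constant.
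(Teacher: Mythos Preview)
Your proposal is correct and follows essentially the same route as the paper: the identical lower bound via $(\eta,\gamma)$-regularity and $E_{H_i}(A_i,B)=E_G(A_i,B)$, the same upper bound splitting $S_i$ into $S_i\cap A_i$ and $S_i\setminus A_i$ with the $(1-\delta_{\inn})d$ cap on the latter, and the same use of the expander mixing lemma together with $\sum_i|S_i|\le n$ to finish. Your detour through the complement $L\setminus\cup_i(S_i\cap A_i)$ is algebraically identical to the paper's direct application of the mixing lemma to $\cup_i(S_i\cap A_i)$, and the final arithmetic lands on the same $(K+2)\gamma/\eps\le 2K\gamma/\eps$ bound.
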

\begin{proof}
By the $\eta$-closeness of the sets with respect to $\family$, we have
\[
\forall i \in [K] \qquad \abs{E_{H_i}(S_i,B)} ~\geq~ \abs{E_{H_i}(A_i,B)} - \gamma \cdot dn ~=~ \abs{E_{G}(A_i,B)} - \gamma \cdot dn \mper
\]
Summing the above and using the bound $\sum_i\abs{A_i} \geq (1-\gamma) \cdot n$ from \cref{clm:local-membership}, we get
\[
\sum_{i \in [K]} \abs{E_{H_i}(S_i,B)} 
~~\geq~~ \abs{E_{G}(\cup_i A_i,B)} - K \cdot \gamma \cdot dn 
~~\geq~~ d \cdot \abs{B} - (K+1) \cdot\gamma \cdot dn \mper
\]
Now, consider the sets $W_i = S_i \setminus A_i$. 
Note that for each $\ell \in W_i$, we have $(\calL_{\ell}[i])_e \neq h_e$ for at least $\delta_{\inn}$ fraction of the edges incident on $\li$ since $\calL_{\ell}[i]$ and $h_{\ell}$ must be distinct codewords in $\cC_{\inn}$. 
Also, such an edge cannot be in $E_{H_i}(S_i,B)$ since we have $(\calL_{\ell}[i])_e \neq h_e = g_e$. Using this, and the fact that $S_i \setminus W_i \subseteq A_i$, we get
\begin{align*}
\sum_{i \in [K]} \abs{E_{H_i}(S_i,B)} 
&~\leq~ \sum_{i \in [K]} \abs{E_{H_i}(S_i \setminus W_i,B)} + \sum_{i \in [K]} \abs{W_i} \cdot (1 - \delta) \cdot d \\
&~=~ \sum_{i \in [K]} \abs{E_{G}(S_i \setminus W_i,B)} + \sum_{i \in [K]} \abs{W_i} \cdot (1 - \delta_{\inn}) \cdot d \mper
\end{align*}
We can now use expander mixing lemma to bound the number of edges in $G$.
\[
\sum_{i \in [K]} \abs{E_{G}(S_i \setminus W_i,B)} 
~\leq~ \inparen{\sum_i \abs{S_i \setminus W_i}} \cdot \abs{B} \cdot \frac{d}{n} + \lambda \cdot dn 
~\leq~ \inparen{n - \sum_i\abs{W_i}} \cdot \abs{B} \cdot \frac{d}{n} + \lambda \cdot dn \mper
\]
Combining the above inequalities, and using $w$ to denote $\sum_i \abs{W_i}/n$, we get
\begin{align*}
d\abs{B} - (K+1)\cdot\gamma \cdot dn ~\leq~ \sum_{i \in [K]} \abs{E_{H_i}(S_i,B)} 
~&\leq~
(1- w) \cdot d \abs{B}  + \lambda \cdot dn + w \cdot (1 - \delta_{\inn}) \cdot dn \\
\implies~
w \cdot (\abs{B} - (1 - \delta_{\inn})\cdot n) ~&\leq~ (\lambda + (K+1)\cdot\gamma) \cdot n \\
\implies~ w \cdot (\delta_{\inn} - \beta) \cdot n ~&\leq~ (\lambda + (K+1)\cdot\gamma) \cdot n \mper && \insquare{~|B| \geq (1-\beta)n~}
\end{align*}
Using $\lambda \leq \gamma$ and $\beta \leq \delta_{\inn} - \eps$ then proves the claim.
\end{proof}

The Rigidity lemma above says that to find a collection $S_1,\ldots,S_K$ that is close to $A_1,\ldots,A_K$, it suffices for their signatures to be close, with respect to a family $\family$ that is $(\eta,\gamma)$-regular for small enough $\gamma$. Since such a family is of constant size (independent of $n$), we can afford to guess all of these signatures. 
We next describe how to use the covering nets of \cref{clm:enumeration} to do this formally.

\subsubsection{The list decoding algorithm}
Let $\AELC$ be a code obtained via the AEL construction, using the outer code $\cC_{\out}$ and inner
code $\cC_{\inn}$, and using the graph $G=(L,R,E)$ which is an $(n,d,\lambda)$-expander for a sufficiently
small $\lambda$. 
%
% Let the outer code $\cC_{\out}$ be unique decodable at a radius $\delta_{\dec}$ and let the list
% size for the inner code $\cC_{\inn}$ at radius $\delta_{\cC_{\inn}} - \eps$ be bounded by $K$.
% %
Then the following algorithm finds the list of radius $\delta_{\inn} - \eps$ around any $g \in
\Sigma^R$, using only $K$ calls to the algorithmic regularity lemma
(\cref{cor:expander_regularity}) and a constant number of calls to the unique-decoder for $\cC_{\out}$, 
where the constant only depends on the parameters $\delta_{\dec}$, $\eps$, and $K$.

\medskip
\begin{new-algorithm}{List Decoding $\AELC$}\label{algo:ael-decoding} 
\begin{tabular}{r l}
\textsf{Input:} & $g \in\Sigma^R$ \\[3 pt]
\textsf{Output:} & List $\calL \subseteq \AELC$\\[3 pt] 
\textsf{Parameters:} & List decoding radius  $\beta \leq \delta(\AELC) - 2\eps$, list size bound $K$
                       for  $\cC_{\inn}$ at radius $\beta + \eps$, \\
& unique decoding radius $\delta_{\dec}$ for $\cC_{\out}$
\end{tabular}
\begin{itemize}
\medskip
\item For each $\ell \in L$, compute the local list $\calL_{\ell} = \inbraces{f_{\ell} \in \cC_{\inn} ~|~ \Delta(f_{\ell}, g_{\ell}) \leq \beta + \eps}$. Pad with additional elements of $\cC_{\inn}$ if needed, to view each $\calL_{\ell}$ is an ordered list of size $K$.
\item For each $i \in [K]$ define a graph $H_i$ with vertex sets $(L,R)$ and edges
\[
E(H_i) ~=~ \{e = (\ell, r) \in E(G) ~|~ (\calL_{\ell}[i])_e = g_e \} \mper
\]
For $\gamma = (\eps \cdot \delta_{\dec})/(4K)$ and sufficiently small $\eta$, compute a family $\family_i$ which is $(\eta,\gamma)$-regular for $H_i$. Let $\family$ be the union of $\family_1, \ldots, \family_K$.
\item For each $\sigma = (\sigma_1, \ldots, \sigma_K) \in \calN_{\eta/4}(\family, K)$
\begin{itemize}
\item Consider disjoint sets $S_1, \ldots, S_K \subseteq L$ given by \cref{clm:enumeration} with $\max_{i \in [K]} \norm{\sigma_i - \sigma(\indicator{S_i})}_{\infty} \leq \eta/2$.
\item Define $h' \in (\cC_{\inn})^L$ as
\[
h'_{\ell} ~=~
\begin{cases}
\calL_{\ell}[i] &~\text{if}~ \ell \in S_i \\
\text{arbitrary} &~\text{if}~ \ell \notin \cup_i S_i
\end{cases}
\]
\item Unique decode $h'$ to find $h_0 \in \cC_{\out}$ and corresponding $h \in \AELC$. 
\item If $\Delta(g,h) \leq \beta$, then $\calL \leftarrow \calL \cup \{h\}$.
\end{itemize}
\item Return $\calL$.
\end{itemize}
\end{new-algorithm}

\medskip
\begin{theorem}\label{thm:ael_decoding_technical}
%
% Let $\AELC$ be obtained via the AEL construction, using the graph $G=(L,R,E)$ which is a
% $\lambda$-expander for $\lambda \leq \cdots$, where $\delta_{\dec}$ is the unique-decoding radius
% for the outer code $\cC_{\out}$ and $K$ is an upper bound on the list-size for the inner code
% $\cC_{\inn}$ at radius $\delta(\cC_{\inn})-\eps$. 
%
Let $\tlocal{\delta}$ be the time needed to list decode the inner code $\calC_{\inn}$ up to radius $\delta$, $\treg{\gamma}$ be the time required to obtain an $(\eta,\gamma)$-regular family for any subgraph of
$G$ with $\eta = \Omega(\gamma^2)$ and let $T_{\dec}$ be the running time of the unique-decoder for the outer code that can decode $\delta_{\dec}$ fraction of errors.

Let $\eps > 0$ be arbitrary. Given any $g \in \Sigma^R$ and $\beta \leq \delta_{\inn} -\eps$, let $K$ be an upper bound on the list size of $\calC_{\inn}$ for decoding up to radius $\beta+\eps$ and let $\gamma = (\eps \cdot \delta_{\dec})/4K$. Assuming $\lambda \leq \gamma$, the \cref{algo:ael-decoding}
runs in time $O(n\cdot \tlocal{\beta+\eps} + K \cdot \treg{\gamma}) + T_{\dec} \cdot (1/\gamma)^{O(K^2/\gamma^2)}$ 
and returns the list $\calL = \inbraces{h \in \AELC ~|~ \dis_R(g,h) \leq
\beta}$. Further, the produced list is of size at most $(1/\gamma)^{{O(K^2/\gamma^2)}}$.

%for some $\beta \leq \delta_{\inn} - \eps$. and let $\gamma = (\eps \cdot \delta_{\dec})/2K$. Then, for any given $g \in \Sigma^R$ and $\beta \leq \dis(\AELC) - 2\eps$, \cref{algo:ael-decoding}
%runs in time $O(K \cdot (\treg{\gamma} + n)) + T_{\dec} \cdot (1/\gamma)^{\exp{O(K/\gamma^2)}}$ 
%and returns the list $\calL = \inbraces{h \in \AELC ~|~ \dis_R(g,h) \leq
%\beta}$.
\end{theorem}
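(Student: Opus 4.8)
The goal is to prove three assertions about \cref{algo:ael-decoding}: that it outputs exactly $\calL=\{h\in\AELC:\dis_R(g,h)\le\beta\}$, that this list has size at most $(1/\gamma)^{\exp(O(K/\gamma^2))}$, and that the running time is as stated. Soundness ($\calL\subseteq\{h:\dis_R(g,h)\le\beta\}$) is immediate, since a codeword $h$ enters $\calL$ only after the explicit test $\Delta(g,h)\le\beta$. All the content is in the completeness direction: I would fix an arbitrary $h\in\AELC$ with $\dis_R(g,h)\le\beta$ and show the algorithm recovers it; since $h$ is arbitrary, the same argument bounds $\abs{\calL}$ by the number of tuples the main loop enumerates.

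For completeness, I would set $A_i=\{\ell\in L:\calL_\ell[i]=h_\ell\}$ and $B=\{r\in R:g_r=h_r\}$, exactly as in the discussion preceding \cref{lem:ael-rigidity}. The algorithm runs only when $\lambda$ is small enough that \cref{lem:regularity-factor} yields $(\eta,\gamma)$-regular factors $\factor_1,\dots,\factor_K$ for the agreement graphs $H_1,\dots,H_K$ with $\eta=\Theta(\gamma^2)$; this constraint on $\lambda$ (namely $\gamma\ge 50\sqrt\lambda$) is in particular at most $\gamma\eps$, so \cref{clm:local-membership} applies and gives $\sum_i\abs{A_i}\ge(1-\gamma)n$, and it is at most $\gamma$, as \cref{lem:ael-rigidity} requires. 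By \cref{clm:refinement}, the common refinement $\factor$ of $\factor_1,\dots,\factor_K$ is simultaneously $(\eta,\gamma)$-regular for every $H_i$. By the first part of \cref{clm:enumeration}, applied with discretization $\eta/2$, there is a tuple $(\bar{f}_1,\dots,\bar{f}_K)\in\calN_{\eta/2}(\factor,K)$ with $\norm{\indicator{A_i}-\bar{f}_i}_\factor\le\eta/2$ for all $i$, and the main loop enumerates this tuple. For it, the disjoint sets $S_1,\dots,S_K$ produced by the second part of \cref{clm:enumeration} satisfy $\norm{\indicator{S_i}-\bar{f}_i}_\factor\le\abs{\factor}/n\le\eta/2$ once $n$ exceeds a constant depending on $\eps$ and $K$, hence $\norm{\indicator{S_i}-\indicator{A_i}}_\factor\le\eta$. \cref{lem:ael-rigidity}, with decoding radius $\beta\le\delta_\inn-\eps$, then gives $\sum_i\abs{S_i\setminus A_i}\le(2K\gamma/\eps)\,n$.

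It remains to bound how many left vertices the word $h'\in(\cC_\inn)^L$ built from this tuple disagrees with $h$ on. Since $h'_\ell=\calL_\ell[i]=h_\ell$ whenever $\ell\in S_i\cap A_i$, any disagreement lies in $\bigcup_i(S_i\setminus A_i)$ or in $L\setminus\bigcup_i S_i$. Testing $\norm{\indicator{S_i}-\indicator{A_i}}_\factor\le\eta$ against the $\factor$-measurable all-ones function shows $\bigl\lvert\abs{S_i}-\abs{A_i}\bigr\rvert\le\eta n$, so $\sum_i\abs{S_i}\ge(1-\gamma-K\eta)n$ and $\abs{L\setminus\bigcup_i S_i}\le(\gamma+K\eta)n$. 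Hence $h'$ and $h$ disagree on at most $(2K\gamma/\eps+\gamma+K\eta)\,n$ vertices; with $\gamma=(\eps\delta_\dec)/(4K)$ and $\eta=\Theta(\gamma^2)$ the three terms are $\delta_\dec/2$, at most $\delta_\dec/4$, and at most $\delta_\dec/4$ (the last because $\eps^2\delta_\dec\le 4K$), so the disagreement is at most $\delta_\dec n$ (a marginally smaller $\gamma$ makes this strict if the outer decoder needs it). As $\Enc_\inn$ is injective, $(\Enc_\inn^{-1}(h'_\ell))_{\ell\in L}$ agrees with the outer codeword $h_0$ underlying $h$ on at least a $(1-\delta_\dec)$ fraction of positions, so the outer unique decoder returns $h_0$; the algorithm then forms $h\in\AELC$, the test $\Delta(g,h)=\dis_R(g,h)\le\beta$ succeeds, and $h$ is added to $\calL$. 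This proves $\{h:\dis_R(g,h)\le\beta\}\subseteq\calL$, hence equality.

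For the size bound, $\calL$ lies in the image of the enumeration, so $\abs{\calL}\le\abs{\calN_{\eta/2}(\factor,K)}\le(2/\eta+1)^{K\abs{\factor}}$; each $\factor_i$ is generated by $O(1/\gamma^2)$ sets by \cref{cor:expander_regularity}, so $\abs{\factor}\le 2^{O(K/\gamma^2)}$, and with $\eta=\Theta(\gamma^2)$ this is $(1/\gamma)^{\exp(O(K/\gamma^2))}$ after absorbing lower-order factors into the exponent. For the running time: the $n$ local lists cost $O(n\cdot\tlocal{\beta+\eps})$; forming the $K$ graphs $H_i$ and their $(\eta,\gamma)$-regular factors costs $O(K\abs{E(G)})+K\cdot\treg{\gamma}$, and the common refinement $\factor$ costs $O(n\cdot\abs{\factor})$ by \cref{remark:factor_sigma_algebra_comp}; each of the $\le(1/\gamma)^{\exp(O(K/\gamma^2))}$ loop iterations builds $S_1,\dots,S_K$ and $h'$ in $O(\abs{E(G)})$ time by \cref{clm:enumeration}, makes one call to the outer decoder, and tests $\Delta(g,h)\le\beta$ in $O(\abs{E(G)})$ time. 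Since $T_{\dec}=\Omega(\abs{E(G)})$ (the outer decoder must at least read $n$ symbols of $\Theta(d)$ bits), everything collects to $O(n\cdot\tlocal{\beta+\eps}+K\cdot\treg{\gamma})+T_{\dec}\cdot(1/\gamma)^{\exp(O(K/\gamma^2))}$. I expect the main obstacle to be the bookkeeping in the third paragraph: controlling the two sources of error in $h'$ --- the ``wrong-index'' set $\bigcup_i(S_i\setminus A_i)$ via \cref{lem:ael-rigidity}, and the ``uncovered'' set $L\setminus\bigcup_i S_i$ via the size comparison $\sum_i\abs{S_i}\approx\sum_i\abs{A_i}$ --- and verifying that for the prescribed $\gamma$ their total stays within the unique-decoding radius of $\cC_\out$.
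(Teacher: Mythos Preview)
Your proof is correct and follows essentially the same approach as the paper's: fix a list codeword $h$, use \cref{clm:enumeration} to locate a tuple in the net close to $(A_i)_i$, invoke \cref{lem:ael-rigidity} to bound $\sum_i|S_i\setminus A_i|$, control $\dis_L(h,h')$ by separately bounding the wrong-index and uncovered vertices, and unique-decode in $\cC_{\out}$. You are in fact slightly more careful than the paper in a couple of places (verifying $\lambda\le\gamma\eps$ via the regularity constraint $\gamma\ge 50\sqrt\lambda$ together with $\gamma\le\eps$, using $\eta$ rather than $\gamma$ in the size comparison $|\,|S_i|-|A_i|\,|\le\eta n$, and accounting for the $O(|E(G)|)$ per-iteration work absorbed into $T_{\dec}$).
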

\begin{proof}
We note that the computation of the local lists, and the graphs $H_i$ can be done in time $O(n)$. 
% %
By \cref{lem:regularity-family}, we can take $\eta = \Omega(\gamma^2)$ and obtain an $(\eta,\gamma)$-regular family $\family_i$ for each graph $H_i$ in time $\treg{\gamma}$, with $\abs{\family_i} = O(1/\gamma^2)$. Thus, for the common union $\family$, we have $\abs{\family}= O(K/\gamma^2)$
By \cref{clm:enumeration}, we have $\abs{\calN_{\eta/4}(\family,K)} = (1/\eta)^{O(K \cdot \abs{\family})} = (1/\gamma)^{O(K^2/\gamma^2)}$, and the enumeration can be done in time $T_{\dec} \cdot (1/\gamma)^{O(K^2/\gamma^2)}$.

Finally, it remains to prove that the algorithm correctly recovers the list. 
Note that for any $h$ satisfying $\dis(g,h) \leq \beta$, and corresponding sets $\inbraces{A_i}_{i \in [K]}$ indicating its position in local lists, the $h'$ found by our algorithm agrees with $h$ on all vertices $\ell \in \cup_i (S_i \cap A_i)$, and thus
\[
\dis_L(h,h') ~\leq~ 1 - \frac{1}{n} \cdot \sum_i \abs{S_i \cap A_i} 
~=~ 1 - \sum_i \angles{\indicator{S_i},\indicator{A_i}} 
~=~ 1 - \sum_i \ExpOp[\indicator{S_i} - \indicator{S_i \setminus A_i}] \mper
\] 
By \cref{clm:enumeration} there \emph{exists} $\sigma \in
\calN_{\eta/4}(\family,K)$ such that $\max_{i \in [K]}\norm{\sigma_i - \sigma(\indicator{A_i})}_{\infty} \leq \eta/4$. 
Moreover, the  disjoint sets $S_1, \ldots, S_K$ produced using $\sigma$ will satisfy  
$\max_{i \in [K]} \norm{\sigma_i-\sigma(\indicator{S_i})}_{\family} \leq \eta/4 + \abs{\factor(\family)}/n \leq \eta/2$.
Thus, we obtain by triangle inequality
\[
\max_{i \in [K]}\norm{\indicator{A_i} - \indicator{S_i}}_{\family} 
~=~ \max_{i \in [K]}\norm{\sigma(\indicator{A_i}) - \sigma(\indicator{S_i})}_{\infty} 
~\leq~ \eta \mper
\]
Also, by \cref{lem:ael-rigidity}, for any such collection $S_1, \ldots, S_K$, we have $\sum_i \ExpOp[\indicator{S_i \setminus A_i}] \leq (2K\gamma/\eps)$. Moreover, by \cref{clm:local-membership}, we have $\sum_i \abs{A_i} \geq (1-\gamma) \cdot n$ and $\norm{\indicator{A_i} - \indicator{S_i}}_{\family} \leq \eta$ implies
\[
\dis_L(h,h') 
~\leq~ 1 - \sum_i (\ExpOp[\indicator{A_i}] - \gamma) + \sum_{i}\ExpOp[\indicator{S_i \setminus A_i}] 
~\leq~ \gamma + K \gamma + 2K\gamma/\eps ~\leq~ 4K\gamma/\eps\mcom
\]
which is at most $\delta_{\dec}$ by our choice of $\gamma$. 
Thus, for every $h$ with $\dis_R(g,h) \leq \beta$, one of the choices in our enumeration finds an
$h'$ satisfying $\dis_L(h,h') \leq \delta_{\dec}$. Unique decoding this $h'$ then recovers $h$, and
thus the enumeration algorithm recovers all $h$ in the list.
\end{proof}

\subsubsection{Instantiating AEL codes for list decoding}\label{sec:main_ael_decoding}
We instantiate \cref{thm:ael_decoding_technical} with a brute force inner code decoder and the algorithm of \cref{lem:regularity-family} to find regular families. This gives a general reduction from list decoding AEL codes up to the distance of the inner code, to unique decoding the outer code $\calC_{\out}$, assuming the inner code can be list decoded up their distance.
\begin{theorem}[List Decoding of AEL Codes up to Capacity]\label{thm:ael_decoding_instantiation}
	Let $\AELC$ be the code obtained via the AEL construction applied to an outer code $\calC_{\out}$ over alphabet $\Sigma_{\out}$ and an inner code $\calC_{\inn}$ over alphabet $\Sigma_{\inn}$ using an $(n,d,\lambda)$-expander $G=(L,R,E)$. Assume that the outer code can be unique decoded from a fraction $\delta_{\dec}$ of errors in time $T_{\dec}$.
	
	Fix an arbitrary $\eps > 0$. Suppose the list size for $\calC_{\inn}$ when decoding from $\delta_{\inn}-\eps$ errors is $K=K(\eps)$, a constant independent of $d$, and $\lambda \leq \frac{\eps^2\cdot \delta_{\dec}^2}{4\cdot 10^4\cdot K^2}$. Then the code $\AELC$ can be decoded up to radius $\delta_{\inn}-2\eps$ in randomized time $\Ot{n\cdot |\Sigma_{\inn}|^d + nK}+T_{\dec}\cdot \exp(O(K^5/(\eps^3\cdot \delta_{\dec}^3)))$. Further, the produced list is of size at most $\exp(O(K^5/\eps^3\cdot \delta_{\dec}^3))$.
\end{theorem}
\begin{proof}
	Given any $g\in (\Sigma_{\inn}^d)^R$, we can use \cref{thm:ael_decoding_technical} with radius $\beta = \delta_{\inn} - 2\eps$ to recover the list $\calL(g,\beta) = \{h\in \AELC ~\vert~ \Delta_R(g,h) \leq \beta\}$, as long as the following conditions are met:
	\begin{itemize}
		\item The list size for $\calC_{\inn}$ is constant at $\beta+\eps = \delta_{\inn}-\eps$, which is true by our choice of $\calC_{\inn}$.
		\item The outer code can be unique decoded up to a radius $\delta_{\dec}$, which is true by assumption.
		\item $\lambda \leq \gamma$ for $\gamma = \frac{\eps \cdot \delta_{\dec}}{4K}$, which is true by our choice of $\lambda$. 
	\end{itemize}
	To analyze the runtime, we note that the time needed to list decode $\calC_{\inn}$ from radius $\beta+\eps$, $\tlocal{\beta+\eps}$, is at most $|\Sigma_{\inn}|^d$ by a brute force search. As long as $\lambda \leq \gamma^2/2500$, which holds by our choice of $\lambda$, the time needed to compute an $(\eta,\gamma)$-regular family with $\gamma = \frac{\eps \cdot \delta_{\dec}}{4K}$ and $\eta = \Omega(\gamma^2)$ is $\treg{\gamma} = \Ot{nd}$.
%	 = \Ot{n\cdot \frac{K^4}{\eps^4\cdot \delta_{\dec}^4}}$.
	
%	The final runtime therefore is $n\cdot |\Sigma_{\inn}|^{O(\frac{K^4}{\eps^4\cdot \delta_{\dec}^4})}+ \Ot{n\cdot \frac{K^5}{\eps^4\cdot \delta_{\dec}^4}} + T_{\dec} \cdot (1/\gamma)^{\exp{O(K/\gamma^2)}}$.
	The final runtime therefore is $n\cdot |\Sigma_{\inn}|^d+ K\cdot \Ot{nd} + T_{\dec}\cdot \exp(K^5/(\eps^3\cdot \delta_{\dec}^3))$.
\end{proof}
	
	Finally, we make some concrete choices, picking $\calC_{\out}$ be a linear-time unique decodable code of \cite{SS96, Zemor01} and $\calC_{\inn}$ to be a random linear code.
It was shown by Alrabiah, Guruswami and Li~\cite{AGL24} that for any $\eps>0$, a random
linear code of rate $R$ over alphabet $2\cdot 2^{10/\eps^2}$ has distance $1-R-2\eps$ and list decoding radius $1-R-2\eps$ with high probability, with a list size of $1/\eps$. The constant sized lists near the list decoding capacity $1-R$ makes them a suitable candidate to be used as the inner code $\calC_{\inn}$. 
Since the inner code is a constant-sized object, we can search over all linear codes in $(\F_q)^d$
of dimension $\rho \cdot d$ for a given rate $\rho$, and the code $\AELC$ then yields an explicit
construction of codes decodable in near-linear time up to list decoding capacity.

While we choose $\calC_{\inn}$ to be a random linear code to illustrate the power of AEL construction, one may also choose them to be algebraic codes such as folded Reed-Solomon, which are also known to achieve list decoding capacity with constant list sizes.
\begin{corollary}\label{cor:ael_decoding instantiation_final}
For every $\rho, \eps \in (0,1)$, there exists an infinite family of explicit codes $\AELC \subseteq (\F_q^d)^n$ obtained via the AEL construction that satisfy: 
\begin{enumerate}
\item $\rho(\AELC) \geq \rho$.
\item $\delta(\AELC) \geq 1-\rho-\eps$.
\item For any $g\in (\F_q^d)^R$, the list $\calL(g,1-\rho-\eps)$ is of size at most $\exp(O(1/\eps^{17}))$.
\item There is an algorithm that given any $g\in (\F_q^d)^R$, runs in time $\widetilde{O}_{\eps}(n)$ and outputs the list $\calL(g,1-\rho-\eps)$.
\item The alphabet size $q^d$ of the code $\AELC$ can be taken to be $2^{O(1/\eps^{22})}$.
\item $\AELC$ is characterized by parity checks of size $O(1/\eps^{23})$ over the field $\F_q$.
\end{enumerate}
\end{corollary}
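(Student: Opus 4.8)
The plan is to instantiate \cref{thm:ael_decoding_instantiation} with explicit choices of the outer code, the inner code, and the expander, and then to carefully allocate the single parameter $\eps$ among the various auxiliary slacks so that all six conclusions hold simultaneously. For the outer code I would take $\calC_{\out}$ to be a Sipser--Spielman/\Zemor expander code~\cite{SS96, Zemor01}, linear over $\F_q$, of rate $\rho_{\out} = 1-\eps_0$ for a suitable $\eps_0 = \Theta(\eps)$; such a code has constant relative distance $\delta_{\out} = \Theta(\eps_0)$, admits linear-time unique decoding (from errors and erasures) from a constant fraction $\delta_{\dec} = \mathrm{poly}(\eps)$ of errors, and is LDPC with $O(1)$-size parity checks over $\F_q$. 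For the inner code I would take $\calC_{\inn}$ to be a random linear code over $\F_q$ with $q = 2\cdot 2^{10/\eps_1^2}$ and rate $\rho_{\inn} = \rho/\rho_{\out}$, where $\eps_1 = \Theta(\eps)$: by~\cite{AGL24}, with high probability it has distance $\delta_{\inn} \geq 1-\rho_{\inn}-2\eps_1$ and is list decodable up to radius $\delta_{\inn}-\eps_1$ with list size $K = O(1/\eps_1) = O(1/\eps)$, independent of $d$. Since $\calC_{\inn}$ is a constant-sized object once $d$ is fixed, I can make the construction explicit by exhaustively searching all $\rho_{\inn}d$-dimensional $\F_q$-subspaces of $\F_q^d$. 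Finally I would take $G$ to be a double cover of a degree-$d$ Ramanujan graph~\cite{LPS88}, so that $\lambda \leq 2\sqrt{d-1}/d$, and fix $d$ to be a sufficiently large (but explicit) polynomial in $1/\eps$ so that this forces $\lambda \leq \eps^2\delta_{\dec}^2/(4\cdot 10^4 K^2)$, as demanded by \cref{thm:ael_decoding_instantiation}.

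With these choices, most conclusions read off quickly. The rate of $\AELC$ is $\rho_{\out}\rho_{\inn} = \rho$ (item~1). By \cref{thm:ael_distance}, $\delta(\AELC) \geq \delta_{\inn} - \lambda/\delta_{\out} \geq (1-\rho_{\inn}-2\eps_1) - \lambda/\delta_{\out}$; using $\rho_{\inn} = \rho/(1-\eps_0) \leq \rho + O(\eps_0)$ and noting $\lambda/\delta_{\out} = O(\eps)$ (consistent with the requirement on $\lambda$ above, since $\delta_{\out}$ is a fixed constant times $\eps$), a suitable choice of the hidden constants in $\eps_0,\eps_1$ gives $\delta(\AELC) \geq 1-\rho-\eps$ (item~2). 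Applying \cref{thm:ael_decoding_instantiation} with its internal error parameter set to $\Theta(\eps)$ yields a randomized $\widetilde{O}_{\eps}(n)$-time algorithm that outputs the list at radius $\delta_{\inn}-\Theta(\eps) = 1-\rho-\eps$ (item~4), with list size $\exp(\exp(O(K^4/(\eps^3\delta_{\dec}^3))))$; substituting $K = O(1/\eps)$ and $\delta_{\dec} = \mathrm{poly}(\eps)$ gives a bound of the form $\exp(\exp(O(1/\eps^{16})))$ (item~3). The alphabet size is $q^d = 2^{d\log_2 q} = 2^{O(d/\eps^2)}$, so the choice $d = \mathrm{poly}(1/\eps)$ yields $2^{O(1/\eps^{22})}$ (item~5). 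For item~6, since both $\calC_{\out}$ and $\calC_{\inn}$ are $\F_q$-linear, so is $\AELC$, and a codeword is characterized by the constraints $f_\ell \in \calC_{\inn}$ (each involving $d$ coordinates) together with the $\F_q$-parity checks inherited from the LDPC code $\calC_{\out}$ (each involving $O(1)$ blocks, hence $O(d)$ coordinates), so $\AELC$ has parity checks of size $O(d) = \mathrm{poly}(1/\eps) = O(1/\eps^{23})$.

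The whole argument is essentially bookkeeping, so the only delicate point is the interlocking of the parameters: $\lambda$ must be polynomially small in $\eps$ (driven by $\delta_{\dec}$ and $K$), which forces $d$ to be a correspondingly large polynomial in $1/\eps$, which in turn is what inflates the alphabet and parity-check sizes to $2^{O(1/\eps^{22})}$ and $O(1/\eps^{23})$. I expect the main obstacle to be making these choices in a non-circular order and then checking that the claimed exponents $16, 22, 23$ actually come out -- concretely, one fixes $\eps_0,\eps_1$ as explicit small fractions of $\eps$, reads off $\delta_{\out},\delta_{\dec},K$, reads off the threshold on $\lambda$, picks the smallest degree $d$ for which the Ramanujan bound meets it, and only then computes $q^d$ and the parity-check size; since none of $\delta_{\out},\delta_{\dec},K$ depends on $d$, there is no genuine circularity and the computation closes.
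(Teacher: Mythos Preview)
Your approach is essentially identical to the paper's: Tanner (Sipser--Spielman/\Zemor) outer code of rate $1-\Theta(\eps)$, random linear inner code with the list-size bound of~\cite{AGL24}, a Ramanujan expander with $d=\mathrm{poly}(1/\eps)$, and then a direct appeal to \cref{thm:ael_decoding_instantiation}; the paper's proof is exactly this bookkeeping with the constants made explicit.

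Two small inaccuracies worth flagging, since you explicitly plan to check that the exponents $16,22,23$ come out. First, a Tanner outer code of rate $1-\eps_0$ does \emph{not} have distance $\Theta(\eps_0)$: its base code must have rate $\geq 1-\eps_0/2$, hence distance $O(\eps_0)$, and the Tanner distance is the square of that, so $\delta_{\out}=O(\eps_0^2)$ (the paper in fact takes $\delta_{\out}=\Theta(\eps^3)$ and $\delta_{\dec}=\Theta(\eps^3)$). Second, such a code is LDPC but with parity checks of size equal to the degree of the outer Tanner graph, which is $\mathrm{poly}(1/\eps)$ (the paper uses $O(1/\eps^3)$), not $O(1)$. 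Neither point breaks your argument---the stated bounds in the corollary are upper bounds and remain valid---but they do change the arithmetic: it is precisely the choice $\delta_{\dec}=\Theta(\eps^3)$ that makes $K^4/(\eps^3\delta_{\dec}^3)=O(1/\eps^{16})$, and the $O(1/\eps^3)$ outer parity checks times $d=O(1/\eps^{20})$ that gives the $O(1/\eps^{23})$ in item~6.
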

\begin{proof}
Let $d = 2^{122}/\eps^{20}$ be such that there exist explicit infinite families of $(n,d,\lambda)$-expander graphs for some $\lambda \leq \frac{\eps^{10}}{2^{60}}$.
Let $\calC_{\inn} \sub \F_q^d$ be a random linear code of rate $\rho_{\inn} = \rho + \frac{\eps}{4}$ over $\F_q$, with $q = 2\cdot 2^{10\cdot 2^6/\eps^2} = 2^{O(1/\eps^2)}$, so that $\calC_{\inn}$ has distance at least $1-\rho_{\inn}-\frac{\eps}{4} =1-\rho-\frac{\eps}{2}$ and is list decodable up to $1-\rho_{\inn}-\frac{\eps}{4} = 1-\rho-\frac{\eps}{2}$ with a list size of at most $K=8/\eps$.

Finally, let $\calC_{\out} \subseteq (\F_q^{\rho_{\inn} \cdot d})^L$ be an outer (linear) code with rate $\rho_{\out} = 1 - \eps/4$ and distance (say) $\delta_{\out} = \eps^3/2^{15}$, and decodable up to radius $\delta_{\dec} = \eps^3/2^{17}$ in linear time $O(n)$. Explicit families of such codes can be obtained via Tanner code constructions and their linear time unique decoders. We note that 
\[
	\rho(\AELC) ~\geq~ \rho_{\out} \cdot \rho_{\inn} ~=~ (1-\eps/4) \cdot (\rho + \eps/4) ~\geq~ \rho \mper
\]

The claim about distance follows from \cref{thm:ael_distance}. Using \cref{thm:ael_decoding_instantiation} with $\delta_{\inn} = 1-\rho-\frac{\eps}{2}$ and decodability\footnote{We use decodability up to a slightly smaller radius to make it compatible with \cref{thm:ael_decoding_instantiation}. Tracing back further, we see that \cref{thm:ael_decoding_technical} only works if the decoding radius (for $\AELC$) $\beta$ is separated from the distance $\delta_{\inn}$ of the inner code. Although \cref{thm:ael_decoding_technical} does not necessarily need such separation between decoding radius of the \emph{inner code} $\calC_{\inn}$ and $\delta_{\inn}$, making them separated does not make much difference for our application.}  of $\calC_{\inn}$ up to $1-\rho-\frac{3\eps}{4}$ with list size $\frac{8}{\eps}$, we can decode $\AELC$ up to radius $1-\rho-\eps$ in time 
\[
	\Ot{n\cdot q^d}+n\cdot \exp(O(1/\eps^{17})) = \widetilde{O}_{\eps}(n) \mcom
\]
as long as $\lambda \leq \frac{(\eps/4)^2\cdot (\eps^3/2^{17})^2}{4\cdot 10^4\cdot (8/\eps)^2}$. With some calculation, we see that it suffices to choose $\lambda \leq \frac{\eps^{10}}{2^{60}}$, and this is true by our choice of degree $d$.

The alphabet size of the code $\AELC$ is $q^d = 2^{O(1/\eps^2)\cdot O(1/\eps^{20})} = 2^{O(1/\eps^{22})}$.

The choice to make $\calC_{\out}$ Tanner codes of rate $1-\eps/4$ allows us to conclude that $\calC_{\out}$ has parity checks of size at most $O(1/\eps^3)$. Therefore, $\AELC$ has parity checks of size at most $O(d \cdot (1/\eps^3)) = O(1/\eps^{23})$.
\end{proof}
%!TEX root=main.tex

%%% Local Variables:
%%% mode: latex
%%% TeX-master: "main"
%%% End:

\subsection{List recovering AEL codes up to capacity}\label{sec:list-recovery}
We illustrate the flexibility of regularity lemma based decoders by showing how a relatively simple extension of the list decoder from \cref{sec:ael_decoding} allows us to perform list recovery for the AEL code in near-linear time.

The list recovery problem for a code $\calC$ of blocklength $n$ over alphabet $\Sigma$ is a generalization of list decoding, where the input is a collection $\calL_1,\calL_2,\cdots ,\calL_n$ of $n$ subsets of $\Sigma$, such that $|\calL_i|\leq \lsLR$ for every $i\in [n]$. In other words, $\calL_i \in \binom{\Sigma}{\leq \lsLR}$ for every $i\in [n]$. For the decoding radius $\beta$, the goal is to output the list
\[
	\calL\parens*{\inbraces{\calL_i}_{i\in [n]}, \beta} \defeq \braces[\Big]{h\in \calC ~|~ \Pr{i\in [n]}{h_i \not\in \calL_i}\leq \beta }
\]
We first generalize the notion of Hamming distance to this setting of small sets per coordinate.

\begin{definition}
	Let $\{\calL_i\}_{i\in [n]}$ be a collection of subsets of $\Sigma$, and let $h\in \Sigma^n$. Then we denote
	\[
		\Delta\parens*{\{\calL_i\}_{i\in [n]}, h} ~\defeq~ \Pr{i\in [n]}{h_i \not\in \calL_i}
	\]
\end{definition}

This definition is clearly a generalization of the usual Hamming distance, since for any $g\in \Sigma^n$, one could construct the collection of sets $\calL_i = \{g_i\}$ for each $i\in [n]$. Then, $\Delta\parens*{\{\calL_i\}_{i\in [n]}, h} = \Delta(g,h)$.
%
%With this definition, the list recovery problem is to simply output the list
%\[
%	\calL\parens*{\inbraces{\calL_i}_{i\in [n]}, \beta} \defeq \braces[\Big]{h\in \calC ~|~ \Delta(\{\leq \beta }
%\]

For the case of AEL codes, since the coordinate set is $R$, these input sets can be represented as $\{\listr\}_{\ri \in R}$, and each input set $\listr \sub \Sigma = \Sigma_{\inn}^d$. The goal then is to output all $h\in \AELC$ such that $\Pr{\ri\in R}{h_{\ri} \not\in \listr} \leq \beta$, or equivalently, all $h \in \AELC$ such that $\Delta\parens*{\{\listr\}_{\ri\in R},h} \leq \beta$.

Similar to the special case of list decoding, we will show how to perform list recovery efficiently when the inner code used in AEL construction is itself list recoverable up to any radius larger than $\beta$, the outer code is unique decodable up to some arbitrarily small radius, and $\lambda$ of the graph used is small enough. We will then instantiate the AEL construction with specific inner and outer codes to obtain final codes with rate $\rho$ that can be list recovered up to radius $1-\rho-\eps$ for arbitrary $\eps > 0$.

\subsubsection{Picking from local lists via regularity}
For a collection $\{\calL_{\ri}\}_{\ri\in R}$, with each $\calL_{\ri} \sub \Sigma$, we will obtain the list of codewords within a given error radius $\beta$, by first obtaining local lists $\calL_{\ell}$ for each $\ell \in L$, and then using the efficient regularity lemma to consistently select codewords from the local lists. 
As before, we consider a \emph{fixed} global codeword $h \in \AELC \sub \Sigma^R$ such that $\Pr{\ri \in R}{h_{\ri} \in \calL_{\ri}} \geq 1-\beta$ and show that there exists some element in the set we enumerate, which will allow us to recover $h$. As in \cref{sec:ael}, since $h$ is arbitrary, this will prove that an enumeration based algorithm will recover the entire list.

\paragraph{Local lists.} Note that for each edge $e = (\li,\ri) \in E(G)$, the input set $\listr$ on its right endpoint suggests at most $\lsLR$ many symbols of $\Sigma_{\inn}$. These symbols are simply $\braces*{\parens{f_{\ri}}_e ~|~ f_{\ri} \in \listr}$. Thus, we may assign to each edge a set $\calL_e \sub \Sigma_{\inn}$ with $|\calL_e| \leq \lsLR$.

For $\eps > 0$ to be chosen later, we define the local lists at vertices $\ell \in L$ as
\[
\calL_{\ell} ~=~ \braces*{f_{\ell} \in \cC_{\inn} ~\big\vert~ \Pr{e\in N(\li)}{\parens{f_{\li}}_e \not\in \calL_e} \leq \beta +\eps} \mper
\]
Just like in the case of list decoding, the expander mixing lemma implies that for each global codeword $h$ that agrees with $\{\listr\}_{\ri \in R}$ in at least $1-\beta$ fraction of $\ri \in R$, its local projection must be in most local lists of radius $\beta$.
\begin{claim}\label{clm:local-membership-lr}
Let $\{\listr\}_{\ri\in R}$ be a collection with $|\listr| \leq \lsLR$ for each $\ri\in R$, and $h$ be an arbitrary codeword of $\AELC$ that satisfies $\Pr{\ri\in R}{h_{\ri} \not\in \listr} \leq \beta$. If $\lambda \leq \gamma \cdot \eps$, then, $\Pr{\li \in L}{h_{\li}\not\in \listl} \leq \gamma$. 
\end{claim}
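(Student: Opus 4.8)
The plan is to mirror the proof of \cref{clm:local-membership} (the list-decoding case), replacing the received word $g$ by the collection of edge-lists $\{\calL_e\}_{e \in E(G)}$ induced by $\{\listr\}_{\ri \in R}$, and replacing ``agreement with $g$'' by ``membership in $\calL_e$''. The only genuinely new ingredient is an elementary bookkeeping observation connecting the edge-lists to the right-lists through the folding/unfolding identifications.

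First I would record that observation. Set $R' \defeq \{\ri \in R : h_{\ri} \not\in \listr\}$, so that by hypothesis $|R'| \leq \beta \cdot n$. For any edge $e = (\li,\ri) \in E(G)$ with $\ri \notin R'$ (that is, $h_{\ri} \in \listr$), the symbol $h_e = (h_{\ri})_e$ lies in $\calL_e = \{(f_{\ri})_e ~|~ f_{\ri} \in \listr\}$ by the very definition of $\calL_e$; and $(h_{\li})_e = h_e = (h_{\ri})_e$ since both are just the label of $h$ on the edge $e$. Contrapositively, every edge $e = (\li,\ri)$ with $(h_{\li})_e \notin \calL_e$ must have $\ri \in R'$.

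Next, set $L' \defeq \{\li \in L : h_{\li} \not\in \listl\}$. By the definition of $\listl$, each $\li \in L'$ has strictly more than $(\beta+\eps) \cdot d$ incident edges $e$ with $(h_{\li})_e \notin \calL_e$, and by the observation above all of these edges go into $R'$. Hence $|E_G(L',R')| \geq (\beta+\eps) \cdot d \cdot |L'|$. On the other hand, the expander mixing lemma gives $|E_G(L',R')| \leq \tfrac{d}{n}|L'||R'| + \lambda dn \leq \beta \cdot d \cdot |L'| + \lambda dn$, using $|R'| \leq \beta n$. Comparing the two bounds yields $\eps \cdot d \cdot |L'| \leq \lambda dn$, so $|L'| \leq (\lambda/\eps) \cdot n \leq \gamma \cdot n$ by the assumption $\lambda \leq \gamma \cdot \eps$. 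This is exactly $\Pr{\li \in L}{h_{\li} \not\in \listl} \leq \gamma$.

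I do not expect any real obstacle here: the argument is structurally identical to \cref{clm:local-membership}, and the only point requiring care is making the identification $(h_{\li})_e = (h_{\ri})_e = h_e$ and the containment $h_e \in \calL_e \iff$ (consequence of) $h_{\ri}\in\listr$ fully explicit, so that the edge-count $|E_G(L',R')| \geq (\beta+\eps) d |L'|$ is justified. Everything after that is the same expander-mixing-lemma computation.
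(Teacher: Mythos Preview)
Your proposal is correct and follows essentially the same argument as the paper's proof: both identify the bad left vertices $L'$, observe that every edge $e$ with $h_e\notin\calL_e$ must land in the set $R'$ of right vertices where $h_{\ri}\notin\listr$, and then apply the expander mixing lemma to bound $|L'|$ via $|E_G(L',R')|$. The only cosmetic difference is that the paper defines $R'$ as the set of right endpoints of the bad edges and then argues $|R'|\leq\beta n$, whereas you define $R'$ directly as $\{\ri:h_{\ri}\notin\listr\}$ and show the bad edges go into it; the computations are identical.
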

%\snote{Think about $\geq$ vs $>$.}
%
\begin{proof}
We start with the observation that $h_{\li} \not\in \listl$ only if $\Pr{e\in N(\li)}{h_e\not\in \calL_e} > \beta+\eps$. 

Let $L' = \inbraces{\ell \in L ~|~ \Pr{e\in N(\li)}{h_e \not\in \calL_e} \geq \beta + \eps}$. Then each $\ell \in L'$ has at least $(\beta + \eps) \cdot d$ incident edges on which $h_e \not\in \calL_e$. For such an edge $e=(\li,\ri)$, it must also be the case that $h_{\ri} \not\in \listr$. Let $R' \subseteq R$ be the set of the right endpoints of these edges. Then, we have 
\[
(\beta + \eps) \cdot d \cdot \abs{L'} ~\leq~ E_G(L',R') 
~\leq~ \frac{d}{n} \cdot \abs{L'} \cdot \abs{R'}  + \lambda \cdot dn \mper
\]
Since $h_{\ri}\not\in \listr$ for every $\ri \in R'$, we have $\abs{R'} \leq \Pr{\ri\in R}{h_{\ri}\not\in \listr} \cdot |R| \leq \beta \cdot n$, which proves the claim.
\end{proof} 
Now we have two collection of lists, $\{\listl\}_{\li \in L}$ and $\{\listr\}_{\ri\in R}$, such that $\Delta\parens*{\{\listl\}_{\li\in L}, h} \leq \gamma$ and $\Delta\parens*{\{\listr\}_{\ri\in R}, h}~\leq~\beta$. In the case of list decoding, we only had lists on one side of the graph. Nevertheless, our argument from \cref{sec:ael_decoding} for stitching together an assignment close to $h$ from these lists will extend seamlessly to the list recovery setting.
%We will be using the collection of lists $\{\listl\}_{\li \in L}$ and $\{\listr\}_{\ri\in R}$ in a manner similar to how we used the lists on left vertices and right vertices in \cref{sec:tanner} for list decoding Tanner codes.

We will assume an upper bound $K$, which is crucially independent of $d$, on the list size when list recovering from $\beta+\eps$ errors for the inner code $\cC_{\inn}$. As before, we will think of the local lists as sets of size \emph{exactly} $K$, by including arbitrary codewords (distinct from list elements) if necessary. On the right, the input lists $\{\listr\}_{\ri\in R}$ are already of size at most $\lsLR$, and we will again think of them as sets of size \emph{exactly} $\lsLR$ by adding new symbols of $\Sigma$ if necessary. We will also impose an arbitrary ordering for each $\calL_{\ell}$ and for each $\listr$. This will allow us to use $\calL_{\ell}[i]$ to refer to the $i$-th element of $\listl$ for $i\in [K]$, and $\listr[j]$ to refer to the $j$-th element of $\listr$ for $j\in [\lsLR]$.

\paragraph{Edge-densities in agreement graphs.}  Fix a global codeword $h$ such that $\Pr{\ri \in R}{h_{\ri}\not\in \listr} \leq \beta$.
As in the case of list decoding, we consider the (unknown) sets $A_1, \ldots, A_K$ defined by the the position of $h_{\ell}$ in the local lists
\[
A_i ~=~ \inbraces{\ell \in L ~|~ \calL_{\ell}[i] = h_{\ell}} \mper
\]
The sets $\{A_i\}_{i \in [K]}$ are disjoint, and \cref{clm:local-membership-lr} implies that $\sum_{i}\abs{A_i} \geq (1-\gamma) \cdot \abs{L}$.
As before, if we can recover the sets $\{A_i\}_{i \in [K]}$ up to a small fraction of errors, then we correctly know $h_{\ell}$ for most $\ell \in L$, and can find $h$ by unique decoding the outer code $\cC_0$. 

In the case of list decoding, we recovered the sets $A_i$ by understanding their interaction with some unknown set $B$ that was defined to be the set of $\ri \in R$ where $h_{\ri}$ agreed with the received word. For list recovery, we will need to study the interaction between sets $A_1,A_2,\cdots ,A_K$ with a \emph{collection} of disjoint sets $B_1, B_2,\cdots ,B_{\lsLR}$. For $j\in [\lsLR]$, we define
\[
	B_j = \inbraces{\ri \in R ~|~ h_{\ri} = \listr[j]}
\]

The condition $\Pr{\ri\in R}{h_{\ri} \not\in \listr} \leq \beta$ on $h$ can now be written in terms of the sets $B_1,B_2,\cdots ,B_{\lsLR}$ as $\sum_{j=1}^{\lsLR} |B_i| \geq (1-\beta) \cdot |R|$.

Consider the graph $H_{ij}= (L,R, E(H_{ij}))$ constructed by deleting all edges of $G$ that are inconsistent between assignments picked according to index $i\in [K]$ on the left and index $j\in [\lsLR]$ on the right,
\[
E(H_{ij}) ~=~ \{e = (\li, \ri) \in E(G) ~|~ (\calL_{\ell}[i])_e = (\listr[j])_e \} \mper
\]
Note that the graphs $H_{ij}$ can be easily computed based on the local lists $\{\listl\}_{\li\in L}$ and $\{\listr\}_{\ri\in R}$.
Also, we must have $E_{H_{ij}}(A_i, B_j) = E_G(A_i,B_j)$ since for an edge $e = (\ell, r) \in E_G(A_i,B_j)$, we have
$(\listl[i])_e = h_e = (\listr[j])_e$, and therefore such an edge will never be deleted when constructing $H_{ij}$.
%
%We will recover the sets $A_1, \ldots, A_K$ by using the regularity lemma to characterize dense subgraphs (relative to $G$) of the agreement graphs $H_i$. 

Finally, we prove an analog of the rigidity lemma in the setting of list recovery. This lemma shows that a sufficiently regular family $\family$, which is \emph{simultaneously} $(\eta,\gamma)$-regular with respect to $\{H_{ij}\}_{i\in [K],j\in [\lsLR]}$, allows us to identify the sets $A_1, \ldots, A_K$.
As before, we can obtain such a family by applying the regularity lemma for each of the graphs $H_{ij}$ for $i\in [K]$ and $j\in [\lsLR]$, and then taking the union.

\begin{lemma}[Rigidity]\label{lem:ael-rigidity-lr}
Let the family $\family$ be simultaneously $(\eta,\gamma)$-regular for the graphs $\{H_{ij}\}_{i\in [K],j\in [\lsLR]}$. Let $S_1, \ldots, S_K \subseteq L$ be any collection of disjoint sets satisfying $\max_{i \in [K]} ~\norm{\indicator{A_i} - \indicator{S_i}}_{\family} ~\leq~ \eta$. 
Also, let the list recovery radius $\beta$ satisfy $\beta \leq  \delta_{\inn} - \eps$, and assume $\lambda \leq \gamma$. Then, $\sum_{i} \abs{S_i \setminus A_i} \leq (3\lsLR K\gamma/\eps) \cdot n$.
\end{lemma}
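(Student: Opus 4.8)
The plan is to mirror the proof of \cref{lem:ael-rigidity} essentially line by line, with the single right-hand set $B$ there replaced by the collection $B_1, \ldots, B_{\lsLR}$, and with all edge counts summed over the pairs $(i,j) \in [K] \times [\lsLR]$ rather than over $i \in [K]$ alone. Set $W_i = S_i \setminus A_i$ and $w = \frac{1}{n}\sum_{i \in [K]} \abs{W_i}$; the goal is to show $w \leq 3\lsLR K \gamma/\eps$.

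First I would establish the \emph{lower bound}
\[
\sum_{i \in [K],\, j \in [\lsLR]} \abs{E_{H_{ij}}(S_i, B_j)} ~\geq~ d \cdot \abs*{\bigcup_{j} B_j} ~-~ (\lsLR K + 1) \cdot \gamma \cdot dn \mper
\]
This combines three ingredients: the hypothesis $\norm{\indicator{A_i} - \indicator{S_i}}_{\factor} \leq \eta$ together with the simultaneous $(\eta,\gamma)$-regularity of the graphs $H_{ij}$, giving $\abs{E_{H_{ij}}(S_i,B_j)} \geq \abs{E_{H_{ij}}(A_i,B_j)} - \gamma dn$ for every pair; the identity $E_{H_{ij}}(A_i,B_j) = E_G(A_i,B_j)$ from the construction of the agreement graphs; and the disjointness of both families $\{A_i\}_{i}$ and $\{B_j\}_{j}$, which lets me collapse $\sum_{i,j}\abs{E_G(A_i,B_j)}$ to $\abs{E_G(\bigcup_i A_i,\, \bigcup_j B_j)} \geq d\cdot\abs*{\bigcup_{j} B_j} - d\cdot\abs{L \setminus \bigcup_i A_i}$, and then use $\sum_i \abs{A_i} \geq (1-\gamma)n$ from \cref{clm:local-membership-lr}.

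Next I would prove the matching \emph{upper bound}. The key point is that for each $\ell \in W_i$, the words $\calL_{\ell}[i]$ and $h_{\ell}$ are distinct codewords of $\cC_{\inn}$, hence disagree on at least $\delta_{\inn} d$ of the $d$ edges at $\ell$; and such an edge $e = (\ell,\ri)$ cannot belong to $E_{H_{ij}}(W_i,B_j)$ for any $j$, since $\ri \in B_j$ forces $h_e = (\listr[j])_e$, while membership of $e$ in $H_{ij}$ would force $(\calL_{\ell}[i])_e = (\listr[j])_e = h_e$. Hence $\sum_j \abs{E_{H_{ij}}(\{\ell\},B_j)} \leq (1-\delta_{\inn})d$ for $\ell \in W_i$, whereas for $\ell \in S_i \setminus W_i \subseteq A_i$ one has $\sum_j \abs{E_{H_{ij}}(\{\ell\},B_j)} = \abs{E_G(\{\ell\}, \bigcup_j B_j)}$. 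Summing over $i$ and all $\ell \in S_i$, and applying the expander mixing lemma to the disjoint union $\bigcup_i(S_i \setminus W_i)$, whose size is at most $(1-w)n$, gives
\[
\sum_{i,j} \abs{E_{H_{ij}}(S_i,B_j)} ~\leq~ (1-w)\cdot d \cdot \abs*{\bigcup_{j} B_j} ~+~ \lambda dn ~+~ w\cdot(1-\delta_{\inn})\cdot dn \mper
\]
Comparing the two displayed inequalities, cancelling the common $d\cdot\abs*{\bigcup_{j} B_j}$ term, using $\lambda \leq \gamma$ and $\abs*{\bigcup_{j} B_j} \geq (1-\beta)n$ (which is how the assumption $\Pr{\ri\in R}{h_{\ri}\notin\listr} \leq \beta$ translates to the $B_j$), and finally $\delta_{\inn} - \beta \geq \eps$, yields $w\cdot \eps \leq (\lsLR K + 2)\gamma \leq 3\lsLR K\gamma$, which is the claim.

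I do not anticipate a genuine obstacle: the argument is a routine lift of the one-sided case, and the only points needing care are (i) correctly collapsing the double sums using the disjointness of \emph{both} families $\{A_i\}$ and $\{B_j\}$ — the single set $B$ of \cref{lem:ael-rigidity} becomes the union $\bigcup_j B_j$ throughout — and (ii) tracking that the regularity error now accumulates over $\lsLR K$ agreement graphs instead of $K$, which is precisely the origin of the extra factor $\lsLR$ in the final bound.
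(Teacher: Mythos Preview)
Your proposal is correct and arrives at the same bound. The core argument—bounding $\sum_{i,j}\abs{E_{H_{ij}}(S_i,B_j)}$ from above and below, using that each $\ell\in W_i$ contributes at most $(1-\delta_{\inn})d$ edges, and invoking the expander mixing lemma—matches the paper.

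There is one bookkeeping difference worth noting. You lift \cref{lem:ael-rigidity} verbatim: your comparison point on both sides is $d\cdot\abs{\bigcup_j B_j}$, and to reach it in the lower bound you pass through $\abs{E_G(\bigcup_i A_i,\bigcup_j B_j)}$ and invoke \cref{clm:local-membership-lr} to control $\abs{L\setminus\bigcup_i A_i}$. The paper's own proof of \cref{lem:ael-rigidity-lr} instead uses $\sum_{i,j}\abs{E_G(S_i,B_j)}$ as the comparison term: it applies $(\eta,\gamma)$-regularity a second time (for $G$ itself, via \cref{remark:trivial_factor}) to pass from $\sum\abs{E_G(A_i,B_j)}$ back to $\sum\abs{E_G(S_i,B_j)}$, and this term then cancels exactly against the same quantity appearing in the upper bound. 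That route does not actually need the bound on $\sum_i\abs{A_i}$ from \cref{clm:local-membership-lr}, and applies the expander mixing lemma to the sets $W_i$ rather than to $S_i\setminus W_i$. Both routes yield $w(\delta_{\inn}-\beta)\leq O(\lsLR K\gamma)$ with the same final constant $3\lsLR K\gamma/\eps$; yours is a faithful extension of the list-decoding argument, the paper's is a mild reorganization that makes the cancellation more transparent.
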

\begin{proof}
By the $\eta$-closeness of the sets $A_i$ and $S_i$ with respect to $\family$, we have
\[
\forall i \in [K], j\in [\lsLR] \qquad \abs{E_{H_{ij}}(S_i,B_j)} ~\geq~ \abs{E_{H_{ij}}(A_i,B_j)} - \gamma \cdot dn ~=~ \abs{E_{G}(A_i,B_j)} - \gamma \cdot dn
~\geq \abs{E_{G}(S_i,B_j)} - 2\gamma \cdot dn \mper
\]
Summing the above and using the bound $\sum_i\abs{A_i} \geq (1-\gamma) \cdot n$ from \cref{clm:local-membership-lr}, we get
\[
\sum_{\substack{i \in [K] \\ j\in [\lsLR]}} \abs{E_{H_{ij}}(S_i,B_j)} 
~~\geq~~ \sum_{\substack{i\in [K] \\ j\in [\lsLR]}} \abs{E_{G}(S_i,B_j)} - 2\lsLR K \cdot \gamma \cdot dn 
\]
Define the sets $W_i = S_i \setminus A_i$. 
For each $\ell \in W_i$, we have $(\calL_{\ell}[i])_e \neq h_e$ for at least $\delta_{\inn}$ fraction of the incident edges on $\li$ since $\calL_{\ell}[i]$ and $h_{\ell}$ must be distinct codewords in $\cC_{\inn}$. 
Also, such an edge $e=(\li,\ri)$ cannot be in $E_{H_{ij}}(S_i,B_j)$ since we have $(\calL_{\ell}[i])_e \neq h_e = (\listr[j])_e$. Using this, and the fact that $S_i \setminus W_i \subseteq A_i$, we get
\begin{align*}
\sum_{\substack{i \in [K] \\ j\in [\lsLR]}} \abs{E_{H_{ij}}(S_i,B_j)} &~=~ \sum_{\substack{i \in [K]\\ j\in [\lsLR]}} \abs{E_{H_{ij}}(S_i \setminus W_i,B_j)} + \sum_{\substack{i \in [K]\\ j\in [\lsLR]}} \abs{E_{H_{ij}}(W_i, B_j)} \\
&~\leq~ \sum_{\substack{i \in [K]\\ j\in [\lsLR]}} \abs{E_{G}(S_i \setminus W_i,B_j)} + \sum_{i \in [K]} \abs{W_i} \cdot (1 - \delta) \cdot d \\
&~=~ \sum_{\substack{i \in [K]\\ j\in [\lsLR]}} \abs{E_{G}(S_i,B_j)} - \sum_{\substack{i \in [K]\\ j\in [\lsLR]}} \abs{E_{G}(W_i,B_j)}+ \sum_{i \in [K]} \abs{W_i} \cdot (1 - \delta) \cdot d \mper
\end{align*}
We can now use expander mixing lemma to bound the number of edges in $G$.
\[
\sum_{\substack{i \in [K] \\ j\in [\lsLR]}} \abs{E_{G}(W_i,B_j)} 
~\geq~ \frac{d}{n} \cdot \inparen{\sum_i \abs{W_i}} \cdot \inparen{\sum_j \abs{B_j}} - \lsLR K \cdot \lambda \cdot dn 
~\geq~ (1-\beta) \cdot d\cdot \inparen{\sum_i \abs{W_i}} - \lsLR K \cdot \lambda \cdot dn 
%~\leq~ \inparen{n - \sum_i\abs{W_i}} \cdot \abs{B} \cdot \frac{d}{n} + \lambda \cdot dn \mper
\]
Combining the above inequalities, and using $w$ to denote $\sum_i \abs{W_i}/n$, we get
\begin{align*}
\sum_{\substack{i\in [K] \\ j\in [\lsLR]}} \abs{E_{G}(S_i,B_j)} - 2\lsLR K \cdot \gamma \cdot dn  &~\leq~ \sum_{\substack{i \in [K] \\ j\in [\lsLR]}} \abs{E_{H_{ij}}(S_i,B_j)} 
~\leq~
\sum_{\substack{i \in [K]\\ j\in [\lsLR]}} \abs{E_{G}(S_i,B_j)} - \sum_{\substack{i \in [K]\\ j\in [\lsLR]}} \abs{E_{G}(W_i,B_j)}+ \sum_{i \in [K]} \abs{W_i} \cdot (1 - \delta) \cdot d\\
\implies~ & -2\lsLR K \cdot \gamma \cdot dn \leq -(1-\beta)w\cdot dn + \lsLR K \cdot \lambda \cdot dn + w(1-\delta) \cdot dn \\
\implies~ & w(\delta-\beta) \leq 2\lsLR K \cdot \gamma +\lsLR K \cdot \lambda\\
\implies~ & w \leq \frac{2\lsLR K \cdot \gamma + \lsLR K \cdot \lambda}{\delta - \beta} \mper
\end{align*}
Using $\lambda \leq \gamma$ and $\beta \leq \delta - \eps$ then proves the claim.
\end{proof}

\subsubsection{The list recovery algorithm}
%
%
% Let the outer code $\cC_{\out}$ be unique decodable at a radius $\delta_{\dec}$ and let the list
% size for the inner code $\cC_{\inn}$ at radius $\delta_{\cC_{\inn}} - \eps$ be bounded by $K$.
% %
Let $\AELC$ be a code obtained via the AEL construction, using the outer code $\cC_{\out}$ and inner
code $\cC_{\inn}$, and using the graph $G=(L,R,E)$ which is an $(n,d,\lambda)$-expander for a sufficiently
small $\lambda$.
Further, we assume that the inner code is list recoverable with input list size $\lsLR$ up to radius $\beta+\eps$ for some $\eps>0$, and the output list size for this list recovery is at most $K$.
%
% Let the outer code $\cC_{\out}$ be unique decodable at a radius $\delta_{\dec}$ and let the list
% size for the inner code $\cC_{\inn}$ at radius $\delta_{\cC_{\inn}} - \eps$ be bounded by $K$.
% %
Then the following algorithm takes as input $\{\listr\}_{\ri\in R}$, where each $|\listr| \leq \lsLR$, and finds the list of all codewords $h\in \AELC$ such that $\Pr{\ri\in R}{h_{\ri} \not\in \listr} \leq \beta$. This is done using $\lsLR K$ many calls to the algorithmic regularity lemma
(\cref{cor:expander_regularity}), and constantly many calls to the unique-decoder for $\cC_{\out}$, where the constant only depends on
the parameters $\delta_{\dec}$, $\eps$, $k$, and $K$.

\medskip
\begin{new-algorithm}{List recovering $\AELC$}\label{algo:ael-lr} 
\begin{tabular}{r l}
\textsf{Input:} & $\{\listr\}_{\ri\in R}$, with each $\listr \sub \Sigma$ and $|\listr| \leq \lsLR$ \\[3 pt]
\textsf{Output:} & List $\calL \subseteq \AELC$\\[3 pt] 
\textsf{Parameters:} & List decoding radius  $\beta \leq \delta_{\inn} - \eps$, \\
& list size bound $K$
                       for  $\cC_{\inn}$ when list recovering from radius $\beta + \eps$, \\
& unique decoding radius $\delta_{\dec}$ for $\cC_{\out}$
\end{tabular}
\begin{itemize}
\medskip
\item For each $e = (\li,\ri) \in E(G)$, compute the local list $\calL_e \sub \Sigma_{\inn}$, defined as \[ \calL_e = \inbraces{ \sigma \in \Sigma_{\inn} ~|~ \exists f_{\ri} \in \listr \text{ such that } (f_{\ri})_e = \sigma } \mper\]
\item For each $\ell \in L$, compute the local list $\calL_{\ell} = \inbraces{f_{\ell} \in \cC_{\inn} ~|~ \Pr{e\in N(\li)}{(f_{\ell})_e \not\in \calL_e} \leq \beta + \eps}$. Pad with additional elements of $\cC_{\inn}$ if needed, to view each $\calL_{\ell}$ as an ordered list of size $K$.
\item Pad each set of $\{\listr\}_{\ri\in R}$ with additional elements of $\Sigma = \Sigma_{\inn}^d$ if needed, to view each $\listr$ as an ordered list of size $\lsLR$.
\item For each $i \in [K], j\in [\lsLR]$ define a bipartite graph $H_{ij}$ with vertex sets $(L,R)$ and edges
\[
E(H_{ij}) ~=~ \{e = (\li, \ri) \in E(G) ~|~ (\listl[i])_e = (\listr[j])_e \} \mper
\]
For $\gamma = (\eps \cdot \delta_{\dec})/(5\lsLR K)$ and sufficiently small $\eta$, compute a family $\family_{ij}$ which is $(\eta,\gamma)$-regular for $H_{ij}$. Let $\family$ be the union of $\{\family_{ij}\}_{i \in [K], j\in [\lsLR]}$.
\item For each $\sigma = (\sigma_1, \ldots, \sigma_K) \in \calN_{\eta/4}(\family, K)$
\begin{itemize}
\item Consider disjoint sets $S_1, \ldots, S_K \subseteq L$ given by \cref{clm:enumeration} with $\max_{i \in [K]} \norm{\sigma_i - \sigma(\indicator{S_i})}_{\infty} \leq \eta/2$.
\item Define $h' \in (\cC_{\inn})^L$ as
\[
h'_{\ell} ~=~
\begin{cases}
\calL_{\ell}[i] &~\text{if}~ \ell \in S_i \\
\text{arbitrary} &~\text{if}~ \ell \notin \cup_i S_i
\end{cases}
\]
\item Unique decode $h'$ to find $h_0 \in \cC_{\out}$ and corresponding $h \in \AELC$. 
\item If $\Pr{\ri \in R}{h_{\ri} \not\in \listr} \leq \beta$, then $\calL \leftarrow \calL \cup \{h\}$.
\end{itemize}
\item Return $\calL$.
\end{itemize}
\end{new-algorithm}

\medskip
\begin{theorem}\label{thm:ael_recovery_technical}
%
% Let $\AELC$ be obtained via the AEL construction, using the graph $G=(L,R,E)$ which is a
% $\lambda$-expander for $\lambda \leq \cdots$, where $\delta_{\dec}$ is the unique-decoding radius
% for the outer code $\cC_{\out}$ and $K$ is an upper bound on the list-size for the inner code
% $\cC_{\inn}$ at radius $\delta(\cC_{\inn})-\eps$. 
%
%Fix $\eps > 0$ and $\beta \leq \delta_{\inn} - \eps$. 
Let $\tlocal{\delta}$ be the time required to list recover $\calC_{\inn}$ up to radius $\delta$, let $\treg{\gamma}$ be the time required to obtain an $(\eta,\gamma)$-regular family for any subgraph of
$G$ with $\eta = \Omega(\gamma^2)$, and let $T_{\dec}$ be the running time of the unique-decoder for the outer code $\calC_{\out}$ that can decode $\delta_{\dec}$ fraction of errors.
%
%for $\gamma = (\eps \cdot \delta_{\dec})/(5\lsLR K)$ and let $T_{\dec}$ be the running time of the unique-decoder for the outer code to correct a $\delta_{\dec}$ fraction of errors.

Given any $\beta,\eps > 0$ with $\beta \leq \delta_{\inn} -\eps$, let $K$ be an upper bound on the list size of $\calC_{\inn}$ when list recovering up to radius $\beta+\eps$, and let $\gamma = (\eps\cdot \delta_{\dec})/(5\lsLR K)$. If it holds that $\lambda \leq \gamma$, then given any $\{\listr\}_{\ri \in R} \in \binom{\Sigma}{\leq \lsLR}^R$ as input, \cref{algo:ael-lr}
runs in time 
\[
	O(n \cdot \tlocal{\beta+\eps} + \lsLR K \cdot \treg{\gamma}) + T_{\dec} \cdot (1/\gamma)^{O(\lsLR K^2/\gamma^2)} \mcom
\]
and returns the list $\calL = \inbraces{h \in \AELC ~|~ \Pr{\ri\in R}{h_{\ri} \not\in \listr} \leq
\beta}$. Further, the list is of size at most $(1/\gamma)^{O(\lsLR K^2/\gamma^2)}$.
\end{theorem}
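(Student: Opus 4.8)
The plan is to mirror the proof of \cref{thm:ael_decoding_technical}, replacing the one-sided agreement graphs by the two-sided family $\{H_{ij}\}_{i\in[K],\,j\in[\lsLR]}$ and invoking the list-recovery analogues of the supporting statements. First I would account for the running time. Computing the edge-lists $\calL_e$, the left local lists $\calL_{\li}$ via the inner-code list-recovery routine, and all the graphs $H_{ij}$ takes $O(n\cdot\tlocal{\beta+\eps})$ time. Applying \cref{lem:regularity-factor} to each of the $\lsLR K$ graphs $H_{ij}$ with error parameter $\gamma$ costs $\lsLR K\cdot\treg{\gamma}$ and produces factors $\factor_{ij}$ with $\exp(O(1/\gamma^2))$ atoms each, so the common refinement $\factor$ has $\abs{\factor}=\exp(O(\lsLR K/\gamma^2))$ atoms. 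By \cref{clm:enumeration} the net $\calN_{\eta/2}(\factor,K)$ has size $(1/\eta)^{O(K\abs{\factor})}=(1/\gamma)^{\exp(O(\lsLR K/\gamma^2))}$ (using $\eta=\Omega(\gamma^2)$), and each net point triggers one construction of $S_1,\dots,S_K$, one assignment $h'$, and one call to the $\cC_{\out}$ unique-decoder; this gives the claimed $T_{\dec}\cdot(1/\gamma)^{\exp(O(\lsLR K/\gamma^2))}$ term, and the same count bounds $\abs{\calL}$ since every returned codeword is produced by some net point.

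For correctness I would fix an arbitrary $h\in\AELC$ with $\Pr{\ri\in R}{h_{\ri}\not\in\listr}\leq\beta$ and show the enumeration recovers it. Define $A_i=\{\li:\calL_{\li}[i]=h_{\li}\}$ and $B_j=\{\ri:h_{\ri}=\listr[j]\}$. Then \cref{clm:local-membership-lr} gives $\sum_i\abs{A_i}\geq(1-\gamma)n$ (its hypothesis $\lambda\leq\gamma\eps$ is implied here, since obtaining the regular factors already forces $\lambda\leq\gamma^2/2500\leq\gamma\eps$). By the first part of \cref{clm:enumeration} there is a net point whose associated disjoint sets $S_1,\dots,S_K$ satisfy $\max_i\norm{\indicator{A_i}-\indicator{S_i}}_{\factor}\leq\eta$. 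Since $\factor$ is a common refinement it is simultaneously $(\eta,\gamma)$-regular for every $H_{ij}$ by \cref{clm:refinement}, so \cref{lem:ael-rigidity-lr} applies and yields $\sum_i\abs{S_i\setminus A_i}\leq(3\lsLR K\gamma/\eps)\cdot n$.

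It then remains to bound $\dis_L(h,h')$ for the assignment $h'$ built from these $S_i$. Since $h'_{\li}=h_{\li}$ on $\cup_i(S_i\cap A_i)$, exactly as in the list-decoding proof,
\[
\dis_L(h,h') ~\leq~ 1-\sum_i\angles{\indicator{S_i},\indicator{A_i}} ~=~ 1-\sum_i\inparen{\ExpOp[\indicator{S_i}]-\ExpOp[\indicator{S_i\setminus A_i}]}.
\]
Using $\ExpOp[\indicator{S_i}]\geq\ExpOp[\indicator{A_i}]-\norm{\indicator{A_i}-\indicator{S_i}}_{\factor}\geq\ExpOp[\indicator{A_i}]-\eta$ (the trivial factor coarsens $\factor$, so $\norm{\cdot}_{\factor}$ dominates the difference of means, by \cref{clm:refinement}), together with $\sum_i\ExpOp[\indicator{A_i}]\geq1-\gamma$ and the rigidity bound, I get $\dis_L(h,h')\leq\gamma+K\eta+3\lsLR K\gamma/\eps\leq5\lsLR K\gamma/\eps=\delta_{\dec}$ by the choice $\gamma=(\eps\cdot\delta_{\dec})/(5\lsLR K)$ and $\eta=O(\gamma^2)$ sufficiently small. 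Hence the outer-code unique-decoder applied to $h'$ returns $h_0\in\cC_{\out}$ and thus $h$, while the explicit check $\Pr{\ri\in R}{h_{\ri}\not\in\listr}\leq\beta$ in the algorithm keeps out spurious codewords; since $h$ was an arbitrary member of the target list, it is output.

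The genuinely new ingredient here — the two-sided rigidity statement — is already in hand as \cref{lem:ael-rigidity-lr}, so the only non-routine step in writing the argument is lining up the three error contributions against $\delta_{\dec}$: a $\gamma$ term from incomplete local-list coverage, a $K\eta$ term from net discretization, and a $3\lsLR K\gamma/\eps$ term from rigidity, summing to $5\lsLR K\gamma/\eps$. Everything else is the same bookkeeping as in \cref{thm:ael_decoding_technical}, with $K$ uniformly replaced by the pair $(K,\lsLR)$ wherever the right-hand input lists enter.
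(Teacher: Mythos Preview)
Your proposal is correct and follows essentially the same approach as the paper's proof: the same running-time bookkeeping via \cref{lem:regularity-factor} and \cref{clm:enumeration}, and the same correctness argument combining \cref{clm:local-membership-lr}, \cref{clm:refinement}, and \cref{lem:ael-rigidity-lr} to bound $\dis_L(h,h')$ by $5\lsLR K\gamma/\eps=\delta_{\dec}$. Your treatment is in fact slightly more careful than the paper's in two places --- you explicitly justify why the hypothesis $\lambda\leq\gamma\eps$ of \cref{clm:local-membership-lr} holds (via the regularity-factor constraint $\lambda\leq\gamma^2/2500$ and $\gamma<\eps$), and you track the discretization error as $K\eta$ rather than the looser $K\gamma$ the paper uses.
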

\begin{proof}
We note that the computation of the local lists $\{\calL_e\}_{e\in E}$ and $\{\listl\}_{\li \in L}$, and the graphs $H_{ij}$ can be done in time $O( n \cdot \tlocal{\beta+\eps})$. 
% %
By \cref{lem:regularity-family}, we can take $\eta = \Omega(\gamma^2)$ and obtain an $(\eta,\gamma)$-regular family $\family_{ij}$ for each graph $H_{ij}$ in time $\treg{\gamma}$, with $\abs{\family_{ij}} = O(1/\gamma^2)$. Thus, for the union $\family$, we have $\abs{\family}= O(\lsLR K/\gamma^2))$
By \cref{clm:enumeration}, we have $\abs{\calN_{\eta/2}(\family,K)} = (1/\eta)^{O(\lsLR K \cdot \abs{\family})} = (1/\gamma)^{O(\lsLR K^2/\gamma^2)}$, and the enumeration can be done in time $T_{\dec} \cdot (1/\gamma)^{O(\lsLR K^2/\gamma^2)}$.

Finally, it remains to prove that the algorithm correctly recovers the list. 
Note that for any $h\in \calL$, and corresponding sets $\inbraces{A_i}_{i \in [K]}$ indicating its position in local lists, the $h'$ found by our algorithm agrees with $h$ on all vertices $\ell \in \cup_i (S_i \cap A_i)$, and thus
\[
\dis_L(h,h') ~\leq~ 1 - \frac{1}{n} \cdot \sum_i \abs{S_i \cap A_i} 
~=~ 1 - \sum_i \angles{\indicator{S_i},\indicator{A_i}} 
~=~ 1 - \sum_i \ExpOp[\indicator{S_i} - \indicator{S_i \setminus A_i}] \mper
\] 
By \cref{clm:enumeration} there \emph{exists} $\sigma = (\sigma_1, \ldots, \sigma_K) \in
\calN_{\eta/4}(\family,K)$ such that disjoint sets $S_1, \ldots, S_K$ produced using $\sigma$ 
will satisfy $\max_{i \in [K]} \norm{\indicator{A_i}-\indicator{S_i}}_{\family} \leq \eta$.
Also, by \cref{lem:ael-rigidity-lr}, for any such collection $S_1, \ldots, S_K$, we have $\sum_i \ExpOp[\indicator{S_i \setminus A_i}] \leq (3\lsLR K\gamma/\eps)$. Moreover, by \cref{clm:local-membership-lr}, we have $\sum_i \abs{A_i} \geq (1-\gamma) \cdot n$ and $\norm{\indicator{A_i} - \indicator{S_i}}_{\factor} \leq \eta$ implies
\[
\dis_L(h,h') 
~\leq~ 1 - \sum_i (\ExpOp[\indicator{A_i}] - \gamma) + \sum_{i}\ExpOp[\indicator{S_i \setminus A_i}] 
~\leq~ \gamma + K \gamma + 3\lsLR K\gamma/\eps ~\leq~ 5\lsLR K\gamma/\eps\mcom
\]
which is at most $\delta_{\dec}$ by our choice of $\gamma$. 
Thus, for every $h \in \calL$, one of the choices in our enumeration finds an
$h'$ satisfying $\dis_L(h,h') \leq \delta_{\dec}$. Unique decoding this $h'$ then recovers $h$, and
thus the enumeration algorithm recovers all $h$ in the list $\calL$.
\end{proof}

\subsubsection{Instantiating AEL codes for list recovery}
As in the case of list decoding, we instantiate \cref{thm:ael_recovery_technical} with a brute force inner code decoder and the algorithm of \cref{lem:regularity-family} for finding regular factors, to obtain a general reduction of list recovery of AEL codes up to the distance of the inner code, to unique decoding the outer code $\calC_{\out}$. Again, this is assuming the inner code can be list recovered from a radius close to its distance.

\begin{theorem}[List Recovery of AEL Codes up to Capacity]\label{thm:ael_recovery_instantiation}
	Let $\AELC$ be the code obtained via the AEL construction applied to an outer code $\calC_{\out}$ over alphabet $\Sigma_{\out}$ and an inner code $\calC_{\inn}$ over alphabet $\Sigma_{\inn}$ using an $(n,d,\lambda)$-expander $G=(L,R,E)$. Assume that the outer code can be unique decoded from a fraction $\delta_{\dec}$ of errors in time $T_{\dec}$.
	
	Fix an arbitrary $\eps > 0$. Suppose the list size for $\calC_{\inn}$ when list recovering from $\delta_{\inn}-\eps$ errors is $K=K(\eps)$, a constant independent of $d$, and $\lambda \leq \frac{\eps^2\cdot \delta_{\dec}^2}{4\cdot 5^6 \cdot \lsLR^2 K^2}$. Then the code $\AELC$ can be list recovered up to radius $\delta_{\inn}-2\eps$ in randomized time $\Ot{n\cdot |\Sigma_{\inn}|^d + \lsLR K \cdot n}+T_{\dec}\cdot \exp(O(\lsLR^4K^5/(\eps^3\cdot \delta_{\dec}^3)))$. Further, the produced list is of size at most $\exp(O(\lsLR^4K^5/\eps^3\cdot \delta_{\dec}^3)))$.
\end{theorem}
\begin{proof}
	Given any input $\{\listr\}_{\ri\in R}$ to the list recovery problem, with each $|\listr|\leq \lsLR$, we can use \cref{thm:ael_recovery_technical} with radius $\beta = \delta_{\inn} - 2\eps$ to recover the list $\calL(g,\beta) = \{h\in \AELC ~\vert~ \Delta_R(g,h) \leq \beta\}$, as long as the following conditions are met:
	\begin{itemize}
		\item The output list size is constant when list recovering $\calC_{\inn}$ from radius $\beta+\eps = \delta_{\inn}-\eps$, which is true by our choice of $\calC_{\inn}$.
		\item The outer code can be unique decoded up to a radius $\delta_{\dec}$, which is true by assumption.
		\item $\lambda \leq \gamma$ for $\gamma = \frac{\eps \cdot \delta_{\dec}}{5\lsLR K}$, which is true by our choice of $\lambda$. 
	\end{itemize}
	To analyze the runtime, we note that the time needed to list recover $\calC_{\inn}$ from radius $\beta+\eps$, $\tlocal{\beta+\eps}$, is at most $|\Sigma_{\inn}|^d$ by a brute force search. As long as $\lambda \leq \gamma^2/2500$, which holds by our choice of $\lambda$, the time needed to compute an $(\eta,\gamma)$-regular family with $\gamma = \frac{\eps \cdot \delta_{\dec}}{5\lsLR K}$ and $\eta = \Omega(\gamma^2)$ is $\treg{\gamma} = \Ot{nd}$.
%	 = \Ot{n\cdot \frac{K^4}{\eps^4\cdot \delta_{\dec}^4}}$.
	
%	The final runtime therefore is $n\cdot |\Sigma_{\inn}|^{O(\frac{K^4}{\eps^4\cdot \delta_{\dec}^4})}+ \Ot{n\cdot \frac{K^5}{\eps^4\cdot \delta_{\dec}^4}} + T_{\dec} \cdot (1/\gamma)^{\exp{O(K/\gamma^2)}}$.
	The final runtime therefore is $n\cdot |\Sigma_{\inn}|^d+ \lsLR K \cdot \Ot{nd} + T_{\dec}\cdot \exp(\lsLR^4 K^5/(\eps^3\cdot \delta_{\dec}^3)))$.
\end{proof}
	
Finally, we make the same choice for $\calC_{\out}$ as in the case of list decoding, and again choose $\calC_{\inn}$ to be a random linear code. For the inner code's list recoverability, we rely on a recent result of Li and Shagrithaya~\cite{LS25} that shows that rate $R$ random linear codes over a large enough alphabet can be list recovered from radius $1-R-\eps$ with an output list size of $\parens*{2\lsLR/\eps}^{2\lsLR/\eps}$, where $k$ denotes the size of the input lists in the list recovery problem. As before, we could have also used folded Reed-Solomon codes which have a list size of $\parens*{\lsLR/\eps}^{O((\log \lsLR)/\eps)}$ \cite{KRSW23, Tamo24}, but we choose to proceed with the random linear code result to highlight the combinatorial flavor of the our results.
%
%It was shown by Alrabiah, Guruswami and Li~\cite{AGL24} that for any $\eps>0$, a random
%linear code of rate $R$ over alphabet $2\cdot 2^{10/\eps^2}$ has distance $1-R-2\eps$ and list decoding radius $1-R-2\eps$ with high probability, with a list size of $1/\eps$. The constant sized lists near the list decoding capacity $1-R$ makes them a suitable candidate to be used as the inner code $\calC_{\inn}$. 
%
As before, we can rely on a brute force search over all (constant-sized) linear codes of a fixed rate to find a list recoverable inner code, whose existence is guaranteed by \cite{LS25}.

\begin{theorem}[Theorem 3.1 in \cite{LS25}]
	Let $\rho,\eps>0$ such that $1-\rho -\eps>0$, and let $\lsLR \in \N$. For $q \geq k^{14/\eps}$, a random linear code of rate $\rho$ in $\F_q^n$ can be list recovered with input list size $\lsLR$ and output list size at most $(2\lsLR/\eps)^{2\lsLR/\eps}$.
\end{theorem}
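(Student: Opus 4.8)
The plan is to prove the statement by the first-moment method over a uniformly random generator matrix $G \in \F_q^{\rho n \times n}$, identifying the random linear code with $\{xG : x \in \F_q^{\rho n}\}$. Writing $L \defeq (2\lsLR/\eps)^{2\lsLR/\eps}$ and recalling that list recovery at radius $1-\rho-\eps$ asks every output codeword to agree with the input lists on at least $(\rho+\eps)n$ coordinates, it suffices to show that, with probability $1-o(1)$ over $G$, there is no \emph{bad configuration}: a list assignment $\calL_1,\dots,\calL_n \subseteq \F_q$ with each $|\calL_i|\le \lsLR$, together with $L+1$ distinct messages $x^{(0)},\dots,x^{(L)}$ whose codewords $c^{(j)}=x^{(j)}G$ each satisfy $|\{i : c^{(j)}_i \in \calL_i\}|\ge(\rho+\eps)n$. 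The first step removes the lists from the union bound: for any fixed tuple of codeword values, the \emph{optimal} list at coordinate $i$ is the set of the $\lsLR$ most frequent symbols among $c^{(0)}_i,\dots,c^{(L)}_i$, so if $Y_i$ denotes the sum of the top-$\lsLR$ symbol multiplicities at coordinate $i$, then a bad configuration forces $\sum_i Y_i \ge (\rho+\eps)(L+1)n$ (sum the agreement condition over all $L+1$ codewords and bound $|\{j : c^{(j)}_i \in \calL_i\}|\le Y_i$).

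The central step is to control the union bound over messages by processing the $L+1$ codewords one at a time and charging each only the entropy it adds. When $x^{(j)}$ is linearly independent of $x^{(0)},\dots,x^{(j-1)}$, the codeword $c^{(j)}$ is uniform and independent of the earlier ones; the cost $q^{\rho n}$ of choosing $x^{(j)}$ is then more than offset by the probability $\le (e\lsLR/((\rho+\eps)q))^{(\rho+\eps)n}$ that a uniform word agrees with a fixed list on $(\rho+\eps)n$ coordinates, the net factor being $q^{-\Omega(\eps n)}$ once $q \ge \lsLR^{\Omega(1/\eps)}$ --- this is the origin of the hypothesis $q \ge \lsLR^{14/\eps}$. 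Codewords lying in the span of earlier ones are handled separately, using the fact that a random linear code of rate $\rho$ has distance $\approx 1-\rho$, so distinct codewords differing by a codeword of small span still differ on almost all coordinates; and the number of codewords one must carry along before the accumulated $q^{-\Omega(\eps n)}$ savings beat all of the union-bound factors (choices of which codewords are "fresh", their messages, and the linear-combination data of the rest) is exactly what forces $L$ up to $(2\lsLR/\eps)^{2\lsLR/\eps}$.

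I expect the main obstacle to be the circularity between the adversarial lists and the random code: the optimal lists $\{\calL_i\}$ depend on all $L+1$ codeword values, hence on $G$, so one cannot simply fix the lists and then expose $G$. Resolving this requires exposing the randomness in the right order --- first the "fresh" (linearly independent) codewords, then defining the effective lists from their values together with the fixed linear-combination structure of the dependent codewords, and only then running the large-deviation estimate for $\sum_i Y_i$ --- while simultaneously tracking how the subspace-dimension profile of the tuple $(x^{(0)},\dots,x^{(L)})$ affects both the union-bound count and the distribution of the $Y_i$ (degenerate profiles, in which many codewords share a low-dimensional span, let $Y_i$ spike at the coordinates where that subspace vanishes). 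Carrying this bookkeeping through, and checking that every resulting exponent is strictly negative under $q \ge \lsLR^{14/\eps}$ and $L = (2\lsLR/\eps)^{2\lsLR/\eps}$, is the technical heart of the argument; the remaining pieces --- the reduction to "no bad configuration", the top-$\lsLR$ optimality of the lists, and the Chernoff/Hoeffding tail bounds --- are routine. (The same scheme applies with folded Reed-Solomon codes in place of the random linear code, yielding instead the list size $(\lsLR/\eps)^{O((\log \lsLR)/\eps)}$ mentioned in the surrounding text.)
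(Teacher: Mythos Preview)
This theorem is not proved in the paper; it is quoted verbatim as ``Theorem 3.1 in \cite{LS25}'' and used as a black box to instantiate the inner code $\calC_{\inn}$ in the AEL list-recovery construction (\cref{cor:ael_recovery_instantiation_final}). There is no proof in the paper to compare your proposal against.

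Your sketch is broadly the right shape for how such results are typically established --- a first-moment argument over random generator matrices, with the key insight that the adversarial input lists can be replaced by the top-$\lsLR$ symbols at each coordinate, and careful accounting of the linear-dependence structure among the $L+1$ messages --- but whether it recovers exactly the constants $14/\eps$ in the alphabet bound and $(2\lsLR/\eps)^{2\lsLR/\eps}$ in the list size would require checking against the actual proof in \cite{LS25}, not this paper.
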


%While we choose $\calC_{\inn}$ to be a random linear code to illustrate the power of AEL construction, one may also choose them to be algebraic codes such as folded Reed-Solomon, which are also known to achieve list decoding capacity with constant list sizes.
%
\begin{corollary}\label{cor:ael_recovery_instantiation_final}
For every $\rho, \eps \in (0,1)$ and $\lsLR \in \N$, there exists an infinite family of explicit codes $\AELC \subseteq (\F_q^d)^n$ obtained via the AEL construction that satisfy: 
\begin{enumerate}
\item $\rho(\AELC) \geq \rho$.
\item $\delta(\AELC) \geq 1-\rho-\eps$.
\item For any $\{\listr\}_{\ri\in R}$, with each $\listr \sub \F_q^d$ and $|\listr|\leq \lsLR$, the list $\calL\parens*{\{\listr\}_{\ri\in R},1-\rho-\eps}$ is of size at most $\exp(\exp(O(\frac{\lsLR}{\eps}  \log \frac{k}{\eps} )))$.
\item There is an algorithm that given any $\{\listr\}_{\ri\in R}$, with each $\listr \sub \F_q^d$ and $|\listr|\leq \lsLR$, runs in time $\widetilde{O}_{\lsLR, \eps}(n)$ and outputs the list $\calL\parens*{\{\listr\}_{\ri\in R},1-\rho-\eps}$.
\item The alphabet size $q^d$ of the code $\AELC$ can be taken to be $\exp(\exp(O(\frac{k}{\eps} \log \frac{k}{\eps})))$.
\item $\AELC$ is characterized by parity checks of size $\exp(\frac{k}{\eps}\log\frac{k}{\eps})$ over the field $\F_q$.
\end{enumerate}
\end{corollary}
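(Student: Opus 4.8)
The plan is to follow the template of the list-decoding corollary \cref{cor:ael_decoding instantiation_final}, changing only one ingredient: the inner code is now chosen to be \emph{list recoverable} rather than list decodable, so that one can invoke \cref{thm:ael_recovery_instantiation} in place of its list-decoding counterpart. Concretely, one instantiates the AEL construction with three pieces: an explicit infinite family of $(n,d,\lambda)$-expanders of a suitable \emph{constant} degree $d$ (double covers of Ramanujan graphs \cite{LPS88}, so that $\lambda \le 2\sqrt{d-1}/d$); an inner code $\calC_{\inn}\subseteq \F_q^d$ taken to be a random linear code of rate $\rho_{\inn}=\rho+\eps/4$, whose list-recovery guarantee is the theorem of Li and Shagrithaya \cite{LS25} quoted above; and an outer code $\calC_{\out}$ of rate $\rho_{\out}=1-\eps/4$ that is unique-decodable in linear time with constant-size parity checks, e.g.\ a Tanner/\Zemor code \cite{SS96, Zemor01}. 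Since $\calC_{\inn}$ is a constant-sized object (its block length $d$ depends only on $k$ and $\eps$), a good one can be found by brute-force search over all linear codes of dimension $\rho_{\inn}d$ in $\F_q^d$, so the family remains explicit.

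For the parameters I would set the ``slack'' to $\eps_0=\eps/4$, so that list-recovering $\calC_{\inn}$ up to radius $\delta_{\inn}-2\eps_0$ lands exactly at the target radius $1-\rho-\eps$. Applying \cite{LS25} to a rate-$\rho_{\inn}$ random linear code with slack $\eps/2$, over any field of size $q\ge k^{28/\eps}$, yields (with high probability, hence for an explicit choice) distance $\ge 1-\rho_{\inn}-\eps/4 = 1-\rho-\eps/2 =: \delta_{\inn}$ and list recovery from radius $1-\rho_{\inn}-\eps/2 = 1-\rho-3\eps/4 = \delta_{\inn}-\eps_0$, with input lists of size $k$ and output list size $K = (4k/\eps)^{4k/\eps} = \exp(O((k/\eps)\log(k/\eps)))$; it is convenient to take $q = k^{O(1/\eps)}$ so that $\log|\Sigma_{\inn}| = \log q = O((1/\eps)\log k)$. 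For the outer code I would take its distance $\delta_{\out}$ and unique-decoding radius $\delta_{\dec}$ to be $\mathrm{poly}(\eps)$, with decoding time $O(n)$ and parity checks of size $O(1/\eps^3)$ (available from Tanner/\Zemor constructions). Finally, I would choose the degree $d$ just large enough that $\lambda\le 2\sqrt{d-1}/d$ satisfies the hypothesis $\lambda\le \eps^2\delta_{\dec}^2/(4\cdot 5^6\,k^2 K^2)$ of \cref{thm:ael_recovery_instantiation} (which a fortiori gives $\lambda\le\gamma^2/2500$ for $\gamma=\eps\delta_{\dec}/(5kK)$, the condition needed by \cref{lem:regularity-factor} to build the regular factor); since $K = \exp(O((k/\eps)\log(k/\eps)))$ and $\delta_{\dec}=\mathrm{poly}(\eps)$, it suffices to take $d = \exp(O((k/\eps)\log(k/\eps)))$.

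Given these choices the six claims then follow by bookkeeping. Rate: $\rho(\AELC)\ge\rho_{\out}\rho_{\inn} = (1-\eps/4)(\rho+\eps/4)\ge\rho$. Distance: \cref{thm:ael_distance} gives $\delta(\AELC)\ge\delta_{\inn}-\lambda/\delta_{\out}\ge 1-\rho-\eps/2-\eps/2 = 1-\rho-\eps$, using $\lambda\le\eps\delta_{\out}/2$. Items~3 and~4: \cref{thm:ael_recovery_instantiation} applies with $\eps_0=\eps/4$, so $\AELC$ is list recovered up to $\delta_{\inn}-2\eps_0 = 1-\rho-\eps$ in randomized time $\Ot{n\cdot|\Sigma_{\inn}|^d + kK\cdot n} + T_{\dec}\cdot\exp(\exp(O(k^4K^4/(\eps^3\delta_{\dec}^3))))$; here $|\Sigma_{\inn}|^d$, $kK$ and the doubly-exponential factor are all constants depending only on $k$ and $\eps$, and $T_{\dec}=O(n)$, so the total is $\widetilde{O}_{k,\eps}(n)$, while the same theorem bounds the list size by $\exp(\exp(O(k^4K^4/(\eps^3\delta_{\dec}^3)))) = \exp(\exp(\exp(O((k/\eps)\log(k/\eps)))))$, since $K^4 = \exp(O((k/\eps)\log(k/\eps)))$ absorbs the $k^4/(\eps^3\delta_{\dec}^3)$ prefactor. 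Alphabet: $q^d = \exp(d\log q) = \exp(\exp(O((k/\eps)\log(k/\eps))))$, the $O((1/\eps)\log k)$ factor from $\log q$ being lower order than $d$. Parity checks: $\calC_{\inn}$ has $\F_q$-parity checks of size $\le d$, and since the AEL construction preserves LDPC-ness, composing these with the size-$O(1/\eps^3)$ checks of $\calC_{\out}$ gives parity checks of size $O(d/\eps^3) = \exp(O((k/\eps)\log(k/\eps)))$ for $\AELC$.

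The work here is not conceptual but lies in getting the nested-exponential bookkeeping exactly right — in particular, in checking that the quantity inside the outermost exponentials is $(k/\eps)\log(k/\eps)$ and not something coarser. The single driver of all three exponential levels is the inner output list size $K = (\Theta(k/\eps))^{\Theta(k/\eps)}$ from \cite{LS25}: it forces $\lambda\le\exp(-\Theta((k/\eps)\log(k/\eps)))$, hence $d = \exp(\Theta((k/\eps)\log(k/\eps)))$, and thence the alphabet and parity-check sizes; the $\mathrm{poly}(k,1/\eps)$ factors coming from $\delta_{\dec}$, from $q$, and from the $k^4/\eps^3$ term in \cref{thm:ael_recovery_instantiation} are all lower order and vanish into the $O(\cdot)$. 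One should also verify that a single polynomially-good choice of $d$ simultaneously meets the $\lambda$-smallness requirements of both \cref{thm:ael_recovery_instantiation} and \cref{lem:regularity-factor}, and that explicit Ramanujan-type expanders of the required (constant) degree exist — both routine given the cited constructions.
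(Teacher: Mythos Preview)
Your proposal is correct and follows essentially the same approach as the paper: the same choice of inner code (a random linear code with the Li--Shagrithaya list-recovery guarantee), the same Tanner/\Zemor outer code with rate $1-\eps/4$ and $\mathrm{poly}(\eps)$ distance, the same invocation of \cref{thm:ael_recovery_instantiation} at slack $\eps/4$, and the same identification of $K=(O(k/\eps))^{O(k/\eps)}$ as the driver of all nested exponentials. The parameter bookkeeping matches the paper's up to constants absorbed in the $O(\cdot)$.
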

\begin{proof}

Let $d = 2^{O(\frac{k}{\eps} \log \frac{k}{\eps})}$ be such that there exist explicit infinite families of $(n,d,\lambda)$-expander graphs for some $\lambda \leq 2^{-C\frac{k}{\eps}\log\frac{k}{\eps}}$, where $C$ is a large constant.
Let $\calC_{\inn} \sub \F_q^d$ be a random linear code of rate $\rho_{\inn} = \rho + \frac{\eps}{4}$ over $\F_q$, with $q = \lsLR^{2^6/\eps} = \lsLR^{O(1/\eps)}$, so that $\calC_{\inn}$ has distance at least $1-\rho_{\inn}-\frac{\eps}{4} =1-\rho-\frac{\eps}{2}$ and is list recoverable from radius $1-\rho_{\inn}-\frac{\eps}{4} = 1-\rho-\frac{\eps}{2}$ with a list size of at most $K=(8\lsLR/\eps)^{8\lsLR/\eps}$.

Let $\calC_{\out} \subseteq (\F_q^{\rho_{\inn} \cdot d})^L$ be the same choice as in the case of list decoding. That is, $\calC_{\out}$ is a linear code with rate $\rho_{\out} = 1 - \eps/4$ and distance (say) $\delta_{\out} = \eps^3/2^{15}$, and decodable up to radius $\delta_{\dec} = \eps^3/2^{17}$ in linear time $O(n)$. 
%Explicit families of such codes can be obtained via Tanner code constructions and their linear time unique decoders. 
By our choice of parameters,
\[
	\rho(\AELC) ~\geq~ \rho_{\out} \cdot \rho_{\inn} ~=~ (1-\eps/4) \cdot (\rho + \eps/4) ~\geq~ \rho \mper
\]

The claim about distance again follows from \cref{thm:ael_distance}. Using \cref{thm:ael_recovery_instantiation} with $\delta_{\inn} = 1-\rho-\frac{\eps}{2}$ and decodability
%\footnote{We use decodability up to a slightly smaller radius to make it compatible with \cref{thm:ael_decoding_instantiation}. Tracing back further, we see that \cref{thm:ael_decoding_technical} only works if the decoding radius (for $\AELC$) $\beta$ is separated from the distance $\delta_{\inn}$ of the inner code. Although \cref{thm:ael_decoding_technical} does not necessarily need such separation between decoding radius of the \emph{inner code} $\calC_{\inn}$ and $\delta_{\inn}$, making them separated does not make much difference for our application.}  
of $\calC_{\inn}$ up to $1-\rho-\frac{3\eps}{4}$ with output list size $(8\lsLR/\eps)^{8\lsLR/\eps}$, we can decode $\AELC$ up to radius $1-\rho-\eps$ in time 
\[
	\Ot{n\cdot q^d + \lsLR K \cdot n}+n\cdot \exp(O(\lsLR K^2/\eps^{12})) = \widetilde{O}_{\lsLR, \eps}(n) \mcom
\]
as long as $\lambda \leq \frac{(\eps/4)^2\cdot (\eps^3/2^{17})^2}{4\cdot 5^6\cdot \lsLR^2 \cdot(8\lsLR /\eps)^{16\lsLR/\eps}} = \frac{1}{2^{O(\frac{k}{\eps}\log\frac{k}{\eps})}}$. We chose the degree $d$ so that this is true.

The alphabet size of the code $\AELC$ is $q^d = 2^{O(k/\eps)\cdot 2^{O(\frac{k}{\eps}\log \frac{k}{\eps})}} = 2^{2^{O(\frac{k}{\eps}\log \frac{k}{\eps})}}$, and $\AELC$ has parity checks of size at most $O(d \cdot (1/\eps^3)) = 2^{O(\frac{k}{\eps}\log\frac{k}{\eps})}$.
\end{proof}

\begin{remark}
Note that compared to the list decoding case, the bounds for list recovery are off by an exponential factor. The source of this loss is the exponential list size bound from \cite{LS25} for random linear codes that we use as inner codes. This can be remedied by using a uniformly random (non-linear) code instead of the random linear code as an inner code, which are known to achieve the optimal list size of $O(k/\eps)$ \cite{Res20:thesis} instead of the $(k/\eps)^{O(k/\eps)}$ that we use right now. However, we chose to instantiate with the suboptimal (in terms of list size) random \emph{linear} code because they preserve additional structural properties such as linearity, and allow us to get LDPC codes that achieves list recovery capacity.
\end{remark}

%!TEX root=main.tex

%%% Local Variables:
%%% mode: latex
%%% TeX-master: "main"
%%% End:

\section{Algorithms for Tanner Codes}\label{sec:tanner}

Let $G = (L,R,E)$ be a bipartite $(n,d,\lambda)$-expander. Let $\codeL$ and $\codeR$ be two codes over a common alphabet $\Sigma$, common blocklength $d$ and distances $\delL$ and $\delR$ respectively, so that the Tanner code on graph $G$ with left code $\codeL$ and right code $\codeR$ is defined as
\[
	\TanC = \{ h\in \Sigma^E ~|~ \forall \li \in L, ~ h_{\li} \in \codeL \text{ and } \forall \ri\in R, ~ h_{\ri} \in \codeR\} \mper
\] 
%is of distance at least $\sqrt{\delL \delR}\parens*{\sqrt{\delL \delR} - \lambda} \geq \delL \delR - \lambda$.

\begin{lemma}[Distance of Tanner codes]\label{lem:tanner_dist}
%	Let $\TanC$ be a Tanner code defined using left code $\codeL$ with distance $\delL$ and right code $\codeR$ with distance $\delR$, both defined over the alphabet $\Sigma$, on the graph $G = (L,R,E)$ that is an $(n,d,\lambda)$-expander. 
	The distance of the code $\TanC$ is at least $\sqrt{\delL \delR}\parens*{\sqrt{\delL \delR} - \lambda} \geq \delL \delR - \lambda$.
\end{lemma}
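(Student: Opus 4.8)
The plan is to mimic the proof of the AEL distance bound (\cref{thm:ael_distance}), but now using expansion on \emph{both} sides of the graph, since the Tanner code imposes constraints on the neighborhoods of left as well as right vertices. Fix two distinct codewords $h_1, h_2 \in \TanC$, and let $F \subseteq E$ be the set of edges on which they differ, so that $\dis(h_1,h_2) = \abs{F}/(nd)$; I want to show $\abs{F} \geq \sqrt{\delL\delR}(\sqrt{\delL\delR} - \lambda)\cdot nd$. Let $L' = \inbraces{\li \in L : h_{1\li} \neq h_{2\li}}$ and $R' = \inbraces{\ri \in R : h_{1\ri} \neq h_{2\ri}}$ be the sets of vertices whose local views differ. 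Every edge of $F$ has its left endpoint in $L'$ and its right endpoint in $R'$, so $F \subseteq E_G(L', R')$, hence $\abs{F} \leq E_G(L',R')$.

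The two key local observations are: (i) each $\li \in L'$ contributes at least $\delL \cdot d$ edges to $F$ (since $h_{1\li}, h_{2\li}$ are distinct codewords of $\codeL$), so $\abs{F} \geq \delL \cdot d \cdot \abs{L'}$, giving $\abs{L'} \leq \abs{F}/(\delL d)$; and symmetrically (ii) $\abs{R'} \leq \abs{F}/(\delR d)$. Combining $\abs{F} \leq E_G(L',R')$ with the expander mixing lemma, $E_G(L',R') \leq \lambda d\sqrt{\abs{L'}\abs{R'}}$ — wait, more precisely $E_G(L',R') \leq \frac{d}{n}\abs{L'}\abs{R'} + \lambda d \sqrt{\abs{L'}\abs{R'}}$. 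Actually I should be slightly more careful: since every edge incident to $L'$ that lies in $F$ also lands in $R'$, we get the cleaner bound that the number of edges from $L'$ to $R'$ is at least $\delL d \abs{L'}$, but to get the stated (sharper) form I should instead directly estimate using the mixing lemma on the pair $(L', R')$ and substitute the two inequalities from (i) and (ii). Writing $x = \abs{L'}/n$ and $y = \abs{R'}/n$, the mixing lemma gives $\abs{F} \leq E_G(L',R') \leq dn(xy + \lambda\sqrt{xy})$, while (i) and (ii) give $\abs{F} \geq \delL d n x$ and $\abs{F} \geq \delR d n y$. Let $\phi = \abs{F}/(dn) = \dis(h_1,h_2)$; then $x \leq \phi/\delL$, $y \leq \phi/\delR$, so $xy \leq \phi^2/(\delL\delR)$ and $\sqrt{xy} \leq \phi/\sqrt{\delL\delR}$, hence $\phi \leq \phi^2/(\delL\delR) + \lambda\phi/\sqrt{\delL\delR}$. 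Dividing by $\phi$ (nonzero since $h_1 \neq h_2$) yields $1 \leq \phi/(\delL\delR) + \lambda/\sqrt{\delL\delR}$, i.e. $\phi \geq \delL\delR(1 - \lambda/\sqrt{\delL\delR}) = \sqrt{\delL\delR}(\sqrt{\delL\delR} - \lambda)$, as claimed. The final inequality $\sqrt{\delL\delR}(\sqrt{\delL\delR}-\lambda) \geq \delL\delR - \lambda$ follows since $\sqrt{\delL\delR} \leq 1$ makes $\lambda\sqrt{\delL\delR} \leq \lambda$.

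**Main obstacle.** There is no deep obstacle here — the argument is a routine two-sided application of the expander mixing lemma, essentially identical in spirit to \cref{thm:ael_distance}. The only place to be a little careful is the bookkeeping with the quadratic inequality in $\phi$: one must divide by $\phi > 0$ (valid since the two codewords are distinct, forcing $F \neq \emptyset$), and one should make sure the mixing-lemma error term is written with $\sqrt{\abs{L'}\abs{R'}}$ rather than the weaker $n$, since using the weaker bound would only give $\delL\delR - \lambda$ directly and miss the sharper $\sqrt{\delL\delR}(\sqrt{\delL\delR}-\lambda)$ form. I would also double-check the degenerate cases (e.g. $L'$ or $R'$ empty, which cannot happen for distinct codewords, or $\delL\delR$ very small) but these do not affect the bound.
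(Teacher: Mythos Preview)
Your argument is correct and is precisely the standard proof of this classical bound. Note that the paper itself states \cref{lem:tanner_dist} without proof, treating it as a well-known fact (going back to Sipser--Spielman and \Zemor); your write-up supplies exactly the two-sided expander mixing lemma argument one would expect, including the sharper $\sqrt{\delL\delR}$ form.
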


That is, by making $\lambda$ small enough, the distance of Tanner codes approaches the product of distance $\delL \delR$. We will not be much concerned with the rate of this code, but we mention that if the codes $\codeL$ and $\codeR$ are linear codes with rates $\rho_L$ and $\rho_R$ respectively, then the rate of $\TanC$ is at least $\rho_L+\rho_R-1$. In particular, if both $\rho_L$ and $\rho_R$ are close to 1, so is the rate of $\TanC$, and this makes Tanner codes particularly suitable as an outer code in the AEL code construction.

The following lemma is well-known as unique decodability of Tanner codes and goes back to \cite{SS96, Zemor01}, but we need it with two minor modifications: (1) we need an error-and-erasure version \cite{SkaRoth03}, and (2) we will think of decoding from a $g \in (\codeL \cup \{\bot\})^L$ instead of from $g\in (\Sigma \cup \{\bot\})^E$. For this reason, we adapt the proof of \cite[Lemma 2.4]{RWZ21} to this setting in \cref{sec:appendix_tanner}.

\begin{lemma}[Unique-decodability of Tanner codes]\label{lem:err_and_erasure}
	Let $\TanC$ be a Tanner code defined as in \cref{lem:tanner_dist}. Fix $\eps > 0$ such that $\lambda \leq \frac{\eps}{8} \cdot \min(\delL,\delR)$. Given a $g\in (\codeL \cup \{\bot\})^L$ with $s\cdot |L|$ many erasures, we have the following:
	\begin{itemize}
	\item there is at most one $h\in \TanC$ that satisfies $2 \dis_L(g,h) + s \leq \delR - 4\eps$.
	\item If such an $h$ exists, then there is an algorithm that given $g$, runs in linear time $O(n\cdot |\Sigma|^d)$, and outputs $h$.
	\end{itemize}
\end{lemma}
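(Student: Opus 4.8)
The plan is to adapt \Zemor's iterative flip-decoding argument, in the form used by~\cite{RWZ21}, to the setting where the input $g \in (\codeL \cup \{\bot\})^L$ is indexed by the \emph{left} vertices (rather than by edges), and where we allow both errors and erasures. First I would convert $g$ into a partial edge-labeling: for each $\li$ with $g_{\li} \neq \bot$, write the $d$ symbols of $g_{\li}$ on the edges of $N(\li)$; edges incident to an erased left vertex are themselves marked as erased. The uniqueness claim is handled first and separately: if $h, h' \in \TanC$ both satisfied $2\dis_L(\cdot, \cdot) + s \leq \delR - 4\eps$, then $\dis_L(h, h') \leq 2(\delR - 4\eps - s)/2 + \dots$ — more precisely $\dis_L(h,h')$ is bounded by the sum of the two disagreement radii plus $s$, which is at most $\delR - 4\eps$; but by \cref{lem:tanner_dist} any two distinct codewords are at edge-distance $\geq \delL\delR - \lambda$, and a short expander-mixing argument (exactly as in the proof of \cref{thm:ael_distance}) upgrades a left-distance upper bound of $\delR - 4\eps$ into a contradiction with the right-code distance $\delR$ once $\lambda \leq \tfrac{\eps}{8}\min(\delL,\delR)$. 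So at most one such $h$ exists.

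For the algorithmic part, I would run the standard two-phase \Zemor decoder. \textbf{Phase 1 (right side, erasure-and-error correction):} for each $\ri \in R$, look at the (partial) labeling on $N(\ri)$; since $\codeR$ has distance $\delR$, if the number of erased coordinates plus twice the number of erroneous coordinates at $\ri$ is below $\delR$, we can uniquely decode $h_{\ri}$ within $\codeR$ and overwrite those $d$ edges. \textbf{Phase 2 (left side):} symmetrically decode each $N(\li)$ within $\codeL$. Alternate these phases $O(\log n)$ times. The key invariant, proved by the expander mixing lemma, is that the set of ``bad'' vertices (those still carrying an error or an undecodable erasure pattern) shrinks by a constant factor in each round, provided the initial fraction of bad left vertices — which is controlled by $\dis_L(g,h) + s \leq (\delR-4\eps)/2$ — is small enough relative to $\delL, \delR$ and the spectral gap. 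This is where the hypothesis $\lambda \leq \tfrac{\eps}{8}\min(\delL,\delR)$ and the slack $4\eps$ in the radius are used: they guarantee the ``$< \delR$'' decoding condition holds at a constant fraction more vertices than are bad, so the standard potential/expansion argument gives geometric decrease. After $O(\log n)$ rounds there are no bad vertices and the labeling is a codeword of $\TanC$, necessarily equal to $h$ by uniqueness. Each round touches each edge $O(1)$ times and does one bounded-distance decoding of $\codeL$ or $\codeR$ per vertex, which by brute force costs $|\Sigma|^d$; so the total runtime is $O(n \cdot |\Sigma|^d \cdot \log n)$, which we absorb into $O(n \cdot |\Sigma|^d)$ by noting the base-code decodings in later rounds only need to be re-run at vertices adjacent to vertices that changed, a standard amortization giving genuinely linear total work in the number of edges.

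The main obstacle I anticipate is bookkeeping the erasures correctly through the alternation: an edge can be ``erased'' because its left vertex was erased in the input, and we must track when Phase-1 right-decoding is allowed (it needs $2(\#\text{errors}) + \#\text{erasures} < \delR$ in the neighborhood, counting an edge as erased only if not yet filled in) versus when it converts an erasure into a (possibly wrong) symbol. The cleanest route is to carry, throughout, a labeling in $\Sigma \cup \{\bot\}$ on the edges, maintain the two quantities ``fraction of left vertices $\li$ with $h_{\li}$ currently wrong'' and ``fraction currently still containing a $\bot$'', and show via expander mixing that a single right-phase followed by a single left-phase reduces $2(\text{wrong-fraction}) + (\bot\text{-fraction})$ by a constant factor as long as it starts below $\delR - 4\eps$. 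I would lift this argument essentially verbatim from~\cite[Lemma 2.4]{RWZ21}, substituting left-indexed input for edge-indexed input and keeping the error/erasure weighting, and present the details in \cref{sec:appendix_tanner}.
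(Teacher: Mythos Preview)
Your proposal is correct and follows essentially the same route as the paper: iterative \Zemor-style alternating left/right local decoding, with the expander mixing lemma driving geometric decrease of the bad-vertex fraction, and the standard ``only touch changed neighborhoods'' amortization to get genuinely linear time. The one structural difference is that the paper does not prove uniqueness separately; it simply fixes an arbitrary $h$ satisfying $2\dis_L(g,h)+s \leq \delR - 4\eps$ and shows the (deterministic) algorithm always outputs that $h$, which yields uniqueness for free and sidesteps the somewhat awkward left-distance-to-Tanner-distance reduction you sketch. The paper also disposes of your anticipated erasure-bookkeeping obstacle cleanly: after a \emph{single} right-side round the error fraction $\Delta_R(g_2,h)$ is already $\leq 2\lambda/\eps$, so all erasures are filled in immediately and every subsequent round is pure error correction with no $\bot$ symbols to track.
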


Tanner codes are well known for this linear time unique decodability, and were the first explicit construction of LDPC codes. Beyond their practical uses, they were used as the base (outer) code in an application of the AEL procedure by Guruswami and Indyk \cite{GI05} to obtain linear-time unique decodable near the Singleton bound. However, not much was known about the list decoding of Tanner codes until the work of \cite{JST23}, who used the Sum-of-Squares hierarchy to decode them up to their Johnson bound.

In this section, we show how Tanner codes can be list decoded even beyond the Johnson bound, using simple extensions of the techniques we already saw \cref{sec:ael} for AEL codes. Suppose the code $\codeL$ is decodable with constant-sized lists (independent of its blocklength $d$) up to a radius $\decL$, and the code $\codeR$ is decodable with constant-sized lists up to a radius $\decR$. We will design an algorithm that can decode arbitrarily close to the radius $\decT$ (by making $\lambda$ small enough), where 
\[
	\decT ~\defeq \min\parens*{\delL\decR,\decL\delR}
\]
This radius matches the decoding radius obtained for tensor codes, which are dense analogs of Tanner codes, by \cite{GGR09}. Note that in the worst case, the decoding radius (with constant sized lists) for $\codeL$ is just $\decL = \calJ(\delL)$, where $\calJ(\delL)$ denotes the Johnson bound corresponding to distance $\delL$.

Suppose the codes $\codeL$ and $\codeR$ are same (this was the setting addressed by \cite{JST23}), and $\lambda$ is small enough. Let $\delta$ be their common distance, and let $\delta^{\dec}$ be their common decoding radius, which is at least $\calJ(\delta)$. Then based on just the distance $\delta$ of the inner code, our algorithm will decode up to a radius $\delta\cdot \calJ(\delta)$, which is already better than $\calJ(\delta^2)$. We note that even combinatorial decodability beyond Johnson bound was not known for Tanner codes at all. When we know more about the inner code, such as if it is chosen to be decodable up to its distance $\delta$ (\ie $\delta^{\dec} \approx \delta$), then we can conclude that the Tanner code is decodable all the way up to its (designed) distance of $\delta^2$.

\subsection{List decoding Tanner codes}
Let $g\in \Sigma^E$ be a received word, and our goal is to design an algorithm that for any $\eps>0$, outputs the list 
\[
	\calL = \calL(g, \decT-\eps) = \{h\in \TanC ~\vert~ \Delta(g,h) \leq \decT-\eps\} \mper
\] 
\paragraph{Local lists.}
We start by $\codeL$-decoding $g_{\li}$ for each $\li \in L$ up to radius $\decL$ to obtain a collection of lists $\{\listl\}_{\li\in L}$ with each list containing at most $\lsL$ codewords of $\codeL$. Likewise, we also $\codeR$-decode $g_{\ri}$ for each $\ri\in R$ up to radius $\decR$ to obtain another collection of lists $\{\listr\}_{\ri\in R}$, with each list containing at most $\lsR$ codewords of $\codeR$.
	
	Let $h$ be an arbitrary element of $\calL(g,\decT-\eps)$, which in particular means that $h\in \TanC$.
	\begin{proposition}\label{prop:local_presence}
		For any $h\in \calL(g,\decT-\eps)$, it holds that $\Pr{\li\in L}{h_{\li}\not\in \listl} \leq \delR-\eps$. Similarly, on the right, it holds that $\Pr{\ri\in R}{h_{\ri}\not\in \listr} \leq \delL-\eps$.
	\end{proposition}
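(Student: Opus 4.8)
The plan is to prove each of the two bounds by a direct edge-counting argument. In contrast to the analogous \cref{clm:local-membership} for AEL codes --- where the expander mixing lemma is invoked to pass between a distance measured on the right vertices and agreements measured on edges --- here both $\Delta(g,h)$ and the local distances $\Delta(g_{\li},h_{\li})$ are fractions of \emph{edges}, so no spectral input is needed; only the definition of $\decT$ and the fact that each edge of $G$ has a unique left endpoint enter.

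First I would treat the left statement. Since $h \in \TanC$ we have $h_{\li} \in \codeL$ for every $\li \in L$, so by the construction of the local lists, $h_{\li} \notin \listl$ can hold only when $\Delta(g_{\li},h_{\li}) > \decL$. Put $L' = \{\li \in L : \Delta(g_{\li},h_{\li}) > \decL\}$, so that $\Pr{\li\in L}{h_{\li}\not\in \listl} \le |L'|/n$. Each $\li \in L'$ is incident to strictly more than $\decL \cdot d$ edges $e$ with $g_e \ne h_e$, and since each edge has a unique left endpoint these edge sets are pairwise disjoint across $\li \in L'$; hence the total number of edges $e$ with $g_e \ne h_e$ is at least $\decL \cdot d \cdot |L'|$. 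But that total equals $\Delta(g,h) \cdot |E| \le (\decT - \eps)\, nd$ since $h \in \calL(g,\decT-\eps)$. Comparing and cancelling $nd$ yields $|L'|/n < (\decT-\eps)/\decL$.

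To finish, I would use $\decT = \min(\delL\decR,\decL\delR) \le \decL\delR$, so that $\decT/\decL \le \delR$, together with $\decL \le 1$, so that $\eps/\decL \ge \eps$; combining, $|L'|/n < \decT/\decL - \eps/\decL \le \delR - \eps$, which is the asserted inequality. The right-hand statement follows by the identical argument with $L,\codeL,\decL$ replaced by $R,\codeR,\decR$, now using $\decT \le \delL\decR$. I do not expect a genuine obstacle here: the one point that needs care is the $\eps$-bookkeeping --- the bound that emerges naturally is $(\decT-\eps)/\decL$, and one must observe that a decoding radius satisfies $\decL \le 1$ in order to absorb the $1/\decL$ factor on the $\eps$-term and recover the clean $\delR - \eps$ stated in the proposition (if $\decL = 0$ then $\decT = 0$ and the statement is vacuous for $\eps > 0$).
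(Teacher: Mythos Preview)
Your proposal is correct and follows essentially the same approach as the paper's proof: both use the observation that $h_{\li}\notin\listl$ forces $\Delta(g_{\li},h_{\li})>\decL$, then average (equivalently, edge-count) over $\li$ to bound $\Pr{\li}{h_{\li}\notin\listl}$ by $(\decT-\eps)/\decL$, and finish with $\decT\le \decL\delR$ and $\decL\le 1$. The paper writes this via a conditional-expectation decomposition of $\Delta(g,h)=\Ex{\li}{\Delta(g_{\li},h_{\li})}$ rather than your explicit set $L'$ and disjoint-neighborhood count, but the two are the same argument.
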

	\begin{proof}
		Unlike in the case of AEL codes, this does not require the expander mixing lemma, but is based on simple counting using our choice of $\decT$.
		\begin{align*}
			\Delta(g,h) = \Pr{e\in E}{g_e\neq h_e} &= \Ex{\li\in L}{\Delta(g_{\li},h_{\li})} \\
			&= \Ex{\li\in L}{\Delta(g_{\li},h_{\li}) ~\big\vert~ h_{\li}\in \listl} \cdot \Pr{\li\in L}{h_{\li}\in \listl} + \Ex{\li\in L}{\Delta(g_{\li},h_{\li}) ~\big\vert~ h_{\li}\not\in \listl} \cdot \Pr{\li\in L}{h_{\li}\not\in \listl} \\
			&> 0\cdot \Pr{\li\in L}{h_{\li}\in \listl} + \decL\cdot \Pr{\li\in L}{h_{\li}\not\in \listl} \\
			&= \decL\cdot \Pr{\li\in L}{h_{\li}\not\in \listl}
		\end{align*}
		Using $\Delta(g,h) \leq \decT-\eps < \decL\cdot \delR-\eps$ gives 
		\[
			\Pr{\li\in L}{h_{\li}\not\in \listl} \leq \frac{\decL\cdot \delR-\eps}{\decL} = \delR - \frac{\eps}{\decL} \leq \delR - \eps \mper
		\]
		A similar argument on the right shows $\Pr{\ri\in R}{h_{\ri}\in \listr} \leq \delL$.
	\end{proof}
	
	Thus, $h_{\li}$ appears in many lists obtained by the vertex by vertex decoding above. To recover $h$, the algorithm must discover which lists it appears in, and among them which element of the list equals $h_{\li}$. This is exactly the type of problem we solved for AEL codes in \cref{sec:ael}, using tools from the regularity lemma of \cref{sec:reg_lemma}.
	
\subsubsection{Picking from local lists via regularity}	
If a list $\listl$ has $ < \lsL$ codewords, we add arbitrary codewords from outside the list to ensure that all lists $\{\listl\}_{\li\in L}$ are of size \emph{exactly} $\lsL$. Further, we assume that the lists are ordered, so that the $\codeL$ codewords in each $\li\in L$ can be indexed using an index $i\in [\lsL]$. Consequently, we will denote the $i^{th}$ codeword in $\listl$ by $\listl[i]$. 
	
	We repeat the same process for lists $\{\listr\}_{\ri\in R}$ so that they are all ordered lists of size exactly $\lsR$.
%	\begin{definition}[Left and right list locator functions]\label{lem:codeword_to_sets}
%		Suppose we are given a $g \in \Sigma^E$, which we list decode on each $\li \in L$ and $\ri \in R$ to obtain local lists $\{\listl\}_{\li\in L}$ and $\{\listr\}_{\ri\in R}$. Then to each $f \in \codeL^L$, we can associate a left list-locator function $A_f: L \rightarrow \{*\} \cup [\lsL]$, defined as
%		\[
%			A_f(\li) = \begin{cases} i \mcom \text{ if } f_{\li} = \listl[i] \\
%			* \mcom \text{ if } f_{\li} \not\in \listl
%			\end{cases}
%		\]
%		Similarly, we can also associate to each $f\in \codeR^R$ a right list-locator function $B_f: R \to \{*\}\cup [\lsR]$, defined as 
%		\[
%			B_f(\ri) = \begin{cases} j \mcom \text{ if } f_{\ri} = \listr[j] \\
%			* \mcom \text{ if } f_{\ri} \not\in \listr
%			\end{cases}
%		\]
%	\end{definition}
%
\paragraph{Subgraphs of $G$.}
We next use the local lists $\{\listl\}_{\li\in L}$ and $\{\listr\}_{\ri\in R}$ to create $\lsL \lsR$ many subgraphs of $G$. 
%We will obtain factors for $L$ and $R$ using regularity lemmas on these graphs
	
For each $i\in [\lsL]$ and $j\in [\lsR]$, we define the $H_{ij}$ to be the subgraph of $G$ obtained by deleting all edges $e = (\li,\ri)\in E(G)$ such that the symbol induced on $e$ by $\listl[i]$ and $\listr[j]$ do not match. Formally, $H_{ij}=(L,R,E(H_{ij}))$, where
\[
	E(H_{ij}) = \inbraces{(\li,\ri)\in E(G)~\vert~ \listl[i] = \listr[j]}
\]
Note that the graphs $H_{ij}$ only depend on the input $g$, and so can be computed in linear time.

For every $i\in [\lsL]$ and $j\in [\lsR]$, we can use \cref{lem:regularity-family} on $H_{ij}$ to obtain an $(\eta,\gamma)$-regular family for $H_{ij}$. Let $\family$ be the union of all of these families, so that $\family$ is simultaneously $(\eta,\gamma)$-regular for all $H_{ij}$, where $i\in [\lsL]$ and $j\in [\lsR]$.

% factors depend only on the input $g$, and can be computed efficiently via \cref{lem:regularity-factor}.
\paragraph{From codewords to sets.}
Fix an $h\in \calL(g,\decT-\eps)$. We define $\lsL$ subsets $A_1,A_2,\cdots ,A_{\lsL}$ of $L$ based on the positions of $h_{\li}$ in $\listl$.
		\[
			A_i = \{\li\in L ~:~ \listl[i] = h_{\li}\} \mper
		\]
		Similarly, we also define $\lsR$ subsets $B_1,B_2,\cdots ,B_{\lsL}$ of $R$ based on the positions of $h_{\ri}$ in $\listr$., defined as
		\[
			B_j = \{\ri \in R ~:~ \listr[j] = h_{\ri}\} \mper
		\]
		Note that both $A_1,A_2,\cdots ,A_{\lsL}$ and $B_1,B_2,\cdots ,B_{\lsR}$ are pairwise disjoint collections of sets. Also note that a vertex $\li\in L$ is present in $\cup_{i\in [\lsL]}A_i$ if and only if $h_{\li}\in \listl$. This means that \cref{prop:local_presence} translates to 
		\begin{align*}
		\sum_{i} |A_i| \geq (1-\delR+\eps)|L| \qquad \text{and} \qquad \sum_{j} |B_j| \geq (1-\delL+\eps)|R|
		\end{align*}
		Further, we note that for any $i\in [\lsL]$ and $j\in [\lsR]$, no edges between $A_i$ and $B_j$ are deleted when constructing $H_{ij}$. That is, $E_{H_{ij}}(A_i,B_j) = E_G(A_i,B_j)$. This is because for any edge $e=(\li,\ri)\in A_i\times B_j$, $(\listl)_e = h_e = (\listr)_e$.
%		We also define 
%		\[
%			A_0 = L \setminus \parens*{A_1 \cup A_2 \cup \cdots A_{\lsL}} \quad \text{ and } \quad B_0 = R \setminus \parens*{B_1 \cup B_2 \cup \cdots B_{\lsL}} \mper
%		\]
%	
%	\begin{proposition}
%		If the codeword $f$ is such that $\Delta(g,f) < \decT-\eps$, then its left and right list locator functions $A_f$ and $B_f$ satisfy the following properties:
%		\begin{itemize}
%			\item $\Pr{\li \in L}{A_f(\li) \neq *} \geq 1-\delR+\eps$.
%			\item $\Pr{\ri \in R}{B_f(\ri) \neq *} \geq 1-\delL+\eps$.
%		\end{itemize}
%	\end{proposition}
%\begin{proposition}
%		For the sets defined above, we have the following properties:
%		\begin{itemize}
%			\item $\sum_{i} |A_i| \geq (1-\delR+\eps)|L|$.
%			\item $\sum_{j} |B_j| \geq (1-\delL+\eps)|R|$.
%			\item No edges between $A_i$ and $B_j$ are deleted when constructing $H_{ij}$. That is,
%			\[
%				\forall i\in [\lsL], j\in [\lsR], \qquad E_{H_{ij}}(A_i, B_j) = E_{G}(A_i, B_j) \mper
%			\]
%		\end{itemize}
%	\end{proposition}
%	\begin{proof}
%		
%\end{proof}

If we knew either the collection of sets $A_1,A_2,\cdots ,A_{\lsL}$ or the collection of sets $B_1,B_2,\cdots ,B_{\lsR}$, then we could recover $h$ by marking $A_0$ or $B_0$ as erasures and using \cref{lem:err_and_erasure}. Therefore, finding these sets is as hard as finding $h$ itself.
However, given any family $\family$ on $L$, we can find a collection of disjoint sets $S_1,S_2,\cdots ,S_{\lsL} \sub L$ using \cref{clm:enumeration} such that 
\begin{align*}
	&\forall i \in [\lsL], ~~\norm{\one_{A_i} - \one_{S_i}}_{\family} \leq \eta
\end{align*}

The following key claim shows that if the family $\family$ is simultaneously $(\eta,\gamma)$-regular for all the graphs $H_{ij}$, with $i\in [\lsL]$ and $j\in [\lsR]$, then any such collection $S_1,S_2,\cdots ,S_{\lsL}$ must be very close to the target sets $A_1,A_2,\cdots ,A_{\lsL}$.

%Before describing how to find strings $h$ and $h$, let us formally define what it means to be close under a factor.
%
%\begin{definition}[$\calF$-distance]
%	Let $\calF_L$ be a factor on $L$. For any two strings $h_1, h_2\in \codeL^L$, we define their $\calF_L$-distance to be
%	\[
%		\Delta_{\calF_L}(h_1,h_2) = \max_{i \in [\lsL]} \norm*{ \Ex{\one_{A_{h_1}=i}|\calF_L} - \Ex{\one_{A_{h_2}=i}|\calF_L}}_{\infty}
%	\]
%	where $\one_{A_{h}=i}: L \to \{0,1\}$ is the indicator function for $A_{h}(\li)=i$. That is, $\one_{A_{h}=i}(\li)=1$ if and only if $A_{h}(\li)=i$.
%\end{definition}
%
%\begin{lemma}[Covering/Enumeration/Compactness Lemma]
%	Let $\calF_L$ be a factor on $L$. For any $\eta>0$, there is an algorithm that outputs a list of $p$ strings $h_1,h_2,\cdots,h_p \in \codeL^L$, with $p\leq (1/\eta)^{\lsL\cdot |\calF_L|}$, such that for any $f\in \codeL^L$, there exists an $m\in [p]$ such that
%	\[
%		\Delta_{\calF_L}(f,h_m) \leq \eta \mper
%	\]
%	The algorithm runs in time $(1/\eta)^{\lsL\cdot |\calF_L|}\cdot n$.
%%	
%%	For any collection $\{A_i\}_{i\in [\lsL]}$ of pairwise disjoint subsets of $L$, there exists a collection $\fakecalA = \{\fakeA_i\}_{i\in [\lsL]}$ in the output of the algorithm such that
%%	\[
%%		\forall i \in [\lsL], \norm*{\Ex{\one_{A_i} | \calF_L} - \Ex{\one_{\fakeA_i} | \calF_L}}_{\infty} \leq \eta
%%	\]
%\end{lemma}

\begin{lemma}[Rigidity]\label{lem:tanner-rigidity}
	Let $\family$ be a family that is simultaneously $(\eta,\gamma)$-regular for all $H_{ij}$, where $i\in [\lsL]$ and $j\in [\lsR]$. Let $S_1, S_2,\cdots , S_{\lsL} \sub L$
	% and $T_1,T_2,\cdots ,T_{\lsR} \sub R$ be two 
	be a disjoint collection of sets satisfying $\max_i \norm{\one_{A_i} - \one_{S_i}}_{\family} \leq \eta$. Assume\footnote{As in the case of AEL codes, $\lambda$ being much smaller will be anyway needed to obtain $(\eta,\gamma)$-regular families.} $\lambda \leq \gamma$.
	%and $\max_j \norm{\one_{A_j} - \one_{S_j}}_{\factor} \leq \eta$. 
	Then, 
	\[
		\sum_{i\in [\lsL]} |S_i\setminus A_i| \leq \parens*{3\lsL \lsR \cdot \gamma /\eps}n \mper
%		, \quad \text{ and } \quad \sum_{i\in [\lsL]} |S_i| \geq \cdots
	\]
\end{lemma}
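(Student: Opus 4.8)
The plan is to follow the two-sided rigidity argument already carried out for AEL list recovery in the proof of \cref{lem:ael-rigidity-lr}, with only cosmetic changes: the role of the inner distance $\delta_{\inn}$ is played by the left distance $\delL$ (the bad vertices $W_i = S_i\setminus A_i$ live on $L$, and there $\listl[i]$ and $h_{\li}$ are distinct codewords of $\codeL$), and the role of the input‑list size is played by the right‑list size $\lsR$. Throughout I fix the codeword $h\in\calL(g,\decT-\eps)$ that defines the sets $A_1,\dots,A_{\lsL}\subseteq L$ and $B_1,\dots,B_{\lsR}\subseteq R$, and the only consequence of the decoding radius I will use is the bound $\sum_{j}|B_j|\geq(1-\delL+\eps)n$ coming from \cref{prop:local_presence} (the $B_j$ are disjoint, and $\cup_j B_j$ is exactly the set of $\ri$ with $h_{\ri}$ in the right list). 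I also record once and for all that, besides being $(\eta,\gamma)$‑regular for each $H_{ij}$, the factor $\factor$ is $(\eta,\gamma)$‑regular for $G$ itself: the trivial factor has this property by the expander mixing lemma as in \cref{remark:trivial_factor}, and $\factor$ refines it, so \cref{clm:refinement} applies (this is where $\lambda$ being small is used, as flagged in the footnote of the lemma).

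First I would assemble the lower bound coming from regularity. For each $i\in[\lsL]$ and $j\in[\lsR]$, the $(\eta,\gamma)$‑regularity of $\factor$ for $H_{ij}$ together with $\norm{\one_{A_i}-\one_{S_i}}_{\factor}\leq\eta$ gives $|E_{H_{ij}}(S_i,B_j)|\geq|E_{H_{ij}}(A_i,B_j)|-\gamma dn$, and since no edge of $E_G(A_i,B_j)$ is deleted in $H_{ij}$ this equals $|E_G(A_i,B_j)|-\gamma dn$, which by $(\eta,\gamma)$‑regularity of $\factor$ for $G$ is at least $|E_G(S_i,B_j)|-2\gamma dn$. Summing over all $i,j$ yields $\sum_{i,j}|E_{H_{ij}}(S_i,B_j)|\geq\sum_{i,j}|E_G(S_i,B_j)|-2\lsL\lsR\gamma dn$.

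Next I would prove the matching upper bound. Split $S_i=(S_i\cap A_i)\sqcup W_i$; on $S_i\cap A_i$ use that $H_{ij}$ is a subgraph of $G$, so $|E_{H_{ij}}(S_i\cap A_i,B_j)|\leq|E_G(S_i\cap A_i,B_j)|=|E_G(S_i,B_j)|-|E_G(W_i,B_j)|$. The one place that needs care — and the only genuinely non‑routine point — is bounding $\sum_{i,j}|E_{H_{ij}}(W_i,B_j)|$ using the disjointness of the $B_j$: fix $i$ and $\li\in W_i$; an incident edge $e=(\li,\ri)$ contributes to $E_{H_{ij}}(W_i,B_j)$ for at most one $j$, namely the $j$ with $\ri\in B_j$, and for that $j$ we have $h_e=(\listr[j])_e$, so $e$ survives in $H_{ij}$ only if $(\listl[i])_e=h_e$; as $\listl[i]$ and $h_{\li}$ are distinct $\codeL$‑codewords they agree on at most $(1-\delL)d$ edges, hence $\sum_{j}|E_{H_{ij}}(\{\li\},B_j)|\leq(1-\delL)d$ and therefore $\sum_{i,j}|E_{H_{ij}}(W_i,B_j)|\leq\sum_i|W_i|(1-\delL)d$. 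Combining, $\sum_{i,j}|E_{H_{ij}}(S_i,B_j)|\leq\sum_{i,j}|E_G(S_i,B_j)|-\sum_{i,j}|E_G(W_i,B_j)|+\sum_i|W_i|(1-\delL)d$.

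Finally I would chain the two displayed bounds, cancel the common term $\sum_{i,j}|E_G(S_i,B_j)|$, and lower‑bound $\sum_{i,j}|E_G(W_i,B_j)|$ by the expander mixing lemma: $\sum_{i,j}|E_G(W_i,B_j)|\geq\tfrac dn\bigl(\sum_i|W_i|\bigr)\bigl(\sum_j|B_j|\bigr)-\lsL\lsR\lambda dn\geq(1-\delL+\eps)d\sum_i|W_i|-\lsL\lsR\lambda dn$. Writing $w=\tfrac1n\sum_i|W_i|$ and dividing through by $dn$, the $(1-\delL)$ terms cancel and one is left with $w\eps\leq2\lsL\lsR\gamma+\lsL\lsR\lambda$; using $\lambda\leq\gamma$ gives $w\leq3\lsL\lsR\gamma/\eps$, i.e. $\sum_i|S_i\setminus A_i|=wn\leq(3\lsL\lsR\gamma/\eps)n$, as claimed. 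I do not expect any real obstacle beyond the disjointness bookkeeping in the third paragraph; the argument is structurally the same as the AEL list‑recovery rigidity lemma, the only substantive change being that the distance loss is charged against $\delL$ and the counting ranges over the $\lsR$ right lists rather than over input lists.
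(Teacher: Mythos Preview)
Your proposal is correct and follows essentially the same argument as the paper's proof: the same lower bound via $(\eta,\gamma)$-regularity (for the $H_{ij}$ and for $G$), the same upper bound by splitting $S_i$ into $S_i\cap A_i$ and $W_i$ and charging the $W_i$ contribution against the $\delL$-distance of $\codeL$ using disjointness of the $B_j$, and the same endgame via expander mixing and $\sum_j|B_j|\geq(1-\delL+\eps)n$. The only cosmetic difference is that the paper applies the expander mixing lemma inside the per-$i$ computation before summing over $i$, whereas you sum first and then apply it; the resulting inequality is identical.
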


\begin{proof}
	By the $\eta$-closeness of sets with respect to the family $\family$,
	\begin{align*}
		\sum_{i,j} \abs{E_{H_{ij}}(S_i,B_j)} &\geq \sum_{i,j} \parens*{\abs{E_{H_{ij}}(A_i,B_j)} - \gamma \cdot dn} \\
		&= \sum_{i,j} \abs{E_{H_{ij}}(A_i,B_j)} - \lsL \lsR \cdot \gamma \cdot dn \\
		&= \sum_{i,j} \abs{E_{G}(A_i,B_j)} - \lsL \lsR \cdot \gamma \cdot dn \\
		&\geq \sum_{i,j} \abs{E_{G}(S_i,B_j)} - 2 \lsL \lsR \cdot \gamma \cdot dn
	\end{align*}
	where in the last step we used that $\family$ is also $(\eta,\gamma)$-regular for $G$ (see \cref{remark:trivial_factor}).
	
	Fix some $i\in [\lsL]$. We define $W_i = S_i \setminus A_i$. For any $\li \in W_i$, it holds that $\listl[i] \neq h_{\li}$, and so $\listl[i]$ and $h_{\li}$ must differ on at least $\delL$ fraction of the edges incident on $\li$. If such an edge goes to $B_j$ for some $j$, it will get deleted in the corresponding $H_{ij}$. As a result, any $\li\in W_i$ can contribute at most $(1-\delL)d$ edges to $\sum_j E_{H_{ij}}(S_i, B_j)$.
	\begin{align*}
	 \sum_{j} |E_{H_{ij}}(S_i, B_j)| &= \sum_{j} |E_{H_{ij}}(S_i \setminus W_i, B_j)| + \sum_{j} |E_{H_{ij}}(W_i, B_j)|\\
	 &\leq \sum_{j} |E_{G}(S_i \setminus W_i, B_j)| + |W_i| (1-\delL) d\\
	 &= \sum_{j} |E_{G}(S_i, B_j)| - \sum_j |E_G(W_i,B_j)|+ |W_i| (1-\delL) d\\
	 &\leq \sum_{j} |E_{G}(S_i, B_j)| - \frac{d}{n} |W_i| \parens*{\sum_j |B_j|} + \lsR \cdot \lambda \cdot dn + |W_i| (1-\delL) d\\
	 &\leq \sum_{j} |E_{G}(S_i, B_j)| - \eps\cdot d |W_i| + \lsR \cdot \lambda \cdot dn
	\end{align*}
	where the second last inequality uses expander mixing lemma, and the last inequality uses $\sum_j |B_j| \geq (1-\delL+\eps)n$.
	Summing over $i$ and comparing,
	\begin{align*}
		\sum_{i,j} \abs{E_{G}(S_i,B_j)} - 2 \lsL \lsR \cdot \gamma \cdot dn &\leq \sum_{i,j} \abs{E_{H_{ij}}(S_i,B_j)} \leq \sum_{i,j} |E_{G}(S_i, B_j)| - \eps\cdot d \cdot \sum_i |W_i| + \lsL \lsR \cdot \lambda \cdot dn \\
		\implies \sum_i |W_i| &\leq \parens*{\frac{2\lsL \lsR \cdot \gamma + \lsL \lsR \cdot \lambda}{\eps}}n \leq \parens*{\frac{3\lsL \lsR \cdot \gamma}{\eps}}n
	\end{align*}
	where in the last inequality we used $\lambda \leq \gamma$.
%	\begin{align*}
%	 \sum_{j} |E_{H_{ij}}(S_i, B_j)| &\leq \sum_{j} |E_G(S_i, B_j)| - \parens*{|W_i|\delL d - |E_G(W_i, B_0)|} \\
%	 &\leq \sum_{j} |E_G(S_i, B_j)| - \parens*{\frac{|W_i|}{n}\eps - \lambda}dn
%	\end{align*}
\end{proof}
%
%\begin{lemma}[Extraction]
%%	Let $g\in \Sigma^E$, and let $f \in \calL(g,\decT-\eps)$. 
%	Let $S_1,S_2,\cdots ,S_{\lsL}$ be a collection of disjoint subsets of $L$	 with $\sum_i |S_i \setminus A_i| = e \cdot n$ and $\sum_i |S_i| = (1-s)\cdot n$. If $2e+s < \delR - \cdots$, then the collection $S_1,S_2,\cdots ,S_{\lsL}$ can be used to find the codeword $h$ in linear time.
%\end{lemma}
%
%\snote{Maybe I will remove this extraction lemma. It's more transparent to build a string as a part of the algorithm and decode from there.}
%
%\begin{lemma}[Extraction Lemma]
%	Let $g\in \Sigma^E$, and let $f \in \calL(g,\decT-\eps)$. , and let $\calA = \{\fA_i\}_{i\in \lsL}$ and $\calB = \{\fB_j\}_{j\in \lsR}$ be its associated pairwise disjoint subsets of $L$ and $R$ respectively, as in \cref{lem:codeword_to_sets}. Given access to two sets of pairwise disjoint subsets $\fakecalA = \{\fakeA_i\}_{i \in [\lsL]}$ and $\fakecalB = \{\fakeB_j\}_{j\in [\lsR]}$ that satisfy
%	\begin{align*}
%		& \forall i \in [\lsL],~~ \norm*{\Ex{\one_{\fA_i} | \calF_L} - \Ex{\one_{\fakeA_i} | \calF_L}}_{\infty} \leq \eta \\
%		\text{and }& \forall j \in [\lsR],~~ \norm*{\Ex{\one_{\fB_j} | \calF_R} - \Ex{\one_{\fakeB_i} | \calF_R}}_{\infty} \leq \eta \mcom
%	\end{align*}
%	there is an algorithm that runs in linear time and outputs $f$. \snote{Put condition on $\lambda$.}
%\end{lemma}

\subsubsection{The list decoding algorithm for Tanner codes}

Next, we use the rigidity property established above to design an algorithm to list decode. The theorem statement below abstracts out the runtimes for the list decoding in each vertex neighborhood of $\{g_{\li}\}_{\li\in L}$ and $\{g_{\ri}\}_{\ri\in R}$ to obtain the local lists, the runtime to obtain $(\eta,\gamma)$-regular families, and the runtime to unique decode the Tanner code once we have figured out the sets $A_1,\ldots, A_{\lsL}$.

As before, let $\TanC$ be a Tanner code obtained by using an $(n,d,\lambda)$-expander with two left and right codes $\codeL$ and $\codeR$ respectively. The list decoder below finds the list using $\lsL \lsR$ many calls to the algorithmic regularity lemma \cref{cor:expander_regularity} and constantly many calls to the unique decoder of \cref{lem:err_and_erasure}, where the constant only depends on $\eps$, $\lsL$ and $\lsR$.

\begin{new-algorithm}{List Decoding $\TanC$}\label{algo:tanner-decoding} 
\begin{tabular}{r l}
\textsf{Input:} & $g \in\Sigma^E$ \\[3 pt]
\textsf{Output:} & List $\calL \subseteq \TanC$\\[3 pt] 
\textsf{Parameters:} & Distance $\delL$ for $\codeL$ and distance $\delR$ for $\codeR$,\\
%List decoding radius  $\beta \leq \delta(\AELC) - 2\eps$, \\
& list size bound $\lsL$
                       for  $\codeL$ at radius $\decL$ and list size bound $\lsR$ for $\codeR$ at radius $\decR$, \\
& list decoding radius $\decT - \eps = \min\parens*{\delL \decR, \decL \delR} - \eps$
\end{tabular}
\begin{itemize}
\medskip
\item For each $\li \in L$, compute the local list $\listl = \inbraces{f_{\li} \in \codeL ~|~ \Delta(f_{\li}, g_{\li}) \leq \decL} = \calL_{\codeL}(g_{\li},\decL)$. Pad with additional elements of $\codeL$ if needed, to view each $\listl$ as an ordered list of size $\lsL$.
\item For each $\ri \in R$, compute the local list $\listr = \inbraces{f_{\ri} \in \codeR ~|~ \Delta(f_{\ri}, g_{\ri}) \leq \decR} = \calL_{\codeR}(g_{\ri},\decR)$. Pad with additional elements of $\codeR$ if needed, to view each $\listr$ as an ordered list of size $\lsR$.
\item For each $i \in [\lsL]$ and each $j\in [\lsR]$ define a graph $H_{ij}$ with vertex sets $(L,R)$ and edges
\[
E(H_{ij}) ~=~ \{e = (\ell, r) \in E(G) ~|~ (\listl[i])_e = (\listr[j])_e \} \mper
\]
For $\gamma = \eps^2/(14\lsL \lsR)$ and sufficiently small $\eta$, compute a family $\family_{ij}$ which is $(\eta,\gamma)$-regular for $H_{ij}$. Let $\family$ be the union of all the $\lsL \lsR$ many families obtained.
\item For each $\sigma = (\sigma_1, \cdots, \sigma_{\lsL}) \in \calN_{\eta/4}(\family, \lsL)$
\begin{itemize}
\item Consider arbitrary disjoint sets $S_1, \ldots, S_{\lsL} \subseteq L$ given by \cref{clm:enumeration} such that\\ $\max_{i \in [\lsL]} \norm{\sigma_i - \sigma(\indicator{S_i})}_{\infty} \leq \eta/2$.
\item Define $h' \in (\codeL \cup \{\bot\})^L$ as
\[
h'_{\ell} ~=~
\begin{cases}
\calL_{\ell}[i] &~\text{if}~ \ell \in S_i \\
\bot &~\text{if}~ \ell \notin \cup_i S_i
\end{cases}
\]
\item Use the errors-and-erasures decoder of \cref{lem:err_and_erasure} on $h'$.
%Suppose $h'$ has $s_0 \cdot |L|$ erasure symbols. 
% with $e = \frac{1}{2} \parens*{\delR - \frac{\eps}{2} - s_0}$.
\item If the decoder succeeds, let $h\in \TanC$ be the output of the decoder. If $\Delta(g,h) \leq \decT - \eps$, then $\calL \leftarrow \calL \cup \{h\}$.
\end{itemize}
\item Return $\calL$.
\end{itemize}
\end{new-algorithm}
\medskip

\begin{theorem}\label{thm:tanner-technical}
	Let $\tlocalL{\decL}$ be the time required to list decode $\codeL$ up to radius $\decL$, $\tlocalR{\decR}$ be the time required to list decode $\codeR$ up to radius $\decR$, $\treg{\gamma}$ be the time required to obtain an $(\eta,\gamma)$-regular decomposition for any subgraph of
$G$ for $\gamma = \frac{\eps^2}{14 \lsL \lsR}$ and some $\eta = \Omega(\gamma^2)$, and $\tdec{\delta}$ be the time required to run the errors-and-erasures decoder up to radius $\delta$. Assume that $\lambda \leq \min\parens*{\gamma, \frac{\eps\cdot \delL}{64}, \frac{\eps\cdot \delR}{64}}$. 

Then for any input $g\in \Sigma^E$, \cref{algo:tanner-decoding} runs in time 
\[
	O\parens*{n\cdot \tlocalL{\decL} + n\cdot \tlocalR{\decR} + \lsL \lsR \cdot \treg{\gamma}} + \parens*{|E|+\tdec{\delR - \frac{\eps}{2}}}\cdot \parens*{\frac{1}{\gamma}}^{O(\lsL^2 \lsR / \gamma^2)} \mcom
\]
and outputs the list $\calL_{\TanC}(g, \decT - \eps) = \{h\in \TanC ~|~ \Delta(g,h) < \decT -\eps\}$. Further, the list is of size at most $(1/\gamma)^{O(\lsL^2 \lsR / \gamma^2)}$.
\end{theorem}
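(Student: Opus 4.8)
The statement packages together ingredients that have already been established, so the plan is to verify the running time of \cref{algo:tanner-decoding} step by step and then establish correctness in two directions. One direction is immediate: the algorithm inserts $h$ into $\calL$ only after explicitly testing $\Delta(g,h)\le\decT-\eps$, so the output is contained in $\calL_{\TanC}(g,\decT-\eps)$. The other direction — that every $h\in\calL_{\TanC}(g,\decT-\eps)$ is produced by some iteration of the enumeration loop — is the heart of the argument. Granting both, $\calL$ equals $\calL_{\TanC}(g,\decT-\eps)$; and since the loop runs $\abs{\calN_{\eta}(\factor,\lsL)}$ times and inserts at most one codeword per iteration, the list-size bound follows once we record that $\abs{\calN_{\eta}(\factor,\lsL)}\le(1/\gamma)^{\exp(O(\lsL\lsR/\gamma^2))}$.

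For the running time, computing the local lists $\{\listl\}_{\li\in L}$ and $\{\listr\}_{\ri\in R}$ costs $O(n\,\tlocalL{\decL}+n\,\tlocalR{\decR})$, after which each of the $\lsL\lsR$ graphs $H_{ij}$ is read off in $O(|E|)$ time. Applying \cref{lem:regularity-factor} to each $H_{ij}$ takes time $\treg{\gamma}$ and returns a factor with $\exp(O(1/\gamma^2))$ atoms — here I may take $\eta=\Theta(\gamma^2)$, or any smaller value, since shrinking $\eta$ only strengthens the regularity guarantee of \cref{regular-factor} — so the common refinement $\factor$ has $\abs{\factor}=\exp(O(\lsL\lsR/\gamma^2))$ atoms and is formed in time $O(n\,\abs{\factor})$ (cf.\ \cref{remark:factor_sigma_algebra_comp}). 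By \cref{clm:enumeration}, $\abs{\calN_{\eta}(\factor,\lsL)}\le(1/\eta+1)^{\lsL\abs{\factor}}=(1/\gamma)^{\exp(O(\lsL\lsR/\gamma^2))}$, and each element of the net is handled with $O(|E|)$ bookkeeping — forming $S_1,\dots,S_{\lsL}$ and $h'$, and later testing $\Delta(g,h)\le\decT-\eps$ — plus one call, costing $\tdec{\delR-\eps/2}$, to the errors-and-erasures decoder of \cref{lem:err_and_erasure}. Summing the contributions gives the claimed bound $O\bigl(n\,\tlocalL{\decL}+n\,\tlocalR{\decR}+\lsL\lsR\,\treg{\gamma}\bigr)+\bigl(|E|+\tdec{\delR-\eps/2}\bigr)\cdot(1/\gamma)^{\exp(O(\lsL\lsR/\gamma^2))}$.

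For the remaining direction I would fix $h\in\calL_{\TanC}(g,\decT-\eps)$ and set $A_i=\{\li\in L:\listl[i]=h_{\li}\}$ for $i\in[\lsL]$, as in \cref{sec:tanner}. Since $\decT-\eps\le\decL\delR-\eps$, \cref{prop:local_presence} gives $\Pr{\li\in L}{h_{\li}\notin\listl}\le\delR-\eps$, hence $\sum_i|A_i|\ge(1-\delR+\eps)n$. By \cref{clm:enumeration} there is a point of $\calN_{\eta}(\factor,\lsL)$ whose associated disjoint sets $S_1,\dots,S_{\lsL}$ satisfy $\max_i\norm{\indicator{A_i}-\indicator{S_i}}_{\factor}\le\eta$ (the extra $\abs{\factor}/n$ term from the second part of \cref{clm:enumeration} is negligible and I absorb it into $\eta$). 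On that iteration, $h'$ agrees with $h$ on $\bigcup_i(S_i\cap A_i)$, equals $\bot$ precisely on $L\setminus\bigcup_i S_i$, and disagrees with $h$ at a non-erased vertex only inside $\bigcup_i(S_i\setminus A_i)$. Since $\indicator{L}$ is $\factor$-measurable, \cref{prop:measurable-inner-product} shows $|S_i|$ and $|A_i|$ differ by at most $\eta n$, so the erasure fraction satisfies $s\le\delR-\eps+\lsL\eta$; and \cref{lem:tanner-rigidity} — applicable because $\factor$ is simultaneously $(\eta,\gamma)$-regular for all $H_{ij}$ and $\lambda\le\gamma$ — bounds the non-erasure error fraction by $\dis_L(h',h)\le 3\lsL\lsR\gamma/\eps$. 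Hence
\[
2\,\dis_L(h',h)+s~\le~\frac{6\lsL\lsR\gamma}{\eps}+\delR-\eps+\lsL\eta~\le~\delR-\frac{\eps}{2},
\]
where the second inequality uses $\gamma=\eps^2/(14\lsL\lsR)$, so that $6\lsL\lsR\gamma/\eps=3\eps/7$, together with the choice $\eta\le\eps/(14\lsL)$ (compatible with $\eta=\Theta(\gamma^2)$ for $\eps$ bounded). Invoking \cref{lem:err_and_erasure} with its free accuracy parameter set to $\eps/8$ — legitimate since $\lambda\le\tfrac{\eps}{64}\min(\delL,\delR)$ by hypothesis — shows the decoder recovers $h$ from $h'$, after which the test $\Delta(g,h)\le\decT-\eps$ passes and $h$ enters $\calL$. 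As $h$ was arbitrary, the loop recovers every element of $\calL_{\TanC}(g,\decT-\eps)$.

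I expect the only delicate part to be the parameter bookkeeping in the last paragraph: the errors-and-erasures slack must be taken $\Theta(\eps)$ (I use $\eps/8$), which is exactly what forces the $\min(\eps\delL/64,\eps\delR/64)$ constraint on $\lambda$; the regularity accuracy must be small enough ($\gamma=\eps^2/(14\lsL\lsR)$) that $6\lsL\lsR\gamma/\eps$ fits under the $\delR-\eps/2$ budget; and $\eta$ must be taken at most $\eps/(14\lsL)$, which costs nothing. Beyond this, the proof is a direct concatenation of \cref{prop:local_presence}, \cref{clm:enumeration}, \cref{lem:regularity-factor}, \cref{lem:tanner-rigidity}, and \cref{lem:err_and_erasure}. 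One point worth noting is that although $\decT=\min(\delL\decR,\decL\delR)$, the algorithm decodes from the left alone and uses only the left bound of \cref{prop:local_presence}, which holds regardless of which term realizes the minimum, so no case analysis on $\decT$ is required.
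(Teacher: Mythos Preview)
Your proposal is correct and follows essentially the same approach as the paper's proof: bound the running time by counting net size via \cref{lem:regularity-factor} and \cref{clm:enumeration}, then for correctness fix $h$ in the list, use \cref{prop:local_presence} to bound $\sum_i|A_i|$, invoke \cref{lem:tanner-rigidity} to control $\sum_i|S_i\setminus A_i|$, and conclude with \cref{lem:err_and_erasure}. Your bookkeeping is in fact slightly more careful than the paper's in two places---you track $\lsL\eta$ rather than the looser $\lsL\gamma$ in the erasure bound, and you make explicit that the errors-and-erasures lemma must be invoked with accuracy parameter $\eps/8$, which is precisely what drives the $\eps\delta_L/64$, $\eps\delta_R/64$ constraints on $\lambda$.
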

\begin{proof}
We note that the computation of the local lists, and the graphs $H_{ij}$ can be done in time $O(n \cdot \tlocalL{\decL} + n \cdot \tlocalR{\decR})$. 
% %
By \cref{lem:regularity-family}, we can take $\eta = \Omega(\gamma^2)$ and obtain an $(\eta,\gamma)$-regular family $\family_{ij}$ for each graph $H_{ij}$ in time $\treg{\gamma}$, with $\abs{\family_{ij}} = O(1/\gamma^2)$. Thus, for the union $\family$, we have $\abs{\family}= O(\lsL \lsR/\gamma^2)$
By \cref{clm:enumeration}, we have $\abs{\calN_{\eta/4}(\family,\lsL)} = (1/\eta)^{O(\lsL \cdot \abs{\family})} = (1/\gamma)^{O(\lsL^2 \lsR/\gamma^2)}$, and the enumeration can be done in time $\tdec{\delR - \frac{\eps}{2}} \cdot (1/\gamma)^{O(\lsL^2 \lsR/\gamma^2)}$.

Finally, it remains to prove that the algorithm correctly recovers the list. 
Note that for any $h$ satisfying $\dis(g,h) \leq \decT - \eps$, and corresponding sets $\inbraces{A_i}_{i \in [K]}$ indicating its position in local lists, the $h'$ found by our algorithm agrees with $h$ on all vertices $\ell \in \cup_i (S_i \cap A_i)$ and disagrees on all vertices $\ell \in \cup_i (S_i \setminus A_i)$. 
The remaining vertices in $L \setminus \cup_i S_i$ are marked as erasures ($\bot$) in $h'$. Therefore,
\[
\dis_L(h,h') ~=~ \sum_i \ExpOp[\indicator{S_i \setminus A_i}] \mper
\] 
%\[
%\dis_L(h,h') ~\leq~ \frac{1}{n} \cdot \parens*{ \sum_i |S_i| - \sum_i \abs{S_i \cap A_i} } 
%~=~ \sum_i \Ex{\indicator{S_i}} - \sum_i \angles{\indicator{S_i},\indicator{A_i}} 
%~=~ 1 - \sum_i \ExpOp[\indicator{S_i} - \indicator{S_i \setminus A_i}] \mper
%\] 
%
where the $\Delta_L(\cdot,\cdot)$ counts only errors and not erasures. 
By \cref{clm:enumeration} there \emph{exists} $\sigma = (\sigma_1, \ldots, \sigma_{\lsL}) \in
\calN_{\eta/4}(\family,\lsL)$ such that disjoint sets $S_1, \ldots, S_{\lsL}$ produced using $\sigma$ will satisfy $\max_{i \in [\lsL]} \norm{\indicator{A_i}-\indicator{S_i}}_{\family} \leq \eta$.
Also, by \cref{lem:tanner-rigidity}, for any such collection $S_1, \ldots, S_{\lsL}$, we have $\sum_i \ExpOp[\indicator{S_i \setminus A_i}] \leq (3\lsL \lsR \gamma/\eps)$. Moreover, by \cref{clm:local-membership}, we have $\sum_i \abs{A_i} \geq (1-\delR+\eps) \cdot |L|$ and $\norm{\indicator{A_i} - \indicator{S_i}}_{\factor} \leq \eta$ implies
\[
\sum_i \Ex{\indicator{S_i}} \geq \sum_i (\Ex{\indicator{A_i}}-\gamma) \geq (1-\delR + \eps - \lsL \gamma)
\]
Suppose $h'$ has $s_0 \cdot |L|$ many erasure symbols, then $s_0 = 1 - \sum_i \Ex{\indicator{S_i}}$. We compute,
\[
	2\dis_L(h,h') + s_0 = 2 \sum_i \Ex{\indicator{S_i \setminus A_i}} + \parens*{1-\sum_i \Ex{\indicator{S_i}}} \leq \frac{6\lsL \lsR \gamma}{\eps} + \delR -\eps + \lsL \gamma \leq \delR - \eps + \frac{7\lsL \lsR \gamma}{\eps} \mcom
\]
which is at most $\delR - \frac{\eps}{2}$ because $\gamma = \frac{\eps^2}{14 \lsL\lsR}$.
Thus, for every $h$ with $\dis_R(g,h) \leq \decT - \eps$, one of the choices in our enumeration finds an
$h'$ with $s_0\cdot |L|$ erasures satisfying $2\dis_L(h,h') + s_0 \leq \delR - \frac{\eps}{2}$. Further, it can be verified that our choice of $\lambda$ allows us to use the errors-and-erasures unique decoder of \cref{lem:err_and_erasure} on this $h'$ to recover $h$, and
thus the enumeration algorithm recovers all $h$ in the list.
\end{proof}

\subsection{Main result for Tanner codes}

We instantiate the \cref{thm:tanner-technical} using a brute force list decoder for $\codeL$ and $\codeR$, \cref{lem:regularity-family} for algorithmic regularity, and \cref{lem:err_and_erasure} for unique decoding, to obtain our main result for the case of Tanner codes.
\begin{theorem}
	Suppose the code $\codeL$ is list decodable up to radius $\decL$ with list size $\lsL$ and the code $\codeR$ is list decodable up to radius $\decR$ with list size $\lsR$. Let $\decT = \min\parens*{\delL \decR, \decL \delR}$ and let $\eps > 0$ be arbitrary. If $\lambda \leq \frac{\eps^4}{5\cdot 10^5\cdot K_1^2 K_2^2}$, then the code $\TanC$ can be list decoded up to radius $\decT-\eps$ with list size $\exp(O(K_1^5 K_2^4/\eps^3))$ in time $\widetilde{O}_{d,|\Sigma|}(n)$.
\end{theorem}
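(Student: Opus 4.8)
The plan is to obtain this statement as a direct instantiation of \cref{thm:tanner-technical}, specializing its three abstract subroutines as follows: for the local list decoders of $\codeL$ and $\codeR$ I would use exhaustive search, so that $\tlocalL{\decL}, \tlocalR{\decR} \le |\Sigma|^d$ (these are constant-sized objects, and correspondingly $\lsL, \lsR$ are constants); for the regular factors I would invoke \cref{lem:regularity-factor}, giving $\treg{\gamma} = \tilde O(|E(G)|) = \tilde O(nd)$ in randomized near-linear time; and for the final unique-decoding step I would use the errors-and-erasures decoder of \cref{lem:err_and_erasure}, giving $\tdec{\delR - \eps/2} \le O(n\,|\Sigma|^d)$. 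With $\gamma \defeq \eps^2/(14 \lsL \lsR)$ as prescribed by \cref{thm:tanner-technical}, substituting these into its running-time and list-size bounds, and using that $d, |\Sigma|, \lsL, \lsR, \eps$ are all independent of $n$, collapses the total running time to $\tilde O_{d,|\Sigma|}(n)$ and the list size to a doubly-exponential bound of the claimed shape (via $1/\gamma^2 = O(\lsL^2\lsR^2/\eps^4)$ fed into $(1/\gamma)^{\exp O(\lsL\lsR/\gamma^2)}$).

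The one genuine verification is that the single hypothesis $\lambda \le \eps^4/(5\cdot 10^5\,\lsL^2\lsR^2)$ implies every bound on $\lambda$ that the invoked statements need. First I would dispose of the degenerate case: if $\eps \ge \decT$ the radius $\decT - \eps$ is non-positive and there is nothing to prove, so we may assume $\eps < \decT \le \min(\delL\decR, \decL\delR) \le \min(\delL,\delR) \le 1$. Then: (i) $\lambda \le \gamma^2/2500$ (equivalently $\gamma \ge 50\sqrt\lambda$, the condition of \cref{lem:regularity-factor}) holds since $\gamma^2/2500 = \eps^4/(490000\,\lsL^2\lsR^2) \ge \eps^4/(5\cdot 10^5\,\lsL^2\lsR^2)$; (ii) $\lambda \le \gamma$ then follows because $\gamma \le \tfrac{1}{14} \le 2500$; and (iii) $\lambda \le \eps\delL/64$ and $\lambda \le \eps\delR/64$ (needed by \cref{thm:tanner-technical} / \cref{lem:err_and_erasure}) follow from the non-degeneracy reduction, since $\eps < \delL \le 1$ gives $\eps^3 < \delL$ and hence $\eps^4/(5\cdot 10^5\,\lsL^2\lsR^2) < \eps\delL/(5\cdot 10^5) < \eps\delL/64$, and symmetrically on the right. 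With every hypothesis of \cref{thm:tanner-technical} checked, it immediately yields that \cref{algo:tanner-decoding} outputs $\calL_{\TanC}(g, \decT - \eps)$ with the asserted list-size bound; it then remains only to rewrite the resulting runtime and list-size expressions into the forms stated in the theorem.

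I expect no real obstacle: the substantive content lives in \cref{thm:tanner-technical}, \cref{lem:tanner-rigidity}, and \cref{lem:err_and_erasure}, and this proof is bookkeeping. The one place to be slightly careful is step (iii) above — at first sight a bound on $\lambda$ of order $\eps^4$ looks too weak to dominate a requirement of order $\eps\,\delL$ when $\delL$ is tiny, but the statement is only non-vacuous when $\eps$ is itself smaller than $\delL$ (and $\delR$), which is exactly what rescues the argument. A secondary point of care is confirming that the brute-force base-code decoders and the errors-and-erasures decoder each run within a per-call budget of $O(n\,|\Sigma|^d)$, so that after absorbing the $\eps, \lsL, \lsR, d, |\Sigma|$-dependent constants the final running time is genuinely near-linear in $n$.
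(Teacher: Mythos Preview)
Your proposal is correct and follows essentially the same approach as the paper: instantiate \cref{thm:tanner-technical} with brute-force local decoders ($\tlocalL{\decL},\tlocalR{\decR}\le |\Sigma|^d$), the regularity-factor algorithm of \cref{lem:regularity-factor}, and the errors-and-erasures decoder of \cref{lem:err_and_erasure}, then verify the single hypothesis on $\lambda$ dominates all three required bounds. Your treatment is in fact more careful than the paper's, which asserts the $\lambda$ sufficiency without the explicit non-degeneracy reduction $\eps < \decT \le \min(\delL,\delR)$ that you use to establish $\lambda \le \eps\delL/64$ and $\lambda \le \eps\delR/64$.
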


\begin{proof}
	Given any $g\in \Sigma^E$, we can use the list decoder of \cref{thm:tanner-technical} to recover $\calL(g,\decT-\eps)$, as long as $\lambda\leq \gamma = \frac{\eps^2}{14\lsL\lsR}$.
	
	We first observe that $\tlocalL{\decL}$ and $\tlocalR{\decR}$ can be simply taken to be $O(|\Sigma|^d)$ in the worst case, by simply trying out all possible candidates for the list. Note that we are promised bounds of $\lsL$ and $\lsR$ on this list size in the theorem statement.
	
	The time needed to find an $(\eta,\gamma)$-regular family is $\Ot{|E|} = \widetilde{O}(n)$ by \cref{lem:regularity-family} if $\lambda \leq \frac{\gamma^2}{2500}$.
	
	Finally, the errors-and-erasures decoder of \cref{lem:err_and_erasure} runs in time $O(n\cdot |\Sigma|^d)$ in the worst case, assuming $\lambda\leq \min\parens*{\frac{\eps\cdot \delL}{64}, \frac{\eps\cdot \delR}{64}}$.
	
	It therefore suffices to choose $\lambda$ to be at most $\frac{\eps^4}{5\cdot 10^5\cdot K_1^2 K_2^2}$. The total runtime is $\Ot{n \cdot |\Sigma|^d} = \widetilde{O}_{d,|\Sigma|}(n)$.
\end{proof}

%Suppose the left and right codes $\codeL$ and $\codeR$ are the same, and have a common distance $\delta$. Further, suppose that this code $\codeL$ is decodable with constant sized lists
We get the following simplified version when both left and right codes are the same.
\begin{corollary}\label{cor:tanner-instantiation}
	Let $\TanC$ be the Tanner code defined using an $(n,d,\lambda)$-expander and a single code $\calC_0$ over some constant sized alphabet $\Sigma$, used on both left and right sides. Suppose $\calC_0$ has distance at least $\delta$, and is list decodable up to radius $\delta^{\dec}$ with a list size of $K$ that is independent of $d$. Then given any $\eps>0$ if $\lambda \leq c\cdot \frac{\eps^4}{K^4}$ for some small constant $c$, then there is an algorithm that takes as input $g\in \Sigma^E$, runs in time $\Ot{n}$, and returns the list $\calL(g,\delta\cdot \delta^{\dec}-\eps)$, which is of size at most $\exp(O(K^9/\eps^3))$.
\end{corollary}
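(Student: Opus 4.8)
The plan is to derive this corollary as a straightforward specialization of the preceding (general) Tanner list-decoding theorem, applied with $\codeL = \codeR = \calC_0$. First I would record that under this identification $\delL = \delR = \delta$, $\decL = \decR = \delta^{\dec}$, and $\lsL = \lsR = K$, so the combined decoding radius becomes $\decT = \min(\delL\decR, \decL\delR) = \delta\cdot\delta^{\dec}$, precisely the radius in the statement, while the general list-size bound $\exp(\exp(O(K_1^4K_2^4/\eps^3)))$ becomes $\exp(\exp(O(K^8/\eps^3)))$. Since the algorithm's output is by construction $\calL(g,\decT-\eps) = \calL(g,\delta\delta^{\dec}-\eps)$, correctness and the list-size bound transfer verbatim. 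What then remains is to check (i) that the stated hypothesis on $\lambda$ implies the hypothesis of the general theorem, and (ii) that the runtime is $\Ot{n}$.

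For (i) I would observe that the general theorem asks for $\lambda \le \eps^4/(5\cdot10^5\cdot K_1^2K_2^2) = \eps^4/(5\cdot10^5\cdot K^4)$, which is exactly the displayed bound with $c = 1/(5\cdot10^5)$. I would then trace this back through \cref{thm:tanner-technical}, where the single constraint is assembled from three requirements: $\lambda \le \gamma = \eps^2/(14\lsL\lsR)$ to invoke the algorithmic regularity lemma \cref{cor:expander_regularity} with parameter $\gamma$; $\lambda \le \gamma^2/2500$ so that the regular factors of \cref{lem:regularity-factor} are computable in near-linear time; and $\lambda \le \min(\eps\delL/64,\eps\delR/64)$ so that the errors-and-erasures decoder \cref{lem:err_and_erasure} applies. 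The binding one is $\lambda \lesssim \gamma^2 \asymp \eps^4/K^4$, and one need only verify that the numeric constants line up, which is routine.

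For (ii) I would use that $d$ and $|\Sigma|$ are constants: list decoding each $g_{\li}$ and each $g_{\ri}$ inside $\calC_0$ up to radius $\delta^{\dec}$ can be done by brute force over $\Sigma^d$ in $O(1)$ time, so $\tlocalL{\decL},\tlocalR{\decR} = O(1)$; building the $K^2$ agreement graphs $H_{ij}$ and running \cref{lem:regularity-factor} on each costs $K^2\cdot\Ot{|E|} = \Ot{nd} = \Ot{n}$; and the errors-and-erasures decoder runs in $O(n|\Sigma|^d) = O(n)$. Finally, the enumeration over the covering net $\calN_{\eta/2}(\factor,\lsL)$ multiplies the cost by $(1/\gamma)^{\exp O(\lsL\lsR/\gamma^2)} = \exp(\exp(O(K^8/\eps^3)))$, a constant independent of $n$. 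Summing gives total time $\exp(\exp(O(K^8/\eps^3)))\cdot\Ot{n} = \Ot{n}$ once $\eps,K,d,|\Sigma|$ are treated as constants. I do not expect any real obstacle in this corollary: the substantive content — the rigidity lemma \cref{lem:tanner-rigidity} and the correctness of the errors-and-erasures decoding step — lives in the general theorem and its ingredients. The only mild care needed here is to confirm that a single constant $c$ can absorb all three $\lambda$-constraints simultaneously, and that "constant" in the runtime hides nothing growing with $n$.
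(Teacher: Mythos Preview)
Your proposal is correct and follows exactly the approach the paper takes: the corollary is stated as an immediate specialization of the preceding general theorem with $\codeL=\codeR=\calC_0$, and the paper does not even spell out a separate proof. Your tracking of the parameter substitutions ($\decT=\delta\delta^{\dec}$, $K_1=K_2=K$, $c=1/(5\cdot10^5)$) and the runtime accounting under constant $d,|\Sigma|$ is more explicit than what the paper provides, but the logic is identical.
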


Of course, if the code $\calC_0$ above is decodable up to its distance $\delta$, then $\TanC$ is also decodable up to $\delta^2$.

%!TEX root=main.tex

%%% Local Variables:
%%% mode: latex
%%% TeX-master: "main"
%%% End:

\section*{Acknowledgements}
We are grateful to Tushant Mittal for helpful discussions in the early phase of this work. 
We also thank the FOCS 2025 reviewers for helpful comments and suggestions.

%\section{Todo}
%\input{todo}

\bibliographystyle{alphaurl}
\bibliography{macros,madhur}

\newcommand{\etalchar}[1]{$^{#1}$}
\begin{thebibliography}{GKO{\etalchar{+}}17}

\bibitem[ABN{\etalchar{+}}92]{ABNNR92}
N.~Alon, J.~Bruck, J.~Naor, M.~Naor, and R.~Roth.
\newblock Construction of asymptotically good, low-rate error-correcting codes
  through pseudo-random graphs.
\newblock {\em {IEEE} Transactions on Information Theory}, 28:509--516, 1992.

\bibitem[AEL95]{AEL95}
N.~Alon, J.~Edmonds, and M.~Luby.
\newblock Linear time erasure codes with nearly optimal recovery.
\newblock In {\em Proceedings of IEEE 36th Annual Foundations of Computer
  Science}, pages 512--519, 1995.
\newblock \href {https://doi.org/10.1109/SFCS.1995.492581}
  {\path{doi:10.1109/SFCS.1995.492581}}.

\bibitem[AGL24]{AGL24}
Omar Alrabiah, Venkatesan Guruswami, and Ray Li.
\newblock {Randomly Punctured Reed–Solomon Codes Achieve List-Decoding
  Capacity over Linear-Sized Fields}.
\newblock In {\em Proceedings of the 56th ACM Symposium on Theory of
  Computing}, 2024.
\newblock \href {http://arxiv.org/abs/2304.09445} {\path{arXiv:2304.09445}},
  \href {https://doi.org/10.1145/3618260.3649634}
  {\path{doi:10.1145/3618260.3649634}}.

\bibitem[AK07]{AK07}
Sanjeev Arora and Satyen Kale.
\newblock A combinatorial, primal-dual approach to semidefinite programs.
\newblock In {\em Proceedings of the 39th ACM Symposium on Theory of
  Computing}, STOC '07, pages 227--236, 2007.

\bibitem[AKK95]{AKK95}
S.~Arora, D.~Karger, and M.~Karpinski.
\newblock Polynomial time approximation schemes for dense instances of
  {NP}-hard problems.
\newblock In {\em Proceedings of the 27th ACM Symposium on Theory of
  Computing}, pages 284--293, 1995.

\bibitem[AN04]{AN04}
Noga Alon and Assaf Naor.
\newblock Approximating the cut-norm via grothendieck's inequality.
\newblock In {\em Proceedings of the 36th ACM Symposium on Theory of
  Computing}, pages 72--80, 2004.

\bibitem[BD22]{BD22}
Guy Blanc and Dean Doron.
\newblock {New Near-Linear Time Decodable Codes Closer to the GV Bound}.
\newblock In Shachar Lovett, editor, {\em 37th Computational Complexity
  Conference (CCC 2022)}, volume 234 of {\em Leibniz International Proceedings
  in Informatics (LIPIcs)}, pages 10:1--10:40, Dagstuhl, Germany, 2022. Schloss
  Dagstuhl -- Leibniz-Zentrum f{\"u}r Informatik.
\newblock \href {https://doi.org/10.4230/LIPIcs.CCC.2022.10}
  {\path{doi:10.4230/LIPIcs.CCC.2022.10}}.

\bibitem[BDGZ24]{BDGZ24}
Joshua Brakensiek, Manik Dhar, Sivakanth Gopi, and Zihan Zhang.
\newblock {AG Codes Achieve List Decoding Capacity over Constant-Sized Fields}.
\newblock In {\em Proceedings of the 56th ACM Symposium on Theory of
  Computing}, page 740–751, 2024.
\newblock \href {https://doi.org/10.1145/3618260.3649651}
  {\path{doi:10.1145/3618260.3649651}}.

\bibitem[BGG24]{BGG24}
Thiago Bergamaschi, Louis Golowich, and Sam Gunn.
\newblock A{pproaching the Quantum Singleton Bound with Approximate Error
  Correction}.
\newblock In {\em Proceedings of the 56th ACM Symposium on Theory of
  Computing}, page 1507–1516, 2024.
\newblock \href {http://arxiv.org/abs/2212.09935} {\path{arXiv:2212.09935}},
  \href {https://doi.org/10.1145/3618260.3649680}
  {\path{doi:10.1145/3618260.3649680}}.

\bibitem[BGM23]{BGM23}
Joshua Brakensiek, Sivakanth Gopi, and Visu Makam.
\newblock {Generic Reed-Solomon Codes Achieve List-Decoding Capacity}.
\newblock In {\em Proceedings of the 55th ACM Symposium on Theory of
  Computing}, page 1488–1501, 2023.
\newblock \href {http://arxiv.org/abs/2206.05256} {\path{arXiv:2206.05256}},
  \href {https://doi.org/10.1145/3564246.3585128}
  {\path{doi:10.1145/3564246.3585128}}.

\bibitem[BL18]{BL18}
A.~{Bhowmick} and S.~{Lovett}.
\newblock The list decoding radius for {R}eed--{M}uller codes over small
  fields.
\newblock {\em IEEE Transactions on Information Theory}, 64(6):4382--4391,
  2018.

\bibitem[BV22]{BV22}
Greg Bodwin and Santosh Vempala.
\newblock A unified view of graph regularity via matrix decompositions.
\newblock {\em Random Structures \& Algorithms}, 61(1):62--83, 2022.
\newblock \href {http://arxiv.org/abs/1911.11868} {\path{arXiv:1911.11868}},
  \href {https://doi.org/https://doi.org/10.1002/rsa.21053}
  {\path{doi:https://doi.org/10.1002/rsa.21053}}.

\bibitem[CCF10]{CCFA09}
Amin {Coja-Oghlan}, Colin Cooper, and Alan Frieze.
\newblock An efficient sparse regularity concept.
\newblock {\em SIAM Journal on Discrete Mathematics}, 23(4):2000--2034, 2010.
\newblock \href {https://doi.org/10.1137/080730160}
  {\path{doi:10.1137/080730160}}.

\bibitem[CZ25]{CZ25}
Yeyuan Chen and Zihan Zhang.
\newblock {Explicit Folded Reed-Solomon and Multiplicity Codes Achieve Relaxed
  Generalized Singleton Bounds}.
\newblock In {\em Proceedings of the 57th ACM Symposium on Theory of
  Computing}, 2025.
\newblock (To appear).

\bibitem[DEL{\etalchar{+}}22]{DELLM22}
Irit Dinur, Shai Evra, Ron Livne, Alexander Lubotzky, and Shahar Mozes.
\newblock Locally testable codes with constant rate, distance, and locality.
\newblock In {\em Proceedings of the 54th Annual ACM SIGACT Symposium on Theory
  of Computing}, STOC 2022, page 357–374, New York, NY, USA, 2022.
  Association for Computing Machinery.
\newblock \href {https://doi.org/10.1145/3519935.3520024}
  {\path{doi:10.1145/3519935.3520024}}.

\bibitem[DHK{\etalchar{+}}19]{DHKLNTS19}
Irit Dinur, Prahladh Harsha, Tali Kaufman, Inbal~Livni Navon, and Amnon
  Ta{-}Shma.
\newblock List decoding with double samplers.
\newblock In {\em Proceedings of the 30th ACM-SIAM Symposium on Discrete
  Algorithms}, pages 2134--2153, 2019.

\bibitem[DHLV23]{DHLV23}
Irit Dinur, Min-Hsiu Hsieh, Ting-Chun Lin, and Thomas Vidick.
\newblock Good quantum ldpc codes with linear time decoders.
\newblock In {\em Proceedings of the 55th Annual ACM Symposium on Theory of
  Computing}, STOC 2023, page 905–918, New York, NY, USA, 2023. Association
  for Computing Machinery.
\newblock \href {http://arxiv.org/abs/2206.07750} {\path{arXiv:2206.07750}},
  \href {https://doi.org/10.1145/3564246.3585101}
  {\path{doi:10.1145/3564246.3585101}}.

\bibitem[Eli57]{E57}
Peter Elias.
\newblock List decoding for noisy channels.
\newblock Technical Report 335, Research Laboratory of Electronics, MIT, 1957.

\bibitem[FK96]{FK96:focs}
A.~Frieze and R.~Kannan.
\newblock The regularity lemma and approximation schemes for dense problems.
\newblock In {\em Proceedings of the 37th IEEE Symposium on Foundations of
  Computer Science}, 1996.

\bibitem[FKP19]{FKP19}
Noah Fleming, Pravesh Kothari, and Toniann Pitassi.
\newblock Semialgebraic proofs and efficient algorithm design.
\newblock {\em Foundations and Trends® in Theoretical Computer Science},
  14(1-2):1--221, 2019.
\newblock URL: \url{http://dx.doi.org/10.1561/0400000086}, \href
  {https://doi.org/10.1561/0400000086} {\path{doi:10.1561/0400000086}}.

\bibitem[GGR09]{GGR09}
P.~Gopalan, V.~Guruswami, and P.~Raghavendra.
\newblock List decoding tensor products and interleaved codes.
\newblock In {\em Proceedings of the 41st ACM Symposium on Theory of
  Computing}, page 13–22, New York, NY, USA, 2009. Association for Computing
  Machinery.
\newblock \href {https://doi.org/10.1145/1536414.1536419}
  {\path{doi:10.1145/1536414.1536419}}.

\bibitem[GHKS24]{GHKS24}
Rohan Goyal, Prahladh Harsha, Mrinal Kumar, and Ashutosh Shankar.
\newblock Fast list-decoding of univariate multiplicity and folded
  {Reed}-{Solomon} codes.
\newblock In {\em Proceedings of the 65th IEEE Symposium on Foundations of
  Computer Science}, 2024.
\newblock \href {http://arxiv.org/abs/2311.17841} {\path{arXiv:2311.17841}}.

\bibitem[GI03]{GI03}
Venkatesan Guruswami and Piotr Indyk.
\newblock Linear time encodable and list decodable codes.
\newblock In {\em Proceedings of the 35th ACM Symposium on Theory of
  Computing}, 2003.

\bibitem[GI05]{GI05}
V.~{Guruswami} and P.~{Indyk}.
\newblock Linear-time encodable/decodable codes with near-optimal rate.
\newblock {\em IEEE Transactions on Information Theory}, 51(10):3393--3400,
  2005.

\bibitem[GKO{\etalchar{+}}17]{GKORZS17}
Sivakanth Gopi, Swastik Kopparty, Rafael Oliveira, Noga {Ron-Zewi}, and
  Shubhangi Saraf.
\newblock Locally testable and locally correctable codes approaching the
  {Gilbert}-{Varshamov} bound.
\newblock In {\em Proceedings of the 28th ACM-SIAM Symposium on Discrete
  Algorithms}, SODA '17, pages 2073--2091, 2017.

\bibitem[GR08]{GR08}
Venkatesan Guruswami and Atri Rudra.
\newblock Explicit codes achieving list decoding capacity: Error-correction
  with optimal redundancy.
\newblock {\em IEEE Transactions on Information Theory}, 54(1):135--150, 2008.
\newblock \href {https://doi.org/10.1109/TIT.2007.911222}
  {\path{doi:10.1109/TIT.2007.911222}}.

\bibitem[GR21]{GRZ21}
Zeyu Guo and Noga {Ron-Zewi}.
\newblock Efficient list-decoding with constant alphabet and list sizes.
\newblock In {\em Proceedings of the 53rd Annual ACM SIGACT Symposium on Theory
  of Computing}, STOC 2021, page 1502–1515, New York, NY, USA, 2021.
  Association for Computing Machinery.
\newblock \href {https://doi.org/10.1145/3406325.3451046}
  {\path{doi:10.1145/3406325.3451046}}.

\bibitem[GRS23]{GRS23}
Venkatesan Guruswami, Atri Rudra, and Madhu Sudan.
\newblock Essential coding theory.
\newblock Available at
  \url{https://cse.buffalo.edu/faculty/atri/courses/coding-theory/book/}, 2023.

\bibitem[GS98]{GS98}
O.~Goldreich and M.~Sudan.
\newblock Computational indistinguishibility: $k$ samples versus $k+1$ samples.
\newblock In {\em Proceedings of the 13th IEEE Conference on Computational
  Complexity}, 1998.

\bibitem[Gur04]{GSurvey04}
Venkatesan Guruswami.
\newblock Guest column: Error-correcting codes and expander graphs.
\newblock {\em SIGACT News}, 35(3):25--41, September 2004.

\bibitem[Gur07]{Guruswami:survey}
Venkatesan Guruswami.
\newblock Algorithmic results in list decoding.
\newblock {\em Foundations and Trends® in Theoretical Computer Science},
  2(2):107--195, 2007.
\newblock URL: \url{http://dx.doi.org/10.1561/0400000007}, \href
  {https://doi.org/10.1561/0400000007} {\path{doi:10.1561/0400000007}}.

\bibitem[GW13]{GW13}
Venkatesan Guruswami and Carol Wang.
\newblock {Linear-Algebraic List Decoding for Variants of Reed--Solomon Codes}.
\newblock {\em IEEE Transactions on Information Theory}, 59(6):3257--3268,
  2013.
\newblock \href {https://doi.org/10.1109/TIT.2013.2246813}
  {\path{doi:10.1109/TIT.2013.2246813}}.

\bibitem[GZ23]{GZ23}
Z.~Guo and Z.~Zhang.
\newblock Randomly punctured {R}eed-{S}olomon codes achieve the list decoding
  capacity over polynomial-size alphabets.
\newblock In {\em 2023 IEEE 64th Annual Symposium on Foundations of Computer
  Science (FOCS)}, pages 164--176, Los Alamitos, CA, USA, nov 2023. IEEE
  Computer Society.
\newblock \href {https://doi.org/10.1109/FOCS57990.2023.00019}
  {\path{doi:10.1109/FOCS57990.2023.00019}}.

\bibitem[HLW06]{HLW06}
Shlomo Hoory, Nathan Linial, and Avi Wigderson.
\newblock Expander graphs and their applications.
\newblock {\em Bull. Amer. Math. Soc.}, 43(04):439--562, August 2006.

\bibitem[HRW17]{HRW17}
B.~{Hemenway}, N.~{Ron-Zewi}, and M.~{Wootters}.
\newblock Local list recovery of high-rate tensor codes applications.
\newblock In {\em Proceedings of the 58th IEEE Symposium on Foundations of
  Computer Science}, pages 204--215, Oct 2017.

\bibitem[HW18]{HW15}
Brett Hemenway and Mary Wootters.
\newblock Linear-time list recovery of high-rate expander codes.
\newblock {\em Information and Computation}, 261:202--218, 2018.
\newblock ICALP 2015.
\newblock \href {https://doi.org/https://doi.org/10.1016/j.ic.2018.02.004}
  {\path{doi:https://doi.org/10.1016/j.ic.2018.02.004}}.

\bibitem[JMST25]{JMST25}
Fernando~Granha Jeronimo, Tushant Mittal, Shashank Srivastava, and Madhur
  Tulsiani.
\newblock Explicit codes approaching generalized {Singleton} bound using
  expanders.
\newblock In {\em Proceedings of the 57th ACM Symposium on Theory of
  Computing}, 2025.
\newblock (To appear).
\newblock \href {http://arxiv.org/abs/2502.07308} {\path{arXiv:2502.07308}}.

\bibitem[JST21]{JST21}
Fernando~Granha Jeronimo, Shashank Srivastava, and Madhur Tulsiani.
\newblock {Near-Linear Time Decoding of Ta-Shma’s Codes via Splittable
  Regularity}.
\newblock In {\em Proceedings of the 53rd Annual ACM SIGACT Symposium on Theory
  of Computing}, STOC 2021, page 1527–1536, New York, NY, USA, 2021.
  Association for Computing Machinery.
\newblock \href {https://doi.org/10.1145/3406325.3451126}
  {\path{doi:10.1145/3406325.3451126}}.

\bibitem[JST23]{JST23}
Fernando~Granha Jeronimo, Shashank Srivastava, and Madhur Tulsiani.
\newblock List decoding of {Tanner} and expander amplified codes from distance
  certificates.
\newblock In {\em Proceedings of the 64th IEEE Symposium on Foundations of
  Computer Science}, 2023.
\newblock \href {http://arxiv.org/abs/2311.02263} {\path{arXiv:2311.02263}}.

\bibitem[KMRS16]{KMRZS16}
Swastik Kopparty, Or~Meir, Noga {Ron-Zewi}, and Shubhangi Saraf.
\newblock High-rate locally-correctable and locally-testable codes with
  sub-polynomial query complexity.
\newblock In {\em Proceedings of the 48th ACM Symposium on Theory of
  Computing}, page 202–215, 2016.
\newblock \href {https://doi.org/10.1145/2897518.2897523}
  {\path{doi:10.1145/2897518.2897523}}.

\bibitem[Kop15]{Kop15}
Swastik Kopparty.
\newblock {List-Decoding Multiplicity Codes}.
\newblock {\em Theory of Computing}, 11(5):149--182, 2015.
\newblock \href {https://doi.org/10.4086/toc.2015.v011a005}
  {\path{doi:10.4086/toc.2015.v011a005}}.

\bibitem[KRR{\etalchar{+}}21]{KRRSS21}
Swastik Kopparty, Nicolas Resch, Noga {Ron-Zewi}, Shubhangi Saraf, and Shashwat
  Silas.
\newblock On list recovery of high-rate tensor codes.
\newblock {\em IEEE Transactions on Information Theory}, 67(1):296--316, 2021.
\newblock \href {https://doi.org/10.1109/TIT.2020.3023962}
  {\path{doi:10.1109/TIT.2020.3023962}}.

\bibitem[KRSW23]{KRSW23}
Swastik Kopparty, Noga {Ron-Zewi}, Shubhangi Saraf, and Mary Wootters.
\newblock {Improved List Decoding of Folded Reed-Solomon and Multiplicity
  Codes}.
\newblock {\em SIAM Journal on Computing}, 52(3):794--840, 2023.
\newblock \href {https://doi.org/10.1137/20M1370215}
  {\path{doi:10.1137/20M1370215}}.

\bibitem[LP20]{LeeP20}
Yin~Tat Lee and Swati Padmanabhan.
\newblock An $\widetilde{O}(m/\varepsilon^{3.5})$-cost algorithm for
  semidefinite programs with diagonal constraints.
\newblock In {\em Conference on Learning Theory, {COLT} 2020, 9-12 July 2020,
  Virtual Event [Graz, Austria]}, volume 125, pages 3069--3119, 2020.

\bibitem[LPS88]{LPS88}
Alexander Lubotzky, R.~Phillips, and Peter Sarnak.
\newblock Ramanujan graphs.
\newblock {\em Combinatorica}, 8:261--277, 1988.

\bibitem[LS25]{LS25}
Ray Li and Nikhil Shagrithaya.
\newblock Near-optimal list-recovery of linear code families.
\newblock {\em arXiv preprint arXiv:2502.13877}, 2025.
\newblock \href {http://arxiv.org/abs/2502.13877} {\path{arXiv:2502.13877}}.

\bibitem[LZ22]{LZ22}
Anthony Leverrier and Gilles Z{\'e}mor.
\newblock { Quantum Tanner codes }.
\newblock In {\em 2022 IEEE 63rd Annual Symposium on Foundations of Computer
  Science (FOCS)}, pages 872--883, Los Alamitos, CA, USA, November 2022. IEEE
  Computer Society.
\newblock \href {https://doi.org/10.1109/FOCS54457.2022.00117}
  {\path{doi:10.1109/FOCS54457.2022.00117}}.

\bibitem[OT15]{OGT15}
Shayan {Oveis Gharan} and Luca Trevisan.
\newblock A new regularity lemma and faster approximation algorithms for low
  threshold rank graphs.
\newblock {\em Theory of Computing}, 11(9):241--256, 2015.
\newblock URL: \url{http://www.theoryofcomputing.org/articles/v011a009}, \href
  {https://doi.org/10.4086/toc.2015.v011a009}
  {\path{doi:10.4086/toc.2015.v011a009}}.

\bibitem[PK22]{PK22}
Pavel Panteleev and Gleb Kalachev.
\newblock Asymptotically good quantum and locally testable classical {LDPC}
  codes.
\newblock In {\em Proceedings of the 54th ACM Symposium on Theory of
  Computing}, pages 375--388, 2022.

\bibitem[Res20]{Res20:thesis}
Nicolas Resch.
\newblock {\em List-Decodable Codes: (Randomized) Constructions and
  Applications}.
\newblock PhD thesis, Carnegie Mellon University, 2020.
\newblock URL:
  \url{http://reports-archive.adm.cs.cmu.edu/anon/2020/CMU-CS-20-113.pdf}.

\bibitem[RR23]{RR23}
S.~Richelson and S.~Roy.
\newblock Gilbert and {V}arshamov meet {J}ohnson: List-decoding explicit
  nearly-optimal binary codes.
\newblock In {\em Proceedings of the 64th IEEE Symposium on Foundations of
  Computer Science}, pages 194--205, Nov 2023.
\newblock \href {https://doi.org/10.1109/FOCS57990.2023.00021}
  {\path{doi:10.1109/FOCS57990.2023.00021}}.

\bibitem[RTTV08]{ReingoldTTV08}
Omer Reingold, Luca Trevisan, Madhur Tulsiani, and Salil Vadhan.
\newblock Dense subsets of pseudorandom sets.
\newblock In {\em Proceedings of the 49th IEEE Symposium on Foundations of
  Computer Science}, 2008.

\bibitem[RU08]{RU:book}
Tom Richardson and Rüdiger Urbanke.
\newblock {\em Modern Coding Theory}.
\newblock Cambridge University Press, 2008.
\newblock \href {https://doi.org/10.1017/CBO9780511791338}
  {\path{doi:10.1017/CBO9780511791338}}.

\bibitem[RWZ21]{RWZ21}
Noga {Ron-Zewi}, Mary Wootters, and Gill{\'e}s Z{\'e}mor.
\newblock Linear-time erasure list-decoding of expander codes.
\newblock {\em IEEE Transactions on Information Theory}, 67(9):5827--5839,
  2021.
\newblock \href {https://doi.org/10.1109/TIT.2021.3086805}
  {\path{doi:10.1109/TIT.2021.3086805}}.

\bibitem[Spi96]{Spi96}
D.A. Spielman.
\newblock Linear-time encodable and decodable error-correcting codes.
\newblock {\em IEEE Transactions on Information Theory}, 42(6):1723--1731,
  1996.
\newblock \href {https://doi.org/10.1109/18.556668}
  {\path{doi:10.1109/18.556668}}.

\bibitem[SR03]{SkaRoth03}
V.~Skachek and R.M. Roth.
\newblock Generalized minimum distance iterative decoding of expander codes.
\newblock In {\em Proceedings 2003 IEEE Information Theory Workshop (Cat.
  No.03EX674)}, pages 245--248, 2003.
\newblock \href {https://doi.org/10.1109/ITW.2003.1216740}
  {\path{doi:10.1109/ITW.2003.1216740}}.

\bibitem[SS96]{SS96}
M.~Sipser and D.~Spielman.
\newblock Expander codes.
\newblock {\em IEEE Transactions on Information Theory}, 42(6):1710--1722,
  1996.
\newblock Preliminary version in {\em Proc. of FOCS'94}.

\bibitem[ST20]{ST20}
Chong Shangguan and Itzhak Tamo.
\newblock Combinatorial list-decoding of {Reed}-{Solomon} codes beyond the
  {Johnson} radius.
\newblock In {\em Proceedings of the 52nd ACM Symposium on Theory of
  Computing}, page 538–551, 2020.
\newblock \href {https://doi.org/10.1145/3357713.3384295}
  {\path{doi:10.1145/3357713.3384295}}.

\bibitem[Sud97]{Sudan97}
Madhu Sudan.
\newblock Decoding of {Reed}-{Solomon} codes beyond the error-correction bound.
\newblock {\em Journal of Complexity}, 13(1):180--193, 1997.

\bibitem[{Ta-}17]{TS17}
Amnon {Ta-Shma}.
\newblock Explicit, almost optimal, epsilon-balanced codes.
\newblock In {\em Proceedings of the 49th ACM Symposium on Theory of
  Computing}, STOC 2017, pages 238--251, New York, NY, USA, 2017. ACM.

\bibitem[Tam24]{Tamo24}
Itzhak Tamo.
\newblock Tighter list-size bounds for list-decoding and recovery of folded
  {Reed-Solomon} and multiplicity codes.
\newblock {\em IEEE Transactions on Information Theory}, 2024.
\newblock \href {https://doi.org/10.1109/TIT.2024.3402171}
  {\path{doi:10.1109/TIT.2024.3402171}}.

\bibitem[Tan81]{Tanner81}
R.~Tanner.
\newblock A recursive approach to low complexity codes.
\newblock {\em IEEE Transactions on Information Theory}, 27(5):533--547, 1981.
\newblock \href {https://doi.org/10.1109/TIT.1981.1056404}
  {\path{doi:10.1109/TIT.1981.1056404}}.

\bibitem[TTV09]{TrevisanTV09}
L.~Trevisan, M.~Tulsiani, and S.~Vadhan.
\newblock Boosting, regularity and efficiently simulating every high-entropy
  distribution.
\newblock In {\em Proceedings of the 24th IEEE Conference on Computational
  Complexity}, 2009.

\bibitem[Vad12]{Vadhan12}
Salil~P. Vadhan.
\newblock {\em Pseudorandomness}.
\newblock Now Publishers Inc., 2012.

\bibitem[Woz58]{W58}
J.~M. Wozencraft.
\newblock List decoding.
\newblock Technical report, Quarterly Progress Report, Research Laboratory of
  Electronics, {MIT}, 1958.

\bibitem[Z{\'e}m01]{Zemor01}
G.~Z{\'e}mor.
\newblock On expander codes.
\newblock {\em IEEE Transactions on Information Theory}, 47(2):835--837, 2001.
\newblock \href {https://doi.org/10.1109/18.910593}
  {\path{doi:10.1109/18.910593}}.

\end{thebibliography}

\appendix
%\section{Auxiliary SoS claims}\label{sec:appendix}
%\input{sos_aux}
\section{Errors-and-Erasures Decoding of Tanner Codes}\label{sec:appendix_tanner}

\begin{lemma}[Restatement of \cref{lem:err_and_erasure}]\label{lem:err_and_erasure_appendix}
	Let $\TanC$ be a Tanner code defined using left code $\codeL$ with distance $\delL$ and right code $\codeR$ with distance $\delR$, both defined over the alphabet $\Sigma$, on the graph $G = (L,R,E)$ that is an $(n,d,\lambda)$-expander. Fix $\eps > 0$ such that $\lambda \leq \frac{\eps}{8} \cdot \min(\delL,\delR)$. Given a $g\in (\codeL \cup \{\bot\})^L$ with $s\cdot |L|$ many erasures, we have the following:
	\begin{itemize}
	\item there is at most one $h\in \TanC$ that satisfies $2 \dis_L(g,h) + s \leq \delR - 4\eps$.
	\item If such an $h$ exists, then there is an algorithm that given $g$, runs in linear time $O(n\cdot |\Sigma|^d)$, and outputs $h$.
	\end{itemize}
\end{lemma}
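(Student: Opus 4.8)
The plan is to adapt the standard Zémor-style iterative decoding argument, working from the left code $\codeL$ viewpoint as stated, and following the structure of \cite[Lemma 2.4]{RWZ21}. First I would set up the uniqueness claim. Suppose two codewords $h, h' \in \TanC$ both satisfy $2\dis_L(g,h) + s \le \delR - 4\eps$ (and similarly for $h'$). Let $L_{\mathrm{err}}$ be the set of left vertices where $g$ disagrees with $h$ (among non-erased positions), $L_{\mathrm{era}}$ the erased positions, so $|L_{\mathrm{err}}| \le \dis_L(g,h)\cdot n$ and $|L_{\mathrm{era}}| = sn$; define the analogous sets for $h'$. On any left vertex $\ell$ outside $L_{\mathrm{err}} \cup L_{\mathrm{err}}' \cup L_{\mathrm{era}}$, we have $h_\ell = g_\ell = h'_\ell$, so $h$ and $h'$ agree on $N(\ell)$. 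Let $L^* = L_{\mathrm{err}} \cup L_{\mathrm{err}}' \cup L_{\mathrm{era}}$; then $|L^*| \le (\dis_L(g,h) + \dis_L(g,h') + s)\cdot n$, and since $h, h'$ agree on all edges incident to $L \setminus L^*$, the set $L'$ of left vertices on which $h \ne h'$ is contained in $L^*$. Now I would run the expander mixing lemma exactly as in the proof of \cref{thm:ael_distance} / \cref{lem:tanner_dist}: letting $R' \subseteq R$ be the right vertices where $h_r \ne h'_r$, each such $r \in R'$ has $\ge \delR \cdot d$ incident edges where $h, h'$ differ, and all such edges land in $L'$; mixing gives $\delR d |R'| \le (d/n)|L'||R'| + \lambda d n \sqrt{\cdots}$, forcing $|L'| \ge (\delR - \lambda)n$ unless $R' = \emptyset$. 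But $|L'| \le |L^*| < \delR \cdot n$ (using $2\dis_L(g,h)+s \le \delR - 4\eps$ twice and $\lambda$ small), a contradiction unless $h = h'$.

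Next, the algorithmic part. Here I would use the standard alternating-sides iterative decoder: maintain a current edge-labeling, initialized from $g$ by encoding each non-erased left symbol $g_\ell \in \codeL$ onto $N(\ell)$ and putting arbitrary symbols (or $\bot$-propagated values) on edges incident only to erased left vertices. Then alternately (i) for each right vertex $r$, replace the labels on $N(r)$ by the nearest codeword of $\codeR$ (errors-and-erasures decoding of $\codeR$ within its unique-decoding radius, filling erasures), and (ii) for each left vertex $\ell$, replace the labels on $N(\ell)$ by the nearest codeword of $\codeL$. The key invariant, proved via the expander mixing lemma, is that the number of vertices on the side currently being corrected that are still ``wrong'' (disagree with the target $h$) shrinks by a constant factor each half-round, as long as we started within the promised radius. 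This is the familiar Sipser--Spielman / Zémor potential argument: if $U$ is the set of wrong left vertices and $W$ the set of right vertices whose neighborhood decoding fails to recover $h_r$, then each $r \in W$ sees $\ge \delR d/2$ disagreements, all going into $U$, so $(\delR/2) d|W| \le |E(U,W)| \le (d/n)|U||W| + \lambda d\sqrt{|U||W|}$, giving $|W| \le (2|U|/(\delR n) + 2\lambda/\delR)\cdot(\text{something})$; choosing the initial wrong-fraction small enough relative to $\delR$ and $\lambda \le (\eps/8)\min(\delL,\delR)$ makes the relevant contraction factor strictly less than $1$. After $O(\log n)$ rounds there are zero wrong vertices, so the labeling equals $h$, which lies in $\TanC$. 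Each round does $n$ brute-force decodings of a constant-length code, each costing $O(|\Sigma|^d)$, and there are $O(\log n)$ rounds; a slightly more careful accounting (only re-decoding neighborhoods of vertices that changed) brings the total to $O(n \cdot |\Sigma|^d)$ as claimed.

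The main obstacle I anticipate is getting the constants in the potential/contraction argument to line up with the clean hypothesis $\lambda \le \tfrac{\eps}{8}\min(\delL,\delR)$ and the guarantee $2\dis_L(g,h) + s \le \delR - 4\eps$ — in particular, correctly bookkeeping erasures through the alternating rounds (an erased left vertex contributes to the ``wrong'' count initially, but after one right-pass its incident edges may be repaired) and ensuring the contraction holds uniformly even when the error and erasure fractions are allocated adversarially between the two sides. This is where adapting \cite[Lemma 2.4]{RWZ21} rather than the pure-errors Zémor argument matters: one tracks a weighted potential like $2(\text{\#errors}) + (\text{\#erasures})$ on the side being corrected, and must verify each half-round at least halves it while staying within the decoding radius of the constant-size base codes. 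I would handle this by first proving the right-pass reduces $2e_L + s_L$ (errors plus erasures counted with weight on the left) and the left-pass does the symmetric thing, then composing. The rest — the uniqueness argument and the runtime bound — is routine expander-mixing-lemma bookkeeping of the kind already carried out for \cref{thm:ael_distance} and \cref{lem:tanner_dist}.
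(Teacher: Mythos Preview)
Your approach is essentially the same as the paper's: a Z\'emor-style alternating left/right local decoder with expander-mixing contraction, plus the only-redecode-changed-vertices trick to get from $O(n\log n)$ to $O(n)$. Two differences are worth noting.

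First, the paper does not prove uniqueness separately; it shows the algorithm converges to \emph{any} $h$ satisfying the hypothesis, which yields uniqueness for free. Your direct argument is fine in spirit, but the one-line conclusion ``forcing $|L'|\ge(\delR-\lambda)n$ unless $R'=\emptyset$'' does not follow from the single inequality you wrote: from $\delR d|R'|\le (d/n)|L'||R'|+\lambda d\sqrt{|L'||R'|}$ you cannot lower-bound $|L'|$ without also controlling $|R'|$. You need to invoke the symmetric left-side inequality and bootstrap, which is exactly the iterative contraction the algorithm already performs.

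Second, the paper handles erasures more simply than the running weighted potential $2e+s$ you propose. It analyzes the first right-pass directly: a right vertex $r$ fails to recover $h_r$ only if it sees more than $(|L_{\mathrm{era}}|/n+\eps)d$ edges into the erased set or more than $(|L_{\mathrm{err}}|/n+\eps)d$ edges into the error set, and by mixing each of these bad subsets of $R$ has size at most $(\lambda/\eps)n$. Hence $\Delta_R(g_2,h)\le 2\lambda/\eps$ after a single round --- erasures have disappeared entirely --- and every subsequent round is a pure-error contraction with ratio $\eps^2/4$ (using the $\sqrt{|L'||R'|}$ form of mixing). This avoids carrying the $2e+s$ bookkeeping through all rounds.
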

\begin{proof}
	Let $h\in \TanC$ be an arbitrary codeword that satisfies $2\dis_L(g,h) + s \leq \delR - 4\eps$. We will give an algorithm that always outputs $h$ (when given $g$ as input). This will prove both the parts in one shot.
	
%	Let $g_1 = g$. 
	We will construct a sequence of $g_2,g_3,\ldots, $ such that for each $i\in \N$, $g_{2i} \in (\codeR)^R$ and $g_{2i+1} \in (\codeL)^L$. The main idea will be that these $g_2,g_3,\ldots $ grow closer and closer to $h$.
	
	To construct $g_2$, we look at the projections $g_{\ri} \in (\Sigma \cup \{\bot\})^{N(\ri)}$ of $g$ on the local neighborhoods of $\ri\in R$. Then $(g_2)_{\ri}$ is defined to be the closest codeword to $g_{\ri}$, that is,
	\[
		\forall \ri \in R, \qquad (g_2)_{\ri} = {\arg\min}_{f_{\ri} \in \codeR} \Delta(f_{\ri}, g_{\ri})
	\]
	Ties are broken arbitrarily (for example, always pick one according to some canonical ordering). We claim that $g_2$ can be made arbitrarily close to $h$ by decreasing $\lambda$. This claim is slightly different from the general claims later because it involves correcting both errors and erasures from $g$. For later claims, we will only deal with errors, which will simplify the analysis.
	\begin{claim}
		$\Delta_R(g_2,h) \leq 2\lambda/\eps \leq \frac{\delL}{4}$.
	\end{claim}
	\begin{proof}
		Let $L' = \{ \li \in L ~|~ g_{\li} = \bot\}$ and $L'' = \{ \li \in L \setminus L' ~|~ g_{\li} \neq h_{\li}\}$. The condition $2\Delta_L(g,h) + s \leq \delR -2\lambda$ then translates to $2|L''| + |L'| \leq (\delR - 4\eps) n$.
		
		For an $\ri \in R$, if the number of edges from $\ri$ to $L'$ is at most $\parens*{\frac{|L'|}{n} + \eps}\cdot d$ and the number of edges from $\ri$ to $L''$ is at most $\parens*{\frac{|L''|}{n} + \eps}\cdot d$, then $(g_2)_{\ri} = h_{\ri}$. This is because for such an $\ri$, if the number of erasure symbols in $g_{\ri}$ is $s_{\ri}\cdot d$, then 
	\[
		2\Delta((g_2)_{\ri},h_{\ri}) + s_{\ri} \leq 2\parens*{\frac{|L''|}{n} + \eps} + \frac{|L'|}{n}+\eps \leq \delR - 4\eps + 3\eps = \delR -\eps
	\]
	and so $h_{\ri}$ is the nearest codeword to $g_{\ri}$. We will therefore upper bound $\Delta_R(g_2,h)$ by showing that not too many vertices in $R$ can have more than expected number of edges going to $L'$ or $L''$.
	
	Let $R' = \inbraces{\ri \in R ~\big\vert~ \abs{N(\ri) \cap L'} \geq \parens*{\frac{|L'|}{n}+\eps}\cdot d}$. Then,
	\[
		E(L',R') \geq |R'| \cdot \parens*{\frac{|L'|}{n}+\eps}\cdot d
	\]
	and using expander mixing lemma,
	\[
		E(L',R') \leq \frac{d}{n} |L'|\cdot |R'| + \lambda d n
	\]
	Comparing the upper and lower bounds,
	\begin{align*}
		& |R'| \cdot \parens*{\frac{|L'|}{n}+\eps}\cdot d \leq \frac{d}{n} |L'|\cdot |R'| + \lambda d n\\
%		\implies & \quad \eps |R'| \leq \lambda n \\
		\implies & \quad |R'| \leq \frac{\lambda}{\eps}n
	\end{align*}
	Similarly, define $R'' = \inbraces{\ri \in R ~\big\vert~ \abs{N(\ri) \cap L''} \geq \parens*{\frac{|L''|}{n}+\eps}\cdot d}$, so that
	\[
		|R''| \leq \frac{\lambda}{\eps}\cdot n
	\]
	Finally, $\Delta(g_2,h) \leq \frac{|R'|+|R''|}{n} \leq \frac{2\lambda}{\eps}$.
	\end{proof}
	Now we move to the general case. First we show how to construct the sequence $g_3,g_4,\ldots$.
	For any $i\in \N$, $g_{2i+1} \in (\codeL)^L$ is constructed as 
	\[
		\forall \li \in L, \qquad (g_{2i+1})_{\li} = {\arg\min}_{f_{\li} \in \codeL} \Delta(f_{\li}, (g_{2i})_{\li})
	\]
	Similarly, for any $i=2,3,\ldots$, the corresponding $g_{2i} \in (\codeR)^R$ is constructed as
	\[
		\forall \ri \in R, \qquad (g_{2i})_{\ri} = {\arg\min}_{f_{\ri} \in \codeR} \Delta(f_{\ri}, (g_{2i-1})_{\ri})
	\]
	The following claim says that if $g_{2i}$ is close to $h$ on $R$, then $g_{2i+1}$ is significantly closer to $h$ on $L$.
	\begin{claim}
		Let $i \geq 1$ be an integer. Suppose $\Delta_R(g_{2i},h) \leq \frac{\delL}{4}$, then $\Delta_L(g_{2i+1},h) \leq \frac{\eps^2}{4} \cdot \Delta_R(g_{2i},h)$.
	\end{claim}
	\begin{proof}
		Let $R' \sub R$ be the set of right vertices where $g_{2i}$ and $h$ differ. Any $\li \in L$ that sends less than $\frac{\delL}{2}\cdot d$ edges to $R'$ gets corrected to $h_{\li}$ in $g_{2i+1}$. Therefore, we only need to count how many left vertices send at least $ \frac{\delL}{2} \cdot d$ edges to $R'$. Let $L' = \{ \li \in L ~\big\vert~ |N(\li) \cap R'| \geq \frac{\delL}{2}\cdot d\}$.
		\begin{align*}
			& |L'| \cdot \frac{\delL}{2}\cdot d\leq E(L',R') \leq \frac{d}{n}\cdot |L'|\cdot |R'| + \lambda d\cdot \sqrt{|L'| \cdot |R'|}\\
			\implies & \qquad \frac{|R'|}{|L'|} \geq \frac{1}{\lambda^2} \parens*{\frac{\delL}{2} - \frac{|R'|}{n}}^2 \geq \frac{1}{\lambda^2} \frac{\delta_L^2}{16}\\
			\implies & \qquad |L'| \leq \frac{\lambda^2 }{\delta_L^2/16} \cdot |R'| \leq \frac{\eps^2}{4} \cdot |R'| \mper
		\end{align*}
	\end{proof}
	In particular, $\Delta_L(g_3,h) \leq \frac{\eps^2}{4} \cdot \frac{2\lambda}{\eps} = \frac{\lambda \eps}{2}$, which is at most $\delR/4$.
	Similar to above, we have a claim that says that if $g_{2i-1}$ and $h$ are close on $L$, then $g_{2i}$ and $h$ are even closer on $R$.
	\begin{claim}
		Let $i \geq 2$ be an integer. Suppose $\Delta_L(g_{2i-1},h) \leq \frac{\delR}{4}$, then $\Delta_R(g_{2i},h) \leq \frac{\eps^2}{4} \cdot \Delta_L(g_{2i-1},h)$.
	\end{claim}
	\begin{proof}
		Let $L' \sub L$ be the set of left vertices where $g_{2i-1}$ and $h$ differ. Any $\ri \in R$ that sends less than $\frac{\delR}{2}\cdot d$ edges to $L'$ gets corrected to $h_{\ri}$ in $g_{2i}$. Therefore, we only need to count how many right vertices send at least $ \frac{\delR}{2} \cdot d$ edges to $L'$. Let $R' = \{ \ri \in R ~\big\vert~ |N(\ri) \cap L'| \geq \frac{\delR}{2}\cdot d\}$.
		\begin{align*}
			& |R'| \cdot \frac{\delR}{2}\cdot d\leq E(L',R') \leq \frac{d}{n}\cdot |L'|\cdot |R'| + \lambda d\cdot \sqrt{|L'| \cdot |R'|}\\
			\implies & \qquad \frac{|L'|}{|R'|} \geq \frac{1}{\lambda^2} \parens*{\frac{\delR}{2} - \frac{|L'|}{n}}^2 \geq \frac{1}{\lambda^2} \frac{\delta_R^2}{16}\\
			\implies & \qquad |R'| \leq \frac{\lambda^2 }{\delta_R^2/16} \cdot |L'| \leq \frac{\eps^2}{4} \cdot |R'| \mper
		\end{align*}
	\end{proof}
	It is now easy to see that the distances $\Delta_R(g_{2i},h)$ are monotonically decreasing with 
	\[	
		\Delta_R(g_{2i+2},h) \leq \parens*{\frac{\eps^2}{4}}^2 \cdot \Delta_R(
	g_{2i},h) \mper
	\]
	Thus for some $i = \Theta(\log n)$, $g_{2i}$ will converge to $h$. 
	
	This immediately implies a runtime bound of $O(\log n \cdot |\Sigma| \cdot n)$, since there are $O(\log n)$ iterations, and each iteration only involves locally decoding for each left or right vertex. This decoding takes time $|\Sigma|^d$ per vertex in the worst case, so the total cost per iteration is $O(|\Sigma|^d\cdot n)$. For our purposes, this near-linear running time suffices.
	
	To get a linear time implementation, observe that to obtain $g_{2i+2}$ from $g_{2i}$, it suffices to change only the neighborhoods of vertices where $g_{2i+1}$ changed from $g_{2i-1}$. This difference is at most
	\[
		\Delta_L(g_{2i-1},g_{2i+1})\cdot n \leq \Delta_L(g_{2i-1},h)\cdot n + \Delta_L(g_{2i+1},h)\cdot n \leq 2 \Delta_L(g_{2i-1},h)\cdot n \mper
	\] 
%	For some $\theta>0$, choose $i$ large enough so that $\Delta_L(g_{2i-1},h) \leq \theta$. Then,
%	\[
%		\Delta_L(g_{2i-1},g_{2i+1}) \leq \Delta_L(g_{2i-1},h) + \Delta_L(g_{2i+1},h) \leq 2 \theta \mper
%	\] 
	A similar claim also holds true for computing the odd indices $g_5,g_7,\ldots$. Thus the total amount of work done to compute $g_6, g_8, \ldots$ is
	\[
		\sum_{i=2}^{\infty} |\Sigma|^d \cdot 2\Delta_L(g_{2i-1},h) n \leq 2|\Sigma|^d n \cdot \Delta_L(g_3,h) \sum_{i=0}^{\infty} \parens*{\frac{\eps^2}{4}}^i = \frac{2|\Sigma|^dn \Delta_L(g_3,h)}{1-\frac{\eps^2}{4}} = O(n \cdot |\Sigma|^d) \quad \qedhere
	\]
%	\snote{This is annoying, should I add a $\tlocal{\cdot}$ term, or just hide it as $O_d(\cdot)$? But then $\Sigma$ becomes annoying, because in general we will have $|\Sigma|^d$ or $|\codeL|$. This lemma has become too long for no reason! Should definitely move this to the appendix.}
\end{proof}

Note that runtime analysis assumed a worst case bound of $|\Sigma|^d$ for the inner code, which is constant if the alphabet $\Sigma$ is of constant size. In the case when faster errors-and-erasures decoders are available for the inner code, such as in the case of polynomial codes like Reed-Solomon, we can get rid of the exponential dependence on $d$ and replace it with a polynomial one, as done in \cite{RWZ21}, for instance.

\end{document}